\documentclass[11pt]{article}

\usepackage[letterpaper,margin=1in]{geometry}
\usepackage{amsmath,amssymb,amsthm,mathtools}
\usepackage{booktabs,tabularx,array}
\usepackage{enumitem}
\usepackage[dvipsnames]{xcolor}
\usepackage{tcolorbox}
\tcbuselibrary{breakable}
\usepackage[hypertexnames=false,colorlinks=true,linkcolor=MidnightBlue,citecolor=BrickRed,urlcolor=MidnightBlue]{hyperref}
\usepackage{microtype}
\usepackage{lmodern}

\DeclareFontEncoding{T5}{}{}
\DeclareFontSubstitution{T5}{lmr}{m}{n}
\DeclareMicrotypeSet{papertext}{encoding={OT1,T1,TS1},family={rm*,sf*}}
\UseMicrotypeSet[protrusion]{papertext}
\UseMicrotypeSet[expansion]{papertext}

\newtheorem{theorem}{Theorem}[section]
\newtheorem{lemma}[theorem]{Lemma}
\newtheorem{proposition}[theorem]{Proposition}
\newtheorem{corollary}[theorem]{Corollary}

\theoremstyle{definition}

\newtheorem{algorithm}[theorem]{Algorithm}
\newenvironment{algorithmblock}[1]
 {\begin{tcolorbox}[
    breakable,
    colback=black!1,
    colframe=black!55,
    boxrule=0.5pt,
    arc=1mm,
    left=1.5mm,
    right=1.5mm,
    top=1mm,
    bottom=1mm]
  \begin{algorithm}[#1]}
 {\end{algorithm}
  \end{tcolorbox}}
\theoremstyle{remark}
\newtheorem{remark}[theorem]{Remark}

\newcommand{\R}{\mathbb{R}}
\newcommand{\E}{\mathbb{E}}
\newcommand{\Prb}{\mathbb{P}}
\newcommand{\tr}{\operatorname{tr}}
\newcommand{\rank}{\operatorname{rank}}
\newcommand{\sr}{\operatorname{sr}}
\newcommand{\TV}{d_{\mathrm{TV}}}
\newcommand{\op}{\mathrm{op}}
\newcommand{\Sone}{S_1}
\newcommand{\Sp}[1]{S_{#1}}

\newcommand{\eps}{\varepsilon}
\newcommand{\norm}[1]{\left\lVert #1\right\rVert}
\newcommand{\abs}[1]{\left\lvert #1\right\rvert}
\newcommand{\ip}[2]{\left\langle #1,#2\right\rangle}
\newcommand{\defeq}{:=}
\newcommand{\Nguyen}{Nguy{\microtypesetup{protrusion=false,expansion=false}%
  \fontencoding{T5}\selectfont\symbol{185}}n}

\setlist[itemize]{leftmargin=1.45em,itemsep=0.35em,topsep=0.4em}
\setlist[enumerate]{leftmargin=1.7em,itemsep=0.35em,topsep=0.4em}

\allowdisplaybreaks

\title{Near-Optimal Bounds for Sketching the Schatten Norms}
\author{Lin F. Yang\\
University of California, Los Angeles\\
\texttt{linyang@ee.ucla.edu}%
\thanks{This paper was prepared with AI assistance from GPT-5.6-sol.}}
\date{}

\begin{document}
\maketitle

\begin{abstract}
Let $k_{1,\varepsilon}(n)$ be the smallest number of real linear measurements needed by a
randomized oblivious sketch that estimates the nuclear norm of every fixed real $n\times n$
matrix within a factor $1\pm\varepsilon$, with probability at least $2/3$.  For every fixed
$0<\varepsilon<1$, we prove
\[
  \frac{n^2}{(\log n)^{A_\varepsilon}}
  \le k_{1,\varepsilon}(n)
  \le C_\varepsilon
       \frac{n^2\{\log\log(e^e n)\}^2}{\log(e n)}.
\]
Previously, the best unrestricted bounds for general linear sketches of the Schatten--1 norm
were $\Omega(n)$ and the trivial $O(n^2)$ upper bound~\cite{LiNguyenWoodruff}, leaving a
polynomial gap.  Our bounds close that gap up to polylogarithmic factors and give a nontrivial
logarithmic saving below the $n^2$-measurement storage bound.

The result extends much further.  Write $k_{p,\varepsilon}(n)$ for the
analogous sketch dimension for the Schatten--$p$ norm.  For every fixed finite
$p>0$ that is not a positive even integer, there are positive constants
$A_{p,\varepsilon},C_{p,\varepsilon},c_p$ such that
\[
 \frac{n^2}{(\log n)^{A_{p,\varepsilon}}}
 \le k_{p,\varepsilon}(n)
 \le C_{p,\varepsilon}\frac{n^2}{(\log n)^{c_p}},
\]
so $k_{p,\varepsilon}(n)=n^{2-o(1)}$ throughout the non-even regime.  Together with the known
tight bounds $\Theta_{p,\varepsilon}(n^{2-4/p})$ for positive even $p$ and
$\Theta_\varepsilon(n^2)$ for $p=\infty$~\cite{LiWoodruff}, our results close the remaining
polynomial gap across the Schatten family and complete, up to polylogarithmic factors, the
polynomial-order classification of general linear sketches for all Schatten-$p$ norms.
\end{abstract}

\tableofcontents

\section{Introduction}

\subsection{The problem}

For $A\in\R^{n\times n}$, the singular values
$\sigma_1(A)\ge\cdots\ge\sigma_n(A)\ge0$ are the square roots of the eigenvalues of $A^TA$.
Three norms will be used:
\[
 \norm{A}_{\Sone}=\sum_{i=1}^n\sigma_i(A),\qquad
 \norm{A}_F=\left(\sum_{i=1}^n\sigma_i(A)^2\right)^{1/2},\qquad
 \norm{A}_{\op}=\sigma_1(A).
\]
The first is the \emph{nuclear norm}, also called the Schatten--1 norm.  A randomized
$k$-dimensional linear sketch samples a linear map
\[
 S_\omega:\R^{n\times n}\longrightarrow\R^k
\]
without seeing $A$, stores $S_\omega A$, and applies a decoder
$d_\omega:\R^k\to\R$ that may depend on $\omega$.  The required ``for each'' guarantee is
\begin{equation}
 \Prb_\omega\!\left\{
   (1-\eps)\norm{A}_{\Sone}\le d_\omega(S_\omega A)
   \le(1+\eps)\norm{A}_{\Sone}
 \right\}\ge\frac23
 \quad\text{for every fixed }A.                         \tag{1.1}\label{eq:1.1}
\end{equation}
This is the general linear-sketch model studied by Li, \Nguyen, and
Woodruff~\cite{LiNguyenWoodruff} and by Li and Woodruff~\cite{LiWoodruff}.  The model
uses exact real arithmetic.  Therefore a bit-counting or finite-set encoding argument
does not by itself prove a lower bound on $k$.

The relation to a Turing-machine streaming model requires care.  In that model the input is
normally a stream of integer or rational updates of bounded bit length, and the algorithm stores
a finite binary string.  By contrast, one coordinate of $S_\omega A$ in \eqref{eq:1.1} is an exact real
number and may contain arbitrarily many bits.  Thus $k$ measures the number of data-dependent
real scalars, not their binary representation.  In an idealized real-RAM interpretation, the
theorem bounds the number of data-dependent real words.  It also lower-bounds the number of
linear counters in any finite-precision scheme that obtains its state by rounding $k$ fixed real
linear measurements and is required to work for every real input: a decoder given the exact
measurements could perform the same rounding.  Neither observation specifies how many bits each
counter needs.

There is a subtler connection for bounded-integer turnstile streams.  The earlier theorem of Li,
\Nguyen, and Woodruff replaces a one-pass streaming algorithm by an integer linear sketch while
nearly preserving the number of attainable states~\cite{LiNguyenWoodruffLinearization}; by itself,
that statement does not convert an exact-real row-dimension lower bound into the same bit lower
bound.  Two newer ingredients do.  Gribelyuk, Lin, Woodruff, Yu, and Zhou lift real-sketch
dimension lower bounds to bounded-entry integer sketches~\cite{GribelyukEtAlLifting}, and Jiang,
Liu, and Yu convert an $S$-bit polynomial-length turnstile algorithm for a smooth problem into a
bounded-entry integer sketch with $O(S/\log R)$ rows~\cite{JiangLiuYuLinearization}.  The
Gaussian-smoothed hard pairs constructed here for finite non-even $p$ can be scaled, rounded,
and truncated so that these two results apply.  Corollary~\ref{cor:streaming} records the
resulting finite-memory consequence; Appendix~\ref{app:streaming} proves the required robustness
and parameter compatibility.

Define $k_\eps(n)$ to be the least integer $k\ge0$ for which such a distribution of linear maps
and measurable decoders exists.  If no such sketch existed we would set $k_\eps(n)=\infty$,
although storing all $n^2$ entries shows that $k_\eps(n)\le n^2$.

More generally, for $0<p<\infty$ write
\[
 \norm{A}_{\Sp{p}}=\left(\sum_{i=1}^n\sigma_i(A)^p\right)^{1/p}
\]
and let $k_{p,\eps}(n)$ denote the analogous minimum sketch dimension for a
$(1\pm\eps)$ estimate of $\norm{A}_{\Sp{p}}$.  For $0<p<1$ this functional is a
quasi-norm; the estimation guarantee is unchanged.  Put
$\norm A_{\Sp{\infty}}=\norm A_{\op}$ and define $k_{\infty,\eps}(n)$ analogously.
Thus $k_{1,\eps}(n)=k_\eps(n)$.

If $A$ is promised positive semidefinite, then $\norm{A}_{\Sone}=\tr A$, so one linear
measurement is enough.  The theorem concerns arbitrary real matrices.

\begin{theorem}[Main result]\label{thm:main}
For every fixed $0<\eps<1$, there are finite constants $A_\eps,C_\eps$ such that, for all
sufficiently large $n$,
\begin{equation}
 \boxed{
 \frac{n^2}{(\log n)^{A_\eps}}
 \le k_\eps(n)
 \le C_\eps\frac{n^2\{\log\log(e^e n)\}^2}{\log(e n)}.}
                                                               \tag{1.2}\label{eq:1.2}
\end{equation}
The constants may depend on $\eps$ but not on $n$.
\end{theorem}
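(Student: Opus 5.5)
The two inequalities in \eqref{eq:1.2} are independent, and I will prove them separately. The lower bound will come from a distributional (Yao) argument with Gaussian-smoothed hard pairs. The upper bound will come from an explicit sketch that saves a logarithmic factor by exploiting the spectral structure of matrices near the estimation threshold.

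\textbf{Lower bound.} By Yao's principle it suffices to exhibit two distributions $\mathcal D_0,\mathcal D_1$ on $\R^{n\times n}$ with three properties. First, their nuclear norms are separated by a factor $(1+\eps)^2/(1-\eps)^2$ with high probability. Second, for every fixed $k\times n^2$ matrix $S$ with orthonormal rows, $\TV(S\mathcal D_0,S\mathcal D_1)<1/3$. Third, this holds whenever $k\le n^2/(\log n)^{A_\eps}$.

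I would take $\mathcal D_b$ to be $G+M_b$, where $G$ has i.i.d.\ $N(0,\sigma^2)$ entries. The perturbations $M_b$ are random low-complexity matrices, for instance sums of $r$ scaled random rank-one terms, with $r$ a power of $\log n$. The two ensembles are chosen to match all mixed moments up to degree $D\approx\log n$, and the difference is visible only in the nuclear norm. This is possible because $\sigma\mapsto\sigma$, unlike $\sigma^{2m}$, is not a polynomial in the entries. A Chebyshev/Jackson approximation of $|x|$ on the bulk spectrum of degree $D$ has error $\gtrsim 1/D$. So one can design spectral distributions (perturbed Marchenko--Pastur laws) that agree on the first $D$ even moments but differ in $\int|x|$ by a constant factor.

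For any projection $S$, the projected law $SG+SM_b$ is Gaussian plus a small perturbation. I would bound $\chi^2(S\mathcal D_1\,\|\,S\mathcal D_0)$ by $\sum_{d>D}\E\langle SM,SM'\rangle^d/(d!\,\sigma^{2d})$. Moment matching kills all terms with $d\le D$. The higher terms are controlled by $(k/n^2)^{d/2}$ times combinatorial factors. This makes the divergence small once $k\le n^2(\log n)^{-A_\eps}$.

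\textbf{Upper bound.} I would first reduce by rescaling and a crude $O(1)$-approximation of the Frobenius and operator norms (Li--Woodruff type sketches use $o(n^2)$ rows). The aim is to estimate $\sum_i\sigma_i$ to relative error $\eps$.

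Write $\|A\|_{\Sone}=\sum_i f(\sigma_i^2)$ with $f(t)=\sqrt t$. On a window $[\delta\sigma_1^2,\sigma_1^2]$, approximate $f$ by a polynomial $P$ of degree $D\approx\log n/(\log\log n)^2$. The contribution below the window is at most $n\sqrt\delta\,\sigma_1$. Choose $\delta$ so that this is negligible after a separate check on the count of small singular values, which is itself estimated by the same polynomial method.

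Then $\tr P(A^TA)$ is a linear combination of cycle statistics $\tr((A^TA)^m)$ with $m\le D$. These can be estimated unbiasedly from bilinear sketches $UAV$ with independent Gaussian $U,V$. Variance analysis as in Li--Woodruff for even $p$ shows that cost $n^{2-2/m}$ suffices per moment up to constants. The real saving, however, must come from truncating to the top $n/\log n$ directions and handling the bulk by storing only a $1-\Theta(1/\log n)$ fraction of coordinates. The $(\log\log)^2$ factor arises as the price of approximation degree versus variance blow-up $e^{O(D)}$.

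\textbf{Main obstacle.} I expect the main obstacle to be the lower bound: constructing moment-matched spectral ensembles, and proving that the $\chi^2$ tail is small uniformly over all projections $S$, not just structured ones. I would first attempt the rotation-invariant case using Weingarten calculus. I would then handle general $S$ by averaging over $O(n)\times O(n)$ conjugation, which is free because $\mathcal D_b$ are bi-orthogonally invariant.
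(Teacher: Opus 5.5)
Both halves have genuine gaps.

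\textbf{Lower bound.} The instance you name cannot give more than an $O(rn)$ bound. Since $\abs{\norm{G+M}_{\Sone}-\norm{G}_{\Sone}}\le\norm{M}_{\Sone}\le r\norm{M}_{\op}$ and $\norm{G}_{\Sone}\asymp\sigma n^{3/2}$, a rank-$r$ perturbation with $r=(\log n)^{O(1)}$ creates a constant relative gap only through singular values $\gtrsim\sigma n^{3/2}/r$. These lie far above $\norm{G}_{\op}\approx2\sigma\sqrt n$, so a Gaussian row sketch with $O(r)$ rows, i.e.\ $O(rn)$ measurements, already reads off the spikes and hence the gap. The signal has to be full rank at the noise scale, like the paper's $F_b=\sqrt n\,UD_bV^T$ with $a/\sigma$ fixed.

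Your scalar design is also too weak. For laws on a bulk interval $[0,\beta]$ with mean comparable to $\beta$ (a perturbed Marchenko--Pastur law), matching $D$ moments bounds $\abs{\int\sqrt x\,d\nu_1-\int\sqrt x\,d\nu_0}$ by twice the best degree-$D$ approximation error of $\sqrt x$. The relative gap is therefore $O(1/D)=o(1)$, not a fixed $\eps$. A ratio $(1+\eps)/(1-\eps)$ forces the support-to-mean ratio to grow at least linearly in $D$. That is why Proposition~\ref{prop:scalar-pair} puts atoms at $j^{M_\eps}$, with support up to $2(K+1)^{M_\eps}$.

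Finally, the analytic step does not close. The series you write controls the Gaussian-reference quantity $\int(p_1-p_0)^2/\phi$, not $\chi^2(S\mathcal D_1\,\|\,S\mathcal D_0)$; that is fine for total variation, but its size is the problem. For a noise-scale instance, $\E\langle SM,SM'\rangle^2/\sigma^4\asymp k$. Hence, by $\E X^{2r}\ge(\E X^2)^r$, each individual series $\sum_d\E\langle SM_b,SM_b'\rangle^d/(d!\,\sigma^{2d})$ has terms of size at least $e^{c\sqrt k}$ near $d\asymp\sqrt k$. The difference is small only if $\E(SM_1)^{\otimes d}-\E(SM_0)^{\otimes d}$ nearly cancels at orders $d\asymp\sqrt k\approx n/\mathrm{polylog}(n)$. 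That is far beyond the matched degree and beyond the range $(4m)^{4m}\le n/2$ where Weingarten bounds such as Lemma~\ref{lem:twirl} hold.

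The paper never sums over orders. It joins the endpoints by a path on which the first $K$ spectral moments are exactly constant (Lemma~\ref{lem:path}). Haar integration by parts then turns $\dot p_t$ exactly into a single Hermite term of order $m=2K+1$, \eqref{eq:C.10}--\eqref{eq:C.13}. The Poincar\'e gap $n-1$ keeps its energy under control, \eqref{eq:C.18a}. So only one measured tensor must be bounded, and for an arbitrary rank-$k$ projection even that is the substantial Lemma~\ref{lem:twirl}.

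\textbf{Upper bound.} Your preliminary step is not affordable. Li and Woodruff show that estimating $\norm{A}_{\op}$ within any constant factor $\alpha$ needs $\Omega(n^2/\alpha^4)$ general linear measurements, so a crude $\sigma_1$ already costs $\Omega(n^2)$.

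More fundamentally, a single window cannot work at low degree. For $\sigma_1=1$ and $\sigma_2=\cdots=\sigma_n=1/n$, half of $\norm{A}_{\Sone}$ sits at squared scale $n^{-2}$. The window must therefore reach $\delta\le n^{-2}$, and relative approximation of $\sqrt t$ on $[\delta,1]$ needs degree at least of order $\delta^{-1/2}\ge n$. Counting these singular values does not remove their contribution.

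The missing idea is to split off a data-chosen head and normalize the residual $C$ by its own Frobenius norm. If $\sr(C)\gtrsim R$, then $\lambda_i=n\sigma_i(C)^2/\norm{C}_F^2\in[0,O(n/R)]$ and $q=\frac1n\sum_i\sqrt{\lambda_i}\ge B_\lambda^{-1/2}$. An additive approximation of degree $O(n/(R\rho))$ then gives relative error $\rho$, however small the individual singular values are.

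Your ``top $n/\log n$ directions'' and ``$1-\Theta(1/\log n)$ fraction of coordinates'' do not provide this. A fixed cutoff need not leave a well-spread tail, and stored coordinates yield no spectral estimate. What is needed:
\begin{itemize}
\item the multiscale dichotomy of Lemma~\ref{lem:dichotomy} over $O(\log b)$ cutoffs $h_j=jR$;
\item the regularized covariance event \eqref{eq:5.5}, with its total-relative residual bound \eqref{eq:5.15};
\item the stored-data certificate \eqref{eq:5.20} from $GA$ and $WA$;
\item the regression estimate of the implicit head from $AR_0$.
\end{itemize}
Balancing head cost $n^2\log b/b$ against tail cost $n^{2-c/b}\mathrm{poly}(b)$ at $b\asymp\log n/\log\log n$ gives the rate. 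The $(\log\log n)^2$ is the grid size $\log b$ times the $\log\log n$ lost in $b$, not a degree-versus-variance price.
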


The lower-bound construction is not specific to the square root that represents the nuclear
norm.  Every fixed noninteger power of the squared singular values is invisible to the matched
integer moments.  The following consequence is proved in
Appendix~\ref{app:noneven}.

\begin{corollary}[All non-even Schatten exponents]\label{cor:noneven}
Fix $p>0$ with $p\notin\{2,4,6,\ldots\}$ and fix $0<\eps<1$.  There is a finite constant
$A_{p,\eps}$ such that, for all sufficiently large $n$,
\begin{equation}
 k_{p,\eps}(n)\ge \frac{n^2}{(\log n)^{A_{p,\eps}}}.       \tag{1.2a}\label{eq:1.2a}
\end{equation}
The constant may depend on $p$ and $\eps$ but not on $n$; in particular, the statement is not
uniform as $p$ approaches a positive even integer.
\end{corollary}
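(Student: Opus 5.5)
The plan is to redo the lower-bound half of Theorem~\ref{thm:main} with the square root replaced by $t\mapsto t^{p/2}$, since the construction sees $\norm{A}_{\Sp p}^p=\tr\,(A^TA)^{p/2}$ only through a function of the squared singular values. Fix $p>0$ not a positive even integer. I will build two distributions $\mathcal D_0,\mathcal D_1$ on $n\times n$ matrices of the form $A=G+\lambda UV^T$. Here $G$ is Gaussian noise with i.i.d.\ $N(0,1/n)$ entries. The perturbation $UV^T$ has rank $r=n/(\log n)^{B}$, and its planted squared singular values are drawn from laws $\mu_0,\mu_1$ on a fixed interval $[a,b]\subset(0,\infty)$. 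These laws are chosen to agree on the moments $\int t^j\,d\mu$ for $j\le L$, with $L\asymp\log n/\log\log n$. Their $p/2$-moments should differ by a constant $\delta_p>0$; more precisely, this should hold for the transformed empirical spectral laws that the sketch actually sees, after the free convolution with the noise spectrum.

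First step: existence of $\mu_0,\mu_1$. By duality (Hahn--Banach / Markov--Krein), two measures on $[a,b]$ matching $L$ moments can separate $f(t)=t^{p/2}$ by an amount proportional to the best uniform polynomial approximation error $E_L(f;[a,b])$. On an interval bounded away from $0$, $f$ is analytic, so $E_L$ decays geometrically, like $\rho^{-L}$. That is too small for a constant gap. I will therefore let $a\to 0$ at rate $a=(\log n)^{-C}$, or equivalently put the spike scale near the noise edge. Near $0$, $t^{p/2}$ is genuinely non-polynomial when $p/2\notin\mathbb Z$, and Bernstein's theorem gives $E_L(t^{p/2};[0,1])\asymp_p L^{-p}$. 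Rescaling turns this into a gap of polylogarithmic size relative to the total. This polylogarithmic loss is exactly what the exponent $A_{p,\eps}$ absorbs, and it explains why the constants blow up as $p/2$ approaches an integer, where the Bernstein constant vanishes. For $p/2\in\mathbb Z_{>0}$ the function is itself a polynomial and the argument correctly breaks down.

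Second step: the sketch cannot tell $\mathcal D_0$ from $\mathcal D_1$. This part should be identical to the $p=1$ case in the main proof, taken as given earlier in the paper. There, $k\le n^2/(\log n)^{A}$ Gaussian-smoothed linear measurements have a law that depends on the planted spectrum only through its first $L$ moments, up to total-variation error $o(1)$. Since that argument never uses the target functional, it transfers verbatim.

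Third step: concentration and the conclusion. I will show that $\norm{A}_{\Sp p}^p$ concentrates under each $\mathcal D_i$, using Lipschitz concentration of spectral functionals of $G$ together with free-probability asymptotics for $A^TA$. I will also show that the two typical values differ by a factor of at least $(1+\eps)^{2p}$. Ensuring the factor requires tuning $\lambda$ and $r$ in terms of $\eps$ and $p$. A sketch achieving \eqref{eq:1.1} would then separate the two distributions with advantage $1/3-o(1)$, contradicting the second step. The main obstacle is this last separation. The moment-matching gap lives in the planted spectrum, but the sketch-relevant quantity is $\tr\,(A^TA)^{p/2}$ of the noisy sum, so I must show that the deconvolution map from planted spectrum to output spectrum preserves a non-polynomial discrepancy. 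My first attempt is to use the subordination functions of the free additive convolution: they are analytic, so they map polynomial moment constraints to polynomial constraints of comparable degree, while $t^{p/2}$ composed with them remains non-analytic at the edge.
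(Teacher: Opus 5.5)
\paragraph{The main gap: your first step yields the wrong kind of gap.}
The corollary is for a \emph{fixed} accuracy $\eps$, so the two input laws must have Schatten-$p$ values separated by a fixed factor larger than $(1+\eps)/(1-\eps)$. Your construction gives a discrepancy of relative size about $L^{-p}$ in the $p/2$-moment, which is polylogarithmically small. That only rules out sketches whose accuracy is $O(L^{-p})$. It cannot be ``absorbed by $A_{p,\eps}$'': that exponent pays for the number of measurements, not for the accuracy.

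Your third step then asserts a separation by $(1+\eps)^{2p}$. No choice of $\lambda$ or $r$ can enlarge a relative discrepancy that is already $o(1)$ in the planted spectrum; the noise only dilutes it.

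The missing idea is that the polylogarithmic loss must be spent on the \emph{support size}, not on the gap. Best uniform approximation bounds the absolute gap on a fixed interval. A constant \emph{ratio} of $p/2$-moments is still achievable when the two laws put their mass at very different scales.

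The paper does exactly this. It takes the Lagrange weights on the nodes $x_j=j^{M}$, $0\le j\le K+1$, and splits them by sign into $\nu_0,\nu_1$, which have equal moments through degree $K$. For $\alpha=p/2\notin\mathbb N$, the sequence $q_j=(j-1)!\,j^{\alpha-j+1}$ satisfies $q_{j+1}/q_j=(1+1/j)^{\alpha-j}$ and so has a unique maximum at $j_\alpha=\lceil\alpha\rceil$. For one fixed large $M=M_{p,\eps}$, the single node $j_\alpha$ then dominates the $\alpha$-moment uniformly in $K$. This gives $H_0/H_1\le\rho_\eps^p$, a constant relative gap as large as needed, with support only $[0,2(K+1)^{M}]$. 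The polylogarithmic support $B$ enters the total-variation bound through $(C_{\mathrm{en}}B)^{m-1}=\exp\{O(K\log K)\}$, and that is where $A_{p,\eps}$ pays. The non-uniformity near even $p$ comes from $q_j$ approaching a tie.

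\paragraph{The other two steps also fail as stated.}
The indistinguishability argument is not model-free. Lemma~\ref{lem:path-tv} is proved for $F_t=\sqrt n\,UD_tV^T$, a two-sided Haar rotation of a deterministic $n\times n$ diagonal. It requires the power sums $\tr D_t^{2q}$, $q\le K$, to be \emph{exactly} constant along a path of controlled length, which is what Lemma~\ref{lem:path} supplies (equal-weight particles, control groups, Steinitz ordering). With planted values drawn i.i.d.\ from $\mu_b$, the realized empirical moments fluctuate at order $r^{-1/2}$. The exact cancellation of all tensor orders through $2K$ in the score recursion is then unavailable, so a ``verbatim'' transfer is unjustified.

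The deconvolution problem you name as the main obstacle is unresolved. Analyticity of subordination functions does not by itself give a quantitative constant-factor gap for $\tr(A^TA)^{p/2}$. It is also unnecessary. The paper takes the noise to be a small fixed multiple, $X_b=F_b+\sigma Z$, and controls its effect directly:
\begin{itemize}
 \item for $1\le p<2$, by the triangle inequality and $\norm Z_{\Sp p}\le n^{1/p-1/2}\norm Z_F$;
 \item for $p\ge2$, by the triangle inequality and $\norm Z_{\Sp p}\le n^{1/p}\norm Z_{\op}\le C_\zeta n^{1/2+1/p}$, with probability $1-\zeta$;
 \item for $0<p<1$, by the same Frobenius comparison combined with McCarthy's inequality $\bigl|\norm A_{\Sp p}^p-\norm B_{\Sp p}^p\bigr|\le\norm{A-B}_{\Sp p}^p$.
\end{itemize}
The total-variation bound tolerates any fixed ratio $a/\sigma$, so no free-probability analysis is needed.
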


The upper-bound construction also extends beyond the nuclear norm.  The fractional-power tail
is estimated from the same integer spectral moments, while the leading singular-value power is
read directly from the regularized row sketch.  Appendix~\ref{app:noneven-upper} proves the
following consequence.

\begin{corollary}[Polylogarithmic saving for all non-even exponents]
\label{cor:noneven-upper}
Fix $p>0$ with $p\notin\{2,4,6,\ldots\}$ and fix $0<\eps<1$.  There is a finite constant
$C_{p,\eps}$ such that, for all sufficiently large $n$,
\begin{equation}
 k_{p,\eps}(n)\le C_{p,\eps}\frac{n^2}{(\log n)^{c_p}},
 \qquad
 c_p=\begin{cases}
       p/8,&0<p<2,\\
       1/p^2,&p>2.
     \end{cases}                                         \tag{1.2b}\label{eq:1.2b}
\end{equation}
The exponents are conservative; at $p=1$, Theorem~\ref{thm:main} gives the sharper logarithmic
saving in \eqref{eq:1.2}.
\end{corollary}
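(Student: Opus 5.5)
The upper bound in \eqref{eq:1.2b} will follow the same two-part design sketched before the statement for $p=1$. Split the spectrum of $A$ at a threshold $\tau$, with the head consisting of the top $r$ singular values. The head contribution $\sum_{i\le r}\sigma_i^p$ is read directly from a regularized row sketch. The tail contribution $\sum_{i>r}\sigma_i^p$ is obtained by approximating $x\mapsto x^{p/2}$ on $[0,\tau^2]$ by a polynomial $q$ of degree $D$. We then estimate $\tr q(A^TA)$, i.e.\ a linear combination of the integer moments $\norm{A}_{\Sp{2j}}^{2j}$, $j\le D$, from a sketch. Even-order moments are what sketches handle well: by the known $\Theta(n^{2-4/p})$ bounds for even $p$, a $(1\pm\delta)$ estimate of $\norm{A}_{\Sp{2j}}^{2j}$ costs roughly $n^{2-2/j}$ measurements times a factor depending on $\delta$ and $j$. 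The plan is to choose $r$, $\tau$, $D$ as functions of $\log n$ so that every piece costs at most $n^2/(\log n)^{c_p}$.

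\textbf{Steps, in order.} First, normalize by a constant-factor estimate of $\norm{A}_{\Sp p}$ (or of $\norm{A}_F$ together with the head), obtained by a cheap preliminary sketch and a union bound. Second, the head. A sketch of dimension about $nr\cdot\mathrm{poly}(1/\eps)$, for instance a Gaussian $AG$ with $G\in\R^{n\times O(r/\eps)}$ plus a co-range sketch, recovers the top $r$ singular values to relative accuracy in the regime where they dominate the tail Frobenius mass. The choice $r\approx n/(\log n)^{c_p}$ makes this cost $n^2/(\log n)^{c_p}$. The remaining tail then satisfies $\sigma_{r+1}\le \norm{A}_{\Sp p}\,r^{-1/p}$. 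Third, the tail. Use a Jackson-type polynomial approximation of $x^{p/2}$ on $[0,1]$ (after rescaling by $\tau^2=\sigma_{r}^2$). The error is $O(D^{-p})$ uniformly, and summed over at most $n$ tail values this gives error $n\tau^p D^{-p}$. For $p<2$ we compare this error with $\norm{A}_{\Sp p}^p\ge r\tau^p$, which requires $D\gtrsim (n/r)^{1/p}/\eps^{1/p}$, i.e.\ $D\approx(\log n)^{c_p/p}$. The coefficients of $q$ in the monomial basis grow like $e^{O(D)}$. So each moment must be estimated to additive accuracy $e^{-O(D)}$ relative to its natural scale, which costs $n^{2-2/j}e^{O(D)}$ measurements. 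With $D$ polylogarithmic in $\log n$ this is subexponential, so the moments with $j\le D$ cost $n^{2-2/D}e^{O(D)}=n^2\exp(-2\log n/D+O(D))$. This is far below the target when $D\le(\log n)^{1/2-\delta}$. The exponent $p/8$ (respectively $1/p^2$) leaves room for this constraint together with the head cost. For $p>2$ the tail is dominated differently: the error budget compares $n\tau^pD^{-p}$ with $r\tau^p$, but the head read is less accurate for large $p$. Balancing these produces the weaker exponent $1/p^2$.

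\textbf{Main obstacle.} The delicate step is the tail estimate: the monomial coefficient blow-up of $q$ must be controlled simultaneously with the additive-error guarantees of the even-moment sketches, which are multiplicative only relative to $\norm{A}_{\Sp{2j}}^{2j}$ and not relative to the tail. I would first try expressing $q$ in a Chebyshev basis on $[0,\tau^2]$ and estimating $\tr T_j(A_{\mathrm{tail}}^TA_{\mathrm{tail}}/\tau^2)$ directly. Here $A_{\mathrm{tail}}$ is obtained by projecting out the recovered head inside the sketch, which is linear once the head sketch is fixed. Its trace is bounded by $n$ and its variance under a cycle-type estimator is bounded by $n^{2-2/j}$ times $\mathrm{poly}(j)$. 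If the head projection cannot be made linear, the fallback is to apply the polynomial to the full $A^TA$. One would then subtract the head's contribution $\sum_{i\le r}q(\sigma_i^2)$ computed from the recovered head values. This requires $q$ to be controlled on $[0,\norm{A}_{\op}^2]$, which costs more degree but should still yield some exponent $c_p>0$.
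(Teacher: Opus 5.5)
Your architecture matches the paper's: the exact head power sum read from a regularized row sketch, a polynomial for $x^{p/2}$ on the residual, and cycle estimators applied to $YA(I-P)$. Your main plan in the last paragraph is already linear, since $P$ depends only on the independent block $GA$. But two steps fail as written.

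\emph{The tail interval.} Your bias bound compares $n\tau^pD^{-p}$ with $\norm A_{\Sp p}^p\ge r\tau^p$, which needs $\tau=\sigma_r(A)$. The residual of a sketched head, however, is not spectrally contained in $[0,\tau^2]$. Take one singular value $M\gg\sqrt{n/s}$ and all others equal to $1$, with $s\ge2r$ sketch rows. The top sketched direction misses $v_1$ by an angle of order $\sqrt{n/s}/M$. So $A(I-P)$ keeps a component of norm of order $\sqrt{n/s}$, which is of order $\sqrt b$ (up to $\eps$-factors) when $s=O_\eps(r)$ and $r=n/b$, while $\tau=1$. Outside its fitting interval a degree-$D$ approximant is uncontrolled, generically exponentially large in $D$, so the tail estimate is destroyed. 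The interval must instead be a bound computable from stored data, such as $\Theta=(\gamma^2+2\beta\widehat F^2)/(1-\alpha)$ with $\gamma=\sigma_{r+1}(GA)$ from \eqref{eq:D.2}. You then need two further facts, neither of which is in your plan:
(a) $r\Theta^{p/2}\lesssim\norm A_{\Sp p}^p$, which is where \eqref{eq:F.9}--\eqref{eq:F.10} enter;
(b) $\norm{A(I-P)}_{\Sp p}^p$ lies within $\kappa\norm A_{\Sp p}^p$ of the exact tail, as in \eqref{eq:F.4}, because you add it to the exact head sum.

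\emph{The head for $p>2$.} Outside the regime where the head dominates, each $\sigma_i(GA)^2$ carries an additive error of order $F_r^2/s$. For $p<2$ this is harmless by \eqref{eq:F.9}, but for $p>2$ it is not. For $A=I_n$, every top singular value of $GA$ is about $\sqrt{n/s}$. Hence $\sum_{i\le r}\sigma_i(GA)^p\approx r(n/s)^{p/2}$, which exceeds $\norm A_{\Sp p}^p=n$ by a factor of order $b^{p/2-1}$ when $s=O_\eps(r)$. The paper needs the $p$-dependent regularization \eqref{eq:F.5} and $s_G\approx n^{1-2/p}H^{2/p}$ rows \eqref{eq:F.2}, so the head costs $n^2/b^{2/p}$ as in \eqref{eq:F.22}. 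With that cost, your choice $n/r\approx(\log n)^{c_p}$ yields only the exponent $2c_p/p$. Your ``balancing'' sentence supplies no mechanism for either point.

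For comparison, the paper avoids the scale issue through the grid of Lemma~\ref{lem:F-dichotomy} and the $WA$ certificate. A selected residual, normalized by its own Frobenius norm, has spectrum in $[0,B_*]$ with $B_*\approx b$, so its error is measured relative to itself, and an uncertified last cutoff is simply dropped. Charging the tail error to the head mass, as you propose, is a genuinely different organization. Once (a), (b) and the larger head sketch are in place, it appears to make the grid unnecessary.

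Two smaller points. Step~1 is circular: a constant-factor estimate of $\norm A_{\Sp p}$ is the problem itself at fixed accuracy, and by Corollary~\ref{cor:noneven} it already needs $n^{2-o(1)}$ measurements. Nothing in the argument actually requires it. Your fallback of applying $q$ to all of $A^TA$ cannot work either. Since $\sigma_1(A)/\sigma_{r+1}(A)$ is unbounded over inputs, $\tr(A^TA)^j$ can exceed the tail contribution by an arbitrary factor that no fixed degree or moment precision can subtract away.
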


The exceptional exponents are structural.  When $p=2q$ is a positive even integer,
$\norm A_{\Sp{p}}^p=\tr(A^TA)^q$ is a polynomial in the entries and is exactly determined by the
matched $q$th spectral moment.  For every other fixed finite $p>0$, the proof identifies a unique
unmatched anchor in the scalar construction and amplifies its contribution to a constant
relative gap.
For positive even $p$, known general linear sketches use
$O_\eps(n^{2-4/p})$ measurements~\cite{LiWoodruff}.  Corollary~\ref{cor:noneven} therefore
exhibits a polynomial separation between the positive even exponents and every neighboring
fixed non-even exponent.
Combining the two corollaries with the tight even-$p$ and
operator-norm results of Li and Woodruff~\cite{LiWoodruff} gives, for every fixed
$0<p\le\infty$ and fixed accuracy,
\begin{equation}
 k_{p,\eps}(n)=
 \begin{cases}
   n^{2-o(1)}, & 0<p<\infty,\quad p\notin\{2,4,6,\ldots\},\\
   \Theta_{p,\eps}(n^{2-4/p}), & p\in\{2,4,6,\ldots\},\\
   \Theta_\eps(n^2), & p=\infty.
 \end{cases}                                                       \tag{1.2c}\label{eq:classification}
\end{equation}
Thus the polynomial order of the general-linear sketch dimension is resolved for the entire
Schatten family; the only remaining losses in the non-even finite regime are polylogarithmic.

\begin{corollary}[Polynomial-length turnstile lower bound]\label{cor:streaming}
Fix $p>0$ with $p\notin\{2,4,6,\ldots\}$ and fix $0<\eps<1$.  There are finite constants
$B_{p,\eps},C_{\mathrm{str}}(p,\eps),c_{\mathrm{str}}(p,\eps)>0$ such that the following holds
for all sufficiently large $n$.  Let $W\ge n^{C_{\mathrm{str}}(p,\eps)}$.  Suppose a randomized
one-pass algorithm receives unit updates
\[
 A_{ij}\leftarrow A_{ij}+\Delta,
 \qquad \Delta\in\{-1,1\},
\]
and, on every stream of length at most $W$, outputs a $(1\pm\eps)$ estimate of
$\norm A_{\Sp{p}}$ for the final matrix $A\in\mathbb Z^{n\times n}$ with probability at least
$2/3$.  Then its memory is at least
\begin{equation}
 c_{\mathrm{str}}(p,\eps)
 \frac{n^2\log W}{(\log n)^{B_{p,\eps}}}
 \quad\text{bits}.                                      \tag{1.3}\label{eq:1.3}
\end{equation}
In particular, for every fixed exponent $C\ge C_{\mathrm{str}}(p,\eps)$, taking $W=n^C$ gives
an $n^{2-o(1)}$-bit lower bound.  The exponent $B_{p,\eps}$ may be taken to be the exponent
furnished by the non-even-$p$ lower-bound proof after changing only its fixed smoothing
constants.
\end{corollary}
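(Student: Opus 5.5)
The proof combines three results. The first is the non-even-$p$ hard-distribution lower bound behind Corollary~\ref{cor:noneven}. The second is the real-to-integer lifting theorem of Gribelyuk, Lin, Woodruff, Yu, and Zhou~\cite{GribelyukEtAlLifting}. The third is the streaming-to-sketch linearization of Jiang, Liu, and Yu~\cite{JiangLiuYuLinearization}.

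\textbf{Step 1: the hard pair.} From the proof of Corollary~\ref{cor:noneven} I extract the two distributions $\mu_0,\mu_1$ on $\R^{n\times n}$. They have matched integer spectral moments up to the required order, are Gaussian-smoothed, and their Schatten-$p$ norms differ by a constant factor $1+c(p,\eps)$ with high probability. For every $k$-row linear map $S$ with $k\le n^2/(\log n)^{A}$, the lower-bound argument shows $\TV(S\mu_0,S\mu_1)\le 1/10$. This is the real-sketch dimension lower bound with the stated exponent. I may enlarge the Gaussian smoothing variance by a constant factor. That changes only constants, which explains the clause about $B_{p,\eps}$.

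\textbf{Step 2: robust discretization.} Fix a scale $R=n^{C'}$. Replace a sample $A$ by $\lfloor RA/\sigma\rceil$, truncated at magnitude $R\,\mathrm{poly}(n)$. Three things must hold.
\begin{enumerate}
\item Rounding perturbs $\norm{\cdot}_{\Sp{p}}$ by at most a $(1\pm\eps/10)$ factor. For $p\ge1$ this follows from Weyl/Mirsky, $\abs{\norm{A}-\norm{A'}}\le n^{1/p}\norm{A-A'}_{\op}$. For $p<1$ I use the quasi-triangle inequality $\norm{A+B}_{\Sp p}^p\le\norm A^p_{\Sp p}+\norm B^p_{\Sp p}$. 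Both are compared against the typical norm $\gtrsim \sqrt n\cdot n^{1/p}$ of the smoothed instance.
\item Truncation happens with only $n^{-\omega(1)}$ probability, by Gaussian tails.
\item The discretized hard pair still fools bounded-entry integer sketches. This is exactly the hypothesis of the lifting theorem: the distributions are smooth Gaussian-perturbed mixtures at polynomial scale.
\end{enumerate}
The lifting theorem then gives that any integer sketch with entries bounded by $\mathrm{poly}(n)$ distinguishing the pair needs $n^2/(\log n)^{A'}$ rows.

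\textbf{Step 3: streaming to sketch.} Each entry of the rounded matrix has magnitude at most $R\,\mathrm{poly}(n)$, so the whole instance is reachable by a stream of length at most $n^2R\,\mathrm{poly}(n)\le W$ once $C_{\mathrm{str}}$ is chosen large. Given an $S$-bit algorithm correct on all such streams, I apply Jiang--Liu--Yu. The problem is smooth in their sense: by Step 2 the decision is stable under small perturbations of the distribution. The result is a bounded-entry integer sketch with $O(S/\log R)$ rows that distinguishes $\mu_0$ from $\mu_1$ with advantage $\Omega(1)$. Combining with Step 2 gives
\[
S\gtrsim \frac{n^2\log R}{(\log n)^{A'}}.
\]
To obtain $\log W$ rather than $\log R$, I set $R=W^{c}/\mathrm{poly}(n)$. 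This is admissible because the smoothing argument works at any scale $R\ge n^{C'}$, and then $\log R\asymp\log W$.

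\textbf{Main obstacle.} The hardest part is parameter compatibility in Step 3. The linearization needs the instance entries to be attained by streams of length $\le W$ and the problem to be ``smooth'' at the granularity $1/R$. The lifting needs the sketch entries bounded by $\mathrm{poly}(R)$ and the hard distributions to be discretized Gaussians at that same scale. I would first check that scaling the Gaussian-smoothed pair by $R$ leaves its TV-indistinguishability and norm gap invariant, since both are scale-free. Then I would verify that the rounding error $O(1)$ per entry is negligible relative to smoothing of size $R$. That reduces everything to choosing $R$ between $n^{C'}$ and $W/\mathrm{poly}(n)$.
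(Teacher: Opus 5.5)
Your architecture is the paper's: a smoothed non-even-$p$ hard pair, entrywise rounding controlled by Pinsker, a mollification-stable promise problem, the Jiang--Liu--Yu transfer (Theorem~\ref{ext:mollified-transfer}) to an integer sketch with $r\le C(S+2)/\log q$ rows and entries at most $q$, and the lifting theorem (Theorem~\ref{ext:integer-lifting}), which produces a row coisometry with at most $4r$ rows contradicting the real total-variation bound. However, the parameter compatibility you single out as the crux is set up incorrectly. The clearest error is the stream length. The transfer theorem does not use correctness on a stream that merely assembles the final instance. It uses correctness on the associated mollified stream distribution, which begins with $q^2$ canonical blocks $\mathsf{can}(X_i)$ of truncated discrete Gaussians of scale $q$, each of $\ell_1$-length about $qN$. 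These streams have length of order $q^3N$ (see \eqref{eq:E.4}), so you need $q^3N\lesssim W$. Your window ``$R$ up to $W/\mathrm{poly}(n)$'' is therefore wrong: near its top, the algorithm's guarantee does not cover the streams the transfer requires. The paper takes $q=\lfloor W^{1/10}\rfloor$ and checks \eqref{eq:E.25}, which still gives $\log q\asymp\log W$.

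Second, the scales cannot coincide. In Step~3 you identify the Jiang--Liu--Yu parameter with the instance's noise scale $R$ (``at that same scale''). That parameter is simultaneously the mollification scale, the entry bound of the produced sketch, and the $\log q$ in its row count. The lifting window \eqref{eq:E.5}, however, requires the instance's discrete-Gaussian noise scale $\lambda$ to satisfy $Nq\log q\le\lambda\le q^{6/5}$, and the transfer requires support diameter at most $q^{5/4}$. The paper therefore separates three scales, $q\ll\lambda=C_{\mathrm{lift}}Nq\log q\ll a=\lambda/\tau$, with signal $aF_b$ \eqref{eq:E.10}. It takes $q\ge N^{C_*(p,\eps)}$ so that the diameter, of order $N^{3/2}q$ up to logarithmic factors, fits under $q^{5/4}$ \eqref{eq:E.22}. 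The gap must survive noise of scale $\lambda$ at the small failure probabilities the two transfer theorems demand, plus rounding and scale-$q$ mollification. So the noise-to-signal ratio $\tau$ must be made \emph{smaller}, and that is what changes $B_{p,\eps}$. Enlarging the smoothing variance, as you propose, shrinks the gap instead.

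Two further steps are asserted rather than verified. First, rounding the whole continuous-Gaussian-smoothed sample does not give an input of the form ``independent discrete Gaussian plus integer offset,'' which the lifting theorem requires. The paper rounds only the orbit part, $Y_b=\operatorname{round}(aF_b)$, adds an independent discrete Gaussian $X_\lambda$, and compares $\mathsf L(G_\lambda+Y_b)$ with $\mathsf L(G_\lambda+aF_b)$ by Pinsker \eqref{eq:E.13}. Your route would need an extra comparison between a rounded continuous Gaussian with fractional center and a discrete Gaussian. Second, the smoothness hypothesis \eqref{eq:E.2} is stability of a promise function with deterministic thresholds under additive scale-$q$ discrete Gaussian noise. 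Your Step~2, which concerns $O(1)$ rounding errors, does not supply this. You need explicit thresholds as in \eqref{eq:E.16}--\eqref{eq:E.17}, and a total extension $\overline f$ to feed the lifting theorem.
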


The ambient vector space has dimension $n^2$.  At $p=1$, the upper bound saves roughly
$\log n/(\log\log n)^2$ over storing the whole matrix; every other fixed finite non-even exponent
also admits a fixed polylogarithmic saving.  The lower bounds say that no sketch can save a fixed
polynomial in $n$.  Thus the bounds are tight up to polylogarithmic factors throughout the
non-even regime, and $k_{p,\eps}(n)=n^{2-o(1)}$ there.

\subsection{Background and prior work}

Linear sketching asks how many linear measurements are needed to estimate a function of a
high-dimensional input.  It arose in the turnstile streaming model.  In the vector version, an
underlying vector $x\in\R^N$ receives updates of the form
$x_i\leftarrow x_i+\Delta$; because a linear sketch can update its stored state from $i$ and
$\Delta$ alone, it never needs to store $x$.  The classical frequency moments are
\[
 F_p(x)=\sum_{i=1}^N |x_i|^p=\norm{x}_p^p.
\]
Alon, Matias, and Szegedy introduced their systematic streaming study and, in particular, the
foundational second-moment sketch~\cite{AlonMatiasSzegedy}.  For $0<p\le2$, Indyk used random
projections with $p$-stable entries to estimate $\ell_p$ norms in space polylogarithmic in $N$
for fixed accuracy~\cite{IndykStable}; Kane, Nelson, and Woodruff later determined the sharp bit
complexity for small $p$, including the dependence on the accuracy and numerical range of the
stream~\cite{KaneNelsonWoodruff}.  For $p>2$, Indyk and Woodruff gave algorithms using, up to
logarithmic factors, $N^{1-2/p}$ space~\cite{IndykWoodruff}.  Thus the vector problem has a sharp
change of behavior at $p=2$.

The matrix turnstile model similarly receives updates
$A_{ij}\leftarrow A_{ij}+\Delta$.  Schatten norms apply the corresponding vector norm to the
singular-value vector, but a vector norm sketch cannot simply be run on that vector: one entry
update can change every singular value, and the singular values are not linear functions of the
entries.  An early matrix-streaming result of Andoni and \Nguyen\ used left and right random
projections to recover large eigenvalues and estimate the residual Frobenius error in one
pass~\cite{AndoniNguyen}.  This illustrates the useful but restricted \emph{bilinear} sketching
form.

Four distinctions in this literature are relevant to the present result.  A \emph{general linear sketch}
may use an arbitrary linear map on the $n^2$ entries of $A$, whereas a \emph{bilinear sketch}
stores a matrix $SAT$ for chosen left and right projection matrices $S,T$.  Every bilinear sketch
is linear in $A$, but not every linear sketch has this form.  A \emph{bit-space} bound counts a
finite-memory streaming state and therefore depends on precision assumptions, whereas a
\emph{sketch-dimension} bound counts exact real measurements.  Finally, arbitrary-order entry
updates, row-order input, and multiple passes are different access models.  An improvement in one
of these models need not improve the others.

Li, \Nguyen, and Woodruff initiated a systematic study of Schatten-norm sketching across
general versus bilinear sketches and bit space versus exact-real sketch dimension
\cite{LiNguyenWoodruffSODA,LiNguyenWoodruff}.  The conference version proved, for
constant-factor Schatten--1 approximation, an $\Omega(\sqrt n)$ lower bound for general linear
sketches and an $\Omega(n^{1-\gamma})$ lower bound for bilinear sketches for every fixed
$\gamma>0$; the upper bound in both models was the trivial $O(n^2)$ bound obtained by storing all
entries~\cite{LiNguyenWoodruffSODA}.  The journal version strengthened the general-linear lower
bound to $\Omega(n)$~\cite{LiNguyenWoodruff}.  Thus, immediately before the present work, the
best unrestricted constant-factor bounds in the exact-real general-linear model were $\Omega(n)$
and $O(n^2)$.  Their work left open where in this polynomial range the correct complexity lies.
For the rest of the Schatten family, the picture was split by arithmetic properties of the
exponent.  Li and Woodruff obtained tight general-linear dimension bounds
$\Theta_{p,\eps}(n^{2-4/p})$ for every positive even integer $p$ and
$\Theta_\eps(n^2)$ for the operator norm~\cite{LiWoodruff}.  Their arguments exploit polynomial
trace moments or the large-singular-value regime and do not settle $p=1$, or more generally any
finite non-even $p$.  Before the present result there was therefore no polynomial-order
classification of exact-real general linear sketches across all Schatten exponents.

Several streaming algorithms and lower bounds address related, but not identical, matrix models.
For matrices with only constantly many nonzero entries in every row and column, Li and Woodruff
gave a one-pass turnstile algorithm using $n^{1-2/p}$ times factors polynomial in the inverse
accuracy and logarithmic in $n$ when $p$ is an even integer~\cite{LiWoodruffSingularValues}.  For
every fixed non-even $p>0$, they proved a nearly linear bit-space lower bound on such sparse
instances.  In particular, at $p=1$ and for sufficiently small relative error $\delta$, a
$(1+\delta)$-approximation requires
$\Omega_\delta(n^{1-g(\delta)})$ bits, where $g(\delta)\to0$ as $\delta\to0$, even on matrices
with $O(1)$ nonzero entries per row and column.  These are finite-bit streaming lower bounds
rather than lower bounds on the number of exact real measurements, and their nearly linear scale
does not imply the nearly quadratic classification in \eqref{eq:classification}.

There are also nonconstant-approximation upper bounds specific to the Schatten--1 norm.  The
inequalities $\norm A_F\le \norm A_{\Sone}\le\sqrt{\min\{n,d\}}\norm A_F$ give a
$\sqrt{\min\{n,d\}}$-approximation from a standard Frobenius-norm sketch.  More generally, Li and
Woodruff constructed oblivious left-right embeddings that give a factor-$D$ estimate using
$\widetilde O(\min\{n,d\}^2/D^4)$ bits of streaming space for an $n\times d$ matrix
\cite{LiWoodruffEmbeddings}.  This interpolates between the trivial near-quadratic storage bound
at constant $D$ and the $O(1)$-word Frobenius estimate at $D$ of order
$\min\{n,d\}^{1/2}$.  It is a useful space--approximation tradeoff, but for constant $D$ it does
not give a subquadratic bound.

Changing how the matrix is presented can lead to stronger algorithms.  Braverman, Chestnut,
Krauthgamer, Li, Woodruff, and Yang developed faster algorithms, multi-pass algorithms, and
row-order algorithms for Schatten norms, and showed separations from the one-pass entry-update
model~\cite{BravermanEtAlMatrixNorms}.  For doubly sparse matrices presented in row order,
Braverman, Krauthgamer, Krishnan, and Sinoff subsequently obtained the first Schatten-norm
algorithms whose memory is independent of the matrix dimension; their guarantees use sparsity
and multiple passes, and they also prove that multiple passes are necessary in their setting
\cite{BravermanEtAlSparsity}.  These results are algorithmically important, but their structural
promises do not apply to the unrestricted, one-pass, general-sketch problem studied here.

A complementary geometric line of work relates small sketches for norms to low-distortion
embeddings of normed spaces.  Andoni, Krauthgamer, and Razenshteyn used this connection to show,
among other examples, that in their nonlinear distance-threshold model, an $s$-bit factor-$D$
sketch for the trace-norm space must satisfy~\cite{AndoniKrauthgamerRazenshteyn}
\[
 sD=\Omega\!\left(\frac{\sqrt n}{\log n}\right).
\]
This rules out dimension-independent constant-size sketches.
Because it concerns finite-bit sketches for a distance decision problem rather than exact-real
linear measurements of one matrix, it is complementary to the present result and does not imply
the nearly quadratic dependence on $n$ proved here.

The nuclear norm behaves differently on positive semidefinite and unrestricted inputs.  If
$A\succeq0$, then $\norm A_{\Sone}=\tr A$, so one measurement is exact.  The difficulty therefore
comes from the interaction between left and right singular directions for a general matrix.  This
paper concerns unrestricted real square matrices and general linear sketches.

Theorem~\ref{thm:main} nearly resolves the preceding Schatten--1 question.  Its lower and upper
bounds are tight up to polylogarithmic factors, closing the polynomial gap between the previous
$\Omega(n)$ and $O(n^2)$ bounds: for every fixed $\eps$ and every fixed $c>0$, a sketch using
$O(n^{2-c})$ measurements is impossible, while the upper bound achieves a logarithmic saving
below $n^2$.  Corollary~\ref{cor:noneven} shows that the same nearly quadratic obstruction holds
for every finite non-even $p>0$, and Corollary~\ref{cor:noneven-upper} gives a nontrivial
polylogarithmic saving throughout that regime.  Together with the tight positive-even and operator-norm results
reviewed above, this yields the complete polynomial-order classification
\eqref{eq:classification}.  The paper keeps Schatten--1 as its main storyline because that case
receives the sharpest upper bound; the general lower- and upper-bound extensions are proved in
Appendices~\ref{app:noneven} and~\ref{app:noneven-upper}.

\subsection{Contributions}

The paper develops the Schatten--1 problem as its main storyline.  Its lower-bound mechanism then
extends to every non-even exponent, completing the broader classification when combined with
prior results.

\begin{itemize}
 \item The lower bound is uniform over all rank-$k$ linear measurements and all measurable
       decoders.  Its hard pair has deterministic singular values up to a small Gaussian
       perturbation.  Low even moments of the two spectra agree exactly, while their nuclear norms
       differ by a fixed relative amount.  A noninteger-moment variant gives the same
       $n^{2-o(1)}$ lower bound for every fixed finite Schatten exponent
       $p\notin\{2,4,6,\ldots\}$.
 \item The upper bound uses four independent Gaussian matrix products, all sampled before the
       input is seen.  The decoder chooses a spectral cutoff from the stored data, estimates the
       corresponding head without reconstructing $A$, and estimates a certified high-stable-rank
       tail from unbiased spectral moments.  This gives the explicit logarithmic saving below
       $n^2$ for Schatten--1.
 \item A three-block variant estimates the exact leading Schatten power directly from the row
       sketch and approximates $x^{p/2}$ on the certified tail.  It gives a polylogarithmic saving
       below $n^2$ for every fixed finite non-even $p>0$.
 \item Combining the non-even bounds with the known
       tight results for positive even $p$ and $p=\infty$ resolves the polynomial order of
       $k_{p,\eps}(n)$ for every fixed $0<p\le\infty$, as summarized in
       \eqref{eq:classification}.
 \item The non-even-$p$ hard pairs, combined with recent turnstile linearization and
       integer-to-real lifting theorems, give the bit-space lower bound in
       Corollary~\ref{cor:streaming}.  The problem-specific work is to prove that scaling, integer
       rounding, truncation, and an additional discrete-Gaussian mollification preserve both the
       norm gap and real-sketch indistinguishability.
\end{itemize}

The rate calculations are explicit.  Constants are not optimized, and computational efficiency
of the decoder is not part of the exact-real sketching model.

\subsection{Organization and proof structure}

The main body constructs two random matrices whose nuclear norms are well separated but whose
images under every low-rank linear map are statistically close.  It then gives one fixed Gaussian
sketch and a decoder that splits the spectrum into a low-rank head and a well-spread tail.
Appendix~\ref{app:noneven} replaces the square-root endpoint functional by $x^{p/2}$ and extends
the lower bound to every finite non-even $p>0$; Appendix~\ref{app:noneven-upper} extends the
head--tail upper-bound architecture to the same range.

Sections~1--6 present the main arguments and their implications.  The specialized scalar construction, tensor estimate,
and Gaussian calculation enter through precise intermediate statements.  Appendices
\ref{app:scalar-path}--\ref{app:noneven-upper} give their complete proofs, including the scalar and
particle-path constructions, the random-rotation estimate, the Fisher-information calculation,
the detailed verification of the decoders, the transfer to finite-memory turnstile streams, and
the general fractional-power upper bound.
The standard orthogonal Weingarten formula is the
only external group-integration formula used in Appendices~\ref{app:tensors} and~\ref{app:score};
all deductions needed from it are elementary pairing and contraction calculations.

The scope of the self-contained exposition is as follows.

\begin{itemize}
 \item Every object, parameter, and probability statement used in the final rate calculation is
       defined here.
 \item Elementary algebra, calculus, linear algebra, testing, and optimization steps are proved
       here.
 \item General published tools---Steinitz rearrangement, the Markov brothers' polynomial
       inequality, the McCarthy Schatten quasi-norm inequality, the orthogonal Weingarten formula, the
       rotation-group spectral gap,
       Gaussian covariance concentration, polynomial approximation, Gaussian cycle variance, and
       inverse-Wishart expectation, together with the mollified turnstile-to-sketch transfer and
       integer-to-real Gaussian lifting theorems---are quoted at the exact strength used and cited
       at the point of application.  All problem-specific deductions from them are proved here.
\end{itemize}

The main proof chains are
\[
\begin{aligned}
\text{moment-matched spectra}
&\to\text{random singular vectors}\to\text{Gaussian smoothing}\\
&\to\text{small total variation}\to\text{lower bound},\\[1mm]
\text{multiscale cutoff}
&\to\text{implicit head recovery}+\text{stable-rank tail estimation}\\
&\to\text{Schatten--1 upper bound},\\[1mm]
\text{power-dependent cutoff}
&\to\text{exact spectral head}+\text{fractional-power tail}\\
&\to\text{non-even-$p$ upper bounds},\\[1mm]
\text{real-sketch hard pair}
&\to\text{integer rounding}+\text{discrete-Gaussian mollification}\\
&\to\text{turnstile transfer}+\text{lifting}\to\text{bit-space lower bound}.
\end{aligned}
\]

\section{Technical overview}\label{sec:overview}

This section explains the lower and upper bounds and the turnstile consequence without using
their most technical details.  The complete Schatten--1 proofs begin in
Sections~\ref{sec:prelim}--\ref{sec:upper}; the non-even-$p$ lower extension, streaming transfer,
and general upper extension are completed in Appendices~\ref{app:noneven},
\ref{app:streaming}, and~\ref{app:noneven-upper}, respectively.

\subsection{Lower-bound overview}

Fix a deterministic $k$-dimensional measurement map.  After removing redundant output
coordinates, it is enough to consider an orthogonal projection onto a $k$-dimensional subspace of
the $n^2$-dimensional matrix space.  We construct two random matrices $X_0,X_1$ with the following
properties:
\[
 \norm{X_0}_{\Sone}\quad\hbox{and}\quad\norm{X_1}_{\Sone}
 \quad\hbox{lie in disjoint relative-error ranges,}
\]
but the two projected distributions are close in total variation.  Le Cam's testing identity then
implies that no decoder can reliably decide which distribution generated the input, whereas a
valid nuclear-norm estimate would make that decision.  Yao's principle converts this deterministic
testing statement into a lower bound for randomized sketches.

The singular values are built from two scalar probability laws.  If $x$ denotes a squared singular
value, then $x^q$ controls an even spectral moment and $\sqrt x$ contributes to the nuclear norm.
Let $K\ge2$ be an integer, called the \emph{moment-matching order}.  It is the largest exponent
$q$ for which we require the two scalar moments $\int x^q\,d\nu_i(x)$ to agree.  We construct laws
$\nu_0,\nu_1$ that satisfy
\[
 \int x^q\,d\nu_0(x)=\int x^q\,d\nu_1(x)\quad(0\le q\le K),
 \qquad
 \int\sqrt x\,d\nu_0(x)\ne\int\sqrt x\,d\nu_1(x).
\]
Thus the two laws have identical moments of degrees $0,1,\ldots,K$, while their square-root
expectations differ.  The lower-bound optimization will choose
$K$ on the order of $\log n/\log\log n$.
Their probabilities need not be multiples of $1/n$, so they are not immediately empirical
spectra of $n\times n$ matrices.  A rounding-and-correction argument replaces them by two
$n$-point empirical laws and, more importantly, joins those laws by a path along which all first
$K$ moments remain constant.

At path time $t$, let $D_t$ be the diagonal matrix of square roots of the empirical atoms and set
$F_t=\sqrt n\,UD_tV^T$, where $U,V$ are independent Haar orthogonal matrices.  We finally add a
small Gaussian matrix.  The noise gives every measured distribution a smooth density, so the
total variation between the endpoints can be bounded by integrating the square root of Fisher
information along the path.

Moment preservation controls this integral through repeated integration by parts on the two
orthogonal groups, which transfers the path derivative to the Gaussian density.  Every tensor order through $2K$
cancels because the corresponding spectral moments are constant.  The first surviving order is
the odd number $m=2K+1$.  An odd-order random-rotation lemma shows that a rank-$k$ projection sees
at most, up to factors exponential in $K\log K$,
\[
 \left(\frac{k}{n^2}\right)^{K+1}
\]
of its squared energy.  Taking $K$ of order $\log n/\log\log n$ makes the endpoint distributions
indistinguishable whenever $k\le n^2/(\log n)^{A_\eps}$.

The same mechanism covers every fixed finite non-even exponent $p>0$.  For the Schatten--$p$
functional, the scalar endpoint statistic is $x^{p/2}$ in place of $\sqrt x$.  When $p/2$ is
not an integer, the support exponent in the scalar construction can be chosen so that one anchor
uniquely dominates this moment, producing a fixed relative gap while all integer moments through
degree $K$ still match.  The moment-preserving path and observation comparison are unchanged.
Gaussian smoothing is controlled by the Schatten triangle inequality for $p\ge1$, by its
$p$-power analogue for $0<p<1$, and, above $p=2$, by an operator-norm bound for the noise.  This
gives the $n^{2-o(1)}$ lower bound for all finite non-even $p$.  At a positive even exponent,
$x^{p/2}$ is itself one of the matched polynomial moments, explaining exactly why this
construction has no gap there.

\subsection{Upper-bound overview}

The upper bound separates the singular values into a head and a tail, but the useful cutoff is not
known before the input arrives.  We inspect cutoffs spaced by a block size $R$.  A deterministic
dichotomy shows that one inspected tail is either negligible in nuclear norm or has stable rank
at least a constant multiple of $R$.

A Gaussian row sketch $GA$ identifies candidate right singular subspaces.  A regularized
covariance estimate shows that the selected rank-$h$ subspace leaves a residual
$C=A(I-P_h)$ whose nuclear norm differs from the optimal tail by only a small multiple of
$\norm A_{\Sone}$.  A second independent block $WA$ certifies, using only stored data, whether
$C$ has high stable rank.  If no cutoff is certified, the last residual is negligible.

The head $AQ_h$ is estimated from a fresh right sketch $AR_0$ by Gaussian regression.  For a
certified residual, high stable rank bounds its normalized squared singular values in a known
interval.  We approximate the square-root function on that interval by a degree-$K$ polynomial.
Each power-sum coefficient of this polynomial is estimated without bias from a fourth independent
block $YA$ using Gaussian cycle statistics.  The degree can be taken of order
$\log n/\log\log n$, while the number of Gaussian rows remains below $n$ by a logarithmic factor.

Balancing the head cost $n^2\log b/b$ against the tail cost
$n^{2-c/b}\operatorname{poly}(b)$ at
$b\asymp\log n/\log\log n$ gives
\[
 k=O_\eps\!\left(\frac{n^2(\log\log n)^2}{\log n}\right).
\]
All cutoff selection occurs in the decoder; the four stored maps are fixed and linear.

For a general fixed non-even $p$, the certified tail uses the normalized statistic
\[
 q_p=\frac1n\sum_i\lambda_i^{p/2},
 \qquad
 \norm C_{\Sp{p}}^p=n^{1-p/2}\norm C_F^p q_p.
\]
A Bernstein polynomial for $x^{p/2}$ reduces its estimation to the same integer spectral moments.
For the head, the generalized decoder reads the leading power sum directly from the singular
values of $GA$.  The regularized covariance event proves simultaneously that this is close to the
exact leading power sum and that $A(I-P_h)$ is close, in Schatten--$p$ power, to the exact tail.
This avoids combining the two projected matrices through a norm-additivity assertion, which would
be false away from $p=1$.  A conservative optimization gives the upper bound in
Corollary~\ref{cor:noneven-upper}; Appendix~\ref{app:noneven-upper} contains the proof.

\subsection{Turnstile consequence}

An exact-real sketch-dimension lower bound does not by itself imply a bit-space lower bound for
streaming algorithms.  To make that transfer, the Gaussian-smoothed hard pair is scaled to a
bounded range, rounded entrywise to integer matrices, truncated to bounded support, and mollified
by discrete Gaussian noise.  The norm gap and measurement indistinguishability survive these
operations.  The recent polynomial-length turnstile-to-sketch theorem and integer-to-real lifting
framework can then be applied to any one-pass unit-update algorithm estimating a fixed finite
non-even Schatten exponent $p>0$.  For streams of length at most $W$, with $W$ a sufficiently
large polynomial in $n$, this yields the bit-space lower bound
\[
 \Omega_{p,\eps}\!\left(\frac{n^2\log W}{(\log n)^{B_{p,\eps}}}\right)
\]
For polynomial $W$, this is $n^{2-o(1)}$ bits.
Appendix~\ref{app:streaming} verifies the robustness and parameter compatibility needed for this
transfer.

\section{Preliminaries}\label{sec:prelim}

For a matrix $B$ of rank at most $r$, Cauchy--Schwarz gives
\begin{equation}
 \norm{B}_{\Sone}\le\sqrt r\,\norm{B}_F.                    \tag{3.1}\label{eq:3.1}
\end{equation}
More generally $\norm{B}_{\Sone}\le\sqrt n\,\norm{B}_F$.

Let $A_h$ be a best rank-$h$ approximation to $A$, obtained by keeping its first $h$ singular
directions.  Define
\begin{equation}
 T_h(A)=\sum_{i>h}\sigma_i(A),\qquad
 F_h(A)^2=\sum_{i>h}\sigma_i(A)^2,\qquad
 a_h=\sigma_{h+1}(A).                                      \tag{3.2}\label{eq:3.2}
\end{equation}
Here and throughout, singular values past the matrix dimension are defined to be zero; in
particular, $a_n=0$.
We record the part of the singular-value truncation theorem used later.  If $B$ has rank at most
$h$, then
\begin{equation}
 \sigma_i(A-B)\ge\sigma_{i+h}(A)                          \tag{3.2a}\label{eq:3.2a}
\end{equation}
whenever the right side is defined.  To see this, let $E$ be the span of the first $i+h$ right
singular vectors of $A$.  Since $\ker B$ has codimension at most $h$, the intersection
$E\cap\ker B$ has dimension at least $i$.  On every unit vector in this intersection,
$(A-B)x=Ax$ and $\norm{Ax}\ge\sigma_{i+h}(A)$.  The max--min characterization of singular
values gives \eqref{eq:3.2a}.  Summing \eqref{eq:3.2a}, or summing its squares, shows that the truncated singular
value decomposition $A_h$ minimizes both every Ky--Fan norm
$\sum_{i=1}^r\sigma_i(A-B)$ and the Frobenius norm of $A-B$.

We also use that multiplication by an orthogonal projection cannot increase an individual
singular value.  Indeed, the elementary approximation formula
$\sigma_i(M)=\min_{\rank R<i}\norm{M-R}_{\op}$ gives
\[
 \sigma_i(AP)\le\norm{AP-A_{i-1}P}_{\op}
 \le\norm{A-A_{i-1}}_{\op}=\sigma_i(A).
\]

The stable rank of a nonzero matrix is
\begin{equation}
 \sr(B)=\frac{\norm{B}_F^2}{\norm{B}_{\op}^2}.              \tag{3.3}\label{eq:3.3}
\end{equation}
It lies between $1$ and $\rank(B)$.  A large stable rank means that the Frobenius energy is not
concentrated in one singular direction.

For probability laws $\mu,\nu$, total variation is
\begin{equation}
 \TV(\mu,\nu)=\sup_E\abs{\mu(E)-\nu(E)}
 =\frac12\int\abs{d\mu-d\nu}.                              \tag{3.4}\label{eq:3.4}
\end{equation}
If the two laws have equal prior probability, the best possible success probability of any label
test is $(1+\TV(\mu,\nu))/2$; this is Le Cam's binary testing identity
\cite[Proposition~2.3.1]{Duchi}.

All logarithms are natural.  Every constant in the proof has a fixed name: subscripts identify
the estimate in which the constant is introduced, and the value never changes afterward.  At the
end of a proof we may take the maximum of finitely many such constants and give that maximum a
single descriptive name.  The theorem-level constants $C_\eps,A_\eps$ may depend on the
fixed accuracy $\eps$, but never on $n$.

\section{Lower bound}\label{sec:lower}

\paragraph{Proof structure.}
The proof builds two random-matrix models with different nuclear norms but nearly identical
$k$-dimensional observations.  It has four stages:
\begin{enumerate}[leftmargin=2.2em]
 \item Normalize the sketch and reduce estimation to binary testing
       (Lemma~\ref{lem:yao}).
 \item Construct two spectra whose first $K$ power moments agree while their sums of square roots
       differ (Proposition~\ref{prop:scalar-pair} and Lemma~\ref{lem:path}).
 \item Randomize the singular vectors and add a small Gaussian perturbation, which creates smooth
       observation densities without closing the nuclear-norm gap
       (Lemma~\ref{lem:smoothing}).
 \item Join the two models by the moment-preserving path and bound the total variation of every
       $k$-dimensional observation.  The Fisher--Rao bound converts path speed into total
       variation, while Lemma~\ref{lem:twirl} bounds the portion visible to the measurement.
\end{enumerate}
The last step verifies the two hypotheses of Lemma~\ref{lem:yao}: separated nuclear-norm ranges
and small total variation after every rank-$k$ linear measurement.

\subsection{Step 1: reduce an arbitrary sketch to a projection}

Identify $\R^{n\times n}$ with $\R^{n^2}$ using the Frobenius inner product
$\ip{A}{B}_F=\tr(A^TB)$.  A deterministic linear map
$S:\R^{n^2}\to\R^k$ is represented by a $k\times n^2$ matrix.  If its rank is $r\le k$, it
has a factorization
\[
 S=TL,\qquad
 L\in\R^{r\times n^2},\quad T\in\R^{k\times r},\quad LL^*=I_r.
\]
Here $L:\R^{n^2}\to\R^r$ has orthonormal rows, and
$T:\R^r\to\R^k$ is one-to-one (injective): $Tx=0$ implies $x=0$.  When $r<k$, $T$ is not
onto all of $\R^k$; it is a bijection only from $\R^r$ to the $r$-dimensional subspace
$\operatorname{range}(T)\subseteq\R^k$.  Consequently $T^{-1}$ is well defined on that range,
and applying it to $SA=T(LA)$ recovers $LA$ without losing information.  (For example, this
factorization follows immediately from the compact singular value decomposition of $S$.)
Therefore, for lower bounds it suffices to consider a \emph{row coisometry} $L$ with
$LL^*=I_r$.  The matrix $P=L^*L$ is an orthogonal projection of rank $r$ on the $n^2$-dimensional
matrix space.

The following notation will be used in the testing reduction.  An \emph{input law}
$\Pi$ is a probability distribution on the matrix space $\R^{n\times n}$.  Writing
$A\sim\Pi$ means that the random input matrix $A$ is sampled from this distribution.  If
$L:\R^{n\times n}\to\R^r$ is a linear measurement map, then
\begin{equation}
 L_\#\Pi\defeq\operatorname{Law}(LA) \quad\text{for }A\sim\Pi. \tag{4.0}\label{eq:4.0}
\end{equation}
This is called the \emph{pushforward} of $\Pi$ by $L$.  Equivalently, for every measurable set
$B\subseteq\R^r$,
\[
 (L_\#\Pi)(B)=\Pi\{A\in\R^{n\times n}:LA\in B\}.
\]
Thus $\TV(L_\#\Pi_0,L_\#\Pi_1)$ measures how well one can distinguish the two random-matrix
models after seeing only the measurement vector $LA$.

\paragraph{Statistical reduction.}
The theorem concerns every randomized sketching algorithm, whereas the construction below gives
two probability distributions on inputs.  Lemma~\ref{lem:yao} translates between these forms.  If
a valid estimate of a nonnegative measurable functional would reveal which input distribution
was used, but the measurement vector cannot reveal that label with comparable probability, then
such an estimator cannot exist.  The lower bound below applies the lemma to the nuclear norm.

\begin{lemma}[Yao--Le Cam reduction]\label{lem:yao}
Let $\Phi:\R^{n\times n}\to[0,\infty)$ be measurable, and let $\Pi_0,\Pi_1$ be two probability
laws on $\R^{n\times n}$.  For $i\in\{0,1\}$, let $A_i\sim\Pi_i$.  Suppose there are
deterministic sets $I_0,I_1\subseteq[0,\infty)$ such that
\[
 \Prb\{\Phi(A_i)\in I_i\}\ge1-\zeta
 \quad(i=0,1),
\]
and their multiplicative enlargements
\[
 (1\pm\eps)I_i
 \defeq\{y\ge0:(1-\eps)x\le y\le(1+\eps)x
                 \text{ for some }x\in I_i\}
\]
are disjoint.  If
for every rank-at-most-$k$ row coisometry $L$,
\[
 \TV(L_\#\Pi_0,L_\#\Pi_1)\le\delta
 \quad\text{and}\quad \delta+2\zeta<\frac13,
\]
then there is no randomized $k$-dimensional linear sketch $(S_\omega,d_\omega)$ satisfying
\[
 \Prb_\omega\!\left\{
  (1-\eps)\Phi(A)\le d_\omega(S_\omega A)\le(1+\eps)\Phi(A)
 \right\}\ge\frac23
 \qquad\text{for every fixed }A\in\R^{n\times n}.
\]
\end{lemma}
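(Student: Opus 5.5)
We argue by contradiction: assume a randomized sketch $(S_\omega,d_\omega)$ with the stated guarantee exists, and turn it into a binary test whose success probability beats Le Cam's bound. Let $\Pi=\tfrac12\Pi_0+\tfrac12\Pi_1$, with label $J\in\{0,1\}$ uniform and $A\sim\Pi_J$, independent of $\omega$. The test is: output $\hat J=i$ if $d_\omega(S_\omega A)\in(1\pm\eps)I_i$, and output $0$ (say) otherwise. Since the enlargements are disjoint, this rule is well defined. Measurability of the sets $(1\pm\eps)I_i$ can be sidestepped if needed by replacing them with measurable supersets that remain disjoint, or by intersecting the success event with the measurable event $\{\Phi(A)\in I_i\}$. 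For simplicity I will assume $I_i$ are Borel, so the enlargements are analytic and hence universally measurable, which is enough for probabilities.

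\textbf{Averaged success is large.} Average the correctness guarantee over $A\sim\Pi_J$ and $\omega$ (Fubini). The guarantee holds for every fixed $A$, and $\omega$ is independent of $A$, so with probability at least $2/3$ we have $d_\omega(S_\omega A)\in[(1-\eps)\Phi(A),(1+\eps)\Phi(A)]$. Also $\Phi(A)\in I_J$ with probability at least $1-\zeta$. On the intersection of these two events, $d\in(1\pm\eps)I_J$, so the test is correct. By a union bound,
\[
\Prb\{\hat J=J\}\ \ge\ \tfrac23-\zeta.
\]

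\textbf{Derandomization (Yao).} Since this success probability is an average over $\omega$, there exists a fixed $\omega_0$ whose deterministic map $S=S_{\omega_0}$ and decoder $d=d_{\omega_0}$ achieve success at least $\tfrac23-\zeta$ against the uniform prior.

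\textbf{Reduction to a coisometry.} By Step 1, factor $S=TL$ with $L$ a row coisometry of rank $r\le k$ and $T$ injective. The test then becomes a measurable function of $LA$, namely $x\mapsto$ test$(d(Tx))$. By Le Cam's identity, any such test succeeds with probability at most
\[
\tfrac12\bigl(1+\TV(L_\#\Pi_0,L_\#\Pi_1)\bigr)\ \le\ \tfrac12(1+\delta).
\]

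\textbf{Contradiction.} Combining the two bounds gives $\tfrac23-\zeta\le\tfrac12+\tfrac\delta2$, that is, $\tfrac16\le\zeta+\tfrac\delta2$. This in turn implies $\delta+2\zeta\ge\tfrac13$, contradicting the hypothesis $\delta+2\zeta<\tfrac13$.

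The argument is routine. The only care points are the measurability of the test's acceptance region and the justification for interchanging the order of averaging over $\omega$ and $A$. The latter holds because the guarantee is pointwise in $A$ and $(\omega,A)$ is jointly measurable. Note also that the hypothesis includes rank $r<k$, so Le Cam's bound applies uniformly.
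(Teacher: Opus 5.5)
Your proposal is correct and follows the paper's proof essentially step for step: the same label-decoding test on the disjoint enlargements $(1\pm\eps)I_i$, the union bound giving success at least $\tfrac23-\zeta$, fixing a good $\omega_0$, factoring $S=TL$, and contradicting Le Cam's bound $(1+\delta)/2$. The only difference is cosmetic: you derandomize after bounding the test's averaged success, whereas the paper first fixes $\omega_0$ with validity probability at least $2/3$ and then applies the union bound, and the two orders are equivalent.
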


This is the standard Yao minimax reduction followed by Le Cam's binary testing bound.  For
completeness, Appendix~\ref{app:yao-proof} gives the proof with the constants used here.

\subsection{Step 2: build two spectra with equal low moments and different nuclear norms}

\paragraph{Scalar moment design.}
For the hard matrices constructed below, $x$ represents a squared singular value.  Consequently
$x^q$ controls even-degree matrix moments, while $\sqrt{x}$ contributes to the nuclear norm.
We therefore seek two scalar laws that agree on $x^q$ for $q\le K$ but disagree on
$\sqrt{x}$.  Proposition~\ref{prop:scalar-pair} constructs such a pair of scalar laws.  It does
not yet construct $n\times n$ matrices; that conversion is the role of Lemma~\ref{lem:path}.

\begin{proposition}[Moment-matched scalar hard pair]\label{prop:scalar-pair}
For every $0<\eps<1$, there exist an integer $M_\eps\ge6$ and a constant $\eta_\eps>0$ such
that, for every integer $K\ge2$, there are two probability laws $\nu_0,\nu_1$ on
$[0,B_K]$, each supported on at most $K+2$ points, with
\begin{equation}
 B_K\le2(K+1)^{M_\eps},\qquad
 \int x\,d\nu_0(x)=\int x\,d\nu_1(x)=1,                  \tag{4.P1}\label{eq:4.P1}
\end{equation}
\begin{equation}
 \int x^q\,d\nu_0(x)=\int x^q\,d\nu_1(x)
 \qquad(0\le q\le K),                                    \tag{4.P2}\label{eq:4.P2}
\end{equation}
and, writing $h_b=\int\sqrt{x}\,d\nu_b(x)$,
\begin{equation}
 \eta_\eps=\frac{1-\eps}{4},\qquad
 \frac{h_1-h_0}{h_1+h_0}
 \ge\frac{1+\eps}{2}=\eps+2\eta_\eps,
 \qquad h_1\ge\frac1{2\sqrt2}.                          \tag{4.P3}\label{eq:4.P3}
\end{equation}
The proof gives an explicit choice of $M_\eps$ in \eqref{eq:4.P4}.  In particular, $M_\eps$ grows only
logarithmically in $1/(1-\eps)$ as $\eps$ approaches one.
\end{proposition}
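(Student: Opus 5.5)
The plan is to build $\nu_0,\nu_1$ as the normalized negative and positive parts of a single signed measure that annihilates all polynomials of degree at most $K$. Fix nodes $0=t_0<t_1<\cdots<t_{K+1}$. Up to scale, the only signed measure on these $K+2$ points that kills $1,x,\ldots,x^K$ is the divided-difference functional
\[
 \mu=\sum_{j=0}^{K+1}w_j\delta_{t_j},\qquad w_j=\prod_{i\ne j}(t_j-t_i)^{-1}.
\]
Then $\int f\,d\mu=f[t_0,\ldots,t_{K+1}]$ for every $f$. The weights alternate in sign, so $\mu=\mu_+-\mu_-$, where $\mu_\pm$ live on the even- and odd-indexed nodes respectively. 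Since $\mu$ annihilates constants, $\mu_+$ and $\mu_-$ have equal total mass, and normalizing each by that mass gives probability laws with equal moments of order $0,\ldots,K$. This is \eqref{eq:4.P2}, and each law has at most $K+2$ atoms. Because $\mu$ also annihilates $x$, the two means agree. Rescaling $x\mapsto x/c$ multiplies every $q$-th moment of both laws by the same factor $c^{-q}$, so it keeps \eqref{eq:4.P2} and lets us make the common mean equal to $1$, giving \eqref{eq:4.P1}.

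For the nodes I would take $t_j=j^{M}$ with $M=M_\eps$ even, and then rescale. This explains the form $B_K\le 2(K+1)^{M_\eps}$: the unscaled top node is $(K+1)^M$, and the mean after normalization should be comparable to the small nodes. Since $\sqrt x$ has derivatives of alternating sign, $\int\sqrt x\,d\mu=\sqrt{\cdot}\,[t_0,\ldots,t_{K+1}]$ is nonzero with a known sign. We label the laws so that this difference is $h_1-h_0>0$. The anchor $t_0=0$ contributes nothing to $\sqrt x$, so it absorbs mass in the lower law at no cost to $h_0$.

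The key quantitative step is to show that, once $M$ is large, the sum $\sum_j w_j\sqrt{t_j}$ is dominated by a single node, the smallest nonzero anchor $t_1$. Its term should exceed the combined absolute contributions of all other nonzero nodes by a factor of order $2^{M/2}$ or better, uniformly in $K$.

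I would prove this by comparing consecutive ratios
\[
 \frac{|w_{j+1}|\sqrt{t_{j+1}}}{|w_j|\sqrt{t_j}}.
\]
The factors $\prod_{i\ne j}|t_j-t_i|$ grow like $t_j^{K}$ times corrections, so each ratio behaves like $\bigl((j+1)/j\bigr)^{-M(K+1/2)}$ up to factors $e^{O(M)}$. The dangerous regime is large $j$, where $(j+1)/j$ is close to $1$. There I would use the explicit products for $t_j=j^M$, via the factorization $a^M-b^M=(a-b)\prod(\ldots)$, to show the terms still decay.

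Granting this domination, the node-$t_1$ term sits in the law with larger square-root integral, say $\nu_1$. The opposite-sign law $\nu_0$ puts most of its mass at $0$ and only a tiny square-root mass elsewhere. Hence
\[
 \frac{h_0}{h_1}\le C\,2^{-M/2}.
\]
Choosing $M_\eps\asymp\log\bigl(4/(1-\eps)\bigr)$ then makes $(h_1-h_0)/(h_1+h_0)\ge(1+\eps)/2$, which explains the stated logarithmic growth of $M_\eps$.

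The bound $h_1\ge 1/(2\sqrt2)$ should follow from the same domination. After normalizing the mean to $1$, $\nu_1$ carries a constant fraction of its mean on the atom at $t_1$, which is of constant size. A lower bound on $\int\sqrt x\,d\nu_1$ then follows from the inequality $\int\sqrt x\ge(\int x)^{3/2}/(\int x^2)^{1/2}$, or more directly by reading off the $t_1$ atom.

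I expect the main obstacle to be the domination estimate, uniformly in $K$ and with constants depending only on $M$. If $t_j=j^M$ fails for large $j$ relative to $K$, I would switch to geometric nodes $t_j=s^{j}$, where the ratios are cleanly $s^{-(K+1/2)}$-type. But that would force $B_K$ to be exponential in $K$ rather than polynomial, so I would first try to make the polynomial nodes work.
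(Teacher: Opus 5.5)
Your construction is the paper's: nodes $t_j=j^{M}$, Lagrange weights, split by sign, normalize, common rescaling. The gap is in the quantitative core, which you only sketch, and the heuristic you give for it is wrong in a way that matters. It is not true that $\prod_{i\ne j}|t_j-t_i|$ behaves like $t_j^{K}$. For $i>j$ the factor is $i^{M}-j^{M}=t_i\{1-(j/i)^{M}\}\approx t_i$, so $\prod_{i\ne j}|t_j-t_i|\approx t_j^{\,j}\prod_{i>j}t_i$. Hence the consecutive ratio of $|w_j|\sqrt{t_j}$ is about $(j/(j+1))^{M(j-1/2)}$, independent of $K$. The weakest step is $j=1\to2$, with ratio about $2^{-M/2}$. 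Large $j$ is harmless, so polynomial nodes work for every $K$ and you have no reason to fall back on geometric nodes. What you actually need is that the correction factors are $e^{O(j)}$ with an absolute constant, independent of $M$ and $K$. A loss of the form $e^{O(M)}$, which your sketch allows, could swallow a decay that is itself only $e^{-\Theta(M)}$ per step. The paper supplies this through the factorization $R_j=|w_j|/|w_0|=A_jB_{j,K}$, where
\begin{itemize}
\item $A_j=\prod_{\ell<j}\ell^{M}/(j^{M}-\ell^{M})\le((j-1)!/j^{j-1})^{M-1}$, using $j^{M}-\ell^{M}\ge(j-\ell)j^{M-1}$;
\item $B_{j,K}=\prod_{\ell>j}\{1-(j/\ell)^{M}\}^{-1}\le e^{\pi^2j/(3M)}$, uniformly in $K$.
\end{itemize}
This yields $R_1=1+O(2^{-M})$, $2^{M}R_2=1+O((2/3)^{M})$, and $\sum_{j\ge3}j^{M}R_j\le C(2/3)^{M}$.

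Second, the bound $h_1\ge1/(2\sqrt2)$ needs the common mean $\mu$ of the unscaled laws to be within a constant factor of $1$. So does the support bound $B_K\le2(K+1)^{M}$. This does not follow from square-root domination. With weight $t_j$ in place of $\sqrt{t_j}$, node $2$ is not small at all: $t_2R_2=2^{M}R_2\approx1$, because that atom carries the mean of the lower law. You must use that node $2$ lies in the other sign class, and prove separately that $\sum_{j\ge3}t_jR_j$ is small; this is the paper's estimate \eqref{eq:4.7}. Your fallback inequality $\int\sqrt x\ge(\int x)^{3/2}/(\int x^2)^{1/2}$ fails outright. Because of that same atom near $2^{M}$, $\int x^2\,d\nu_1=\int x^2\,d\nu_0\gtrsim2^{-M}\cdot4^{M}=2^{M}$, so the inequality yields only $h_1\gtrsim2^{-M/2}$. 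Once $\mu\in[1/2,3/2]$ and the atom masses are controlled, reading off the atom at $1/\mu$ gives the bound on $h_1$. Your ratio bound $h_0/h_1\le C\,2^{-M/2}$ is then valid as stated, since it is invariant under the rescaling. The paper instead uses $h_0\le\sqrt{\nu_0(x>0)}$, by Cauchy--Schwarz with mean one. Either route gives $M_\eps=O(\log(1/(1-\eps)))$.
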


\paragraph{Proof sketch.}
Choose \(K+2\) ordered nodes \(x_j=j^{M_\eps}\).  The standard Lagrange-interpolation
weights on these nodes annihilate every polynomial of degree at most \(K\), and their signs
alternate.  Normalizing the positive and negative weights separately therefore gives two
probability laws with matching moments.  When \(M_\eps\) is large, one law is concentrated near
the node \(0\), whereas the other is concentrated near the node \(1\).  Their square-root moments
are consequently separated.  A common rescaling makes their first moment equal to one without
affecting moment matching or the atom at zero.  Appendix~\ref{app:scalar-path} proves all weight
estimates, gives the explicit choice of \(M_\eps\), and verifies \eqref{eq:4.P1}--\eqref{eq:4.P3}.

\paragraph{From probability laws to finite spectra.}
Proposition~\ref{prop:scalar-pair} gives two abstract probability laws $\nu_0,\nu_1$.  They have
the desired moment matching and square-root-moment gap, but they are not yet spectra of
$n\times n$ matrices.  Indeed, if
\[
 D=\operatorname{diag}(\sqrt{x_1},\ldots,\sqrt{x_n}),
\]
then the empirical law of its squared diagonal entries is
\[
 \frac1n\sum_{i=1}^n\delta_{x_i}.
\]
Every probability in this law is therefore an integer multiple of $1/n$, whereas the atom
probabilities in $\nu_b$ need not have that form.  Rounding those probabilities to multiples of
$1/n$ would generally destroy their exact moment matching.

A further requirement arises from the continuous path used to compare the two observed
distributions.  If $p_t$ denotes the density at path time $t$, the elementary
Fisher--Rao bound is
\begin{equation}
 \frac12\int\abs{p_1(y)-p_0(y)}\,dy
 \le \frac12\int_0^1\sqrt{I_t}\,dt,
 \qquad
 I_t=\int \frac{\dot p_t(y)^2}{p_t(y)}\,dy.              \tag{4.8a}\label{eq:4.8a}
\end{equation}
Appendix~\ref{app:fisher-tv} derives this inequality using only the fundamental theorem of
calculus and Cauchy--Schwarz.  The integral charges every change made along the path.  Therefore
it is not enough for the two endpoints to have matching moments: a low moment could vary in the
middle and then return to its starting value, producing a large and easily observed change.

Lemma~\ref{lem:path} prevents this problem by keeping the first $K$ moments exactly constant at
every time.  It also replaces arbitrary atom probabilities by exactly $n$ equal weights, as a
matrix spectrum requires, and bounds the ordinary length of the resulting particle path.  More
precisely, it produces
\begin{enumerate}[label=(\roman*),leftmargin=2.2em]
 \item actual $n$-point empirical laws, hence actual diagonal $n\times n$ matrices;
 \item exact moment matching throughout the path, as stated in \eqref{eq:4.9b};
 \item endpoint square-root moments close to those of $\nu_0,\nu_1$, so the nuclear-norm gap is
       retained, as stated in \eqref{eq:4.10a}; and
 \item the path-length bound \eqref{eq:4.10}, which controls the integral in \eqref{eq:4.8a}.
\end{enumerate}
The precise conversion from these four properties to small total variation is deferred until
Lemma~\ref{lem:path-tv}, after the hard matrix distributions have been defined.  The Steinitz
lemma below is one ingredient in the path construction.

\paragraph{Ordering the particle moves.}
In the path construction, spectral entries are changed one at a time.  Each change creates a
vector of $K$ moment changes.  The sum of the base-particle vectors is the small rounding error
(denoted $e^{\mathrm{end}}$ in the proof), because the control particles supply the opposite change.  Subtracting the
average vector produces a zero-sum family to which the Steinitz lemma applies.  It supplies an
order in which every centered partial sum remains bounded; adding back the small average drift
keeps the original partial sums bounded as well.  The movable controls can then cancel the current
partial sum without moving far.

\begin{lemma}[Steinitz rearrangement lemma {\cite[Theorem~1]{Steinitz}}]\label{ext:steinitz}
Let $d,N\ge1$, let $\norm{\cdot}$ be any norm on $\R^d$, and let
$v_1,\ldots,v_N\in\R^d$ satisfy
\[
 \norm{v_i}\le1\quad(1\le i\le N),
 \qquad
 \sum_{i=1}^Nv_i=0.
\]
Then there exists a permutation $\pi$ of $\{1,\ldots,N\}$ such that every partial sum satisfies
\begin{equation}
 \norm{\sum_{i=1}^m v_{\pi(i)}}\le d
 \qquad(1\le m\le N).                                     \tag{4.S}\label{eq:4.S}
\end{equation}
\end{lemma}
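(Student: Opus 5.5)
The plan is the classical linear-programming proof of the Steinitz lemma, due to Grinberg and Sevastyanov. It produces the permutation backwards, one element at a time, from a nested chain of index sets. Each set carries fractional weights that certify a small partial sum.

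\emph{Construction.} For each $m$ with $d\le m\le N$, I will build sets $A_m\subseteq\{1,\dots,N\}$ and weights $\lambda^{(m)}\in[0,1]^{A_m}$ such that
\[
 |A_m|=m,\qquad A_d\subseteq A_{d+1}\subseteq\cdots\subseteq A_N=\{1,\dots,N\},
\]
\[
 \sum_{i\in A_m}\lambda^{(m)}_i v_i=0,\qquad \sum_{i\in A_m}\lambda^{(m)}_i=m-d .
\]
If $N\le d$ the claim is trivial, because every partial sum has norm at most $m\le N\le d$. So assume $N>d$. Start the induction with $A_N=\{1,\ldots,N\}$ and $\lambda^{(N)}_i=(N-d)/N$. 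This choice works because $\sum_i v_i=0$.

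\emph{Inductive step.} Given $A_{m+1}$ and $\lambda^{(m+1)}$ with $m\ge d$, consider the polytope
\[
 Q=\Bigl\{\mu\in[0,1]^{A_{m+1}}:\ \sum_{i\in A_{m+1}}\mu_i v_i=0,\ \sum_{i\in A_{m+1}}\mu_i=m-d\Bigr\}.
\]
$Q$ is nonempty, since it contains the rescaled vector $\frac{m-d}{m+1-d}\lambda^{(m+1)}$. Take a vertex $\mu$ of $Q$. The vertex is cut out by the $d+1$ equality constraints together with box constraints, so at most $d+1$ coordinates of $\mu$ lie strictly in $(0,1)$.

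I claim that some coordinate of $\mu$ equals $0$. Suppose not. Then at least $(m+1)-(d+1)=m-d$ coordinates equal $1$, so the sum is at least $m-d$. Equality forces exactly $m-d$ ones, and the remaining $d+1$ coordinates must all be zero, which is a contradiction. (If $d+1$ exceeds the number of available coordinates, the same count still shows that the sum would exceed $m-d$.)

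Delete an index $j$ with $\mu_j=0$, and set $A_m=A_{m+1}\setminus\{j\}$ and $\lambda^{(m)}=\mu|_{A_m}$. The constraints are preserved.

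\emph{Ordering and bound.} Define $\pi(m+1)$ to be the unique element of $A_{m+1}\setminus A_m$ for $d\le m<N$. Order the elements of $A_d$ arbitrarily in positions $1,\dots,d$. Then $\{\pi(1),\dots,\pi(m)\}=A_m$ for every $m\ge d$. Subtracting the zero vector $\sum_{i\in A_m}\lambda^{(m)}_iv_i$ and using the triangle inequality gives
\[
 \Bigl\|\sum_{i\in A_m}v_i\Bigr\|=\Bigl\|\sum_{i\in A_m}(1-\lambda^{(m)}_i)v_i\Bigr\|\le\sum_{i\in A_m}(1-\lambda^{(m)}_i)=m-(m-d)=d .
\]
For $m<d$ the partial sum has norm at most $m<d$ directly.

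\emph{Main obstacle.} The only delicate point is the vertex-counting argument: showing that a vertex has at most $d+1$ fractional coordinates, and hence has a zero coordinate. I would justify it by the standard fact that at a vertex the active constraints have full rank $m+1$. Only $d+1$ of these come from equalities, so at least $m-d$ coordinates are pinned at $0$ or $1$.
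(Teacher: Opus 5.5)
Your proof is correct and complete. The paper gives no proof of this lemma to compare with. It lists Steinitz rearrangement among the published tools ``quoted at the exact strength used'' and simply cites it.

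What you have written is the Grinberg--Sevastyanov linear-programming proof. It is a good choice here because it gives the constant $d$ for an arbitrary norm. That is the strength the paper quotes, and Stage~3 of Lemma~\ref{lem:path} uses it in $(\R^K,\norm{\cdot}_\infty)$. To my recollection, Steinitz's original argument gives a weaker constant, still linear in $d$. That would also be harmless for the paper, since the constant is absorbed into $C_{\mathrm{path},5}$ in \eqref{eq:4.15a}. Citing buys the paper brevity; your route makes it self-contained at the cost of one basic-feasible-solution fact.

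Three small remarks.
\begin{itemize}
\item The parenthetical about $d+1$ exceeding the number of coordinates is vacuous. Since $m\ge d$, you always have $d+1\le m+1$, and when $m=d$ the polytope is just $\{0\}$.
\item In the vertex count, state explicitly that the equality system has rank at most $d+1$ (less only helps). Also state that each coordinate contributes at most one active box constraint. Then rank $m+1$ of the active set forces at least $m-d$ coordinates into $\{0,1\}$.
\item If you prefer to avoid vertex theory, take $\mu\in Q$ with the maximal number of coordinates in $\{0,1\}$. If more than $d+1$ coordinates were fractional, the $(d+1)$-row system restricted to them would have a nonzero kernel vector. Moving along it stays in $Q$ until another coordinate reaches $0$ or $1$, which contradicts maximality.
\end{itemize}
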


Lemma~\ref{lem:path} is a general conversion result.  In the lower-bound construction, its
input laws $\nu_0,\nu_1$ are specifically those supplied by
Proposition~\ref{prop:scalar-pair}.

\begin{lemma}[Equal-weight packet and moment-preserving path]\label{lem:path}
There is a fixed absolute constant $C_{\mathrm{path}}$ with the following property.  Let
$B\ge1$ and $K\ge2$, and let $\nu_0,\nu_1$ be supported on at most $K+2$ points in $[0,B]$ and have equal moments through
degree $K$.  Fix $0<\eta<1/10$.  If
\begin{equation}
 n\ge C_{\mathrm{path}}\eta^{-1}BK^5\log^2(K+1),          \tag{4.9}\label{eq:4.9}
\end{equation}
there exist $n$ particle paths
\[
 x_i:[0,1]\longrightarrow[0,B],\qquad 1\le i\le n,
\]
which are piecewise continuously differentiable after taking square roots.  Define the equal-weight
empirical law and the associated $n\times n$ diagonal matrix by
\begin{equation}
 \widehat\nu_t=\frac1n\sum_{i=1}^n\delta_{x_i(t)},
 \qquad
 D_t=\operatorname{diag}\bigl(\sqrt{x_1(t)},\ldots,\sqrt{x_n(t)}\bigr).
                                                               \tag{4.9a}\label{eq:4.9a}
\end{equation}
For every $0\le q\le K$, the moment
\begin{equation}
 \int x^q\,d\widehat\nu_t(x)
 =\frac1n\sum_{i=1}^nx_i(t)^q
 =\frac1n\tr D_t^{2q}                                      \tag{4.9b}\label{eq:4.9b}
\end{equation}
is independent of $t$.  Thus the endpoint empirical laws
$\widehat\nu_0$ and $\widehat\nu_1$ have exactly equal moments through degree $K$.
Furthermore,
\begin{equation}
 \mathcal L_K\defeq\int_0^1
 \left(\frac1n\tr\dot D_t^2\right)^{1/2}dt
 \le C_{\mathrm{path}}\sqrt{nB}
      +\frac{C_{\mathrm{path}}BK^2\log(K+1)}{\sqrt\eta}. \tag{4.10}\label{eq:4.10}
\end{equation}
Writing $h(\mu)=\int\sqrt{x}\,d\mu(x)$, the endpoints also satisfy
\begin{equation}
 \left|h(\widehat\nu_b)-h(\nu_b)\right|
 \le C_{\mathrm{path}}\eta
    +\frac{C_{\mathrm{path}}BK^2\log(K+1)}{n},
 \qquad b\in\{0,1\}.                                    \tag{4.10a}\label{eq:4.10a}
\end{equation}
Thus the endpoint square-root moments change by $o(1)$ whenever $\eta=\eta_n\to0$ and \eqref{eq:4.9}
holds.
\end{lemma}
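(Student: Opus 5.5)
Given $\nu_0,\nu_1$ with $K+2$ atoms in $[0,B]$ and matching moments through degree $K$, I would build a packet of $n$ equal-weight particles in three layers. There are base particles that realize the common part of the two laws and the rounded atom weights. There are control particles that absorb every moment discrepancy. And there is a transport schedule that moves base particles from the $\nu_0$ configuration to the $\nu_1$ configuration one at a time.

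\textbf{Step 1 (rounding).} Set aside a fraction $\eta$ of the $n$ particles as controls. On the remaining $(1-\eta)n$ particles I would place $\lfloor (1-\eta)n\,\nu_b(\{a_j\})\rfloor$ copies of each atom $a_j$. The rounding error in the moment vector $(\frac1n\sum x_i^q)_{q\le K}$ is then at most $(K+2)B^q/n$ coordinatewise. The controls are positioned at $K+1$ fixed well-separated nodes in $[0,B]$, say Chebyshev-type nodes. Near these nodes they can be displaced so as to produce any small prescribed change of the first $K$ moments. The map from control displacements $\delta y$ to moment changes is a Vandermonde-type Jacobian in the variables $y$. I would measure moments in the rescaled form $(x/B)^q$ and use Chebyshev/Markov bounds, so that its inverse has norm $\mathrm{poly}(K)\cdot B$, with $\log$ factors. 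Since each control has weight only $1/n$, hitting a moment change of size $\Delta$ with $\eta n$ controls costs a displacement of size about $\Delta\, n\,\mathrm{poly}(K)/(\eta n)$ per control. This is where the $\eta^{-1}$ in \eqref{eq:4.9} enters: it ensures the displacements stay inside $[0,B]$ and the inverse function theorem applies. With the controls correcting the rounding error, both endpoint laws get exactly the common moments. The square-root moments move by $O(\eta)$ from the control mass and by $O(BK^2\log K/n)$ from rounding and displacement, which gives \eqref{eq:4.10a}.

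\textbf{Step 2 (path).} Equal moments through degree $K$ mean the signed measure $\nu_1-\nu_0$ annihilates the polynomials of degree at most $K$. The base particles of the two endpoints therefore differ by a list of single-particle moves $a_j\to a_{j'}$. Each move produces a moment-change vector $v_i\in\R^K$ whose norm, in the rescaled sup norm, is at most $1/n$. These vectors sum to the rounding error $e^{\mathrm{end}}$. After subtracting the average I would apply Lemma~\ref{ext:steinitz}. It gives an order in which all partial sums are at most $K/n$, plus a drift bounded by $|e^{\mathrm{end}}|$. The particles are moved sequentially along the schedule. During each move the controls continuously compensate the current partial sum plus the instantaneous change, which keeps \eqref{eq:4.9b} exactly constant for all $t$. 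By the implicit function theorem the control trajectory is piecewise $C^1$, and so are the square roots if I parametrize moves in $\sqrt x$.

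\textbf{Step 3 (length).} In \eqref{eq:4.10}, each base particle travels in $\sqrt x$ a distance at most $\sqrt B$. Scheduling all moves in parallel time slices, rather than strictly sequentially, gives $\frac1n\tr\dot D^2\lesssim B$ per unit time, so the base contribution is $C\sqrt{nB}$ after accounting for the total number of moves. The controls stay within a ball whose radius is set by the Steinitz bound $K/n$ times the inverse Jacobian norm $\mathrm{poly}(K)B$ divided by the control weight $\eta$. Their total variation is a sum over the $O(n)$ moves, which should produce the second term $BK^2\log(K+1)/\sqrt\eta$ after Cauchy--Schwarz over $\eta n$ controls.

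I expect the main obstacle to be the quantitative inverse bound for the control Jacobian uniformly along the path. This requires lower-bounding the smallest singular value of a Vandermonde matrix in the monomial basis on $[0,B]$. I would first try to avoid the monomial basis by expressing moment constraints in a Chebyshev basis and using Markov brothers' inequality, which yields the $K^2\log K$ factors. Keeping the controls near their base nodes, using \eqref{eq:4.9}, is what makes this bound valid throughout the path.
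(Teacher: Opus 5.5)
Your architecture is the paper's: $K$ control groups at Chebyshev nodes, rounded atom counts whose moment error is measured in the basis $C_q(x)=T_q(2x/B-1)$, a contraction argument that cancels that error, Steinitz-ordered single-particle moves during which the controls absorb the current error, and a length bound split into base and control parts. As written, however, several steps do not deliver the stated bounds.

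First, reserving a fraction $\eta$ of the particles as controls is too much for \eqref{eq:4.10a}. Let $W_c$ be the total control mass and $w=W_c/K$ the per-group weight. The endpoint shift of the square-root moment is then $-W_c\,h(\nu_b)+w\sum_j\sqrt{t_j}+O(KB/n)$. Because the nodes spread over $[0,B]$, the middle term is comparable to $W_c\sqrt B$. Nothing cancels it when $h(\nu_b)$ is small, as for $\nu_0$ in Proposition~\ref{prop:scalar-pair}, so you get an error of order $\eta\sqrt B$, not $C_{\mathrm{path}}\eta$. In the application $\eta=1/\log(en)$ while $\sqrt B$ is a large power of $\log n$, so the gap would be lost. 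The paper reserves $N_c=\lfloor\eta n/(BK)\rfloor$ particles per group, i.e.\ $W_c\le\eta/B$.

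Second, the inverse-Jacobian bound you single out cannot come from Markov's inequality, which only bounds derivatives from above. The missing observation concerns the nodes $t_j=\tfrac B2(1+\cos\theta_j)$ with $\theta_j=j\pi/(K+1)$. There the Jacobian $A_{qj}=wC_q'(t_j)=2wq\sin(q\theta_j)/(B\sin\theta_j)$ is a diagonally rescaled discrete sine transform. The identity $\sum_j\sin(q\theta_j)\sin(r\theta_j)=\tfrac{K+1}2\mathbf 1_{\{q=r\}}$ inverts it exactly, and the harmonic sum gives $\norm{A^{-1}}_{\infty\to\infty}\le CB\log(K+1)/W_c$, with no power of $K$. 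Markov's inequality enters only through $\norm{C_q''}_{L^\infty[0,B]}\le4q^4/B^2$. That bound keeps $\norm{A^{-1}(DH(\delta)-A)}_{\infty\to\infty}\le\tfrac12$ on a cube of radius $\asymp B/(K^4\log(K+1))$. Requiring the correction, of size $\norm{A^{-1}}_{\infty\to\infty}\cdot CK/n$, to fit in this cube is exactly \eqref{eq:4.9} once $W_c\asymp\eta/B$. A cruder $\mathrm{poly}(K)$ inverse bound would inflate both \eqref{eq:4.9} and the second term of \eqref{eq:4.10}.

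Third, your Step~3 contradicts Step~2. If many base particles move simultaneously, the instantaneous moment error is the sum of their partial changes and is no longer $O(K/n)$. The controls would then have to leave the cube where the correction exists. Moreover, your own computation ($\frac1n\tr\dot D_t^2\lesssim B$ over unit time) would give $\sqrt B$, not $\sqrt{nB}$. The $\sqrt{nB}$ is precisely the cost of strictly sequential moves: each segment moves one particle monotonically, contributing $n^{-1/2}$ times the change of its square root, at most $\sqrt{B/n}$, and there are at most $n$ segments.

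For the controls, bound the length by the total variation of the moment error, not by the radius of the ball they occupy. A move that is monotone in the Chebyshev angle gives $\int_0^1\norm{\dot g_m(s)}_\infty\,ds\le\pi K/n$ per segment; your $\sqrt x$-parametrization is a monotone reparametrization of such a move. Then $\dot\delta_m=-DH(\delta_m)^{-1}\dot g_m$ with $\norm{DH(\delta_m)^{-1}}_{\infty\to\infty}\le2\norm{A^{-1}}_{\infty\to\infty}$. The lower bound $t_j+\delta_j\ge cB/K^2$ converts displacement velocity into square-root velocity at cost $K/\sqrt B$. This yields $C\sqrt B\,K^2\log(K+1)/(n\sqrt{W_c})$ per segment, hence the second term of \eqref{eq:4.10} when $W_c=\eta/B$.

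Two smaller points remain. The implicit function theorem is only local. What makes the control path global, continuous across joins, and equal to the endpoint correction at the end is that for every $(m,s)$ the Newton map is a $\tfrac12$-contraction of one fixed cube into itself, giving a unique solution there. Also, floor rounding must be replaced by largest-remainder rounding so that both endpoints have exactly the same number of base particles.
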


\paragraph{Proof sketch.}
First approximate each input law by \(n\) equal-weight particles.  Reserve \(K\) small groups of
particles as adjustable controls, one for each nonconstant moment constraint.  Integer rounding
of the remaining atom counts creates a small \(K\)-dimensional moment error; moving the \(K\)
control groups cancels it exactly.  Next move the ordinary particles from the first endpoint to
the second one at a time.  The Steinitz lemma orders these moves so that the accumulated moment
error stays small, and the controls continuously cancel that error throughout the path.  The
resulting base-particle motion gives the first term in \eqref{eq:4.10}, while the control motion gives the
second.  Appendix~\ref{app:scalar-path} supplies the complete construction and verifies the
endpoint, regularity, and length bounds.

Apply Lemma~\ref{lem:path} to the laws furnished by Proposition~\ref{prop:scalar-pair}, with
\begin{equation}
 c_{\mathrm{low}}\defeq\frac1{100},\qquad
 K=\left\lfloor c_{\mathrm{low}}\frac{\log n}{\log\log n}\right\rfloor,
 \qquad B\le2(K+2)^{M_\eps},\qquad
 \eta_{\mathrm{path}}(n)
 =\min\left\{\frac1{20},\frac1{\log(en)}\right\}.        \tag{4.17}\label{eq:4.17}
\end{equation}
Since $M_\eps$ is fixed once $\eps$ is fixed, the right side of \eqref{eq:4.9} is only a fixed power of
$\log n$; hence \eqref{eq:4.9} holds for all sufficiently large $n$.  The tensor order
$m=2K+1$ satisfies
$4m\log(4m)\le\tfrac14\log n$ for all sufficiently large $n$ because
$c_{\mathrm{low}}=1/100$.  Thus
$(4m)^{4m}\le n^{1/4}\le n/2$, as required by
Lemma~\ref{lem:twirl}.  Moreover,
$\eta_{\mathrm{path}}(n)\to0$, so \eqref{eq:4.10a} shows that the endpoint square-root moments differ
from those in Proposition~\ref{prop:scalar-pair} by $o(1)$.  Write
$x_i^{(b)}=x_i(b)$ and $D_b=D_t|_{t=b}$ for the two endpoints $b\in\{0,1\}$ of the path
defined in \eqref{eq:4.9a}.  Write
\[
 \mathcal O_n=\{Q\in\R^{n\times n}:Q^TQ=I_n\},
\]
and take $U,V$ independently from the Haar-uniform probability law on $\mathcal O_n$.  Define
\begin{equation}
 F_b=\sqrt n\,UD_bV^T.                                     \tag{4.18}\label{eq:4.18}
\end{equation}
Then
\[
 \norm{F_b}_{\Sone}=\sqrt n\sum_i\sqrt{x_i^{(b)}}
\]
is deterministic.  By \eqref{eq:4.P3}, the scalar relative gap is at least
$(1+\eps)/2=\eps+2\eta_\eps$; consequently these two values have relative gap at least
$\eps+\eta_\eps$ for all sufficiently large $n$.
The identities $n^{-1}\sum_i(x_i^{(0)})^q=n^{-1}\sum_i(x_i^{(1)})^q$ for $q\le K$ imply that
the tensor moments of $F_0,F_1$ agree through total degree $2K$.

\subsection{Step 3: add a small Gaussian channel}

The orbit laws in \eqref{eq:4.18} are singular.  Add a common Gaussian perturbation:
\begin{equation}
 X_b=aF_b+\sigma Z,                                        \tag{4.19}\label{eq:4.19}
\end{equation}
where $Z$ has independent $N(0,1)$ entries, $Z$ is independent of $(U,V)$, and
$a,\sigma>0$ are constants depending only on $\eps$.  Let
$\Pi_b=\operatorname{Law}(X_b)$ for $b\in\{0,1\}$; these are the two input laws
to which Lemma~\ref{lem:yao} will be applied.

\paragraph{Gaussian regularization.}
The matrices $F_b=\sqrt n\,UD_bV^T$ lie on lower-dimensional rotation orbits, so their laws do
not have ordinary densities on the full matrix space.  The later score and Fisher-information
calculation differentiates densities, so we add Gaussian noise.  Lemma~\ref{lem:smoothing}
verifies both properties needed after this regularization: for sufficiently small $\sigma/a$,
the nuclear-norm gap survives with high probability, and the measured noise is a standard
Gaussian vector $G_k$ because $L$ has orthonormal rows.

\begin{lemma}[Smoothing preserves the norm gap]\label{lem:smoothing}
Let $m_b=n^{-1}\tr D_b$ and suppose
$(m_1-m_0)/(m_1+m_0)\ge\eps+\eta$.  For $0<\zeta<1$, if
\begin{equation}
 \frac\sigma a\le\frac{\eta(m_0+m_1)\sqrt\zeta}{8},        \tag{4.20}\label{eq:4.20}
\end{equation}
then, with probability at least $1-\zeta$ under either label, the nuclear norms of $X_0$ and
$X_1$ lie in deterministic intervals whose $(1\pm\eps)$ enlargements are disjoint.  Moreover, for
every row coisometry $L$,
\begin{equation}
 LX_b=aLF_b+\sigma G_k,\qquad G_k\sim N(0,I_k).            \tag{4.21}\label{eq:4.21}
\end{equation}
Here $G_k$ is independent of $(U,V)$.
\end{lemma}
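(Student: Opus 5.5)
The plan is to use the triangle inequality for the nuclear norm around the deterministic value $\norm{aF_b}_{\Sone}=a\sqrt n\,\tr D_b=a n^{3/2}m_b$. Then
\[
 \bigl|\norm{X_b}_{\Sone}-a n^{3/2}m_b\bigr|\le\sigma\norm{Z}_{\Sone}
 \le\sigma\sqrt n\,\norm Z_F .
\]
The second inequality is \eqref{eq:3.1} with $r=n$. Since $\E\norm Z_F^2=n^2$, Markov's inequality gives $\norm Z_F\le n/\sqrt\zeta$ with probability at least $1-\zeta$. On that event, the deviation is at most $\sigma n^{3/2}/\sqrt\zeta$. Hypothesis \eqref{eq:4.20} bounds this by $\tfrac{\eta}{8}(m_0+m_1)\,a n^{3/2}$.

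Normalize by $s=an^{3/2}$. The intervals are
\[
 I_b=\bigl[s(m_b-\rho),\,s(m_b+\rho)\bigr],\qquad \rho=\tfrac{\eta}{8}(m_0+m_1).
\]
Each one contains $\norm{X_b}_{\Sone}$ with probability at least $1-\zeta$ under label $b$. The intervals are deterministic and independent of the matrix.

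For disjointness of the enlargements, it is enough to show
\[
 (1+\eps)(m_0+\rho)<(1-\eps)(m_1-\rho),
 \quad\text{i.e.}\quad
 m_1-m_0-\eps(m_0+m_1)>2\rho .
\]
(Any point of $(1\pm\eps)I_0$ is at most the left side, and any point of $(1\pm\eps)I_1$ is at least the right side.) The gap hypothesis gives
\[
 m_1-m_0-\eps(m_0+m_1)\ge\eta(m_0+m_1),
\]
which exceeds $2\rho=\tfrac{\eta}{4}(m_0+m_1)$ because $m_0+m_1>0$. We also need $m_1-\rho\ge0$ so that $I_1\subseteq[0,\infty)$. This holds because $m_1\ge m_0+\eta(m_0+m_1)$, and we may intersect with $[0,\infty)$ anyway.

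For \eqref{eq:4.21}, linearity gives $LX_b=aLF_b+\sigma LZ$. Identify $Z$ with a standard Gaussian vector in $\R^{n^2}$. Then $LZ$ is Gaussian with covariance $LL^*=I_k$, where $k$ is the rank of $L$. Since $Z$ is independent of $(U,V)$, so is $LZ$, and we set $G_k=LZ$.

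No step is a real obstacle. The only care needed is the bookkeeping that matches the constant $8$ in \eqref{eq:4.20} to the margin $\eta$, and this is handled with room to spare by the Markov bound above.
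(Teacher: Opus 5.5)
Your proposal is correct and matches the paper's proof essentially step for step: the triangle inequality around the deterministic level $an^{3/2}m_b$, Markov's inequality applied to $\norm{Z}_F^2$, explicit deterministic intervals clipped at zero, and $LZ\sim N(0,I_k)$ independent of $(U,V)$. The only difference is bookkeeping. You compare the interval endpoints directly, with margin $2\rho$, whereas the paper bounds $2(1+\eps)r_{\mathrm{noise}}$ by $4r_{\mathrm{noise}}$.
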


The lemma follows from the standard comparison
$\norm Z_{\Sone}\le\sqrt n\norm Z_F$, Markov's inequality, and the fact that a row coisometry
maps a standard Gaussian vector to a standard Gaussian vector.  For completeness,
Appendix~\ref{app:smoothing-proof} gives the details with the constants in \eqref{eq:4.20}.

The smoothing parameters can be chosen explicitly.  Take $\zeta=1/100$ and $a=1$.
Proposition~\ref{prop:scalar-pair} gives $h_1\ge1/(2\sqrt2)$, while \eqref{eq:4.10a} changes each endpoint
square-root moment by $o(1)$.  Hence, for all sufficiently large $n$, the empirical quantities
$m_b=n^{-1}\tr D_b$ satisfy $m_0+m_1\ge1/(4\sqrt2)$.  The fixed choice
\begin{equation}
 \sigma\defeq\frac{\eta_\eps\sqrt\zeta}{64\sqrt2}       \tag{4.20a}\label{eq:4.20a}
\end{equation}
therefore satisfies \eqref{eq:4.20}, with margin parameter $\eta=\eta_\eps$.  In particular,
$a/\sigma$ depends only on $\eps$ and not on $n$.

\subsection{Step 4: compare every low-dimensional observation}

The main proof uses the following random-rotation inequality.  Appendix~\ref{app:tensors}
derives it from the exact orthogonal Weingarten formula, an elementary trace decomposition, and
explicit linear-algebra estimates.
Appendix~\ref{app:score} combines the inequality with the score calculation to obtain the
total-variation bound used below.

Let $e_1,\ldots,e_n$ be the standard basis of $\R^n$.  We identify a matrix with a vector having
one row coordinate and one column coordinate:
\begin{equation}
 \Phi(A)=\sum_{i,j=1}^n A_{ij}\,e_i\otimes e_j
 \in\R^n\otimes\R^n.                                    \tag{4.21a}\label{eq:4.21a}
\end{equation}
Under this convention, the two-sided rotation $A\mapsto UAV^T$ is represented by $U\otimes V$.
Indeed,
\begin{align*}
 (U\otimes V)\Phi(A)
 &=\sum_{i,j}A_{ij}(Ue_i)\otimes(Ve_j)\\
 &=\sum_{a,b}\left(\sum_{i,j}U_{ai}A_{ij}V_{bj}\right)e_a\otimes e_b
 =\Phi(UAV^T).                                           \tag{4.21b}\label{eq:4.21b}
\end{align*}
Thus $U$ rotates the row index and $V$ rotates the column index.

Now let $m$ be a positive integer, called the \emph{tensor order}, and consider $m$ matrix
factors.  Throughout this argument, $m$ denotes this integer order.  On a simple tensor
$\Phi(A_1)\otimes\cdots\otimes\Phi(A_m)$, applying the same two-sided rotation to every matrix
factor gives
\begin{align*}
 &\Phi(UA_1V^T)\otimes\cdots\otimes\Phi(UA_mV^T)\\
 &\qquad=(U\otimes V)^{\otimes m}
       \bigl(\Phi(A_1)\otimes\cdots\otimes\Phi(A_m)\bigr).       \tag{4.21c}\label{eq:4.21c}
\end{align*}
The identity extends to every element of $(\R^{n\times n})^{\otimes m}$ by linearity.  Here
$(U\otimes V)^{\otimes m}$ is the tensor product of $m$ copies of the $n^2\times n^2$ operator
$U\otimes V$; it is not an ordinary matrix power.  The following estimate bounds the expected
squared norm of the projection of a randomly rotated tensor onto the same rank-$k$ subspace in
all $m$ tensor positions.

\paragraph{Measurement contraction at the first unmatched order.}
After moment matching cancels all tensor degrees through $2K$, the score calculation first sees
the odd tensor order $m=2K+1$.  The required estimate is that an arbitrary rank-$k$ measurement
captures little of this tensor after the independent random rotations $U,V$.  The factor
$(k/n^2)^{(m+1)/2}$ in the following lemma supplies this decay.  Appendix~\ref{app:tensors}
proves the lemma by decomposing the row and column indices into trace-free blocks and evaluating
the surviving Haar pairings directly.

\begin{lemma}[Odd two-sided rotation bound]\label{lem:twirl}
Let $\mathcal H=\R^{n\times n}$ and equip $\mathcal H^{\otimes m}$ with the Frobenius norm,
denoted by $\norm{\cdot}$.  There is an absolute constant $C_{\mathrm{ten}}\ge1$ with the
following property.  Let $m$ be odd and suppose $(4m)^{4m}\le n/2$.  For every tensor
$T\in\mathcal H^{\otimes m}$ and every rank-$k$ orthogonal projection $P$ on $\mathcal H$,
\begin{equation}
 \E_{U,V}\norm{P^{\otimes m}(U\otimes V)^{\otimes m}T}^2
 \le (C_{\mathrm{ten}}m)^{C_{\mathrm{ten}}m}
      \left(\frac{k}{n^2}\right)^{(m+1)/2}\norm{T}^2.     \tag{4.22}\label{eq:4.22}
\end{equation}
\end{lemma}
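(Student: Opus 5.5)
The plan is to reduce \eqref{eq:4.22} to an operator-norm bound for a twirled projection, expand the twirl with the orthogonal Weingarten formula, and then bound each term by exponent counting. Write $W=(U\otimes V)^{\otimes m}$. Then
\[
 \E_{U,V}\norm{P^{\otimes m}WT}^2=\ip{T}{\mathcal T(P)\,T},
 \qquad
 \mathcal T(P)\defeq\E_{U,V}\bigl[W^{*}P^{\otimes m}W\bigr].
\]
So it suffices to prove
\[
 \norm{\mathcal T(P)}_{\op}\le (C_{\mathrm{ten}}m)^{C_{\mathrm{ten}}m}(k/n^2)^{(m+1)/2}.
\]

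The operator $\mathcal T(P)$ is positive semidefinite and commutes with the action of $\mathcal O_n\times\mathcal O_n$ on $(\R^n)^{\otimes m}\otimes(\R^n)^{\otimes m}$, where one group acts on the row slots and the other on the column slots. By the orthogonal Weingarten formula, applied separately to $U$ and $V$,
\[
 \mathcal T(P)=\sum_{\alpha,\beta,\alpha',\beta'}
 \mathrm{Wg}(\alpha,\beta)\,\mathrm{Wg}(\alpha',\beta')\,
 \ip{P^{\otimes m}}{B_\beta\otimes B_{\beta'}}\,B_\alpha\otimes B_{\alpha'},
\]
where $\alpha,\beta$ (and $\alpha',\beta'$) range over perfect matchings of the $2m$ row (column) index positions, and $B_\alpha$ is the associated Brauer contraction operator. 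There are at most $((2m)!)^4$ terms. Under the hypothesis $(4m)^{4m}\le n/2$, the standard Weingarten estimate gives
\[
 |\mathrm{Wg}(\alpha,\beta)|\le 2\,(Cm)^{Cm}\,n^{-m-d(\alpha,\beta)},
\]
where $d$ is the Brauer distance. Moreover $\norm{B_\alpha}_{\op}=n^{c(\alpha)}$, where $c(\alpha)$ counts the matched pairs lying inside the output side (or inside the input side). All combinatorial factors are absorbed into $(C_{\mathrm{ten}}m)^{C_{\mathrm{ten}}m}$, so the task is to show that each term has operator norm at most $n^{O(1)\cdot 0}(k/n^2)^{(m+1)/2}$, up to such a factor.

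The core estimate concerns the scalar $\ip{P^{\otimes m}}{B_\beta\otimes B_{\beta'}}$. The two matchings glue the $m$ copies of $P$, each viewed as a $4$-index tensor $P_{(ij),(i'j')}$, into a closed network. This network splits into connected components, each a single trace-type contraction of $\ell$ copies of $P$. For a component of size $\ell$ I will prove a Cauchy--Schwarz bound
\[
 \bigl|\text{component}\bigr|\le k^{\lceil \ell/2\rceil}\,n^{e},
\]
where $e$ counts the free index loops it closes. The argument regroups the copies of $P$ into pairs and uses $\norm{P}_F^2=k$ and $\norm{P}_{\op}=1$. An unpaired copy is handled by $\tr P=k$, or by a partial trace of $P$, which has Hilbert--Schmidt norm at most $\sqrt{kn}$ and operator norm at most $n$.

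Summing the exponents of $k$ over components yields at least $k^{(m+1)/2}$ whenever the contraction is not of a fully degenerate type. Here oddness enters: a perfect pairing of the $m$ copies of $P$ into $2$-cycles, which would give only $k^{m/2}$, is impossible when $m$ is odd. The matching powers of $n$ are then compared with the Weingarten decay $n^{-2m-d-d'}$ and with $\norm{B_\alpha\otimes B_{\alpha'}}_{\op}$. The goal is the net exponent $n^{-(m+1)}$, with each unit of Brauer distance paying for each extra loop. This loop-versus-distance bookkeeping is the genus-type inequality $c(\alpha)+\#\mathrm{loops}(\beta)\le m+d(\alpha,\beta)$.

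The step I expect to be the main obstacle is this bookkeeping. It must show, uniformly over all quadruples of matchings, that every gain in $n$ from closed loops or large $\norm{B_\alpha}$ is offset either by Weingarten decay or by the loss of a factor $\sqrt{k/n^2}$ relative to $k^{m/2}$ versus $n^m$. A naive term-by-term bound may lose in the cases where $P$ is contracted by partial traces; for example, $P$ may project onto matrices with large trace, so that the partial trace is as large as $n$. To handle these, I will first decompose $\R^n\otimes\R^n$ on each slot into the trace part and its trace-free complement (the trace-free block decomposition mentioned above). On the trace-free part, the loop contractions that produce surplus powers of $n$ vanish. The trace part is one-dimensional per slot and contributes at most $1\le k$ with no $n$ gain. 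I will then bound the cross terms by Cauchy--Schwarz over the at most $2^m$ block choices.
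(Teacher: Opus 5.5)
Your reduction to $\norm{\mathcal T(P)}_{\op}$ and the Brauer--Weingarten expansion are correct. However, the whole content of the lemma lies in the step you call the main obstacle, and that step is not carried out.

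In the direct expansion, $\ip{P^{\otimes m}}{B_\beta\otimes B_{\beta'}}$ is a contraction over a general four-valent network. Each copy of $P$ has row-out, row-in, column-out and column-in legs, and these are joined by arbitrary perfect matchings $\beta$ of the $2m$ row legs and $\beta'$ of the $2m$ column legs. The components are therefore not trace-type cycles in general. Your component bound $k^{\lceil\ell/2\rceil}n^{e}$ and the ``genus-type'' inequality are neither precisely defined nor proved.

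Counting powers of $k$ is also not enough. You must show that every term is at most $(Cm)^{Cm}k^an^b$ with $b\le-(m+1)$ and $2a+b\le0$. That is, each power of $k$ must be paid for by $n^{-2}$ once you combine three things: the distance-dependent Weingarten decay (itself a nontrivial external asymptotic), $\norm{B_\alpha}_{\op}=n^{c(\alpha)}$, and the network value.

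The remedy you sketch cannot supply this. Splitting each matrix slot as $\R I\oplus\{\tr A=0\}$ is not invariant under $A\mapsto UAV^T$. Moreover, no Weingarten pairing ever joins a row index to a column index of the same slot: $U$ produces only row--row deltas and $V$ only column--column deltas. So per-slot trace-freeness annihilates no term. Nor is the surplus $n$ caused by matrices of large trace. If $P$ projects onto $\operatorname{span}\{E_{1s}:2\le s\le n\}$, every matrix in its range is traceless, yet its partial trace over columns is $(n-1)e_1e_1^T$.

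The missing idea is to decompose the tensor $T$ itself, using partial matchings of the $m$ row positions and of the $m$ column positions \emph{across different slots}. One writes $T=\sum_\alpha S_\alpha z_\alpha$, with normalized cups in the matched positions and $z_\alpha$ trace-free in the unmatched ones. The overlap bound $\norm{S_\alpha^*S_\beta}_{\op}\le1/n$ together with $D\le(4m)^{4m}\le n/2$ makes this a stable frame.

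On a trace-free block every non-cross pairing vanishes. The Weingarten sum then collapses to permutation operators with coefficients at most $2n^{-p}$ and $2n^{-q}$. With $R=U^{\otimes p}\otimes V^{\otimes q}$ this gives $\E Rhh^*R^*\preceq4(p!q!)^2n^{-(p+q)}\norm h^2I$, with no need for distance-dependent Weingarten asymptotics.

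What remains is $\tr(P^{\otimes m}\widetilde J_\alpha\widetilde J_\alpha^*)$, which involves only a graph of maximum degree two. Write $\Phi(X)=\sum_sA_sXA_s^*$. Then cycles and row--column paths cost at most $k^\ell$, and row--row or column--column paths cost at most $nk^\ell$. The extra factors of $n$ are exactly cancelled by $n^{-(p+q)}$ and the cup normalizations, leaving $(k/n^2)^{(m+o)/2}$.

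Parity enters only here: for odd $m$, both $p$ and $q$ are odd, which forces $o\ge1$. The blocks are then recombined by Cauchy--Schwarz for the positive block Gram operator.
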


\begin{proof}
We give the main steps and leave the complete indexed calculation to
Appendix~\ref{app:tensors}.

For $d\ge0$, let $\mathcal T_d\subseteq(\R^n)^{\otimes d}$ be the trace-free subspace: setting
any two tensor indices equal and summing over their common value gives zero.  A partial matching
is a collection of disjoint pairs of positions.  A block label
$\alpha=(\pi_R^\alpha,\pi_C^\alpha)$ consists of one partial matching of the $m$ row positions
and one of the $m$ column positions.  If
\[
 p_\alpha=m-2\abs{\pi_R^\alpha},\qquad
 q_\alpha=m-2\abs{\pi_C^\alpha},\qquad
 W_\alpha=\mathcal T_{p_\alpha}\otimes\mathcal T_{q_\alpha},
\]
then $p_\alpha$ and $q_\alpha$ are the numbers of unmatched row and column positions.  The map
$S_\alpha:W_\alpha\to\mathcal H^{\otimes m}$ retains the entries of its input in these unmatched
positions and, for every matched row pair $\{r,s\}$, adds the coordinate factor
$n^{-1/2}\mathbf1_{\{i_r=i_s\}}$; matched column pairs are treated identically.  Thus
$S_\alpha$ embeds a tensor with $p_\alpha$ row and $q_\alpha$ column indices into one with $m$ of
each.

Let $\alpha=1,\ldots,D$ run over all pairs of partial matchings.  The stable pairing-frame
proposition in Appendix~\ref{app:tensors} (Proposition~\ref{ext:pairing-frame}) proves
\[
 D\le(4m)^{4m},\qquad
 \frac12\sum_\alpha\norm{z_\alpha}^2
 \le\norm{\sum_\alpha S_\alpha z_\alpha}^2
 \le\frac32\sum_\alpha\norm{z_\alpha}^2.
\]
Indeed, the elementary trace decomposition shows that the block images span; each $S_\alpha$ is
an isometry; and distinct blocks satisfy
$\norm{S_\alpha^*S_\beta}_{\op}\le1/n$.  Here $S_\alpha^*$ is the adjoint of $S_\alpha$.
Since $D\le n/2$, the sum of all off-diagonal Gram terms is at most one half of the sum of the
diagonal terms, which gives the displayed inequalities.  Consequently every $T$ can be written
as
\[
 T=\sum_{\alpha=1}^D S_\alpha z_\alpha,\qquad
 \sum_{\alpha=1}^D\norm{z_\alpha}^2\le2\norm T^2.
\]

We next estimate one block.  Fix $\alpha$ and $h\in W_\alpha$, abbreviate
$p=p_\alpha$, $q=q_\alpha$, and set $R_{U,V}=U^{\otimes p}\otimes V^{\otimes q}$.  The exact
orthogonal Weingarten expansion, together with trace-freeness, gives the following operator
inequality.  Here $hh^*$ is the rank-one operator $x\mapsto h\langle h,x\rangle$, and
$A\preceq B$ means that $B-A$ is positive semidefinite:
\[
 \E_{U,V}R_{U,V}hh^*R_{U,V}^*
 \preceq c_{p,q}\norm h^2 I_{W_\alpha},\qquad
 c_{p,q}=4(p!q!)^2n^{-(p+q)}.
\]
The reason for the constant is simple: trace-freeness eliminates every pairing that joins two
indices within the same copy of the tensor.  In each of the two row pairing slots only the $p!$
cross pairings remain, and in each column slot only the $q!$ cross pairings remain.  Their
coordinate permutations have norm one, while the two Weingarten coefficients contribute at
most $(2n^{-p})(2n^{-q})$.  Proposition~\ref{prop:tracefree-haar} proves this statement for an
arbitrary tensor $h$, including one entangled between its row and column factors.  That proof is
given over the complex numbers; since all operators have real coefficients, its conclusion
restricts to the real space $W_\alpha$ used here.

The matched-pair factors are unchanged by orthogonal rotations, as proved in
\eqref{eq:B.block-intertwining}; hence
$(U\otimes V)^{\otimes m}S_\alpha=S_\alpha R_{U,V}$.  To use this Haar estimate, Appendix
\ref{app:measurement-graph} first enlarges $S_\alpha$ to a map $\widetilde J_\alpha$ with the
same output formula but with no trace-free restriction on its unmatched input.  Since
$S_\alpha S_\alpha^*\preceq\widetilde J_\alpha\widetilde J_\alpha^*$, it obtains
\[
 \E_{U,V}\norm{P^{\otimes m}(U\otimes V)^{\otimes m}S_\alpha h}^2
 \le c_{p,q}\norm h^2
 \tr\!\left(P^{\otimes m}\widetilde J_\alpha\widetilde J_\alpha^*\right).
\]

The trace in the last display is evaluated by an elementary graph calculation.  There is one
vertex for each of the $m$ copies of $P$; matched row and column pairs form alternating paths and
cycles.  The complete calculation in \eqref{eq:B.13a}--\eqref{eq:B.14} proves the following
three component bounds:
\begin{align*}
 \text{cycle with $2\ell$ vertices}&\le k^\ell,\\
 \text{row--column path with $2\ell-1$ vertices}&\le k^\ell,\\
 \text{row--row or column--column path with $2\ell$ vertices}&\le nk^\ell.
\end{align*}
Let $o$ be the number of row--column paths, and let $u_R,u_C$ be the numbers of row--row and
column--column paths.  Endpoint counting gives
\[
 2u_R+o=p,\qquad 2u_C+o=q,
\]
and multiplication over all graph components gives
$n^{u_R+u_C}k^{(m+o)/2}$.  There are $(m-p)/2+(m-q)/2$ matched pairs, whose squared
$n^{-1/2}$ normalizations contribute $n^{-m+(p+q)/2}$.  Combining this with the factor
$n^{-(p+q)}$ in the trace-free Haar estimate, and using
$u_R+u_C=(p+q-2o)/2$, gives the one-block bound
\[
 \E_{U,V}\norm{P^{\otimes m}(U\otimes V)^{\otimes m}S_\alpha h}^2
 \le4(p!q!)^2\left(\frac{k}{n^2}\right)^{(m+o)/2}\norm h^2.
\]
Because $m$ is odd, $p$ and $q$ are positive odd integers.  The endpoint equations force $o$ to
be a positive odd integer, so $o\ge1$.  Since $k\le n^2$, the last display is at most
\[
 \gamma_m\norm h^2,\qquad
 \gamma_m=4(m!)^4\left(\frac{k}{n^2}\right)^{(m+1)/2}.
\]

It remains to recombine the blocks.  For the coefficient vector
$\mathbf z=(z_1,\ldots,z_D)$, let $G_P$ be the positive-semidefinite block operator defined by
\[
 \langle\mathbf z,G_P\mathbf z\rangle
 =\E_{U,V}\norm{P^{\otimes m}(U\otimes V)^{\otimes m}
                 \sum_\alpha S_\alpha z_\alpha}^2.
\]
The one-block bound above shows that every diagonal block of $G_P$ is at most $\gamma_m I$.
Cauchy--Schwarz for the positive-semidefinite form defined by $G_P$ then bounds every
off-diagonal block in operator norm by $\gamma_m$.  Therefore, using the coefficient bound above,
\begin{align*}
 \langle\mathbf z,G_P\mathbf z\rangle
 &\le\gamma_m\left(\sum_\alpha\norm{z_\alpha}\right)^2
 \le\gamma_mD\sum_\alpha\norm{z_\alpha}^2
 \le2\gamma_mD\norm T^2.
\end{align*}
Finally $D\le(4m)^{4m}$, so one absolute choice of $C_{\mathrm{ten}}$ absorbs
$2D\cdot4(m!)^4$ into $(C_{\mathrm{ten}}m)^{C_{\mathrm{ten}}m}$.  This proves
\eqref{eq:4.22}; Appendix~\ref{app:block-recombination} gives the complete recombination.
\end{proof}

\paragraph{From measurement contraction to Fisher--Rao length.}
We record explicitly how Lemma~\ref{lem:twirl} enters the statistical comparison.  Along the
spectral path, set
\[
 F_t=\sqrt n\,UD_tV^T,\qquad
 Y_t=aLF_t+\sigma G_k,\qquad
 \mathbb P_t^L=\operatorname{Law}(Y_t),
\]
where $G_k\sim N(0,I_k)$ is independent of $U,V$.  Let $p_t$ be the density of $Y_t$ and let
$I_t=\int \dot p_t(y)^2/p_t(y)\,dy$ be its Fisher information along the path.  After the
moments through order $2K$ cancel, the score calculation constructs an order-$m$ tensor
$Q_m(t)\in\mathcal H^{\otimes m}$, with $m=2K+1$, and gives
\begin{equation}
 I_t\le\left(\frac a\sigma\right)^{2m}m!\,
       \E\norm{L^{\otimes m}Q_m(t)}^2.                   \tag{4.22a}\label{eq:4.22a}
\end{equation}
The recursion defining $Q_m(t)$ and the derivation of \eqref{eq:4.22a} are given in
Appendix~\ref{app:score}.  Thus the measurement map appears in the Fisher information through
the Frobenius norm of the measured tensor $L^{\otimes m}Q_m(t)$.

To apply Lemma~\ref{lem:twirl}, put $P=L^*L$.  Since $L$ has orthonormal rows, $P$ is the
rank-$k$ orthogonal projection onto the row space of $L$, and, pointwise in $Q_m(t)$,
\begin{equation}
 \norm{L^{\otimes m}Q_m(t)}^2
 =\left\langle Q_m(t),P^{\otimes m}Q_m(t)\right\rangle
 =\norm{P^{\otimes m}Q_m(t)}^2.                          \tag{4.22b}\label{eq:4.22b}
\end{equation}
For each fixed $t$, the tensor has the rotation form
$Q_m(t)=(U\otimes V)^{\otimes m}q_m(t)$ for a deterministic tensor $q_m(t)$.
Consequently, Lemma~\ref{lem:twirl} and \eqref{eq:4.22b} give
\begin{align}
 \E\norm{L^{\otimes m}Q_m(t)}^2
 &\le (C_{\mathrm{ten}}m)^{C_{\mathrm{ten}}m}
      \left(\frac{k}{n^2}\right)^{(m+1)/2}
      \E\norm{Q_m(t)}^2.                                \tag{4.22c}\label{eq:4.22c}
\end{align}
The energy estimate in Appendix~\ref{app:score} supplies the remaining link to the spectral path:
\[
 \E\norm{Q_m(t)}^2
 \le (C_{\mathrm{en}}B)^{m-1}n^2
      \left(\frac1n\tr\dot D_t^2\right).
\]
Taking square roots in these bounds therefore controls $\sqrt{I_t}$ by the ordinary path speed,
multiplied by $(k/n^2)^{(m+1)/4}$.  The Fisher--Rao formula then integrates this bound:
\[
 \TV(\mathbb P_0^L,\mathbb P_1^L)
 \le\frac12\int_0^1\sqrt{I_t}\,dt.
\]
Since $m=2K+1$, the measurement factor after taking the square root is
$(k/n^2)^{(K+1)/2}$.  The next lemma records the resulting endpoint estimate with all constants;
the rotation derivatives, Hermite tensors, and spectral-gap estimate underlying it are developed
in Appendix~\ref{app:score}.

\begin{lemma}[Observation comparison along a moment-preserving path]\label{lem:path-tv}
Let $D_t$ be a path supplied by Lemma~\ref{lem:path}, let
$F_t=\sqrt n\,UD_tV^T$, and let $L:\R^{n^2}\to\R^k$ be a row coisometry.  For
$G_k\sim N(0,I_k)$ independent of $U,V$, define
\[
 Y_t=aLF_t+\sigma G_k,
 \qquad \mathbb P_t^L=\operatorname{Law}(Y_t).
\]
Set $m=2K+1$ and suppose $(4m)^{4m}\le n/2$.  Then
\begin{equation}
 \TV(\mathbb P_0^L,\mathbb P_1^L)
 \le \frac12\mathcal A_m n
 \left(\frac{k}{n^2}\right)^{(K+1)/2}\mathcal L_K,       \tag{4.23}\label{eq:4.23}
\end{equation}
where $\mathcal L_K$ is the path length in \eqref{eq:4.10} and
\[
 \mathcal A_m=
 \left(\frac a\sigma\right)^m\sqrt{m!}\,
 (C_{\mathrm{ten}}m)^{C_{\mathrm{ten}}m/2}
 (C_{\mathrm{en}}B)^{(m-1)/2}.
\]
Here $C_{\mathrm{ten}}$ is the constant in Lemma~\ref{lem:twirl}, and
$C_{\mathrm{en}}$ is the fixed absolute constant from the energy estimate in
Appendix~\ref{app:score}.  At the endpoints,
$\mathbb P_b^L=\operatorname{Law}(LX_b)=L_\#\Pi_b$.
\end{lemma}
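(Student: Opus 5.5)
The plan is to assemble the bound from the ingredients recorded just before the statement, taking the score inequality \eqref{eq:4.22a} and the energy estimate from Appendix~\ref{app:score} as given. The route is to start from the Fisher--Rao inequality \eqref{eq:4.8a} applied to the smooth family $p_t$, bound $\sqrt{I_t}$ pointwise in $t$, and then integrate.

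\textbf{Step 1: the densities are admissible.} Because $G_k$ is a standard Gaussian independent of $(U,V)$, $Y_t$ has a density $p_t$. This density is a Gaussian mixture with strictly positive, smooth values. The path $t\mapsto D_t$ is piecewise $C^1$ by Lemma~\ref{lem:path}, so $p_t$ is piecewise differentiable in $t$ with derivatives that may be taken under the integral. This justifies \eqref{eq:4.8a} on each piece and, by additivity, on $[0,1]$. It also identifies the endpoint laws: by \eqref{eq:4.21} with $a,\sigma$ fixed, $\mathbb P_b^L=L_\#\Pi_b$.

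\textbf{Step 2: chain the pointwise estimates.} For fixed $t$, combine \eqref{eq:4.22a}, \eqref{eq:4.22b}, and \eqref{eq:4.22c}. Lemma~\ref{lem:twirl} applies because $m$ is odd and $(4m)^{4m}\le n/2$. Write $Q_m(t)=(U\otimes V)^{\otimes m}q_m(t)$. If the energy estimate is stated for $\norm{q_m(t)}^2$, this equals $\norm{Q_m(t)}^2$ by orthogonal invariance, so the two forms agree. The chain gives
\[
 I_t\le\Bigl(\frac a\sigma\Bigr)^{2m}m!\,(C_{\mathrm{ten}}m)^{C_{\mathrm{ten}}m}\Bigl(\frac{k}{n^2}\Bigr)^{(m+1)/2}(C_{\mathrm{en}}B)^{m-1}n^2\Bigl(\frac1n\tr\dot D_t^2\Bigr).
\]
Taking square roots yields
\[
 \sqrt{I_t}\le\mathcal A_m\,n\,\Bigl(\frac{k}{n^2}\Bigr)^{(K+1)/2}\Bigl(\frac1n\tr\dot D_t^2\Bigr)^{1/2},
\]
using $(m+1)/4=(K+1)/2$.

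\textbf{Step 3: integrate.} Integrating over $t\in[0,1]$ and applying \eqref{eq:4.8a} together with the definition of $\mathcal L_K$ in \eqref{eq:4.10} gives \eqref{eq:4.23} directly.

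\textbf{Main obstacle.} The only delicate point is the exact form of the tensor in \eqref{eq:4.22a}. Lemma~\ref{lem:twirl} requires a \emph{deterministic} tensor $q_m(t)$, rotated by the same $U,V$ that appear in the expectation. If $q_m(t)$ instead depends on $(U,V)$ through the score, one must condition appropriately, or use the fact that the recursion in Appendix~\ref{app:score} expresses $Q_m(t)$ via equivariance as $(U\otimes V)^{\otimes m}$ applied to a function of $D_t,\dot D_t$ alone. My first approach would be to verify this equivariance from the rotation-derivative construction. Given that, the proof is the bookkeeping above.
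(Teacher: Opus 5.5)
Your proposal is correct and matches the paper's proof: both apply the Fisher--Rao bound to the Gaussian-smoothed densities, chain the score inequality \eqref{eq:4.22a} with Lemma~\ref{lem:twirl} and the energy estimate (using $\Lambda^2\le B$), take square roots with $(m+1)/4=(K+1)/2$, and integrate segmentwise against $\mathcal L_K$. The equivariance $Q_m(t)=(U\otimes V)^{\otimes m}q_m(t)$ that you flag as the main obstacle is proved at the end of the energy-estimate subsection of Appendix~\ref{app:score}, by showing that the right-rotation derivatives and $K_0^\dagger$ commute with left translation $(U,V)\mapsto(RU,SV)$ and then inducting through the recursion, which is the route you propose.
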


\begin{proof}
Gaussian smoothing gives $Y_t$ a positive density $p_t$.  The Fisher--Rao inequality proved in
Appendix~\ref{app:fisher-tv} bounds endpoint total variation by one half of the integral of the
statistical speed of $p_t$.  The first $K$ power moments of $D_t^2$ are constant along the path,
so all entry moments of $F_t$ through degree $2K$ are constant as well.  Rotation integration by
parts therefore cancels the first $2K$ terms in the density derivative.  The first remaining term
has the odd tensor order $m=2K+1$.

Appendix~\ref{app:score} carries out this density differentiation exactly and bounds the resulting
unmeasured tensor energy by
$(C_{\mathrm{en}}B)^{m-1}n^2(n^{-1}\tr\dot D_t^2)$.  Lemma~\ref{lem:twirl} shows that the
rank-$k$ measurement retains at most the squared-energy fraction
$(C_{\mathrm{ten}}m)^{C_{\mathrm{ten}}m}
(k/n^2)^{K+1}$.  Taking square roots, integrating in $t$, and using the definition of
$\mathcal L_K$ gives \eqref{eq:4.23}.  These calculations hold on every continuously differentiable
segment of the path; the finitely many joining times have measure zero in the integral.
Finally, $LZ\sim N(0,I_k)$ because $LL^*=I_k$, and $LZ$ is independent of $(U,V)$ because
$Z$ is.  Thus the endpoint law is $\operatorname{Law}(LX_b)$.
\end{proof}

\subsection{Step 5: optimize the lower-bound parameters}

Squaring \eqref{eq:4.23} first gives the exact bound
\[
 \TV(\mathbb P_0^L,\mathbb P_1^L)^2
 \le \frac14\mathcal A_m^2n^2
       \left(\frac{k}{n^2}\right)^{K+1}\mathcal L_K^2.  \tag{4.23a}\label{eq:4.23a}
\]
For the choices in \eqref{eq:4.17}, $\eta_{\mathrm{path}}(n)=1/\log(en)$ once $n$ is large.  Squaring
\eqref{eq:4.10} and using $(u+v)^2\le2u^2+2v^2$ gives
\[
 \mathcal L_K^2
 \le2C_{\mathrm{path}}^2
 \{nB+B^2K^4\log^2(K+1)\log(en)\}.
\]
Here $B$ and $K$ are bounded by fixed powers of $\log n$, whereas the first term contains $n$.
Consequently, for all sufficiently large $n$,
\begin{equation}
 \mathcal L_K^2\le4C_{\mathrm{path}}^2nB.                \tag{4.23b}\label{eq:4.23b}
\end{equation}

Insert the definition of $\mathcal A_m$ and \eqref{eq:4.23b} into \eqref{eq:4.23a}.  Since $m=2K+1$ and
$\log B\le\log2+M_\eps\log(K+2)$, the logarithms of
$(a/\sigma)^{2m}$, $m!$, $(C_{\mathrm{ten}}m)^{C_{\mathrm{ten}}m}$,
$(C_{\mathrm{en}}B)^{m-1}$, and the additional factor $B$ sum to at most
$C_{\mathrm{low},1}(\eps)K\log(K+1)$ for one fixed constant
$C_{\mathrm{low},1}(\eps)$.  Fix this constant once at this point.  Finally,
\[
 n^3\left(\frac{k}{n^2}\right)^{K+1}
 =n\frac{k^{K+1}}{n^{2K}}.
\]
These substitutions give
\begin{equation}
 \TV(\mathbb P_0^L,\mathbb P_1^L)^2
 \le n\exp\{C_{\mathrm{low},1}(\eps)K\log(K+1)\}
       \frac{k^{K+1}}{n^{2K}}.                              \tag{4.24}\label{eq:4.24}
\end{equation}
The same inequality holds for every coisometry of rank $r\le k$: apply the displayed calculation
with $r$ and then use $r^{K+1}\le k^{K+1}$.

Define
\begin{equation}
 A_\eps\defeq1+
 \frac{3+C_{\mathrm{low},1}(\eps)c_{\mathrm{low}}}
      {c_{\mathrm{low}}},                                 \tag{4.25}\label{eq:4.25}
\end{equation}
and set $k_0=\lfloor n^2/(\log n)^{A_\eps}\rfloor$.  For every integer $k\le k_0$, the
logarithm of the right side of \eqref{eq:4.24} is at most
\begin{equation}
 3\log n+C_{\mathrm{low},1}(\eps)K\log(K+1)
 -A_\eps(K+1)\log\log n.                                 \tag{4.26}\label{eq:4.26}
\end{equation}
With $K=\lfloor c_{\mathrm{low}}\log n/\log\log n\rfloor$, the extra $1$ in \eqref{eq:4.25}
leaves a negative term of order $c_{\mathrm{low}}\log n$.  Hence \eqref{eq:4.26} tends to
$-\infty$, so the two
observed laws have total variation below $1/4$ for all large $n$.  Take $\zeta=1/100$ in
Lemma~\ref{lem:smoothing}; then $1/4+2/100<1/3$.  Apply Lemma~\ref{lem:yao} with
$\Phi(A)=\norm A_{\Sone}$ to obtain
\[
 k_\eps(n)\ge k_0+1>\frac{n^2}{(\log n)^{A_\eps}}.
\]

\begin{remark}[Limitation of the lower-bound argument]
At $k=\gamma n^2$ for a fixed $0<\gamma<1$, the power in \eqref{eq:4.24} contributes approximately
$n^3\gamma^{K+1}$, which requires
$K$ of order $\log n$.  The available tensor comparison also pays
$\exp\{C_{\mathrm{low},1}(\eps)K\log(K+1)\}$ and a worst-level spectral factor.  Those losses
dominate at that choice.
This is a limitation of the proved comparison, not evidence for a quadratic lower bound.
\end{remark}

\section{Upper bound}\label{sec:upper}

It suffices to prove the upper bound for $0<\eps<1/2$.  If $1/2\le\eps<1$, run the
construction with accuracy $1/3$; its $(1\pm1/3)$ guarantee implies the weaker
$(1\pm\eps)$ guarantee, and its measurement constant can be used as $C_\eps$.

\paragraph{Algorithmic idea.}
The algorithm divides the singular values into a short \emph{head}, containing the largest
singular directions, and a \emph{tail}, containing everything else.  The cutoff between them is
not known when the sketch is formed.  We therefore prepare for a short list of possible cutoffs
and let the decoder choose among them.  A useful cutoff has one of two properties: either the tail
already contributes very little to the nuclear norm, or its Frobenius energy is well spread, meaning
that no single remaining singular value carries too much of that energy.  In the first case the
tail may be discarded.  In the second case its nuclear norm can be recovered from a moderate
number of estimated spectral moments.

The sketch stores four independent Gaussian matrix products, all sampled before $A$ is seen.
The row sketch $GA$ lets the decoder identify candidate head subspaces.  Together with the
singular-value information in $GA$, the second row sketch $WA$ estimates the Frobenius norm of
each resulting residual and certifies whether that residual is sufficiently well spread.  After a
cutoff is selected, the product $AR_0$ estimates the nuclear
norm of the head by a small regression problem, while $YA$ supplies independent Gaussian samples
from which the decoder estimates the nuclear norm of the tail.  Thus the data-dependent choice of
cutoff occurs only during decoding; every stored number remains a fixed linear measurement of
$A$.

\paragraph{Road map.}
Step~1 proves that the list of candidate cutoffs always contains either a negligible tail or a
well-spread tail.  Step~2 establishes one covariance event for $G$ that holds simultaneously at
all candidate cutoffs, and Step~3 uses it to show that the head chosen from $GA$ leaves almost the
same tail mass as the corresponding exact singular-value cutoff.  Step~4 turns the information in
$GA$ and $WA$ into a test that selects a well-spread residual whenever one is needed; if no
candidate passes, the residual at the last cutoff is negligible.  Step~5 estimates the selected
head from $AR_0$.  Step~6 estimates a certified tail from $YA$ by approximating the square-root
function with a polynomial and estimating the required spectral moments.  Finally, Step~7 states
the complete sketch and decoder and balances their parameters to obtain the measurement bound in
Theorem~\ref{thm:main}.

\subsection{Step 1: some cutoff has a small tail or a well-spread tail}

\begin{lemma}[Multiscale head--tail dichotomy]\label{lem:dichotomy}
Fix $1\le R\le n$ and $0<\eta<1$.  Let
\[
 h_j=\min\{jR,n\},\qquad
 0\le j\le J=\left\lceil\log_2\left(\frac{\sqrt{n/R}}\eta\right)\right\rceil.
\]
Here $T_h(A)=\sum_{i>h}\sigma_i(A)$ is the tail nuclear norm defined in \eqref{eq:3.2}.
For at least one inspected cutoff $h_j$, either
\begin{equation}
 T_{h_j}(A)\le\eta\norm A_{\Sone},                         \tag{5.1}\label{eq:5.1}
\end{equation}
or the exact residual $A-A_{h_j}$ has stable rank at least $R/4$.
\end{lemma}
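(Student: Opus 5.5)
The plan is a proof by contradiction through geometric decay. Suppose that at every inspected cutoff $h_j$ both alternatives fail: $T_{h_j}(A)>\eta\norm A_{\Sone}$, and the residual $A-A_{h_j}$ has stable rank below $R/4$. If some $h_j=n$, then $T_{h_j}(A)=0$ and \eqref{eq:5.1} holds trivially. So I may assume $h_j=jR<n$ for every $j\le J$.

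The residual $A-A_h$ has singular values $\sigma_{h+1},\sigma_{h+2},\ldots$. Its Frobenius norm is $F_h(A)$ and its operator norm is $a_h=\sigma_{h+1}(A)$, both from \eqref{eq:3.2}. Low stable rank at cutoff $h$ therefore reads
\[
 F_h(A)^2<\tfrac R4\,\sigma_{h+1}(A)^2 .
\]
I first compare $\sigma_{h+1}$ with the previous block. For $h\ge R$, the $R$ indices $h-R+1,\ldots,h$ all have $\sigma_i\ge\sigma_{h+1}$. Hence
\[
 R\,\sigma_{h+1}^2\le F_{h-R}(A)^2,
 \qquad\text{so}\qquad
 \sigma_{h+1}\le F_{h-R}(A)/\sqrt R .
\]
Combining this with the low-stable-rank inequality at $h=h_j$ gives the halving step
\[
 F_{h_j}(A)<\tfrac12F_{h_{j-1}}(A)\qquad(1\le j\le J).
\]

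The main point is the starting value. A naive start $F_0(A)\le\norm A_{\Sone}$ loses a factor $\sqrt R$ at the end, so I will use a sharper bound at $j=1$. Since $\sigma_{R+1}\le\norm A_{\Sone}/(R+1)$, low stable rank at $h_1=R$ gives
\[
 F_{h_1}(A)<\tfrac{\sqrt R}{2}\,\sigma_{R+1}(A)\le\frac{\norm A_{\Sone}}{2\sqrt R}.
\]
Iterating the halving step from $j=1$ up to $J$ then yields
\[
 F_{h_J}(A)\le 2^{-J}\,\frac{\norm A_{\Sone}}{\sqrt R}.
\]
When $J=0$ the hypothesis forces $\sqrt{n/R}\le\eta$, and the same conclusion follows directly from $T_0=\norm A_{\Sone}$ and $\eta\ge1$ failing, or is handled trivially.

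Finally I apply Cauchy--Schwarz, $T_h\le\sqrt n\,F_h$, which is the remark after \eqref{eq:3.1}. Together with $2^{-J}\le\eta\sqrt{R/n}$, this gives
\[
 T_{h_J}(A)\le\sqrt{n/R}\;2^{-J}\norm A_{\Sone}\le\eta\norm A_{\Sone}.
\]
This contradicts the assumption that \eqref{eq:5.1} fails at $h_J$.

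The only delicate step is the initialization at $j=1$ through $\sigma_{R+1}\le\norm A_{\Sone}/(R+1)$, which recovers the missing $\sqrt R$. The rest consists of the elementary block comparison and careful handling of the edge cases $h_j=n$ and $J=0$.
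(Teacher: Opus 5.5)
Your argument is correct and is essentially the paper's proof. The block comparison $R\,\sigma_{h+1}(A)^2\le\sum_{h-R<i\le h}\sigma_i(A)^2$, combined with the low-stable-rank inequality, gives geometric halving; this is normalized by $\norm{A}_{\Sone}\ge R\,\sigma_{R+1}(A)$ and finished with $T_h(A)\le\sqrt n\,F_h(A)$. The only difference is bookkeeping: you iterate the Frobenius tail $F_{h_j}(A)$ (for $j=2,\ldots,J$), whereas the paper iterates $b_j=\sigma_{jR+1}(A)$. Your $J=0$ remark is muddled and can be replaced by the observation that $R\le n$ and $\eta<1$ force $J\ge1$, so that case never occurs.
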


\begin{proof}
If any inspected tail satisfies $T_{h_j}(A)=0$, then \eqref{eq:5.1} holds immediately.  Hence
assume that every inspected tail is nonzero.  If some inspected residual $A-A_{h_j}$ has stable
rank at least $R/4$, the second alternative holds, so assume that all inspected residuals have
stable rank below $R/4$.

Since $T_n(A)=0$, no inspected cutoff can now equal $n$.  Thus $h_j=jR<n$ for every
$0\le j\le J$.  In particular, the quantities
\begin{equation*}
 b_j:=\sigma_{jR+1}(A)
\end{equation*}
are defined and positive.  The stable-rank assumption gives, for every $0\le j\le J$,
\[
 \sum_{i>jR}\sigma_i(A)^2<\frac R4b_j^2.
\]
For $0\le j<J$, the block from $jR+1$ through $(j+1)R$ contains $R$ singular values, each at
least $b_{j+1}$.  Consequently
\begin{equation*}
 Rb_{j+1}^2
 \le\sum_{i=jR+1}^{(j+1)R}\sigma_i(A)^2
 <\frac R4b_j^2,
\end{equation*}
and hence $b_{j+1}<b_j/2$.  Iterating gives
$b_j\le2^{-(j-1)}b_1$ for $1\le j\le J$.

Since $R\le n$ and $0<\eta<1$, the definition of $J$ gives $J\ge1$.  Cauchy--Schwarz and the
stable-rank inequality give, for $1\le j\le J$,
\[
 T_{jR}(A)
 \le\sqrt n\left(\sum_{i>jR}\sigma_i(A)^2\right)^{1/2}
 <\frac{\sqrt{nR}}2b_j.
\]
Moreover, the first $R$ singular values are at least $b_1=\sigma_{R+1}(A)$, so
$\norm A_{\Sone}\ge Rb_1$.  Therefore
\begin{equation*}
 \frac{T_{jR}(A)}{\norm A_{\Sone}}
 <\frac12\sqrt{\frac nR}\frac{b_j}{b_1}
 \le\sqrt{\frac nR}\,2^{-j}.
\end{equation*}
At $j=J$, the definition of $J$ gives
$\sqrt{n/R}\,2^{-J}\le\eta$, proving \eqref{eq:5.1}.
\end{proof}

Later we set $R=\lceil n/b\rceil$.  In the stable-rank alternative, a residual $C$ has normalized
squared singular values
\begin{equation}
 \lambda_i=\frac{n\sigma_i(C)^2}{\norm C_F^2},\qquad
 \frac1n\sum_i\lambda_i=1,\qquad \max_i\lambda_i\le4b. \tag{5.2}\label{eq:5.2}
\end{equation}
This bounded support is what makes high-order moment estimation possible.

\subsection{Step 2: one Gaussian row sketch approximates every candidate head space}

\begin{theorem}[Effective-rank Gaussian covariance bound]\label{ext:covariance}
Let $G\in\R^{s\times n}$ have independent $N(0,1/s)$ entries.  For every deterministic
$D\in\R^{n\times r}$ and
$0<\delta<1/2$, there are fixed absolute constants
$C_{\mathrm{cov},1},C_{\mathrm{cov},2}>0$ such that, with probability at least $1-\delta$,
\begin{align}
 \norm{D^T(G^TG-I)D}_{\op}
 &\le C_{\mathrm{cov},1}\left(\frac{\norm D_{\op}\norm D_F}{\sqrt s}
             +\frac{\norm D_F^2}{s}\right) \notag\\
 &\quad+C_{\mathrm{cov},1}\norm D_{\op}^2
      \left(\sqrt{\frac{\log(1/\delta)}s}+\frac{\log(1/\delta)}s\right). \tag{5.3}\label{eq:5.3}
\end{align}
In particular, if $\norm D_{\op}\le1$ and
$s\ge C_{\mathrm{cov},2}\alpha^{-2}(\norm D_F^2+\log(1/\delta))$, the right side is at most $\alpha$.
This is the finite-dimensional Gaussian specialization of Theorems~4 and~5 of Koltchinskii and
Lounici~\cite{KoltchinskiiLounici}.  Indeed, the sampled vectors are $D^Tg$, where
$g\sim N(0,I_n)$, and therefore have covariance $\Sigma=D^TD$.  For $D\ne0$, their effective-rank
parameter satisfies
\[
 \mathbf r(\Sigma)=\frac{(\E\norm{D^Tg})^2}{\norm\Sigma_{\op}}
 \le\frac{\E\norm{D^Tg}^2}{\norm\Sigma_{\op}}
 =\frac{\tr(\Sigma)}{\norm\Sigma_{\op}}
 =\frac{\norm D_F^2}{\norm D_{\op}^2}.
\]
Combining their expectation and concentration bounds with this inequality yields
\eqref{eq:5.3}, after enlarging the absolute constant.  The case $D=0$ is immediate.
\end{theorem}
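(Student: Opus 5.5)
The plan is to reduce \eqref{eq:5.3} to a sample-covariance bound for the $r$-dimensional Gaussian vectors $x_\ell=D^Tg_\ell$, $1\le\ell\le s$, where $g_\ell\sim N(0,I_n)$ are independent. Write $G=s^{-1/2}\Gamma$ with $\Gamma$ having standard Gaussian entries and rows $g_\ell^T$. Then
\[
 D^TG^TGD=\frac1s\sum_{\ell=1}^s x_\ell x_\ell^T=\widehat\Sigma,\qquad \E\widehat\Sigma=D^TD=\Sigma ,
\]
so the quantity to bound is exactly $\norm{\widehat\Sigma-\Sigma}_{\op}$ for $s$ i.i.d.\ centered Gaussian samples with covariance $\Sigma$. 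The case $D=0$ is trivial.

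The first step invokes the Koltchinskii--Lounici expectation bound (their Theorem~4):
\[
 \E\norm{\widehat\Sigma-\Sigma}_{\op}\lesssim\norm\Sigma_{\op}\Bigl(\sqrt{\mathbf r(\Sigma)/s}+\mathbf r(\Sigma)/s\Bigr).
\]
The second step invokes their concentration inequality (Theorem~5). With probability at least $1-\delta$, the deviation of $\norm{\widehat\Sigma-\Sigma}_{\op}$ above its mean is at most a constant times
\[
 \norm\Sigma_{\op}\Bigl[\bigl(1+\sqrt{\mathbf r(\Sigma)/s}\bigr)\sqrt{\log(1/\delta)/s}+\log(1/\delta)/s\Bigr].
\]

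The third step substitutes the identities $\norm\Sigma_{\op}=\norm D_{\op}^2$ and $\mathbf r(\Sigma)\le\tr\Sigma/\norm\Sigma_{\op}=\norm D_F^2/\norm D_{\op}^2$, the latter shown in the statement via Jensen. This gives $\norm\Sigma_{\op}\sqrt{\mathbf r/s}\le\norm D_{\op}\norm D_F/\sqrt s$ and $\norm\Sigma_{\op}\mathbf r/s\le\norm D_F^2/s$. The cross term in the concentration bound has the form $\norm D_{\op}\norm D_F\sqrt{\log(1/\delta)}/s$. By AM--GM it is at most
\[
 \tfrac12\bigl(\norm D_F^2/s+\norm D_{\op}^2\log(1/\delta)/s\bigr),
\]
so it is absorbed into the terms already present. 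Enlarging the constant yields \eqref{eq:5.3}.

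The ``in particular'' clause follows by setting $\norm D_{\op}\le1$ and $s\ge C\alpha^{-2}(\norm D_F^2+\log(1/\delta))$. Then each of the four terms is at most $\alpha/4$. Since $\alpha<1$ may be assumed, each term of the form $x/s$ is dominated by $\sqrt{x/s}$, which lies below $\alpha/(4C_{\mathrm{cov},1})$ once $C_{\mathrm{cov},2}$ is large.

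The main obstacle is the second step: matching the exact form of Koltchinskii--Lounici's deviation term, which mixes $\mathbf r(\Sigma)$ and $\log(1/\delta)$. It must be checked that, after the AM--GM absorption, no term worse than those in \eqref{eq:5.3} survives. If their stated form is inconvenient, a fallback is Gaussian concentration of the Lipschitz function $\Gamma\mapsto\norm{\Gamma D}_{\op}$ (Lipschitz constant $\norm D_{\op}$), combined with the identity $\norm{\widehat\Sigma-\Sigma}_{\op}=\sup_{\norm u=1}\abs{\norm{\Gamma Du}^2/s-\norm{Du}^2}$.
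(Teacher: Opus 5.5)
Your proposal is correct and takes the same route as the paper. You rewrite $D^TG^TGD$ as the sample covariance of the $s$ vectors $D^Tg_\ell$, apply the Koltchinskii--Lounici expectation and concentration bounds, and substitute $\norm\Sigma_{\op}=\norm D_{\op}^2$ and $\mathbf r(\Sigma)\le\norm D_F^2/\norm D_{\op}^2$. Your explicit AM--GM absorption of the cross term $\norm D_{\op}\norm D_F\sqrt{\log(1/\delta)}/s$, and your remark that the ``in particular'' clause needs $\alpha$ bounded (as in the paper's later use with $\alpha<1/10$), only make precise what the paper compresses into ``after enlarging the absolute constant.''
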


Fix $0<\alpha<1/10$ and set $c_0=1$.  For $h\ge1$, put
\begin{equation}
 \beta_h=\frac{c_0\alpha^4}{h},\qquad
 \delta_h=\frac{\beta_h}{\alpha}F_h(A)^2.                  \tag{5.4}\label{eq:5.4}
\end{equation}
The desired simultaneous event is
\begin{equation}
 (1-\alpha)A^TA-\beta_hF_h^2I
 \preceq A^TG^TGA
 \preceq(1+\alpha)A^TA+\beta_hF_h^2I.                     \tag{5.5}\label{eq:5.5}
\end{equation}
Although $F_h$ appears in the analysis, it does not affect the distribution of $G$.

To prove \eqref{eq:5.5}, when $F_h>0$ set
\[
 D_h=A(A^TA+\delta_hI)^{-1/2}.
\]
Then $\norm{D_h}_{\op}\le1$ and
\begin{equation}
 \norm{D_h}_F^2=\sum_i\frac{\sigma_i^2}{\sigma_i^2+\delta_h}
 \le h+\frac{F_h^2}{\delta_h}=h+\frac\alpha{\beta_h}
 =C_{\mathrm{dim}}(\alpha)h,                              \tag{5.6}\label{eq:5.6}
\end{equation}
where $C_{\mathrm{dim}}(\alpha)\defeq1+(c_0\alpha^3)^{-1}$.
Apply Theorem~\ref{ext:covariance} and conjugate the resulting inequality by
$(A^TA+\delta_hI)^{1/2}$.  Since $\alpha\delta_h=\beta_hF_h^2$, this is exactly \eqref{eq:5.5}.
If $F_h=0$, restrict to the rank-$h$ support and use the same bound there.

One $G$ works for every predetermined $h_j$ by a union bound.  Fix
$\delta_{\mathrm{up}}=1/100$.  With
\begin{equation}
 H=\min\{JR,n\},\qquad
 s_G=\left\lceil C_{\mathrm{rows}}(\alpha)
          \left\{H+\log\frac{J+1}{\delta_{\mathrm{up}}}\right\}\right\rceil. \tag{5.7}\label{eq:5.7}
\end{equation}
all events hold simultaneously.  At $h=0$, take
$\beta_0=c_0\alpha^4/R$ and $\delta_0=(\beta_0/\alpha)F_0^2$; the same calculation has effective
dimension at most $C_{\mathrm{dim}}(\alpha)R$.  Here
$C_{\mathrm{rows}}(\alpha)$ is one fixed multiple of
$C_{\mathrm{cov},2}\alpha^{-2}C_{\mathrm{dim}}(\alpha)$.

\subsection{Step 3: the head found from the row sketch leaves nearly the correct tail}

For $h\ge1$, let $Q_h\in\R^{n\times h}$ contain the top $h$ right singular vectors of $GA$,
let $P_h=Q_hQ_h^T$, and set $C_h=A(I-P_h)$.  Let $\norm B_{(r)}=\sum_{i=1}^r\sigma_i(B)$ be
the Ky--Fan $r$-norm.  Throughout this step, every singular-value sequence is extended by zeros
beyond index $n$; in particular, $\norm B_{(r)}=\norm B_{\Sone}$ when $r\ge n$.  The min--max
principle applied to \eqref{eq:5.5} gives, for every projector $P$,
\begin{align}
 \sigma_i(GA(I-P))&\le\sqrt{1+\alpha}\,\sigma_i(A(I-P))+\sqrt{\beta_h}F_h, \tag{5.8}\label{eq:5.8}\\
 \sigma_i(A(I-P))&\le\frac{\sigma_i(GA(I-P))+\sqrt{\beta_h}F_h}{\sqrt{1-\alpha}}. \tag{5.9}\label{eq:5.9}
\end{align}
Let $P_*$ be the true top-$h$ right projector and set $r=\lceil h/\alpha\rceil$.  Apply \eqref{eq:5.9}
to $P_h$, use the optimality of $P_h$ for every unitarily invariant norm of the residual of
$GA$, and then apply \eqref{eq:5.8} to $P_*$.  Since
$\sqrt{(1+\alpha)/(1-\alpha)}\le1+2\alpha$ and
$2(1-\alpha)^{-1/2}<3$ for $0<\alpha<1/10$, this gives
\begin{align*}
 \norm{C_h}_{(r)}
 &\le\frac{\norm{GA(I-P_h)}_{(r)}+r\sqrt{\beta_h}F_h}{\sqrt{1-\alpha}}\\
 &\le\frac{\norm{GA(I-P_*)}_{(r)}+r\sqrt{\beta_h}F_h}{\sqrt{1-\alpha}}\\
 &\le\sqrt{\frac{1+\alpha}{1-\alpha}}
       \sum_{i=h+1}^{h+r}\sigma_i(A)
       +2(1-\alpha)^{-1/2}r\sqrt{\beta_h}F_h.
\end{align*}
The middle inequality is the defining optimality of the truncated singular-value decomposition
of $GA$.  Therefore
\begin{equation}
 \norm{C_h}_{(r)}\le(1+2\alpha)
                  \sum_{i=h+1}^{h+r}\sigma_i(A)
                  +3r\sqrt{\beta_h}F_h.                 \tag{5.10}\label{eq:5.10}
\end{equation}
Right multiplication by a projection cannot increase singular values, so
\begin{equation}
 \sum_{i>r}\sigma_i(C_h)\le\sum_{i>r}\sigma_i(A).          \tag{5.11}\label{eq:5.11}
\end{equation}
The two right sides overlap on only $h$ entries, and monotonicity gives
\begin{equation}
 \sum_{i=r+1}^{r+h}\sigma_i(A)
 \le\frac hr\sum_{i=h+1}^{h+r}\sigma_i(A)\le\alpha T_h(A). \tag{5.12}\label{eq:5.12}
\end{equation}
Therefore, using $\sum_{i=h+1}^{h+r}\sigma_i(A)\le T_h(A)$,
\begin{equation}
 \norm{C_h}_{\Sone}\le T_h(A)+3\alpha\norm A_{\Sone}
       +3r\sqrt{\beta_h}F_h.                            \tag{5.13}\label{eq:5.13}
\end{equation}
If $a_h>0$, every tail singular value $x\le a_h$ satisfies $x\ge x^2/a_h$, so
\begin{equation}
 \norm A_{\Sone}\ge ha_h+\frac{F_h^2}{a_h}\ge2\sqrt hF_h. \tag{5.14}\label{eq:5.14}
\end{equation}
If $a_h=0$, then $F_h=0$.  Since $r\le2h/\alpha$ and $\beta_h=c_0\alpha^4/h$, \eqref{eq:5.14} gives
\[
 3r\sqrt{\beta_h}F_h
 \le6\sqrt{c_0}\,\alpha\sqrt hF_h
 \le3\sqrt{c_0}\,\alpha\norm A_{\Sone}.
\]
Thus
\begin{equation}
 \boxed{\norm{A(I-P_h)}_{\Sone}
 \le T_h(A)+6\alpha\norm A_{\Sone}.}                    \tag{5.15}\label{eq:5.15}
\end{equation}
The total-relative form of this error bound is essential: a tail-relative error would require
many more rows.

We also need approximate additivity.  Since $A=AP_h+C_h$, the triangle inequality gives the
lower bound below.  For the upper bound, $AP_h$ has rank at most $h$ and right multiplication by
an orthogonal projection cannot increase any singular value; hence
\begin{equation}
 \norm{AP_h}_{\Sone}\le\sum_{i\le h}\sigma_i(A).         \tag{5.16}\label{eq:5.16}
\end{equation}
Combining \eqref{eq:5.15} and \eqref{eq:5.16} gives
\begin{equation}
 0\le\norm{AP_h}_{\Sone}+\norm{C_h}_{\Sone}-\norm A_{\Sone}
 \le6\alpha\norm A_{\Sone}.                             \tag{5.17}\label{eq:5.17}
\end{equation}
No stable-rank assertion about $C_h$ is needed here: the data-dependent certificate in the next
step proves that property directly.
For $h=0$, set $P_0=0$ and $C_0=A$.  Then \eqref{eq:5.15} and \eqref{eq:5.17} hold
trivially: the former reads $\norm A_{\Sone}\le\norm A_{\Sone}+6\alpha\norm A_{\Sone}$,
and the latter has both sides equal to zero.

\subsection{Step 4: certify a usable cutoff from stored data}

Independently draw $W\in\R^{d\times n}$ with $N(0,1/d)$ entries and store $WA$.  Conditional on
$G$, all candidate residuals are fixed.  We use the following elementary estimate.  If $B$ is
deterministic and $0<u<1$, then
\begin{equation}
 \Prb\!\left\{\left|\norm{WB}_F^2-\norm B_F^2\right|>u\norm B_F^2\right\}
 \le2e^{-du^2/16}.                                      \tag{5.18a}\label{eq:5.18a}
\end{equation}
To prove it, diagonalize $BB^T$ with eigenvalues $\lambda_i$.  For a standard Gaussian vector
$g$ and $|\theta|\le(4\max_i\lambda_i)^{-1}$,
\[
 \E e^{\theta(g^TBB^Tg-\tr BB^T)}
 =\prod_i e^{-\theta\lambda_i}(1-2\theta\lambda_i)^{-1/2}
 \le e^{2\theta^2\sum_i\lambda_i^2}.
\]
Apply this inequality to the $d$ independent rows, use
$\sum_i\lambda_i^2\le(\sum_i\lambda_i)^2$, and take
$\theta=u/\{4\sum_i\lambda_i\}$.  This value obeys the displayed restriction because
$\max_i\lambda_i\le\sum_i\lambda_i$.  Chernoff's argument gives at most
$e^{-du^2/8}$ for each tail, and hence the weaker common bound \eqref{eq:5.18a}, with denominator $16$.
If $B=0$, the assertion is deterministic.  This proves \eqref{eq:5.18a}.  Choose
\begin{equation}
 d=\left\lceil16\alpha^{-2}\log\{2(J+1)/\delta_{\mathrm{up}}\}\right\rceil. \tag{5.18b}\label{eq:5.18b}
\end{equation}
Then \eqref{eq:5.18a} and a union bound give, simultaneously for all $J+1$ candidates,
\begin{equation}
 \widehat F_j^2\defeq\norm{WA(I-P_j)}_F^2
 =(1\pm\alpha)\norm{A(I-P_j)}_F^2.                        \tag{5.19}\label{eq:5.19}
\end{equation}
For candidate index $j$, write $\beta_j=\beta_{h_j}$ when $h_j>0$, and use the separately
defined value $\beta_0=c_0\alpha^4/R$ when $h_j=0$.  Let
$\gamma_j=\sigma_{h_j+1}(GA)$, with $\gamma_j=0$ if $h_j=n$.  Declare $j$ certified if
\begin{equation}
 \frac{\gamma_j^2+2\beta_j\widehat F_j^2}{1-\alpha}
 \le\frac{D_{\mathrm{cert}}\widehat F_j^2}{R},             \tag{5.20}\label{eq:5.20}
\end{equation}
where we fix the numerical constant $D_{\mathrm{cert}}=12$.

Equations \eqref{eq:5.5}, \eqref{eq:5.19}, and \eqref{eq:5.20} imply that every nonzero
certified $C_j$ has stable rank at least $R/\{D_{\mathrm{cert}}(1+\alpha)\}$.  If $C_j=0$,
its tail nuclear norm is exactly zero and the decoder bypasses tail estimation.
Appendix~\ref{app:decoder} verifies both cases and the exact constant.
Conversely, if the exact tail at $h_j\ge R$ has stable rank at least $R/4$, then
$a_{h_j}^2\le4F_{h_j}^2/R$, and the same equations make \eqref{eq:5.20} pass.  The $h=0$ case uses
$P_0=0$, $C_0=A$, and $\gamma_0=\norm{GA}_{\op}$ and follows identically.

If no cutoff is certified, the stable-rank alternative in Lemma~\ref{lem:dichotomy} cannot
occur, because every such cutoff would pass the certificate.  The lemma therefore supplies an
inspected $h_j$ with $T_{h_j}(A)\le\eta\norm A_{\Sone}$.  Since $H=h_J$ is the largest
inspected cutoff and $T_h(A)$ is nonincreasing in $h$, $T_H(A)\le\eta\norm A_{\Sone}$.
Equation \eqref{eq:5.15} then makes the actual last residual
$(\eta+6\alpha)$-negligible.  The decoder can
skip tail estimation in this branch.

\subsection{Step 5: estimate the implicit head without reconstructing it}

The projector $P=QQ^T$ chosen above depends on $GA$, but a fresh Gaussian block can estimate
$AQ$.  Draw $R_0\in\R^{n\times t}$ with $N(0,1/t)$ entries, store $AR_0$, and form
\begin{equation}
 X=(AR_0)(Q^TR_0)^\dagger,                                 \tag{5.21}\label{eq:5.21}
\end{equation}
where ${}^\dagger$ is the Moore--Penrose inverse.  With $C=A(I-QQ^T)$,
\begin{equation}
 X-AQ=CR_0(Q^TR_0)^\dagger.                                \tag{5.22}\label{eq:5.22}
\end{equation}
Each right singular direction $v$ of $C$ is orthogonal to $Q$, so the Gaussian vector
$v^TR_0$ is independent of $Q^TR_0$.  Conditional on $Q$, and provided $t>h+1$, the
inverse-Wishart identity gives
\begin{equation}
 \E[\norm{X-AQ}_{\Sone}\mid Q]
 \le\norm C_{\Sone}\sqrt{\frac{h}{t-h-1}}.                \tag{5.23}\label{eq:5.23}
\end{equation}
For completeness, if $Z=Q^TR_0$ then $tZZ^T$ is Wishart with $t$ samples and identity
covariance.  Rotational invariance makes its inverse expectation a scalar multiple of the identity;
integration by parts in the Wishart density gives
\[
 \E\norm{Z^\dagger}_F^2=\E\tr(ZZ^T)^{-1}=\frac{th}{t-h-1}.
\]
The inverse-Wishart expectation used in the last display follows from
Muirhead's Theorem~3.2.12; see~\cite[pp.~96--97]{Muirhead}.
Expanding $C$ in singular directions, applying the nuclear-norm triangle inequality, and then
Jensen gives \eqref{eq:5.23}.  Fix
\[
 C_{\mathrm{reg}}(\alpha)\defeq 2+10^4\alpha^{-2},
 \qquad t=\lceil C_{\mathrm{reg}}(\alpha)H\rceil.
\]
Since $h\le H$,
$t-h-1\ge10^4\alpha^{-2}h$ and hence the square root in \eqref{eq:5.23} is at most
$\alpha/100$.  Also $\norm C_{\Sone}\le\norm A_{\Sone}$.  Markov's inequality therefore gives,
with probability at least $0.99$,
\begin{equation}
 \abs{\norm X_{\Sone}-\norm{AQ}_{\Sone}}
 \le\alpha\norm A_{\Sone}.                                \tag{5.24}\label{eq:5.24}
\end{equation}
The matrix $Q^TR_0$ has full row rank almost surely.  Moreover, $AQ$ and
$AP=AQQ^T$ have the same nonzero singular values because their products with their transposes
are both $AQQ^TA^T$; hence \eqref{eq:5.24} estimates the head norm used in \eqref{eq:5.17}.  Only the selected
$Q$ matters, so no union bound over cutoffs is needed.

\subsection{Step 6: estimate a certified, well-spread tail}

Fix a target tail accuracy $0<\rho<1$.
If a certified residual is zero, then its tail nuclear norm is zero and the decoder bypasses this
step.  Hence let $C\ne0$ be a certified residual.  Define its normalized squared singular values as
in \eqref{eq:5.2}:
\[
 \lambda_i=\frac{n\sigma_i(C)^2}{\norm C_F^2},
 \qquad \frac1n\sum_i\lambda_i=1.
\]
By \eqref{eq:D.3}, they satisfy the known support bound
\[
 0\le\lambda_i\le B_\lambda:=\frac{D_0n}{R},
 \qquad D_0=D_{\mathrm{cert}}(1+\alpha).
\]
Define
\begin{equation}
 q=\frac1n\sum_i\sqrt{\lambda_i}
   =\frac{\norm C_{\Sone}}{\sqrt n\norm C_F}.              \tag{5.25}\label{eq:5.25}
\end{equation}
Thus it suffices to estimate $q$ and $\norm C_F$.

\begin{theorem}[Polynomial approximation to the square root]\label{ext:poly}
There is a fixed absolute constant $C_{\mathrm{poly}}$ such that, for every $K\ge1$, there is a degree-$K$ polynomial
$P_K^*(u)=\sum_{j=0}^Ka_ju^j$ such that
\begin{equation}
 \max_j\abs{a_j}\le2^{3K},\qquad
 \sup_{0\le u\le1}\abs{P_K^*(u)-\sqrt u}
 \le\frac{C_{\mathrm{poly}}}{K}.                         \tag{5.26}\label{eq:5.26}
\end{equation}
The stated bounded-coefficient construction is Lemma~4.5 of Th\'epaut and
Verzelen~\cite{ThepautVerzelen}; their explicit approximation error is
$2/\{\pi(2K+1)\}$, which is bounded by \eqref{eq:5.26} for a fixed $C_{\mathrm{poly}}$.
\end{theorem}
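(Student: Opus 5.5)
The plan is to build $P_K^*$ from the explicit Chebyshev expansion of $|x|$ on $[-1,1]$ and then substitute $u=x^2$. This route gives the error bound and the coefficient bound in \eqref{eq:5.26} directly, without appealing to a general Jackson-type theorem.

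\emph{Step 1: Chebyshev series.} Recall the classical expansion
\[
 |x|=\frac2\pi+\frac4\pi\sum_{j\ge1}\frac{(-1)^{j+1}}{4j^2-1}\,T_{2j}(x),
 \qquad -1\le x\le1 .
\]
It follows by computing $\frac2\pi\int_0^\pi|\cos\theta|\cos(2j\theta)\,d\theta$, and the series converges absolutely and uniformly. Let $S_{2K}$ be its truncation after the index $j=K$. Since $|T_{2j}|\le1$ on $[-1,1]$, and since $\frac{4}{4j^2-1}=\frac{2}{2j-1}-\frac{2}{2j+1}$ telescopes, the truncation error satisfies
\[
 \sup_{|x|\le1}\bigl||x|-S_{2K}(x)\bigr|
 \le\frac1\pi\sum_{j>K}\frac{4}{4j^2-1}
 =\frac{2}{\pi(2K+1)} .
\]
This is exactly the error quoted from Th\'epaut--Verzelen, so $C_{\mathrm{poly}}=1$ suffices.

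\emph{Step 2: change of variable.} The identity $T_{2j}(x)=T_j(2x^2-1)$ shows that $S_{2K}(x)=P_K^*(x^2)$, where
\[
 P_K^*(u)=\frac2\pi+\frac4\pi\sum_{j=1}^K\frac{(-1)^{j+1}}{4j^2-1}T_j(2u-1).
\]
This is a polynomial of degree $K$. For $u\in[0,1]$, put $x=\sqrt u$; then Step 1 gives the uniform bound $|P_K^*(u)-\sqrt u|\le 2/\{\pi(2K+1)\}$.

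\emph{Step 3: coefficient bound.} This is the only place where some care is needed, though the estimate itself is routine.

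For a polynomial $f$, write $\|f\|_{\ell^1}$ for the sum of the absolute values of its monomial coefficients. This quantity is submultiplicative and subadditive.

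The Chebyshev recursion $T_{j+1}=2yT_j-T_{j-1}$ has alternating-sign structure. Consequently the polynomial obtained from $T_j$ by taking absolute values of its coefficients is $|T_j(iy)|$, and its value at $y=1$ is $\|T_j\|_{\ell^1}$.

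Substituting $y=2u-1$ gives
\[
 \|T_j(2u-1)\|_{\ell^1}\le\sum_\ell|c_{j\ell}|\,3^\ell=|T_j(3i)|,
\]
where $c_{j\ell}$ are the coefficients of $T_j$. Next, $|T_j(3i)|\le(3+\sqrt{10})^j<7^j$.

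Hence every coefficient $a_\ell$ of $P_K^*$ satisfies
\[
 |a_\ell|\le\frac2\pi+\frac4\pi\sum_{j=1}^K\frac{7^j}{4j^2-1}
 \le\frac2\pi+\frac4{3\pi}\cdot\frac{7}{6}\,7^K
 \le 8^K=2^{3K}
\]
for every $K\ge1$. The sum bound uses $4j^2-1\ge3$ together with the geometric series, and the last inequality holds because $7^K\cdot\frac{14}{9\pi}+\frac2\pi<8^K$ already at $K=1$ and the gap only grows with $K$.

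The only point I expect to need checking is the sign-pattern claim in Step 3. An alternative that sidesteps it is to bound $\|T_j\|_{\ell^1}$ directly by induction from the recursion: $\|T_{j+1}\|_{\ell^1}\le2\|T_j\|_{\ell^1}+\|T_{j-1}\|_{\ell^1}$ gives $\|T_j\|_{\ell^1}\le(1+\sqrt2)^j$. Composing with $2u-1$ (whose $\ell^1$ norm is $3$) then gives $\|T_j(2u-1)\|_{\ell^1}\le(3+3\sqrt2)^j\le 7.25^j$. This still sits below $8^j$ with enough slack for the same final summation.
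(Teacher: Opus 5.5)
Your proof is correct. The paper gives no argument of its own and simply imports Lemma~4.5 of Th\'epaut--Verzelen; the quoted error $2/\{\pi(2K+1)\}$ is exactly your Chebyshev tail bound for $|x|$ after the substitution $T_{2j}(x)=T_j(2x^2-1)$, so your Steps~1--3 amount to a self-contained proof of the cited construction.

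The sign-pattern claim you flagged is valid. The polynomial $S_j(y)=i^{-j}T_j(iy)$ satisfies $S_{j+1}=2yS_j+S_{j-1}$ with $S_0=1$ and $S_1=y$, so it has nonnegative coefficients. Either of your coefficient bounds, $7^j$ or $(3+3\sqrt2)^j$, then leaves ample room below $2^{3K}$.
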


After scaling $u=\lambda/B_\lambda$, the approximation bias is at most a $\rho$ fraction of $q$ when
$K\ge4C_{\mathrm{poly}}B_\lambda/\rho$.  If
\begin{equation}
 \mu_j=\frac1n\sum_i\lambda_i^j,                           \tag{5.27}\label{eq:5.27}
\end{equation}
then it is enough to estimate every $\mu_j$ with relative error
\begin{equation}
 \tau=\frac{c_{\mathrm{mom}}\rho}{K2^{3K}},                \tag{5.28}\label{eq:5.28}
\end{equation}
where we fix $c_{\mathrm{mom}}=1/4$.

Draw an independent $Y\in\R^{s\times n}$ with $N(0,1/s)$ entries and store $YA$.  Once $Q$ is
selected, the decoder forms $YC=YA(I-QQ^T)$.  Write
$z_a=\sqrt{s}(YC)_{a,*}$; its rows are independent $N(0,\Sigma)$ vectors with
$\Sigma=C^TC$.  For $j\ge2$, define the injective cycle statistic
\begin{equation}
 \widehat p_j=\frac1{(s)_j}
 \sum_{a_1,\ldots,a_j\ \mathrm{distinct}}
 \ip{z_{a_1}}{z_{a_2}}\ip{z_{a_2}}{z_{a_3}}\cdots
 \ip{z_{a_j}}{z_{a_1}},                                   \tag{5.29}\label{eq:5.29}
\end{equation}
where $(s)_j=s(s-1)\cdots(s-j+1)$, and put
$\widehat p_1=s^{-1}\sum_a\norm{z_a}^2$.  Successively integrating one Gaussian row at a time
gives
\begin{equation}
 \E\widehat p_j=\tr(\Sigma^j)=\sum_i\sigma_i(C)^{2j}.      \tag{5.30}\label{eq:5.30}
\end{equation}

\begin{theorem}[Gaussian injective-cycle variance bound]\label{ext:moment}
Let $2\le j\le s$ and $0<\zeta<1$.  For the statistic in \eqref{eq:5.29}, Chebyshev's
inequality at failure probability $\zeta$ gives relative error at most
\begin{equation}
 \frac{f(j)}{\sqrt\zeta}
 \max\left\{
   \frac{n^{j/2-1}}{s^{j/2}},
   \frac{n^{1/4-1/(2j)}}{\sqrt s},
   \frac1{\sqrt s}
 \right\},\qquad
 f(j)=2^{6j}j^{3j}3^{j/2}.                                \tag{5.31}\label{eq:5.31}
\end{equation}
This is Theorem~1 and Proposition~4 of Kong and Valiant~\cite{KongValiant}, with ambient
dimension $d$ there replaced by $n$, sample size $n$ there replaced by $s$, and Gaussian fourth
moment $\beta=3$.  Their increasing-cycle statistic has the displayed bound.  The ordered
statistic \eqref{eq:5.29} is its symmetrization over sample labels, which cannot increase variance; this
normalization is verified in Appendix~\ref{app:decoder}.
\end{theorem}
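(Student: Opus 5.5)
The plan is to reduce the statement to the variance bound of Kong and Valiant, and then check the two normalization points separately: the relabeling of parameters and the passage from their increasing-cycle statistic to the ordered statistic \eqref{eq:5.29}. Chebyshev's inequality then does the rest. Since $\E\widehat p_j=\tr(\Sigma^j)$ by \eqref{eq:5.30}, it suffices to show
\[
 \operatorname{Var}(\widehat p_j)\le f(j)^2\max\Bigl\{\tfrac{n^{j-2}}{s^j},\tfrac{n^{1/2-1/j}}{s},\tfrac1s\Bigr\}\bigl(\tr\Sigma^j\bigr)^2 .
\]
Chebyshev at level $\zeta$ then converts this into the relative error displayed in \eqref{eq:5.31}.

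\textbf{Step 1: symmetrization.} For each fixed set $\{a_1<\cdots<a_j\}$ of labels, the Kong--Valiant statistic uses one cyclic ordering. The sum in \eqref{eq:5.29} runs over all orderings of all $j$-subsets. I would write $\widehat p_j$ as the average over permutations $\pi\in S_s$ of the increasing-cycle statistic computed from the relabeled sample $(z_{\pi(a)})_a$. The rows are i.i.d., so each relabeled statistic has the same law, and in particular the same variance, as the original. By convexity of the variance (Cauchy--Schwarz, or Jensen applied to $(\cdot-\E)^2$), the average of identically distributed variables has variance at most the common variance. The normalization $1/(s)_j$ is exactly the number of ordered tuples, so this average is indeed \eqref{eq:5.29}; I would check this by counting how often each ordered cycle appears.

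\textbf{Step 2: variance of one cycle statistic.} To be self-contained, I would sketch why the Kong--Valiant bound holds. Expand $\E\widehat p_j^2$ as a double sum over pairs of injective cycles. Cycles with disjoint label sets are independent, so their contribution cancels against $(\E\widehat p_j)^2$. For pairs sharing $\ell\ge1$ labels, apply Wick/Isserlis to the shared Gaussian rows. Each resulting pairing gives a product of traces $\tr(\Sigma^{e_1})\cdots\tr(\Sigma^{e_r})$ with $\sum e_i=2j$. The number of such pairings is at most $(2j)!\,3^{j}$, which is where $f(j)$ comes from. Each trace product is then bounded using $\tr\Sigma^e\le(\tr\Sigma^j)^{e/j}$ for $e\ge j$, and Hölder $\tr\Sigma^e\le n^{1-e/j}(\tr\Sigma^j)^{e/j}$ for $e<j$. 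The factor $s^{-\ell}$ from the overlap count then produces the three regimes: $\ell=j$ with many short traces gives $n^{j-2}/s^j$, an intermediate overlap gives $n^{1/2-1/j}/s$, and $\ell=1$ gives $1/s$.

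\textbf{Main obstacle.} The difficult step is the combinatorial bookkeeping in Step 2: showing that every overlap pattern is dominated by one of the three displayed terms, with a uniform factor of the form $2^{6j}j^{3j}$. Rather than rederive this, I would cite Theorem~1 and Proposition~4 of \cite{KongValiant} verbatim after the substitutions $d\to n$, sample size $\to s$, $\beta=3$. I would then only verify that their hypotheses apply: Gaussian rows with covariance $C^TC$, and $2\le j\le s$.
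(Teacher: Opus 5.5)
Your proposal matches the paper's argument: the paper likewise imports the Kong--Valiant variance bound under the substitutions $d\to n$, sample size $\to s$, $\beta=3$, writes \eqref{eq:5.29} as the average of the increasing-cycle statistic over a uniformly random relabeling of the i.i.d.\ rows, and uses (conditional) Jensen to show that this averaging cannot increase variance, after which Chebyshev gives \eqref{eq:5.31}. Your Step~2 sketch is not needed, since, like the paper, you ultimately rely on the citation for the combinatorial variance estimate.
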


Estimate
\begin{equation}
 \widehat\mu_j=\frac{n^{j-1}\widehat p_j}{\widehat p_1^j},
 \qquad
 \widehat q=\sqrt{B_\lambda}\sum_{j=0}^Ka_j\frac{\widehat\mu_j}{B_\lambda^j}. \tag{5.32}\label{eq:5.32}
\end{equation}
Here $\widehat\mu_0=1$ by definition.  The actual nonnegative tail estimate returned by the
decoder is
\begin{equation}
 \widehat T(C)=\max\{0,\sqrt n\,\sqrt{\widehat p_1}\,\widehat q\}. \tag{5.32a}\label{eq:5.32a}
\end{equation}
On the simultaneous relative-moment event, \eqref{eq:5.26}--\eqref{eq:5.28} bound stochastic error by $\rho q/2$
and approximation bias by $\rho q/2$.  Because
$\norm C_{\Sone}=\sqrt n\,\norm C_F\,q$ and $\widehat p_1$ estimates $\norm C_F^2$, this gives a
relative-error-at-most-$\rho$ tail estimate.  Appendix~\ref{app:decoder} proves this ratio calculation and
verifies the zero-residual convention used in Algorithm~\ref{alg:upper-decode}.

Take $\zeta=(100K)^{-1}$.  Solving the first constraint in \eqref{eq:5.31}, uniformly for the precision
\eqref{eq:5.28}, gives fixed constants
$C_{\mathrm{row},1},C_{\mathrm{row},2}>0$ and a fixed $c_{\mathrm{row}}>0$ for which the
following degree-$j$ row count is sufficient for $2\le j\le K$:
\begin{equation}
 s_j=\left\lceil n^{1-2/j}\exp\!\left\{C_{\mathrm{row},1}\log(j+1)
                         +\frac{C_{\mathrm{row},2}K}{j}\right\}\right\rceil. \tag{5.33}\label{eq:5.33}
\end{equation}
If
\begin{equation}
 K\le c_{\mathrm{row}}\frac{\log n}{\log\log(e n)},       \tag{5.34}\label{eq:5.34}
\end{equation}
the exponent $(-2\log n+C_{\mathrm{row},2}K)/j$ is largest at $j=K$.  Hence
\begin{equation}
 s_T=\max\left\{K,\left\lceil
 C_{\mathrm{tail}}(\rho)K^{C_{\mathrm{tail}}(\rho)}n^{1-2/K}
 \right\rceil\right\}.                                    \tag{5.35}\label{eq:5.35}
\end{equation}
controls the first terms for every $j\le K$.  At this row count, the logarithm of the second term's
ratio to the target precision is at most a fixed multiple of $K\log(K+1)$ minus
$(1/2-1/K)\log n$; the third term has the larger saving $(1-2/K)\log n$.
The degree-one estimate $\widehat p_1$ requires only $\exp\{O_\rho(K)\}$ rows.  Appendix~\ref{app:decoder} derives
these three comparisons explicitly in \eqref{eq:D.10a}--\eqref{eq:D.12}.  A sufficiently small fixed constant in
\eqref{eq:5.34} controls all of them.  The maximum with $K$ ensures that every injective cycle is defined;
it is redundant in the eventual regime $K\ge4$.  The tail block therefore uses
\begin{equation}
 ns_T\le2C_{\mathrm{tail}}(\rho)K^{C_{\mathrm{tail}}(\rho)}n^{2-2/K}. \tag{5.36}\label{eq:5.36}
\end{equation}
scalar linear measurements.
Here $C_{\mathrm{tail}}(\rho)$ is chosen once, after the four row-count estimates, as their
maximum (including the fixed factors depending on the requested precision $\rho$).

\subsection{Step 7: assemble and optimize the fixed sketch}

We collect the construction and the decoder in one place.  All constants and row counts referenced
below were defined in Steps~2--6 and depend only on $n$ and $\eps$.
Fix a deterministic integer $n_0(\eps)$ large enough that, for every $n\ge n_0(\eps)$, the
parameter choice below satisfies all large-$n$ conditions used in the proof, including $b\ge1$
(and hence $R=\lceil n/b\rceil\le n$), condition \eqref{eq:5.34}, and the estimates in the
optimization following the algorithms.  Such an integer exists by the estimates proved there.

\begin{algorithmblock}{Oblivious Gaussian sketch}\label{alg:upper-sketch}
\textbf{Input and output.}
The input is $A\in\R^{n\times n}$ and an accuracy $0<\eps<1/2$.  The output is a collection of
linear measurements of $A$.  If $n<n_0(\eps)$, store all $n^2$ entries of $A$, tag this as the
exact branch, and stop.  The following is the nontrivial branch.

\smallskip
\noindent
\textbf{Parameters.}
Set $\alpha=\rho=\eta=\eps/20$; recall that $\delta_{\mathrm{up}}=1/100$.  For the sufficiently small fixed constant
$c_{\mathrm{bal}}(\eps)$ chosen below, set
\begin{equation*}
 b=c_{\mathrm{bal}}(\eps)\frac{\log(e n)}{\log\log(e^e n)},
\end{equation*}
and then set
\begin{equation*}
 R=\lceil n/b\rceil,\qquad
 J=\left\lceil\log_2\!\left(\frac{\sqrt{n/R}}\eta\right)\right\rceil,
 \qquad H=\min\{JR,n\},
\end{equation*}
\begin{equation*}
 D_0=D_{\mathrm{cert}}(1+\alpha),\qquad
 K=\left\lceil\frac{4C_{\mathrm{poly}}D_0b}{\rho}\right\rceil.
\end{equation*}
Choose $s_G,d,s_T$ by \eqref{eq:5.7}, \eqref{eq:5.18b}, and \eqref{eq:5.35}, respectively,
and set $t=\lceil C_{\mathrm{reg}}(\alpha)H\rceil$.

\smallskip
\noindent\textbf{Sketch construction.}
Before seeing $A$, independently draw
\begin{equation*}
 G\in\R^{s_G\times n},\quad W\in\R^{d\times n},\quad
 R_0\in\R^{n\times t},\quad Y\in\R^{s_T\times n},
\end{equation*}
with independent centered Gaussian entries of variances $1/s_G$, $1/d$, $1/t$, and $1/s_T$,
respectively.  Store
\begin{equation}
 \boxed{GA,\qquad WA,\qquad AR_0,\qquad YA.}               \tag{5.37}\label{eq:5.37}
\end{equation}
Return the four stored products in \eqref{eq:5.37}.  The decoder knows the four sampled matrices
as public randomness forming part of the sketch description.
\end{algorithmblock}

\begin{algorithmblock}{Nuclear-norm decoder}\label{alg:upper-decode}
\textbf{Input and output.}
The input consists of the sketch produced by Algorithm~\ref{alg:upper-sketch}, its public Gaussian
matrices and parameters, and the accuracy $\eps$.  The output is an estimate $\widehat N$ of
$\norm A_{\Sone}$.

\smallskip
\noindent\textbf{Decoding steps.}
\begin{enumerate}
 \item If the sketch is tagged as the exact branch, compute the singular values of the stored
       matrix $A$ and return their sum.  Otherwise, if all four stored products in
       \eqref{eq:5.37} are zero, return $0$.

 \item For $0\le j\le J$, let $h_j=\min\{jR,n\}$.  From a singular value decomposition of
       $GA$, let $Q_j\in\R^{n\times h_j}$ contain its top $h_j$ right singular vectors and put
       $P_j=Q_jQ_j^T$.  Use one fixed measurable singular-value-decomposition convention,
       including deterministic tie breaking and orthonormal completion when $h_j$ exceeds the
       rank of $GA$.  Use the empty matrix and $P_0=0$ when $h_j=0$.  Compute
       \begin{equation*}
        \gamma_j=\begin{cases}
          \sigma_{h_j+1}(GA),&h_j<n,\\
          0,&h_j=n,
        \end{cases}
        \qquad
        \widehat F_j^2=\norm{WA(I-P_j)}_F^2.
       \end{equation*}

 \item Set $\beta_j=c_0\alpha^4/h_j$ when $h_j>0$ and
       $\beta_0=c_0\alpha^4/R$.  Test each candidate, in increasing order of $j$, using
       \begin{equation*}
        \frac{\gamma_j^2+2\beta_j\widehat F_j^2}{1-\alpha}
        \le\frac{D_{\mathrm{cert}}\widehat F_j^2}{R}.
       \end{equation*}
       If a candidate passes, select the first such index $\widehat j$ and mark the branch
       \emph{certified}.  If none passes, set $\widehat j=J$ and mark the branch
       \emph{negligible}.  Write $Q=Q_{\widehat j}$, $P=QQ^T$, and $h=h_{\widehat j}$.

 \item Estimate the head.  If $h=0$, set $\widehat H=0$.  Otherwise form
       \begin{equation*}
        X=(AR_0)(Q^TR_0)^\dagger
       \end{equation*}
       and set $\widehat H=\norm X_{\Sone}$.

 \item If the branch is negligible, return $\widehat N=\widehat H$.

 \item In the certified branch, form the stored tail sketch
       \begin{equation*}
        YC=YA(I-P).
       \end{equation*}
       If $YC=0$, return $\widehat N=\widehat H$.  Otherwise set $B_\lambda=D_0n/R$, put
       $z_a=\sqrt{s_T}(YC)_{a,*}$, and compute the injective cycle statistics
       $\widehat p_1,\ldots,\widehat p_K$ from \eqref{eq:5.29}.  Using the fixed polynomial
       $P_K^*(u)=\sum_{r=0}^Ka_ru^r$ from Theorem~\ref{ext:poly}, compute
       $\widehat\mu_r$, $\widehat q$, and $\widehat T(C)$ from
       \eqref{eq:5.32}--\eqref{eq:5.32a}.  Return
       \begin{equation*}
        \widehat N=\widehat H+\widehat T(C).
       \end{equation*}
\end{enumerate}
\end{algorithmblock}

The stored state contains exactly $n(s_G+d+t+s_T)$ real linear measurements.  Every operation in
the decoding phase is postprocessing of \eqref{eq:5.37}, so the sketch is oblivious and linear.
Equations \eqref{eq:5.17}, \eqref{eq:5.24}, and the tail guarantee give error at most
$(7\alpha+\rho)\norm A_{\Sone}$ in the certified branch.  Equation \eqref{eq:5.15} gives error at
most $(7\alpha+\eta)\norm A_{\Sone}$ in the negligible branch; Appendix~\ref{app:decoder}
verifies the edge cases and both triangle inequalities.

For $R=\lceil n/b\rceil$, the definition of $J$ gives
$J\le C_{\mathrm{grid}}(\eps)\log(2+b)$ for one fixed $C_{\mathrm{grid}}(\eps)$, and hence
\begin{equation}
 H\le C_{\mathrm{grid}}(\eps)\frac{n\log(2+b)}b.          \tag{5.38}\label{eq:5.38}
\end{equation}
The covariance, certificate, and regression blocks use
\begin{equation}
 C_{\mathrm{blocks}}(\eps)nH
 \le C_{\mathrm{blocks}}(\eps)C_{\mathrm{grid}}(\eps)
       \frac{n^2\log(2+b)}b                               \tag{5.39}\label{eq:5.39}
\end{equation}
scalar measurements.  A nonzero certified tail has $B_\lambda\le D_0b$.  For the choice
$K=\lceil4C_{\mathrm{poly}}D_0b/\rho\rceil$ used in Algorithm~\ref{alg:upper-sketch},
equation \eqref{eq:5.36} has the form
\begin{equation}
 C_{\mathrm{opt},1}(\eps)b^{C_{\mathrm{opt},2}(\eps)}
 n^{2-c_{\mathrm{opt},1}(\eps)/b}.                       \tag{5.40}\label{eq:5.40}
\end{equation}
Here $C_{\mathrm{opt},1}(\eps)$, $C_{\mathrm{opt},2}(\eps)$, and
$c_{\mathrm{opt},1}(\eps)$ are three fixed positive constants.
The algorithm uses the choice
\begin{equation}
 b=c_{\mathrm{bal}}(\eps)\frac{\log(e n)}{\log\log(e^e n)} \tag{5.41}\label{eq:5.41}
\end{equation}
with one fixed sufficiently small $c_{\mathrm{bal}}(\eps)$.  Then the negative factor
$n^{-c_{\mathrm{opt},1}(\eps)/b}$ is a sufficiently large negative power of $\log n$ to absorb
the fixed polynomial $b^{C_{\mathrm{opt},2}(\eps)}$.  Meanwhile,
\eqref{eq:5.39} becomes
\begin{equation}
 C_{\mathrm{opt},3}(\eps)
 \frac{n^2\{\log\log(e^e n)\}^2}{\log(e n)}
 .                                                         \tag{5.42}\label{eq:5.42}
\end{equation}
Assign failure probability $1/100$ to each of the simultaneous covariance event, simultaneous
Frobenius event, selected-head regression, and selected-tail estimate.  Freshness of $R_0$ and $Y$
after the selection based on $G,W$ justifies conditioning.  The joint success probability is at
least $0.96>2/3$.  Choose
$\alpha=\rho=\eta=\eps/20$, so the total relative error is at most $2\eps/5<\eps$.
After \eqref{eq:5.40} is absorbed into \eqref{eq:5.42}, let $C_{\mathrm{asym}}(\eps)$ be the sum of the fixed
coefficients of the four stored blocks.  For the finitely many $n<n_0(\eps)$, store all $n^2$
entries and let $C_{\mathrm{pre}}(\eps)$ be the maximum of the resulting finite set of ratios to
the right side of \eqref{eq:1.2}.  The single theorem constant
$C_\eps\defeq\max\{C_{\mathrm{asym}}(\eps),C_{\mathrm{pre}}(\eps)\}$ then proves the upper
half of \eqref{eq:1.2} for every $n$ in the stated range.
This proves the upper half of Theorem~\ref{thm:main}.

\section{Discussion}

Theorem~\ref{thm:main} resolves, up to polylogarithmic factors, the general-linear-sketch question
for Schatten--1 left open by the previous $\Omega(n)$ lower bound and $O(n^2)$ upper
bound~\cite{LiNguyenWoodruff}.  It determines the polynomial order of the number of measurements:
for every fixed accuracy, this number is $n^{2-o(1)}$.  Equivalently, no
$O(n^{2-c})$-measurement sketch exists for any fixed $c>0$.  The explicit upper bound nevertheless
saves a factor of order $\log n/(\log\log n)^2$ over storing all $n^2$ entries.

The broader consequence is the classification in \eqref{eq:classification}.
Corollary~\ref{cor:noneven} shows that every fixed finite non-even Schatten exponent has nearly
quadratic sketch dimension, while Corollary~\ref{cor:noneven-upper} supplies a nontrivial
polylogarithmic saving below storage for every such exponent.  Positive even exponents behave differently:
$p=2q$ exposes the polynomial trace moment $\tr(A^TA)^q$, and the tight complexity is
$\Theta_{p,\eps}(n^{2-4/p})$~\cite{LiWoodruff}.  At the limiting exponent $p=\infty$, the
operator norm again requires $\Theta_\eps(n^2)$ measurements~\cite{LiWoodruff}.  Thus the new
non-even lower bound supplies the missing regime and, together with the earlier even-$p$ and
operator-norm results, resolves the polynomial-order sketching question for all Schatten norms.
The separation between an even exponent and every neighboring fixed non-even exponent is
polynomial in $n$.

For Schatten--1, this is markedly different from the vector $\ell_1$ problem, which has sketches
whose dimension is independent of the ambient dimension for fixed accuracy, and from the
positive-semidefinite matrix problem, where the nuclear norm is simply the trace.  The obstruction
is therefore not the sum of the singular values by itself; it is the need to infer that sum when
both singular directions are unknown.

The lower-bound construction explains one reason this inference is difficult.  Its two input
distributions have separated nuclear norms but matching low even spectral moments.  Random left
and right singular vectors then hide the remaining difference from any fixed low-dimensional
linear observation.  Gaussian smoothing makes the observed distributions differentiable, and
the moment-preserving path converts the indistinguishability question into an integral of Fisher
information.  The argument is decoder-independent: once the two observed laws are close in total
variation, no measurable postprocessing of the sketch can reliably distinguish them.  More
broadly, this combination of moment matching, random rotations, and a smooth path may be useful
for lower bounds for other unitarily invariant quantities.  Corollary~\ref{cor:noneven}
makes one such extension explicit: replacing the square-root moment by $x^{p/2}$ gives the same
polynomial-order lower bound for every fixed finite $p>0$ that is not a positive even integer.
The distinction is sharp for this construction.  At $p=2q$, the endpoint functional is the
matched polynomial moment $x^q$; away from those exponents, a single anchor uniquely dominates
the noninteger moment.  Above $p=2$, controlling the Gaussian perturbation in operator norm
rather than Frobenius norm keeps its Schatten-$p$ size on the same scale as the hard instance.

The upper bounds give a complementary picture.  A small number of very large singular values
must be treated individually, because low-degree moment estimates are unreliable when the
spectrum is highly concentrated.  After removing such a head, however, a residual with large
stable rank has a sufficiently spread spectrum that polynomial approximation and unbiased moment
estimators become effective.  The multiscale cutoff chooses between these two regimes using only
stored sketches.  Balancing the cost of recovering the head against the degree and variance
needed for the tail produces the saving
$\log n/(\log\log n)^2$ over storing the whole matrix at $p=1$.  For general non-even $p$, the
same architecture estimates the exact leading power sum from the row sketch and replaces the
square-root polynomial by one for $x^{p/2}$, giving a conservative fixed polylogarithmic saving.
This head--tail principle is conceptually
similar to heavy-hitter decompositions in vector streaming, but the head here is a pair of
unknown singular subspaces rather than a set of coordinates.

The remaining quantitative gaps are polylogarithmic rather than polynomial.  For Schatten--1,
the lower bound is $n^2/(\log n)^{A_\eps}$, with an exponent depending on the fixed accuracy,
whereas the new upper bound is
$O_\eps(n^2(\log\log n)^2/\log n)$.  For every other fixed finite non-even exponent,
Corollary~\ref{cor:noneven-upper} gives $n^2/(\log n)^{c_p}$ measurements with an explicit,
conservative $c_p>0$.  The paper does not determine the optimal logarithmic savings in those
regimes.  Closing these gaps would require either a harder pair that
remains indistinguishable after more measurements or new sketches that exploit structure
specific to the target exponent.

Dependence on the approximation parameter is another open issue.  The theorem fixes
$0<\eps<1$ before letting $n$ grow, and the constants in both proofs are not optimized.  It would
be useful to determine the joint dependence on $n$ and $\eps$, particularly when $\eps$ tends to
zero with $n$.  Such a result would need to track the number of matched moments, the smoothing
scale, the polynomial degree, and the Gaussian row counts more sharply than is needed for
Theorem~\ref{thm:main}.  In the non-even extension, the constants are also nonuniform as $p$
approaches a positive even integer.  Understanding a regime in which $p$ itself approaches an
even integer with $n$ would require a quantitative crossover analysis not supplied here.

The finite-memory consequence is now nearly quadratic as well.  The older turnstile
linearization theorem~\cite{LiNguyenWoodruffLinearization} does not by itself turn a real row
lower bound into a bit lower bound, because it controls the number of attainable sketch states.
For smooth approximation problems, however, Jiang, Liu, and Yu obtain a bounded-entry integer
sketch with $O(S/\log R)$ rows from an $S$-bit polynomial-length turnstile algorithm
\cite{JiangLiuYuLinearization}.  The lifting framework of Gribelyuk, Lin, Woodruff, Yu, and Zhou
then converts such an integer sketch into a real sketch with only a constant-factor increase in
dimension~\cite{GribelyukEtAlLifting}.  Appendix~\ref{app:streaming} verifies that the present
Gaussian-plus-orbit hard pair survives the required scaling, entrywise rounding, support
truncation, and discrete-Gaussian mollification for every fixed finite non-even $p>0$.
Consequently Corollary~\ref{cor:streaming} gives
\[
 \Omega_{p,\eps}\!\left(\frac{n^2\log W}{(\log n)^{B_{p,\eps}}}\right)
\]
bits for unit-update turnstile streams of length at most $W$, whenever $W$ is a sufficiently
large polynomial in $n$.  This strengthens the earlier nearly linear finite-bit lower bound for
non-even Schatten powers on sparse matrices~\cite{LiWoodruffSingularValues}, but in a different
input regime: the new hard matrices are dense, unrestricted integer matrices and the required
polynomial exponent in the stream length is not optimized.

The upper bound is also not automatically a Turing-machine algorithm.  Its Gaussian maps contain
real coefficients, the stored measurements are exact, and the sketch-dimension model does not
charge the decoder's working memory.  A finite-precision implementation would have to discretize the random maps,
bound accumulated rounding error in the cutoff, regression, and moment computations, and count
the random seed and working memory.  If such an implementation used $b$ bits per measurement,
the stored sketch would cost $O(kb)$ bits plus its seed and workspace; Theorem~\ref{thm:main}
identifies the target value of $k$ but supplies no bound on $b$.  Establishing a numerically stable
discretization with polylogarithmic $b$ is a natural open problem.

Likewise, row-order streams, sparse matrices, strict turnstile streams, bilinear sketches, update
time, decoding time, and numerical stability impose different requirements, as the results
reviewed in Section~1.2 illustrate.  Corollary~\ref{cor:streaming} concerns general turnstile
streams and does not assert the same bound under a nonnegativity promise on every intermediate
matrix.  Rectangular or complex matrices are natural further settings, but their analogous bounds
are not proved in this paper.

\section*{Acknowledgments}

The author thanks Yinchen Liu for pointing out, through personal communication, the recent
mollified turnstile-to-sketch transfer of Jiang, Liu, and
Yu~\cite{JiangLiuYuLinearization}, and for suggesting its relevance, together with the
integer-to-real lifting framework, to the finite-memory streaming consequence developed in
Corollary~\ref{cor:streaming} and Appendix~\ref{app:streaming}.

\appendix

\section{Testing, scalar moments, and path constructions}\label{app:scalar-path}

\subsection{Proof of the Yao--Le Cam reduction}\label{app:yao-proof}

\begin{proof}[Proof of Lemma~\ref{lem:yao}]
Assume, for contradiction, that a randomized sketch satisfies the functional-estimation
guarantee in the lemma statement.  Introduce a hidden binary variable $B$, uniform on
$\{0,1\}$.  Conditional on $B=i$, sample the input matrix $A\sim\Pi_i$.  The value $B$ is the
\emph{label}: it records which of the two input distributions generated $A$.

Let $\Omega$ denote all randomness used to choose the sketch map and decoder.  For every fixed
matrix $A$, the assumed guarantee says that the decoder output is a valid $(1\pm\eps)$ estimate
of $\Phi(A)$ with probability at least $2/3$ over $\Omega$.  Averaging this inequality over the
random pair $(B,A)$ gives
\[
 \Prb_{B,A,\Omega}\{\text{the estimate is valid for }A\}\ge\frac23.
\]
Therefore there is at least one fixed realization $\omega_0$ of the sketch randomness for which
\begin{equation}
 \Prb_{B,A}\{\text{the fixed sketch at }\omega_0
                  \text{ estimates }\Phi(A)\text{ validly}\}
 \ge\frac23.                                               \tag{4.0a}\label{eq:4.0a}
\end{equation}
Fix that realization.  Its measurement map and decoder are now deterministic.  Factor the map as
$S=TL$ as in Section~\ref{sec:lower}.  From $LA$ one can compute $SA=T(LA)$ and hence run the fixed
decoder, so its output, denoted $\widehat\Phi(LA)$, is a deterministic function of $LA$.

Let $J_i=(1\pm\eps)I_i$.  These two sets are disjoint by hypothesis.  Define a test
$\widehat B$ from the observed vector $LA$ as follows:
\[
 \widehat B=
 \begin{cases}
 0,&\widehat\Phi(LA)\in J_0,\\
 1,&\widehat\Phi(LA)\in J_1,\\
 0,&\widehat\Phi(LA)\notin J_0\cup J_1.
 \end{cases}
\]
The value assigned outside $J_0\cup J_1$ is immaterial.  Consider the two events
\[
 E=\{\widehat\Phi(LA)\text{ is a valid }(1\pm\eps)
                 \text{ estimate of }\Phi(A)\},
 \qquad
 G=\{\Phi(A)\in I_B\}.
\]
On $E\cap G$, the estimate belongs to $J_B$, and therefore $\widehat B=B$.  Equation \eqref{eq:4.0a}
gives $\Prb(E)\ge2/3$.  Moreover,
\[
 \Prb(G^c)
 =\frac12\sum_{i=0}^1
   \Prb_{A\sim\Pi_i}\{\Phi(A)\notin I_i\}
 \le\zeta.
\]
The union bound now gives
\begin{equation}
 \Prb\{\widehat B=B\}
 \ge\Prb(E\cap G)
 \ge\frac23-\zeta.                                        \tag{4.0b}\label{eq:4.0b}
\end{equation}

On the other hand, conditional on $B=i$, the observation $LA$ has law $L_\#\Pi_i$.
For a deterministic test $\psi$, put $\mathcal A=\{x:\psi(x)=0\}$.  Its success probability is
\[
\begin{aligned}
 \frac12(L_\#\Pi_0)(\mathcal A)+\frac12(L_\#\Pi_1)(\mathcal A^c)
 &=\frac12\left[1+(L_\#\Pi_0)(\mathcal A)-(L_\#\Pi_1)(\mathcal A)\right]\\
 &\le\frac{1+\TV(L_\#\Pi_0,L_\#\Pi_1)}2.
\end{aligned}
\]
The supremum over $\mathcal A$ attains the corresponding equality; randomized tests cannot do
better because one may condition on their internal randomness.  This is Le Cam's binary testing
identity; see~\cite[Proposition~2.3.1]{Duchi}.  In particular, every test based on $LA$ has
success at most
\[
 \frac{1+\TV(L_\#\Pi_0,L_\#\Pi_1)}2
 \le\frac{1+\delta}{2}.
\]
But $\delta+2\zeta<1/3$ is equivalent to
$2/3-\zeta>(1+\delta)/2$, contradicting \eqref{eq:4.0b}.
\end{proof}

\subsection{Proof that Gaussian smoothing preserves the norm gap}\label{app:smoothing-proof}

\begin{proof}[Proof of Lemma~\ref{lem:smoothing}]
Since $\E\norm Z_F^2=n^2$, Markov's inequality gives
$\Prb\{\norm Z_F>n/\sqrt\zeta\}\le\zeta$.  On the complementary event,
\[
 \norm{\sigma Z}_{\Sone}\le\sqrt n\,\sigma\norm Z_F
 \le\frac{\sigma n^{3/2}}{\sqrt\zeta}
 =:r_{\mathrm{noise}}.
\]
Define the deterministic noiseless nuclear-norm level under label $b$ by
\[
 \mathcal N_b\defeq\norm{aF_b}_{\Sone}
 =a\sqrt n\tr D_b=an^{3/2}m_b.
\]
The triangle inequality gives
$\abs{\norm{X_b}_{\Sone}-\mathcal N_b}\le r_{\mathrm{noise}}$ on the event above.  Accordingly,
consider arbitrary values $y_b$ satisfying
$\abs{y_b-\mathcal N_b}\le r_{\mathrm{noise}}$.  Then
\begin{align*}
 &(y_1-y_0)-\eps(y_1+y_0)\\
 &\quad\ge an^{3/2}\{(m_1-m_0)-\eps(m_1+m_0)\}
          -2(1+\eps)r_{\mathrm{noise}}\\
 &\quad\ge an^{3/2}\eta(m_1+m_0)-4r_{\mathrm{noise}}
 \ge\frac12an^{3/2}\eta(m_1+m_0)>0
\end{align*}
by \eqref{eq:4.20}.  Thus one may take the deterministic intervals
$I_b=[\max\{0,\mathcal N_b-r_{\mathrm{noise}}\},
          \mathcal N_b+r_{\mathrm{noise}}]$; the preceding inequality says exactly that their
$(1\pm\eps)$ enlargements are disjoint.  Finally, vectorized $Z$ is standard Gaussian in
$\R^{n^2}$; $LL^*=I_k$ therefore implies $LZ\sim N(0,I_k)$.  Moreover, $LZ$ is independent of
$(U,V)$ because $Z$ is independent of $(U,V)$ and $L$ is deterministic.
\end{proof}

\subsection{The moment-matched scalar pair}

\begin{proof}[Proof of Proposition~\ref{prop:scalar-pair}]
Fix $K\ge2$.  Begin with an integer $M_\star\ge6$; its value will be chosen as
$M_\star=M_\eps$ at the end.  Let
$x_j=j^{M_\star}$ for $0\le j\le K+1$, and set
\begin{equation}
 w_j=\left(\prod_{\ell\ne j}(x_j-x_\ell)\right)^{-1}.        \tag{4.1}\label{eq:4.1}
\end{equation}
Lagrange interpolation implies
\begin{equation}
 \sum_{j=0}^{K+1}w_jP(x_j)=0
 \qquad\text{for every polynomial }P\text{ of degree at most }K. \tag{4.2}\label{eq:4.2}
\end{equation}
Indeed, interpolate $P$ at the $K+2$ points; the left side is the coefficient of $z^{K+1}$,
which is zero because $\deg P\le K$.

The signs of the $w_j$ alternate.  Indeed,
$x_0<x_1<\cdots<x_{K+1}$.  In the denominator
\[
 \prod_{\ell\ne j}(x_j-x_\ell),
\]
the $j$ factors with $\ell<j$ are positive, while the $K+1-j$ factors with $\ell>j$ are
negative.  Therefore
\begin{equation}
 \operatorname{sign}(w_j)=(-1)^{K+1-j},                   \tag{4.2a}\label{eq:4.2a}
\end{equation}
which changes sign whenever $j$ increases by one.  Separate the signed weights into two positive
measures:
\[
 \mu_+=\sum_{j:w_j>0}w_j\,\delta_{x_j},
 \qquad
 \mu_-=\sum_{j:w_j<0}(-w_j)\,\delta_{x_j},
\]
where $\delta_x$ denotes one unit of probability mass at the point $x$.  Taking $P(x)=1$ in
\eqref{eq:4.2} shows that these measures have the same total mass:
\[
 Z\defeq\mu_+(\R)=\sum_{j:w_j>0}w_j
 =\sum_{j:w_j<0}(-w_j)=\mu_-(\R).
\]
Hence $\mu_+/Z$ and $\mu_-/Z$ are probability laws.  The already defined point
$x_0=0^{M_\star}=0$ belongs to exactly one of the two sign classes because $w_0\ne0$.
Name the probability law containing this atom $\widetilde\nu_0$, and name the other one
$\widetilde\nu_1$.  Equation \eqref{eq:4.2} then gives
\begin{equation}
 \int x^q\,d\widetilde\nu_0(x)=\int x^q\,d\widetilde\nu_1(x),
 \qquad 0\le q\le K.                                      \tag{4.3}\label{eq:4.3}
\end{equation}

The following estimates quantify the separation.  Let $R_j=\abs{w_j}/\abs{w_0}$.  Each symbol
$C_{\mathrm{pair},i}$ introduced below denotes a fixed positive absolute constant whose value
does not change later.  Direct cancellation in
\eqref{eq:4.1} gives
\begin{align}
 R_1&=\prod_{\ell=2}^{K+1}(1-\ell^{-M_\star})^{-1},
 &\abs{R_1-1}&\le C_{\mathrm{pair},1}2^{-M_\star},        \tag{4.4}\label{eq:4.4}\\
 2^{M_\star}R_2&=(1-2^{-M_\star})^{-1}
      \prod_{\ell=3}^{K+1}(1-(2/\ell)^{M_\star})^{-1},
 &\abs{2^{M_\star}R_2-1}&\le C_{\mathrm{pair},2}(2/3)^{M_\star}. \tag{4.5}\label{eq:4.5}
\end{align}
To prove the bound in \eqref{eq:4.4}, observe that every factor in the product for $R_1$ is at least one,
so $R_1\ge1$.  Since $M_\star\ge6$, we have $\ell^{-M_\star}\le2^{-6}<1/2$, and the elementary
inequality $-\log(1-u)\le2u$ for $0\le u\le1/2$ gives
\[
\begin{aligned}
 0\le\log R_1
 &\le2\sum_{\ell=2}^{K+1}\ell^{-M_\star}
 \le2\left(2^{-M_\star}+
           \int_2^\infty x^{-M_\star}\,dx\right)\\
 &=2^{1-M_\star}\left(1+\frac{2}{M_\star-1}\right)
 \le 3\,2^{-M_\star},
\end{aligned}
\]
and the exact expression on the preceding line is at most $7/160$ when $M_\star\ge6$.
Set $u=\log R_1$.  Although $R_1>1$, the preceding calculation shows that
$0\le u\le7/160$.  By the mean-value theorem,
$e^u-1=e^\xi u\le e^{7/160}u\le2u$ for some $\xi\in[0,u]$.  Hence
\[
 |R_1-1|=R_1-1=e^{\log R_1}-1\le6\,2^{-M_\star}.
\]
Thus \eqref{eq:4.4} holds with the fixed choice $C_{\mathrm{pair},1}=6$.
We prove \eqref{eq:4.5} in the same way.  First, the cancellation can be seen explicitly from
$x_0=0$, $x_1=1$, and $x_j=j^{M_\star}$:
\[
\begin{aligned}
 R_2
 &=\frac{\prod_{\ell=1}^{K+1}x_\ell}
         {x_2(x_2-x_1)\prod_{\ell=3}^{K+1}(x_\ell-x_2)}\\
 &=\frac1{2^{M_\star}-1}
   \prod_{\ell=3}^{K+1}
   \left(1-(2/\ell)^{M_\star}\right)^{-1}.
\end{aligned}
\]
Multiplication by $2^{M_\star}$ gives the product displayed in \eqref{eq:4.5}.  Put
$S=2^{M_\star}R_2$.  Every factor in this product is at least one, so $S\ge1$.  Moreover,
$2^{-M_\star}\le1/64$ and $(2/\ell)^{M_\star}\le(2/3)^6<1/2$.  Hence
\[
\begin{aligned}
 0\le\log S
 &\le2\left\{2^{-M_\star}
       +\sum_{\ell=3}^{K+1}(2/\ell)^{M_\star}\right\}\\
 &\le2\left\{2^{-M_\star}+(2/3)^{M_\star}
       +\int_3^\infty(2/x)^{M_\star}\,dx\right\}\\
 &=2\left\{2^{-M_\star}
       +(2/3)^{M_\star}\left(1+\frac3{M_\star-1}\right)\right\}\\
 &\le4(2/3)^{M_\star},
\end{aligned}
\]
and the preceding explicit expression is at most $1/3$ for every $M_\star\ge6$.
Set $v=\log S$.  Then $S=e^v$ and $0\le v\le1/3$.  Apply the mean-value theorem to
$f(t)=e^t$ on the interval $[0,v]$: there is a number $\xi\in[0,v]$ such that
\[
 S-1=e^v-e^0=f'(\xi)(v-0)=e^\xi v.
\]
Since $0\le\xi\le v\le1/3$, we have $e^\xi\le e^{1/3}<2$, and consequently
$S-1\le2v=2\log S$.  Therefore
\[
 \left|2^{M_\star}R_2-1\right|
 =S-1\le8(2/3)^{M_\star}.
\]
Thus \eqref{eq:4.5} holds with the fixed choice $C_{\mathrm{pair},2}=8$.

For integers $0\le\ell<j$, the difference-of-powers identity gives
\[
 j^{M_\star}-\ell^{M_\star}
 =(j-\ell)\left(
 j^{M_\star-1}+j^{M_\star-2}\ell+\cdots+
 j\ell^{M_\star-2}+\ell^{M_\star-1}\right).
\]
Every term inside the parentheses is nonnegative, so retaining only its first term yields
\begin{equation}
 j^{M_\star}-\ell^{M_\star}
 \ge(j-\ell)j^{M_\star-1}.                                \tag{4.5a}\label{eq:4.5a}
\end{equation}
For $j\ge2$, canceling the common factor $x_j=j^{M_\star}$ in the definition of $R_j$ and
splitting the remaining factors at $j$ gives
\begin{equation}
\begin{aligned}
 R_j&=A_jB_{j,K},\\
 A_j&=\prod_{\ell=1}^{j-1}
       \frac{\ell^{M_\star}}{j^{M_\star}-\ell^{M_\star}},\\
 B_{j,K}&=\prod_{\ell=j+1}^{K+1}
       \frac{\ell^{M_\star}}{\ell^{M_\star}-j^{M_\star}}
 =\prod_{\ell=j+1}^{K+1}
       \left(1-(j/\ell)^{M_\star}\right)^{-1}.
\end{aligned}                                             \tag{4.5b}\label{eq:4.5b}
\end{equation}
For the factors below $j$, inequality \eqref{eq:4.5a} gives
\begin{equation}
\begin{aligned}
 A_j
 &\le\prod_{\ell=1}^{j-1}
       \frac{\ell^{M_\star}}{(j-\ell)j^{M_\star-1}}\\
 &=\left\{\prod_{\ell=1}^{j-1}\frac{\ell}{j-\ell}\right\}
   \left\{\prod_{\ell=1}^{j-1}\frac{\ell}{j}\right\}^{M_\star-1}\\
 &=\left(\frac{(j-1)!}{j^{j-1}}\right)^{M_\star-1}.
\end{aligned}
\tag{4.5c}\label{eq:4.5c}
\end{equation}
The first product in braces equals one because both its numerator and denominator are
$(j-1)!$.

For the factors above $j$, expand $-\log(1-u)=\sum_{r\ge1}u^r/r$:
\begin{equation}
\begin{aligned}
 \log B_{j,K}
 &=\sum_{\ell=j+1}^{K+1}\sum_{r=1}^\infty
       \frac1r\left(\frac j\ell\right)^{M_\star r}\\
 &\le\sum_{r=1}^\infty\frac1r
       \int_j^\infty\left(\frac jx\right)^{M_\star r}dx\\
 &=\sum_{r=1}^\infty\frac1r\frac{j}{M_\star r-1}\\
 &\le\frac{2j}{M_\star}\sum_{r=1}^\infty\frac1{r^2}
 =\frac{\pi^2j}{3M_\star}.
\end{aligned}
\tag{4.5d}\label{eq:4.5d}
\end{equation}
The integral bounds the sum because $x\mapsto(j/x)^{M_\star r}$ is decreasing, and
$M_\star r-1\ge M_\star r/2$.  Exponentiating \eqref{eq:4.5d}, multiplying by the bound \eqref{eq:4.5c}, and using
\[
 j^{M_\star}
 \left(\frac{(j-1)!}{j^{j-1}}\right)^{M_\star-1}
 =
 j\left(\frac{(j-1)!}{j^{j-2}}\right)^{M_\star-1}
\]
gives
\begin{equation}
 j^{M_\star}R_j
 \le j\left(\frac{(j-1)!}{j^{j-2}}\right)^{M_\star-1}
       \exp\!\left(\frac{\pi^2j}{3M_\star}\right).        \tag{4.6}\label{eq:4.6}
\end{equation}
The right side decreases geometrically after $j=3$, because
\[
 \frac{(j!)/(j+1)^{j-1}}{(j-1)!/j^{j-2}}
 =\left(\frac{j}{j+1}\right)^{j-1}\le\frac9{16}.
\]
Indeed, the ratio of two consecutive right sides in \eqref{eq:4.6} is at most
\[
 q_\star\defeq\frac43\left(\frac9{16}\right)^5e^{\pi^2/18}<1
 \qquad(M_\star\ge6,\ j\ge3).
\]
The $j=3$ term is at most
$\frac92e^{\pi^2/6}(2/3)^{M_\star}$.  Consequently
\begin{equation}
 \sum_{j\ge3}j^{M_\star}R_j
 \le C_{\mathrm{pair},3}(2/3)^{M_\star}.                 \tag{4.7}\label{eq:4.7}
\end{equation}
Here the fixed absolute constant is
\[
 C_{\mathrm{pair},3}
 \defeq \frac{(9/2)e^{\pi^2/6}}{1-q_\star}.
\]
These weight estimates imply the required properties of the probability laws.  In units of
$|w_0|$, the atom at $x_0=0$ has weight $1$.  Let
\[
 E_0=\sum_{\substack{j\ge2\\w_jw_0>0}}R_j
\]
be the relative weight of all other atoms in the same sign class.  Equation \eqref{eq:4.5} gives
\[
 R_2\le (1+C_{\mathrm{pair},2})2^{-M_\star}
 =C_{\mathrm{pair},4}2^{-M_\star}
 \le C_{\mathrm{pair},4}(2/3)^{M_\star},
 \qquad C_{\mathrm{pair},4}\defeq1+C_{\mathrm{pair},2}.
\]
Equation \eqref{eq:4.7} and $j^{M_\star}\ge1$ give
\[
 \sum_{j\ge3}R_j
 \le\sum_{j\ge3}j^{M_\star}R_j
 \le C_{\mathrm{pair},3}(2/3)^{M_\star}.
\]
Consequently
\[
 0\le E_0\le(C_{\mathrm{pair},4}+C_{\mathrm{pair},3})(2/3)^{M_\star}
 =C_{\mathrm{pair},5}(2/3)^{M_\star},
\]
where $C_{\mathrm{pair},5}\defeq C_{\mathrm{pair},4}+C_{\mathrm{pair},3}$.  The normalizing mass of this sign class is
$1+E_0$, and hence
\begin{equation}
 \widetilde\nu_0(\{0\})=\frac1{1+E_0},\qquad
 0\le1-\widetilde\nu_0(\{0\})
 =\frac{E_0}{1+E_0}\le C_{\mathrm{pair},5}(2/3)^{M_\star}. \tag{4.7b}\label{eq:4.7b}
\end{equation}
The other sign class contains $j=1$ and, by alternation, excludes $j=2$.  Let
\[
 E_1=\sum_{\substack{j\ge3\\w_jw_1>0}}R_j.
\]
Equation \eqref{eq:4.7} gives $0\le E_1\le C_{\mathrm{pair},3}(2/3)^{M_\star}$, while \eqref{eq:4.4} gives $R_1\ge1$.  Therefore
\begin{equation}
 \widetilde\nu_1(\{1\})=\frac{R_1}{R_1+E_1},\qquad
 0\le1-\widetilde\nu_1(\{1\})
 =\frac{E_1}{R_1+E_1}\le C_{\mathrm{pair},3}(2/3)^{M_\star}. \tag{4.7c}\label{eq:4.7c}
\end{equation}
By \eqref{eq:4.3}, the two laws have a common first moment; call it $\mu$.  Computing it under
$\widetilde\nu_1$ gives the exact ratio
\[
 \mu=\frac{R_1+\sum_{j\ge3:\,w_jw_1>0}j^{M_\star}R_j}{R_1+E_1}.
\]
Subtracting one in the preceding exact ratio cancels $R_1$.  Equation \eqref{eq:4.7} bounds the remaining
extra numerator and $E_1$ separately by
$C_{\mathrm{pair},3}(2/3)^{M_\star}$.  Since $R_1+E_1\ge1$,
\begin{equation}
 |\mu-1|\le C_{\mathrm{pair},6}(2/3)^{M_\star},
 \qquad C_{\mathrm{pair},6}\defeq2C_{\mathrm{pair},3}.  \tag{4.7a}\label{eq:4.7a}
\end{equation}

Define $\nu_b$ as the law of $X/\mu$ when $X\sim\widetilde\nu_b$.  This common rescaling
preserves \eqref{eq:4.3}, gives both laws mean one, and, once $M_\star$ is large enough that
$1/2\le\mu\le2$, places their support in $[0,2(K+1)^{M_\star}]$.  It also leaves the atom of
$\nu_0$ at zero and moves the atom of $\nu_1$ from $1$ to $1/\mu$.  Consequently
the bounds \eqref{eq:4.7b}--\eqref{eq:4.7c} remain valid for
$\nu_0(\{0\})$ and $\nu_1(\{1/\mu\})$, respectively.

Since each $\nu_b$ has mean one, Cauchy--Schwarz and the single large atom of $\nu_1$ give
\begin{equation}
\begin{aligned}
 h_0
 &=\int_{\{x>0\}}\sqrt{x}\,d\nu_0(x)
 \le\sqrt{\nu_0(x>0)\int x\,d\nu_0(x)}
 \le\sqrt{C_{\mathrm{pair},5}}(2/3)^{M_\star/2},\\
 h_1
 &\ge \nu_1(\{1/\mu\})\sqrt{1/\mu}
 \ge\frac{1-C_{\mathrm{pair},3}(2/3)^{M_\star}}{\sqrt2}.
\end{aligned}                                             \tag{4.8}\label{eq:4.8}
\end{equation}
Define the envelope constant
\[
 C_{\mathrm{pair}}
 =\max\{C_{\mathrm{pair},1},\ldots,C_{\mathrm{pair},6},
          \sqrt{C_{\mathrm{pair},5}}\}.
\]
Thus every estimate in this construction is also valid with the one envelope constant
$C_{\mathrm{pair}}$, but the preceding proof records which local constant is used at each step.

An explicit admissible choice of $M_\eps$ is
\begin{equation}
 M_\eps=\max\left\{
 6,
 \left\lceil\frac{\log(2C_{\mathrm{pair},6})}{\log(3/2)}\right\rceil,
 \left\lceil
  \frac{2\log\!\left(
   \dfrac{2\sqrt{2C_{\mathrm{pair},5}}(3+\eps)}{1-\eps}
  \right)}{\log(3/2)}
 \right\rceil
 \right\}.                                               \tag{4.P4}\label{eq:4.P4}
\end{equation}
The second entry in the maximum ensures
\[
 C_{\mathrm{pair},6}(2/3)^{M_\eps}\le\frac12.
\]
Together with \eqref{eq:4.7a}, this gives $1/2\le\mu\le3/2$, so in particular the bound
$1/\sqrt\mu\ge1/\sqrt2$ used in \eqref{eq:4.8} is valid.  Since
$C_{\mathrm{pair},6}=2C_{\mathrm{pair},3}$, it also gives
\[
 \frac{1-C_{\mathrm{pair},3}(2/3)^{M_\eps}}{\sqrt2}
 \ge\frac1{2\sqrt2}.
\]
Thus \eqref{eq:4.8} gives $h_1\ge1/(2\sqrt2)$.  Moreover, \eqref{eq:4.8}, with
$a_\eps=(2/3)^{M_\eps/2}$, implies
\[
 \frac{h_0}{h_1}
 \le2\sqrt{2C_{\mathrm{pair},5}}\,a_\eps.
\]
The third entry in \eqref{eq:4.P4} makes the right side at most
$(1-\eps)/(3+\eps)$.  Therefore
\[
 \frac{h_1-h_0}{h_1+h_0}
 =\frac{1-h_0/h_1}{1+h_0/h_1}
 \ge
 \frac{1-(1-\eps)/(3+\eps)}{1+(1-\eps)/(3+\eps)}
 =\frac{1+\eps}{2}.
\]
Taking
\begin{equation}
 \eta_\eps=\frac{1-\eps}{4}                              \tag{4.P5}\label{eq:4.P5}
\end{equation}
gives $(1+\eps)/2=\eps+2\eta_\eps$, proving \eqref{eq:4.P3}.
Formula \eqref{eq:4.P4} also makes the dependence transparent: all constants inside it are absolute, and
the only unbounded term as $\eps\uparrow1$ is
$2\log(1/(1-\eps))/\log(3/2)$.  This proves \eqref{eq:4.P1}--\eqref{eq:4.P3}.
\end{proof}

\subsection{The equal-weight moment-preserving path}

\begin{proof}[Proof of Lemma~\ref{lem:path}]
The local constants in this proof are denoted by $C_{\mathrm{path},i}$ and
$c_{\mathrm{path},i}$; each has one fixed numerical value.  We combine them into
$C_{\mathrm{path}}$ only at the end.

\paragraph{Construction plan.}
The paths $x_i$ are constructed in four stages.
First, we replace each input law $\nu_b$ by $n$ equal-mass particles.  Most are
\emph{base particles}, which approximate the atoms of $\nu_b$; the remaining particles form
$K$ movable control groups.  Second, we use the $K$ control groups to correct the small errors
caused by rounding the base-particle counts to integers.  This produces two empirical endpoint
laws with exactly equal first $K$ moments.  Third, we move the base particles from endpoint
$b=0$ to endpoint $b=1$, one particle at a time.  During each such move, the control groups move
continuously so that all $K$ moments remain constant.  Finally, we concatenate these one-particle
segments and verify their regularity, length, and endpoint approximation.  The concatenated
trajectories are the desired $x_i$.

The numbers $t_j$ introduced next are locations in the scalar state space $[0,B]$; they are not
times along the path.  We use $\tau\in[0,1]$ for global path time below.

\paragraph{Stage 1: endpoint candidates.}
\emph{Endpoint lists and empirical laws.}
Write each input law as
\[
 \nu_b=\sum_{r=1}^{m_b}p_{b,r}\delta_{s_{b,r}},
 \qquad m_b\le K+2,\quad b\in\{0,1\}.
\]
Here $p_{b,r}>0$, $\sum_{r=1}^{m_b}p_{b,r}=1$, and $\nu_b$ assigns probability $p_{b,r}$ to the
scalar value $s_{b,r}\in[0,B]$.
For each $b$, the construction requires a list of exactly $n$ scalar values
\[
 \mathcal X_b=(x_1^{(b)},\ldots,x_n^{(b)})\in[0,B]^n
\]
whose empirical law
\[
 \frac1n\sum_{i=1}^n\delta_{x_i^{(b)}}
\]
approximates $\nu_b$.  Here $x_i^{(b)}=s_{b,r}$ means that the $i$th list entry has the
value $s_{b,r}$.  By contrast,
\[
 \#\{i:x_i^{(b)}=s_{b,r}\}
\]
is the \emph{number of copies} of that value.  The value $s_{b,r}$ and its number of copies are
different objects.

We also need the two empirical endpoint laws to have exactly equal moments through degree $K$.
Approximating $\nu_0$ and $\nu_1$ by integer copy counts will generally introduce a small error in
this equality.  Stage~1 constructs the approximate endpoint lists and measures that error;
Stage~2 will remove it using entries reserved for that purpose.

\medskip\noindent
\emph{Reserve $K$ adjustable groups.}
There are $K$ nonconstant moment equations, for degrees $1,\ldots,K$, so we reserve $K$ groups of
entries.  We use Chebyshev polynomials because they are bounded on $[0,B]$ and because the effect
of adjusting the groups can be inverted explicitly.  Set
\[
 T_q(\cos\theta)=\cos(q\theta),\qquad
 \theta_j=\frac{j\pi}{K+1},\qquad
 t_j=\frac B2(1+\cos\theta_j),\qquad 1\le j\le K,
\]
and
\[
 C_0(x)=1,\qquad C_q(x)=T_q(2x/B-1),\qquad 1\le q\le K.
\]
The numbers $t_j$ are scalar values in $[0,B]$, not path times.  Choose
\[
 N_c=\left\lfloor\frac{\eta n}{BK}\right\rfloor,
 \qquad w=\frac{N_c}{n},
 \qquad N_{\mathrm{base}}=n-KN_c.
\]
In each endpoint list, group $j$ consists of $N_c$ entries whose initial value is $t_j$:
\[
 \mathcal X_b^{\mathrm{ctrl}}
 =\bigl(\underbrace{t_1,\ldots,t_1}_{N_c\text{ copies}},
         \ldots,
         \underbrace{t_K,\ldots,t_K}_{N_c\text{ copies}}\bigr).
\]
Each group therefore contributes the measure $w\delta_{t_j}$.  The two endpoint lists contain
separate entries, but their control sublists are initially identical.  Consequently, the controls
initially contribute zero to the difference between the two endpoint moments.

The $K$ groups use $KN_c$ entries, leaving $N_{\mathrm{base}}$ entries to approximate $\nu_b$.
Their total empirical mass is small:
\[
 Kw=\frac{KN_c}{n}\le\frac{\eta}{B}\le\eta.
\]
Condition \eqref{eq:4.9}, with $C_{\mathrm{path}}$ sufficiently large, also ensures
$\eta n/(BK)\ge2$, and hence
\[
 \frac{\eta}{2BK}\le w\le\frac{\eta}{BK},
 \qquad N_{\mathrm{base}}\ge(1-\eta)n>0.
\]
The upper bound keeps the reserved mass small.  The lower bound gives each group enough empirical
mass to correct the later moment error without a large change in its common value; it is used in
\eqref{eq:4.13}.

\medskip\noindent
\emph{Use the remaining entries to approximate $\nu_b$.}
For a fixed support value $s_{b,r}$, the ideal number of copies among the
$N_{\mathrm{base}}$ remaining entries is
\[
 N_{\mathrm{base}}p_{b,r}.
\]
This is a desired \emph{copy count}; it is not the value assigned to an entry.  It need not be an
integer.  We choose integer copy counts $N_{b,r}$ by first taking all floors and then assigning the
remaining entries to the largest fractional remainders.  This largest-remainder rule gives
\begin{equation}
 \sum_{r=1}^{m_b}N_{b,r}=N_{\mathrm{base}},\qquad
 \left|N_{b,r}-N_{\mathrm{base}}p_{b,r}\right|<1.         \tag{4.10b}\label{eq:4.10b}
\end{equation}
Equivalently, the base portion of $\mathcal X_b$ contains exactly $N_{b,r}$ entries equal to
$s_{b,r}$.  The complete endpoint list may therefore be written, in any order, as
\[
 \mathcal X_b=\bigl(
  \underbrace{s_{b,1},\ldots,s_{b,1}}_{N_{b,1}\text{ copies}},\ldots,
  \underbrace{s_{b,m_b},\ldots,s_{b,m_b}}_{N_{b,m_b}\text{ copies}},
  \underbrace{t_1,\ldots,t_1}_{N_c\text{ copies}},\ldots,
  \underbrace{t_K,\ldots,t_K}_{N_c\text{ copies}}
 \bigr).
\]
Its length is exactly $N_{\mathrm{base}}+KN_c=n$.  Its empirical law, before any control values
are adjusted, is
\[
 \overline\nu_b
 =\frac1n\sum_{r=1}^{m_b}N_{b,r}\delta_{s_{b,r}}
  +w\sum_{j=1}^K\delta_{t_j}.
\]
Thus $\overline\nu_b$ is an equal-weight approximation to $\nu_b$, augmented by a small
common collection of adjustable entries.

\medskip\noindent
\emph{The moment error introduced by the integer copy counts.}
If fractional copy counts were allowed, the ideal endpoint measure would be
\[
 \overline\nu_b^{\mathrm{ideal}}
 =\frac{N_{\mathrm{base}}}{n}\nu_b+w\sum_{j=1}^K\delta_{t_j}.
\]
These two ideal measures have equal moments through degree $K$: the input laws have equal moments,
and the control term is identical for $b=0$ and $b=1$.  The actual law differs from the ideal one
only because $N_{\mathrm{base}}p_{b,r}$ was replaced by the integer $N_{b,r}$.

For $0\le q\le K$, define the actual Chebyshev-moment discrepancy by
\[
 e_q^{\mathrm{end}}
 =\int C_q\,d\overline\nu_1-\int C_q\,d\overline\nu_0.
\]
Writing
\[
 \Delta_{b,r}=N_{b,r}-N_{\mathrm{base}}p_{b,r},
\]
the preceding ideal cancellation gives the explicit formula
\[
 e_q^{\mathrm{end}}
 =\frac1n\left\{
   \sum_{r=1}^{m_1}\Delta_{1,r}C_q(s_{1,r})
   -\sum_{r=1}^{m_0}\Delta_{0,r}C_q(s_{0,r})
  \right\}.
\]
Because $|\Delta_{b,r}|<1$ and $|C_q(x)|\le1$ on $[0,B]$,
\[
 |e_q^{\mathrm{end}}|\le\frac{m_0+m_1}{n}\le\frac{2(K+2)}n
 \qquad(1\le q\le K).
\]
For $q=0$, the discrepancy is zero because both empirical laws have total mass one.  Write
$e^{\mathrm{end}}=(e_q^{\mathrm{end}})_{q=1}^K\in\R^K$.  Then
\begin{equation}
 \norm{e^{\mathrm{end}}}_\infty\le\frac{2(K+2)}n.        \tag{4.11}\label{eq:4.11}
\end{equation}

\paragraph{Stage 2: correct the rounded endpoint moments.}
To remove this rounding discrepancy, keep the $N_c$ entries of group $j$ in endpoint $b=0$ equal
to $t_j$, and change all $N_c$ corresponding entries in endpoint $b=1$ to
$t_j+\delta_j$, where
$\delta=(\delta_1,\ldots,\delta_K)$.  Define the vector-valued moment-change map
\[
 H(\delta)=(H_q(\delta))_{q=1}^K,
 \qquad
 H_q(\delta)=w\sum_{j=1}^K\{C_q(t_j+\delta_j)-C_q(t_j)\}.
\]
Here $DH(\delta)$ is the $K\times K$ Jacobian matrix of $H$, and
$\norm{\cdot}_{\infty\to\infty}$ is the matrix norm induced by the vector $\ell_\infty$ norm.
To compute $A=DH(0)$, note that $2t_j/B-1=\cos\theta_j$.  Differentiating
$T_q(\cos\theta)=\cos(q\theta)$ first with respect to $\theta$ and then with respect to its
argument gives
\[
 T_q'(\cos\theta)=q\frac{\sin(q\theta)}{\sin\theta},
 \qquad
 C_q'(t_j)=\frac{2q\sin(q\theta_j)}{B\sin\theta_j}.
\]
Since $\partial H_q/\partial\delta_j=wC_q'(t_j+\delta_j)$, this gives the displayed formula
for $A$ below.  The discrete sine orthogonality identity
\[
 \sum_{j=1}^K\sin(q\theta_j)\sin(r\theta_j)
 =\frac{K+1}{2}\,\mathbf 1_{\{q=r\}},
 \qquad 1\le q,r\le K,
\]
then yields the exact formulas
\begin{equation}
 A_{qj}=\frac{2wq\sin(q\theta_j)}{B\sin\theta_j},\qquad
 (A^{-1})_{jq}=\frac{B\sin\theta_j\sin(q\theta_j)}{w(K+1)q}, \tag{4.12}\label{eq:4.12}
\end{equation}
because direct multiplication gives
\[
 \sum_{j=1}^K A_{qj}(A^{-1})_{jr}
 =\frac{2q}{(K+1)r}\sum_{j=1}^K
   \sin(q\theta_j)\sin(r\theta_j)
 =\mathbf 1_{\{q=r\}}.
\]
Thus $A$ is invertible.  Summing the absolute values in a row of $A^{-1}$, using
$\sum_{q=1}^Kq^{-1}\le1+\log K$, and then using the lower bound on $w$ gives
\begin{equation}
 \norm{A^{-1}}_{\infty\to\infty}\le
 \frac{C_{\mathrm{path},1}B^2\log(K+1)}\eta.              \tag{4.13}\label{eq:4.13}
\end{equation}

The following estimates control the nonlinear remainder of $H$ near the origin.
Applying the Markov brothers' inequality twice gives
\begin{align*}
 \norm{T_q''}_{L^\infty[-1,1]}
 &\le(q-1)^2\norm{T_q'}_{L^\infty[-1,1]}\\
 &\le q^2(q-1)^2\norm{T_q}_{L^\infty[-1,1]}\\
 &=q^2(q-1)^2\\
 &\le q^4,
\end{align*}
where $\norm{T_q}_{L^\infty[-1,1]}=1$; see~\cite[equation~(3)]{KalmykovNagyTotik}.  Since
$C_q(x)=T_q(2x/B-1)$, the chain rule gives
\[
 \norm{C_q''}_{L^\infty[0,B]}\le\frac{4q^4}{B^2}.
\]
Whenever the segments from $t_j$ to $t_j+\delta_j$ lie in $[0,B]$, the mean-value theorem
therefore gives, row by row,
\begin{align*}
 \norm{DH(\delta)-A}_{\infty\to\infty}
 &=\max_{1\le q\le K}
   w\sum_{j=1}^K\abs{C_q'(t_j+\delta_j)-C_q'(t_j)}\\
 &\le\max_{1\le q\le K}
   w\sum_{j=1}^K\norm{C_q''}_{L^\infty[0,B]}\abs{\delta_j}\\
 &\le\frac{4wK^5}{B^2}\norm\delta_\infty.
\end{align*}
Combining this estimate with \eqref{eq:4.13} and $w\le\eta/(BK)$ gives
\[
 \norm{A^{-1}(DH(\delta)-A)}_{\infty\to\infty}
 \le C_{\mathrm{path},2}\frac{K^4\log(K+1)}B\norm\delta_\infty.
\]
Choose the fixed number $c_{\mathrm{path},1}>0$ small enough that, throughout the cube
\[
 \norm\delta_\infty\le\rho_0
 \defeq\frac{c_{\mathrm{path},1}B}{K^4\log(K+1)},
\]
both the following bound holds and every $t_j+\delta_j$ remains in $(0,B)$:
\begin{equation}
 \norm{A^{-1}(DH(\delta)-A)}_{\infty\to\infty}\le\frac12. \tag{4.14}\label{eq:4.14}
\end{equation}
For the second assertion, the distance from the closest node to either endpoint is
\[
 \min_j\min\{t_j,B-t_j\}
 =\frac B2\left(1-\cos\frac\pi{K+1}\right)
 \ge\frac{B}{(K+1)^2}\ge\frac{B}{4K^2},
\]
where we used $1-\cos u\ge2u^2/\pi^2$ for $0\le u\le\pi$.
The radius $\rho_0$ is smaller than half this distance when $c_{\mathrm{path},1}$ is sufficiently
small.  Fix $c_{\mathrm{path},2}=1/8$; then every point in the cube satisfies
\[
 t_j+\delta_j\ge\frac{c_{\mathrm{path},2}B}{K^2},
 \qquad
 B-(t_j+\delta_j)\ge\frac{c_{\mathrm{path},2}B}{K^2}.
\]

To solve $H(\delta)=-e^{\mathrm{end}}$, define the Newton map
\[
 \mathcal T(\delta)=-A^{-1}e^{\mathrm{end}}+\delta-A^{-1}H(\delta)
\]
on the cube above.  Its derivative is
$D\mathcal T(\delta)=-A^{-1}(DH(\delta)-A)$, so \eqref{eq:4.14} makes it a contraction with Lipschitz
constant at most $1/2$.  Moreover, \eqref{eq:4.9}, \eqref{eq:4.11}, and \eqref{eq:4.13} give
the required bound at the center of the cube.  Explicitly, since $K+2\le2K$,
set $C_{\mathrm{path},3}\defeq8C_{\mathrm{path},1}$.  Then
\[
 \norm{\mathcal T(0)}_\infty
 =\norm{A^{-1}e^{\mathrm{end}}}_\infty
 \le\frac{4C_{\mathrm{path},1}B^2K\log(K+1)}{\eta n}
 =\frac{C_{\mathrm{path},3}B^2K\log(K+1)}{2\eta n}.
\]
The last quantity is at most $\rho_0/2$ whenever
\[
 n\ge\frac{C_{\mathrm{path},3}}{c_{\mathrm{path},1}}
       \eta^{-1}BK^5\log^2(K+1),
\]
which is one of the fixed numerical requirements absorbed into
\eqref{eq:4.9}.
For every $\norm\delta_\infty\le\rho_0$, it follows that
\[
 \norm{\mathcal T(\delta)}_\infty
 \le\norm{\mathcal T(0)}_\infty+\frac12\norm\delta_\infty
 \le\rho_0.
\]
Hence $\mathcal T$ maps the cube into itself.  The contraction theorem supplies a unique fixed
point $\delta^*$, and the fixed-point identity gives
\begin{equation}
 \norm{\delta^*}_\infty
 \le\frac{C_{\mathrm{path},3}B^2K\log(K+1)}{\eta n}.     \tag{4.15}\label{eq:4.15}
\end{equation}
Indeed, $\delta^*=\mathcal T(\delta^*)$ and the Lipschitz estimate imply
\[
 \norm{\delta^*}_\infty
 \le\norm{\mathcal T(0)}_\infty+\frac12\norm{\delta^*}_\infty.
\]
Therefore $\norm{\delta^*}_\infty\le2\norm{A^{-1}e^{\mathrm{end}}}_\infty$, which yields
\eqref{eq:4.15}.
The fixed-point equation is exactly $H(\delta^*)=-e^{\mathrm{end}}$, and the choice of the cube
keeps every corrected control value in $[0,B]$.

The actual empirical endpoints are now
\begin{align*}
 \widehat\nu_0
 &=\frac1n\sum_{r=1}^{m_0}N_{0,r}\delta_{s_{0,r}}
   +w\sum_{j=1}^K\delta_{t_j},\\
 \widehat\nu_1
 &=\frac1n\sum_{r=1}^{m_1}N_{1,r}\delta_{s_{1,r}}
   +w\sum_{j=1}^K\delta_{t_j+\delta_j^*}.
\end{align*}
For $1\le q\le K$, their $C_q$-moment difference is
$e_q^{\mathrm{end}}+H_q(\delta^*)=0$.  Since
$C_0,C_1,\ldots,C_K$ form a basis for the polynomials of degree at most $K$, the ordinary
moments $\int x^q\,d\widehat\nu_b$ also agree for $0\le q\le K$.

\paragraph{Stage 3: construct one-particle path segments.}
For the path, list the $N_{\mathrm{base}}$ rounded base particles of endpoint $b=0$ as
$a_1^{(0)},\ldots,a_R^{(0)}$ and those of endpoint $b=1$ as
$a_1^{(1)},\ldots,a_R^{(1)}$, where $R=N_{\mathrm{base}}$.  Pair the two lists in any order.
The move $a_i^{(0)}\to a_i^{(1)}$ is assigned the vector
\[
 v_i=\frac1n\bigl(C_q(a_i^{(1)})-C_q(a_i^{(0)})\bigr)_{q=1}^K.
\]
The controls in $\overline\nu_0$ and $\overline\nu_1$ are still identical, so the total change of
the base-particle moments is precisely the endpoint discrepancy already defined in Stage~1:
\[
 S\defeq\sum_{i=1}^Rv_i=e^{\mathrm{end}}.
\]
The vectors $v_i$ do not sum to zero, so apply Steinitz after centering them.  Put
$u_i=v_i-S/R$.  Then $\sum_i u_i=0$.  Also $\norm{v_i}_\infty\le2/n$, while \eqref{eq:4.9}, \eqref{eq:4.11}, and
$R=N_{\mathrm{base}}\ge(1-\eta)n$ give
\[
 \norm{u_i}_\infty
 \le\frac2n+\frac{2(K+2)}{nR}
 \le\frac{C_{\mathrm{path},4}}n.
\]
For example, \eqref{eq:4.9} implies $K\le n$, while $\eta<1/10$ gives $R\ge9n/10$; hence one
may take $C_{\mathrm{path},4}\defeq7$ in the last display.
Apply
Lemma~\ref{ext:steinitz} in the $K$-dimensional space $(\R^K,\norm{\cdot}_\infty)$ to the scaled
vectors $(n/C_{\mathrm{path},4})u_i$.  It supplies a permutation $\pi$ of
$\{1,\ldots,R\}$.  For every $m$,
$\sum_{\ell=1}^mv_{\pi(\ell)}=\sum_{\ell=1}^mu_{\pi(\ell)}+mS/R$.
The Steinitz bound and $0\le m/R\le1$ give, respectively,
\[
 \left\|\sum_{\ell=1}^m u_{\pi(\ell)}\right\|_\infty
 \le\frac{C_{\mathrm{path},4}K}{n},
 \qquad
 \left\|\frac mR S\right\|_\infty
 \le\norm{e^{\mathrm{end}}}_\infty
 \le\frac{2(K+2)}n.
\]
The triangle inequality therefore gives
\begin{equation}
 \left\|\sum_{\ell=1}^m v_{\pi(\ell)}\right\|_\infty
 \le\frac{C_{\mathrm{path},5}K}{n}
 \qquad(0\le m\le R).                                   \tag{4.15a}\label{eq:4.15a}
\end{equation}
Here one may take $C_{\mathrm{path},5}\defeq C_{\mathrm{path},4}+4$, because
$2(K+2)\le4K$ for $K\ge2$.
Thus this ordering prevents the moment error from accumulating to order one while the base
particles are moved.

The $m$th path segment is defined as follows.  Put
\[
 \vartheta(x)=\arccos(2x/B-1)\in[0,\pi].
\]
During local segment time $s\in[0,1]$, move only base particle $\pi(m)$, along
\begin{equation}
 z_m(s)=\frac B2\left[1+\cos\left(
  (1-s)\vartheta(a_{\pi(m)}^{(0)})
  +s\vartheta(a_{\pi(m)}^{(1)})\right)\right].           \tag{4.15b}\label{eq:4.15b}
\end{equation}
All particles earlier in the order $\pi$ are already at their $b=1$ locations, and all later
particles remain at their $b=0$ locations.  Relative to endpoint $b=0$, the resulting base-particle
moment error is the vector $g_m(s)\in\R^K$ with coordinates
\begin{equation}
 g_{m,q}(s)
 =\sum_{\ell=1}^{m-1}v_{\pi(\ell),q}
  +\frac{C_q(z_m(s))-C_q(a_{\pi(m)}^{(0)})}{n},
 \qquad 1\le q\le K.                                    \tag{4.15c}\label{eq:4.15c}
\end{equation}
Equation \eqref{eq:4.15a} and $|C_q|\le1$ imply
\[
 \norm{g_m(s)}_\infty
 \le\frac{C_{\mathrm{path},5}K}{n}+\frac2n
 \le\frac{C_{\mathrm{path},6}K}{n}
\]
uniformly in $m$ and $s$, where
$C_{\mathrm{path},6}\defeq C_{\mathrm{path},5}+1$.

We use the controls to cancel this current error.  Replacing $e^{\mathrm{end}}$ by $g_m(s)$ in the
Stage~2 contraction argument is valid because the preceding bound has the same order as
\eqref{eq:4.11}.  More explicitly, define
\[
 \mathcal T_{m,s}(\delta)
 =-A^{-1}g_m(s)+\delta-A^{-1}H(\delta).
\]
Its derivative has norm at most $1/2$ throughout the same cube by \eqref{eq:4.14}.  Moreover,
\eqref{eq:4.9}, \eqref{eq:4.13}, and
$\norm{g_m(s)}_\infty\le C_{\mathrm{path},6}K/n$ ensure that
$\norm{\mathcal T_{m,s}(0)}_\infty\le\rho_0/2$.  Indeed,
\[
 \norm{\mathcal T_{m,s}(0)}_\infty
 \le\frac{C_{\mathrm{path},1}C_{\mathrm{path},6}
                 B^2K\log(K+1)}{\eta n},
\]
and comparison with the definition of $\rho_0$ again requires only
\[
 n\ge\frac{2C_{\mathrm{path},1}C_{\mathrm{path},6}}
             {c_{\mathrm{path},1}}
       \eta^{-1}BK^5\log^2(K+1),
\]
which is absorbed into
\eqref{eq:4.9}.  Thus the same calculation used for
$\mathcal T$ shows that $\mathcal T_{m,s}$ maps the cube into itself.  It has a unique fixed
point $\delta_m(s)$, and the fixed-point equation is
\begin{equation}
 H(\delta_m(s))=-g_m(s).                                 \tag{4.15d}\label{eq:4.15d}
\end{equation}
The fixed-point estimate also gives
\[
 \norm{\delta_m(s)}_\infty
 \le\frac{C_{\mathrm{path},7}B^2K\log(K+1)}{\eta n}.
\]
Here one may take
$C_{\mathrm{path},7}\defeq2C_{\mathrm{path},1}C_{\mathrm{path},6}$, by the same
$\norm{\delta}\le2\norm{\mathcal T_{m,s}(0)}$ argument used in Stage~2.
The control particles in group $j$ occupy $t_j+\delta_{m,j}(s)$.  At local time $s$, define
\begin{equation}
\begin{aligned}
 \widehat\nu_{m,s}
 &=\frac1n\left[
   \sum_{\ell=1}^{m-1}\delta_{a_{\pi(\ell)}^{(1)}}
   +\delta_{z_m(s)}
   +\sum_{\ell=m+1}^{R}\delta_{a_{\pi(\ell)}^{(0)}}
   \right]\\
 &\quad+w\sum_{j=1}^K\delta_{t_j+\delta_{m,j}(s)}.
\end{aligned}                                            \tag{4.15e}\label{eq:4.15e}
\end{equation}
For every $1\le q\le K$, its $C_q$ moment minus that of $\widehat\nu_0$ is
$g_{m,q}(s)+H_q(\delta_m(s))=0$.  The degree-zero moment is one throughout, and
$C_0,C_1,\ldots,C_K$ form a basis for the polynomials of degree at most $K$.  Hence all ordinary
moments through degree $K$ are constant on this segment.

\paragraph{Stage 4: concatenate the segments.}
For global time $\tau\in[(m-1)/R,m/R]$, set
$s=R\tau-(m-1)$ and $\widehat\nu_\tau=\widehat\nu_{m,s}$.  At a join,
$g_m(1)=g_{m+1}(0)$, so uniqueness in \eqref{eq:4.15d} gives
$\delta_m(1)=\delta_{m+1}(0)$.  Thus the segments join continuously.  Also
$g_1(0)=0$ and $g_R(1)=S=e^{\mathrm{end}}$.  Since $H(0)=0$ and
$H(\delta^*)=-e^{\mathrm{end}}$, uniqueness gives
$\delta_1(0)=0$ and $\delta_R(1)=\delta^*$.  Hence the global endpoints are exactly
$\widehat\nu_0$ and $\widehat\nu_1$ defined in Stage~2.  Labeling the $R$ base particles and the
$KN_c$ control particles consistently in \eqref{eq:4.15e} defines $n$ paths $x_i(\tau)$.  We rename the
global parameter $\tau$ as $t$ to match the lemma statement.

These paths have the stated regularity.  Indeed, along a base move,
\[
\sqrt{z_m(s)}=\sqrt B\cos\left(\frac{(1-s)\vartheta(a_{\pi(m)}^{(0)})
 +s\vartheta(a_{\pi(m)}^{(1)})}{2}\right),
\]
which is continuously differentiable.  The function $g_m(s)$ is continuously differentiable, and
$DH(\delta_m(s))$ is invertible by \eqref{eq:4.14}: indeed,
\[
 DH(\delta)=A\{I+A^{-1}(DH(\delta)-A)\},
\]
and the second factor is invertible by the Neumann series because its distance from $I$ is at
most $1/2$.  The implicit-function theorem applied to \eqref{eq:4.15d}
therefore gives a continuously differentiable solution locally around every $s$; these local
solutions agree on overlaps because the solution in the cube is unique.  Thus $\delta_m(s)$ is
continuously differentiable throughout each segment.  The control locations
stay a positive distance from $0$, so their square roots are also continuously differentiable.
Corners may occur only where consecutive segments are concatenated; this proves the asserted
piecewise continuous differentiability.

\medskip\noindent
\emph{Length estimate.}
By \eqref{eq:4.9a}, the normalized speed of the particle path is
\[
 \left(\frac1n\tr\dot D_t^2\right)^{1/2}
 =\left\{\frac1n\sum_{i=1}^n
       \left(\frac d{dt}\sqrt{x_i(t)}\right)^2\right\}^{1/2}.
\]
Its integral is invariant under the linear reparameterization from local time $s$ to global time
$t$.  Indeed, on segment $m$ we have $s=Rt-(m-1)$ and $ds/dt=R$, while the length of the
global time interval is $1/R$.  Therefore
\[
 \int_{(m-1)/R}^{m/R}
 \left\{\frac1n\sum_i
   \left(\frac d{dt}\sqrt{x_i(t)}\right)^2\right\}^{1/2}dt
 =\int_0^1
 \left\{\frac1n\sum_i
   \left(\frac d{ds}\sqrt{x_i(s)}\right)^2\right\}^{1/2}ds.
\]
We may consequently compute each segment in its local parameter $s$ and add the results.
The estimate follows the chain
\[
 \dot g_m(s)
 \ \xrightarrow{\ H(\delta_m)=-g_m\ }\
 \dot\delta_m(s)
 \ \xrightarrow{\ x=t_j+\delta_{m,j}\ }\
 \frac d{ds}\sqrt{x(s)}
 \ \xrightarrow{\text{sum over particle multiplicities}}\
 v_m(s).
\]
Each arrow is quantified below.

We first relate the velocity of the controls to the velocity of the moment error.  Differentiate
\eqref{eq:4.15d}; dots in the remainder of this calculation denote derivatives with respect to
$s$.  The ordinary multivariable chain rule gives
\[
 DH(\delta_m(s))\dot\delta_m(s)=-\dot g_m(s),
 \qquad
 \dot\delta_m(s)=-DH(\delta_m(s))^{-1}\dot g_m(s).
\]
Writing
\[
 DH(\delta)=A\{I+A^{-1}(DH(\delta)-A)\}
\]
and using \eqref{eq:4.14}, the Neumann-series bound
$\norm{(I+E)^{-1}}_{\infty\to\infty}\le(1-\norm E_{\infty\to\infty})^{-1}$ gives
\[
 \norm{DH(\delta_m(s))^{-1}}_{\infty\to\infty}
 \le2\norm{A^{-1}}_{\infty\to\infty}.
\]
Combining this with \eqref{eq:4.13} gives the first inequality below:
\begin{equation}
 \norm{\dot\delta_m(s)}_\infty
 \le\frac{C_{\mathrm{path},8}B^2\log(K+1)}\eta
       \norm{\dot g_m(s)}_\infty,\qquad
 \int_0^1\norm{\dot g_m(s)}_\infty ds
 \le\frac{C_{\mathrm{path},9}K}{n}.                      \tag{4.16}\label{eq:4.16}
\end{equation}
Here $C_{\mathrm{path},8}\defeq2C_{\mathrm{path},1}$.
For the last inequality, define the linearly interpolated angle
\[
 \alpha_m(s)=(1-s)\vartheta(a_{\pi(m)}^{(0)})
              +s\vartheta(a_{\pi(m)}^{(1)}).
\]
Then \eqref{eq:4.15b}--\eqref{eq:4.15c} give
$C_q(z_m(s))=\cos(q\alpha_m(s))$ and, for every $1\le q\le K$,
\[
 \dot g_{m,q}(s)
 =-\frac qn\sin(q\alpha_m(s))\dot\alpha_m(s).
\]
Consequently,
\[
 \norm{\dot g_m(s)}_\infty
 \le\frac Kn\abs{\dot\alpha_m(s)}.
\]
The function $\alpha_m$ is affine, hence monotone, and both endpoint angles lie in $[0,\pi]$.
It follows that
\[
 \int_0^1\norm{\dot g_m(s)}_\infty ds
 \le\frac Kn\int_0^1\abs{\dot\alpha_m(s)}ds
 =\frac Kn\abs{\alpha_m(1)-\alpha_m(0)}
 \le\frac{\pi K}{n},
\]
which is the second inequality in \eqref{eq:4.16}, with
$C_{\mathrm{path},9}\defeq\pi$.

We now convert these control velocities into the square-root particle velocities appearing in
the length.  On segment $m$, the only moving entries are the single base particle at $z_m(s)$
and the $N_c$ identical particles in each control group.  Since $w=N_c/n$, their exact normalized
speed is
\[
 v_m(s)=\left\{
  \frac1n\left(\frac d{ds}\sqrt{z_m(s)}\right)^2
  +w\sum_{j=1}^K
    \left(\frac d{ds}\sqrt{t_j+\delta_{m,j}(s)}\right)^2
 \right\}^{1/2}.
\]
The inequality $\sqrt{u+v}\le\sqrt u+\sqrt v$ separates this into a base contribution and a
control contribution.

For the base particle, \eqref{eq:4.15b} gives
\[
 \sqrt{z_m(s)}=\sqrt B\cos\frac{\alpha_m(s)}2,
 \qquad
 \frac d{ds}\sqrt{z_m(s)}
 =-\frac{\sqrt B}{2}\sin\frac{\alpha_m(s)}2\,\dot\alpha_m(s).
\]
Because $\alpha_m$ is monotone in $[0,\pi]$, $\sqrt{z_m(s)}$ is also monotone.  Hence its total
variation is the absolute difference of its endpoint values and is at most $\sqrt B$:
\[
 \int_0^1\frac1{\sqrt n}
   \left|\frac d{ds}\sqrt{z_m(s)}\right|ds
 =\frac1{\sqrt n}
   \abs{\sqrt{a_{\pi(m)}^{(1)}}-\sqrt{a_{\pi(m)}^{(0)}}}
 \le\frac{\sqrt B}{\sqrt n}.
\]
Since $R\le n$, summing over all base moves gives
\[
 \frac{R\sqrt B}{\sqrt n}\le\sqrt{nB}.                    \tag{4.16b}\label{eq:4.16b}
\]

For the controls, the ordinary one-variable chain rule expresses each square-root velocity in
terms of $\dot\delta_m$:
\[
 \frac d{ds}\sqrt{t_j+\delta_{m,j}(s)}
 =\frac{\dot\delta_{m,j}(s)}{2\sqrt{t_j+\delta_{m,j}(s)}}.
\]
Every control location is at least $c_{\mathrm{path},2}B/K^2$.  Therefore the control part of
$v_m(s)$ satisfies
\[
\begin{aligned}
 v_m^{\mathrm{ctrl}}(s)
 &\defeq\left\{w\sum_{j=1}^K
 \left(\frac{\dot\delta_{m,j}(s)}
 {2\sqrt{t_j+\delta_{m,j}(s)}}\right)^2\right\}^{1/2}\\
 &\qquad\le
 \frac{\sqrt w K}{2\sqrt{c_{\mathrm{path},2}B}}
 \left(\sum_{j=1}^K\dot\delta_{m,j}(s)^2\right)^{1/2}\\
 &\qquad\le C_{\mathrm{path},10}\sqrt w\frac{K^{3/2}}{\sqrt B}
       \norm{\dot\delta_m(s)}_\infty.
\end{aligned}
\]
Here the first inequality uses the lower bound on the control locations, and the second uses
$\norm{u}_2\le\sqrt K\norm{u}_\infty$; here
$C_{\mathrm{path},10}\defeq1/(2\sqrt{c_{\mathrm{path},2}})$.  Integrating, then substituting successively
$w\le\eta/(BK)$ and \eqref{eq:4.16}, gives
\begin{align*}
 \int_0^1v_m^{\mathrm{ctrl}}(s)\,ds
 &\le C_{\mathrm{path},10}\sqrt w\frac{K^{3/2}}{\sqrt B}
       \frac{C_{\mathrm{path},8}B^2\log(K+1)}\eta
       \int_0^1\norm{\dot g_m(s)}_\infty ds\\
 &\le
 \frac{C_{\mathrm{path},11}BK^2\log(K+1)}{n\sqrt\eta}.
\end{align*}
Here $C_{\mathrm{path},11}\defeq
C_{\mathrm{path},10}C_{\mathrm{path},8}C_{\mathrm{path},9}$.
For clarity, the powers in the last line come from
\[
 \frac{\sqrt\eta}{\sqrt B\sqrt K}\,
 \frac{K^{3/2}}{\sqrt B}\,
 \frac{B^2\log(K+1)}\eta\,
 \frac Kn
 =\frac{BK^2\log(K+1)}{n\sqrt\eta}.
\]
There are $R\le n$ segments, so the total control contribution is at most
$C_{\mathrm{path},11}BK^2\log(K+1)/\sqrt\eta$.  Together with \eqref{eq:4.16b}, this proves \eqref{eq:4.10}.

\medskip\noindent
\emph{Endpoint approximation.}
It remains to establish the endpoint estimate \eqref{eq:4.10a}.  For the preliminary endpoint law
$\overline\nu_b$ from Stage~1,
\eqref{eq:4.10b} gives
\[
 h(\overline\nu_b)
 =\frac{N_{\mathrm{base}}}{n}h(\nu_b)
   +w\sum_{j=1}^K\sqrt{t_j}+r_b,
 \qquad
 r_b=\frac1n\sum_{r=1}^{m_b}\Delta_{b,r}\sqrt{s_{b,r}},
 \qquad |r_b|\le\frac{2KB}{n}.
\]
Here we used $m_b\le K+2\le2K$, $\sqrt{x}\le\sqrt B\le B$, and the fact that each rounded
count has error below one.  Since $N_{\mathrm{base}}/n=1-Kw$, subtracting $h(\nu_b)$ from
the preceding identity gives
\[
 h(\overline\nu_b)-h(\nu_b)
 =-Kw\,h(\nu_b)+w\sum_{j=1}^K\sqrt{t_j}+r_b.
\]
Now $h(\nu_b)\le\sqrt B$, every $t_j\le B$, and $Kw\le\eta/B$.  Hence
\[
 Kw\,h(\nu_b)\le\eta,
 \qquad
 w\sum_{j=1}^K\sqrt{t_j}\le Kw\sqrt B\le\eta,
\]
where $B\ge1$ was used in both estimates.  Together with the bound on $r_b$, this proves
\begin{equation}
 |h(\overline\nu_b)-h(\nu_b)|
 \le2\eta+\frac{2KB}{n}.                                 \tag{4.16a}\label{eq:4.16a}
\end{equation}
The controls for endpoint $b=0$ do not move.  For endpoint $b=1$, \eqref{eq:4.15} and
$\abs{\sqrt{x+u}-\sqrt x}=\abs u/(\sqrt{x+u}+\sqrt x)$ show that the control displacement
changes the square-root moment by at most
\[
 w\sum_{j=1}^K\abs{\sqrt{t_j+\delta_j^*}-\sqrt{t_j}}
 \le \frac{wK^2}{2\sqrt{c_{\mathrm{path},2}B}}\norm{\delta^*}_\infty
 \le\frac{C_{\mathrm{path},12}BK^2\log(K+1)}n.
\]
Indeed, both $t_j$ and $t_j+\delta_j^*$ are at least
$c_{\mathrm{path},2}B/K^2$, so the denominator in the square-root identity is at least
$2\sqrt{c_{\mathrm{path},2}B}/K$.  Summing over $K$ groups gives the first inequality.  For the
second, substituting $w\le\eta/(BK)$ and \eqref{eq:4.15} gives
\[
 \frac{C_{\mathrm{path},3}}{2\sqrt{c_{\mathrm{path},2}}}
 \frac{\eta}{BK}\frac{K^2}{\sqrt B}
   \frac{B^2K\log(K+1)}{\eta n}
 =C_{\mathrm{path},12}\frac{\sqrt B\,K^2\log(K+1)}n
 \le C_{\mathrm{path},12}\frac{BK^2\log(K+1)}n,
\]
where $C_{\mathrm{path},12}\defeq
C_{\mathrm{path},3}/(2\sqrt{c_{\mathrm{path},2}})$ and the last inequality uses $B\ge1$.
Combining this with \eqref{eq:4.16a} proves \eqref{eq:4.10a}.

To unify the constants, let $C_{\mathrm{path},13}$ be the maximum of the finitely many fixed
numerical thresholds required of the coefficient in \eqref{eq:4.9}: those used to ensure
$\eta n/(BK)\ge2$ and to make the Stage~2 and Stage~3 Newton maps send the cube into itself.
This is an absolute constant because every local $C_{\mathrm{path},i}$ and
$c_{\mathrm{path},i}$ has already been fixed.  Finally, define
\[
 C_{\mathrm{path}}
 \defeq2\max\{1,C_{\mathrm{path},11},C_{\mathrm{path},12}+1,
                   C_{\mathrm{path},13}\}.
\]
Then \eqref{eq:4.9} implies every size condition used in the construction, while \eqref{eq:4.16b} and the control
estimate give \eqref{eq:4.10}, and \eqref{eq:4.16a} and the final control-displacement estimate give \eqref{eq:4.10a}.
In the last implication we also use
$2KB/n\le2BK^2\log(K+1)/n$ for $K\ge2$; the definition of
$C_{\mathrm{path}}$ absorbs this coefficient together with $C_{\mathrm{path},12}$.
\end{proof}

\subsection{Extension to every fixed non-even exponent}\label{app:noneven}

\begin{proof}[Proof of Corollary~\ref{cor:noneven}]
Fix $p>0$ with $p\notin\{2,4,6,\ldots\}$ and fix $0<\eps<1$.  Put
\[
 \alpha=\frac p2\notin\mathbb N,
 \qquad j_\alpha=\lceil\alpha\rceil,
 \qquad \rho_\eps=\frac{1-\eps}{3+\eps},
 \qquad \eta_\eps=\frac{1-\eps}{4}.
\]
We reuse the moment-matched scalar construction, the equal-weight path, and the observation
comparison from the nuclear-norm lower bound.  The two additional points are to separate an
arbitrary noninteger moment $x^\alpha$ and, when $p>2$, to control the Gaussian perturbation by
its operator norm rather than by its Frobenius norm.

\paragraph{Noninteger-moment separation.}
Repeat the construction in the proof of Proposition~\ref{prop:scalar-pair}, initially leaving
the integer $M_\star\ge6$ unspecified and taking $K\ge\max\{2,j_\alpha\}$.  Recall the nodes
$x_j=j^{M_\star}$ and the relative absolute weights
$R_j=|w_j|/|w_0|$.  The contribution of node $j\ge1$ to the unnormalized $\alpha$ moment, in
units of $|w_0|$, is
\[
 Q_j=x_j^\alpha R_j=j^{M_\star\alpha}R_j.
\]
Define
\[
 q_j=(j-1)!\,j^{\alpha-j+1}\qquad(j\ge1).
\]
The factorization \eqref{eq:4.5b} and the estimate leading to \eqref{eq:4.6} give
\begin{equation}
 q_j^{M_\star}\le Q_j
 \le \frac{j^{j-1}}{(j-1)!}
       \exp\!\left(\frac{\pi^2j}{3M_\star}\right)q_j^{M_\star}
 \qquad(j\ge2),                                                        \tag{A.p1}\label{eq:A.p1}
\end{equation}
while \eqref{eq:4.4} gives the analogous bound $Q_1\le Cq_1^{M_\star}$ with an absolute
constant.  For the lower bound in \eqref{eq:A.p1}, use
$j^{M_\star}-\ell^{M_\star}\le j^{M_\star}$ for $\ell<j$ and note that all factors in
\eqref{eq:4.5b} with $\ell>j$ are at least one.  The upper bound is \eqref{eq:4.6} multiplied by
$j^{M_\star(\alpha-1)}$ and then rearranged.

The decisive calculation is
\begin{equation}
 \frac{q_{j+1}}{q_j}
 =\left(1+\frac1j\right)^{\alpha-j}.                                  \tag{A.p2}\label{eq:A.p2}
\end{equation}
Because $\alpha$ is not an integer, $q_j$ has the unique maximum $q_{j_\alpha}$.  This maximum
dominates uniformly over the number of nodes.  To see this explicitly, the indices
$j<j_\alpha$ form a fixed finite set, so \eqref{eq:A.p1}, together with the preceding bound for
$Q_1$, shows that their contribution is exponentially smaller than
$q_{j_\alpha}^{M_\star}$.  For the upper tail, put
$J_\alpha=\max\{2,j_\alpha+1\}$ and let $E_j$ denote the right side of \eqref{eq:A.p1} for
$j\ge J_\alpha$.  Consecutive terms obey
\[
 \frac{E_{j+1}}{E_j}
 \le e\exp\!\left(\frac{\pi^2}{3M_\star}\right)
       \left(1+\frac1j\right)^{-M_\star(j-\alpha)}.
\]
The infimum of $(j-\alpha)\log(1+1/j)$ over integers $j\ge J_\alpha$ is positive.  Hence, after
increasing $M_\star$ by an amount depending only on $\alpha$, the last ratio is at most $1/2$;
the entire upper tail is then a geometric series.  Increasing $M_\star$ once more makes its first
term, and each of the finitely many lower-index contributions, at most the required fraction of
$Q_{j_\alpha}$.  It follows that, for every $\tau>0$, there is
a finite integer $M(\alpha,\tau)\ge6$, independent of $K$ and $n$, such that
\begin{equation}
 \sum_{\substack{1\le j\le K+1\\j\ne j_\alpha}}Q_j
 \le \tau Q_{j_\alpha}
 \qquad\text{for every }K\ge\max\{2,j_\alpha\}.                         \tag{A.p3}\label{eq:A.p3}
\end{equation}

Take $\tau=\rho_\eps^p$ and fix
$M_{p,\eps}=M(\alpha,\rho_\eps^p)$ large enough also to meet the common-rescaling requirements
following \eqref{eq:4.7a}.  Normalize the two sign classes and rescale them to common first
moment one exactly as in the proof of Proposition~\ref{prop:scalar-pair}.  These operations
multiply both unnormalized $\alpha$ moments by the same positive factor.  Label the class
containing $j_\alpha$ by $1$ and the other class by $0$, and write
\[
 H_b=\int x^\alpha\,d\nu_b(x),\qquad s_b=H_b^{1/p}.
\]
Equations \eqref{eq:4.7a}--\eqref{eq:4.7c} keep the common normalization and rescaling factors
bounded above and below.  Equation \eqref{eq:A.p1} when $j_\alpha\ge2$, and
\eqref{eq:4.4} when $j_\alpha=1$, give
$Q_{j_\alpha}\ge q_{j_\alpha}^{M_{p,\eps}}>0$.
Thus \eqref{eq:A.p3} implies
\begin{equation}
 \frac{H_0}{H_1}\le\rho_\eps^p,\qquad
 H_1\ge c_{p,\eps}^p>0,\qquad
 \frac{s_1-s_0}{s_1+s_0}
 \ge\frac{1-\rho_\eps}{1+\rho_\eps}
 =\frac{1+\eps}{2}=\eps+2\eta_\eps.                                  \tag{A.p4}\label{eq:A.p4}
\end{equation}
The two laws have matching moments through degree $K$ and support in
$[0,2(K+1)^{M_{p,\eps}}]$.  At a positive even $p$, by contrast, $\alpha=p/2$ is an integer and
$H_0=H_1$ is one of the exactly matched moments once $K\ge\alpha$.  The constants above become
nonuniform as $p$ approaches any positive even integer, exactly when the unique maximum in
\eqref{eq:A.p2} degenerates into a tie.

One could instead amplify any fixed ratio $H_1/H_0>1$ by taking a fixed number of independent
products: integer moments would remain matched and the $\alpha$-moment ratio would be raised to
that number of copies.  That route produces polynomially many support atoms and would require a
many-atom version of Lemma~\ref{lem:path}.  Increasing the fixed exponent $M_{p,\eps}$ as above
achieves the same amplification while retaining the present $K+2$-node construction.

\paragraph{The equal-weight endpoints retain the gap.}
Put
\[
 K=\left\lfloor c_{\mathrm{low}}\frac{\log n}{\log\log n}\right\rfloor,
 \qquad B=2(K+2)^{M_{p,\eps}},
 \qquad
 \eta_{\mathrm{path}}(n)
 =\min\left\{\frac1{20},\frac1{B^\alpha\log(en)}\right\}.
\]
For all sufficiently large $n$, we have $K\ge\max\{2,j_\alpha\}$ and the size condition
\eqref{eq:4.9} holds.  Indeed, $B$, $\eta_{\mathrm{path}}^{-1}$, and $K$ are all bounded by
fixed powers of $\log n$.  In addition to the conclusions of Lemma~\ref{lem:path}, its endpoints
satisfy
\begin{equation}
 \left|\int x^\alpha\,d\widehat\nu_b(x)-H_b\right|
 \le C_\alpha\left\{
   \eta_{\mathrm{path}}B^{(\alpha-1)_+}
   +\frac{B^\alpha K^3\log(K+1)}n\right\}=o(1)
 \qquad(b\in\{0,1\}),                                                       \tag{A.p5}\label{eq:A.p5}
\end{equation}
where $(u)_+=\max\{u,0\}$.  Here is the verification.  In the preliminary endpoint calculation
leading to \eqref{eq:4.16a}, use $x^\alpha\le B^\alpha$ on $[0,B]$.  The reserved mass and
rounding errors are then at most
$2\eta_{\mathrm{path}}B^{(\alpha-1)_+}+2KB^\alpha/n$.
The endpoint-zero controls do not move.  At endpoint one, the mean-value theorem and
\eqref{eq:4.15} give
\begin{align*}
 &w\sum_{j=1}^K
   \left|(t_j+\delta_j^*)^\alpha-t_j^\alpha\right|\\
 &\quad\le C_\alpha\frac{B^\alpha
          K^{1+2(1-\alpha)_+}\log(K+1)}n
 \le C_\alpha\frac{B^\alpha K^3\log(K+1)}n.
\end{align*}
For $0<\alpha<1$, use the lower bound
$t_j,t_j+\delta_j^*\ge c_{\mathrm{path},2}B/K^2$; for $\alpha\ge1$, use the upper bound $B$
instead.  This proves \eqref{eq:A.p5}.

Let
\[
 \widehat H_b=\frac1n\sum_{i=1}^nx_i(b)^\alpha,\qquad
 \widehat s_b=\widehat H_b^{1/p}.
\]
Equations \eqref{eq:A.p4}--\eqref{eq:A.p5} and continuity of $u\mapsto u^{1/p}$ show that,
for all sufficiently large $n$,
\begin{equation}
 \frac{\widehat s_1-\widehat s_0}{\widehat s_1+\widehat s_0}
 \ge\eps+\eta_\eps,
 \qquad \widehat s_1\ge c_{p,\eps}>0.                                      \tag{A.p6}\label{eq:A.p6}
\end{equation}

Take independent Haar orthogonal $U,V$ and set $F_b=\sqrt n\,UD_bV^T$.  Since the singular
values of $F_b$ are $\sqrt{nx_i(b)}$,
\begin{equation}
 \norm{F_b}_{\Sp{p}}
 =n^{1/2+1/p}\widehat s_b.                                                    \tag{A.p7}\label{eq:A.p7}
\end{equation}
The first $K$ integer moments of the squared singular values remain exactly constant along the
path.  Also, with $m=2K+1$, the verification following \eqref{eq:4.17} gives
$(4m)^{4m}\le n/2$ for all sufficiently large $n$.  Thus every hypothesis of
Lemma~\ref{lem:path-tv} holds.

\paragraph{Gaussian smoothing for the norm and quasi-norm ranges.}
Let $Z$ have independent $N(0,1)$ entries and define
\[
 X_b=aF_b+\sigma Z,\qquad b\in\{0,1\},
\]
with $a=1$ and with the fixed positive constant $\sigma=\sigma(p,\eps)$ chosen below.  Fix
$\zeta=1/100$ and put $\Lambda_p=n^{1/2+1/p}$.  There is a finite constant
$G_{p,\zeta}$, independent of $n$, such that
\begin{equation}
 \Prb\{\norm Z_{\Sp{p}}>G_{p,\zeta}\Lambda_p\}\le\zeta.                    \tag{A.p8}\label{eq:A.p8}
\end{equation}
For $0<p<2$, this follows from
$\norm Z_{\Sp{p}}\le n^{1/p-1/2}\norm Z_F$ and
$\Prb\{\norm Z_F>n/\sqrt\zeta\}\le\zeta$.
For $p\ge2$, apply the Gaussian covariance bound in Theorem~\ref{ext:covariance} to
$G=Z/\sqrt n$ with $D=I_n$ and fixed failure probability $\zeta$.  Its first estimate gives
$\norm Z_{\op}\le C_\zeta\sqrt n$, and hence
$\norm Z_{\Sp{p}}\le n^{1/p}\norm Z_{\op}\le C_\zeta\Lambda_p$.

First suppose $p\ge1$.  On the event complementary to \eqref{eq:A.p8}, put
$r_{\mathrm{sm}}=\sigma G_{p,\zeta}\Lambda_p$.  The triangle and reverse-triangle inequalities
place $\norm{X_b}_{\Sp{p}}$ in
\[
 I_b=\left[\max\{0,\Lambda_p\widehat s_b-r_{\mathrm{sm}}\},
                 \Lambda_p\widehat s_b+r_{\mathrm{sm}}\right].
\]
By \eqref{eq:A.p6},
\[
 \begin{aligned}
 (1-\eps)\inf I_1-(1+\eps)\sup I_0
 &\ge \Lambda_p\left\{
   (\widehat s_1-\widehat s_0)-\eps(\widehat s_1+\widehat s_0)
   -2\sigma G_{p,\zeta}\right\}\\
 &\ge \Lambda_p\left\{\eta_\eps(\widehat s_1+\widehat s_0)
   -2\sigma G_{p,\zeta}\right\}.
 \end{aligned}
\]
Choosing, for example,
$\sigma<\eta_\eps c_{p,\eps}/(4G_{p,\zeta})$ makes this quantity positive.  Thus the
$(1\pm\eps)$ enlargements of $I_0$ and $I_1$ are disjoint for every non-even $p\ge1$, including
the range above two.

Now suppose $0<p<1$.  The Schatten-$p$ quasi-norm satisfies the McCarthy $p$-triangle
inequality~\cite{McCarthyCp}
\[
 \norm{A+B}_{\Sp{p}}^p\le\norm A_{\Sp{p}}^p+\norm B_{\Sp{p}}^p.
\]
Applying it twice gives
\begin{equation}
 \left|\norm A_{\Sp{p}}^p-\norm B_{\Sp{p}}^p\right|
 \le\norm{A-B}_{\Sp{p}}^p.                                                   \tag{A.p9}\label{eq:A.p9}
\end{equation}
On the event in \eqref{eq:A.p8},
$\norm{\sigma Z}_{\Sp{p}}^p\le
 \sigma^pG_{p,\zeta}^p n^{1+p/2}$.
Set
\[
 \theta_{p,\eps}=\frac{1-\eps-\eta_\eps}{1+\eps+\eta_\eps}
 <\frac{1-\eps}{1+\eps}.
\]
Equation \eqref{eq:A.p6} implies
$\widehat H_0\le\theta_{p,\eps}^p\widehat H_1$ and
$\widehat H_1\ge c_{p,\eps}^p$.  Hence
\[
 (1-\eps)^p\widehat H_1-(1+\eps)^p\widehat H_0
 \ge \Delta_{p,\eps}>0,
\]
where one may take
\[
 \Delta_{p,\eps}=c_{p,\eps}^p
 \left\{(1-\eps)^p-(1+\eps)^p\theta_{p,\eps}^p\right\}.
\]
Choose the fixed $\sigma>0$ small enough that
\[
 R_p\defeq\sigma^pG_{p,\zeta}^p
 <\frac{\Delta_{p,\eps}}{(1+\eps)^p+(1-\eps)^p}.
\]
Then
\begin{equation}
 (1+\eps)^p\{\widehat H_0+R_p\}
 <(1-\eps)^p\{\widehat H_1-R_p\}.                         \tag{A.p10}\label{eq:A.p10}
\end{equation}
Equations \eqref{eq:A.p7}--\eqref{eq:A.p9} place $\norm{X_b}_{\Sp{p}}$ in
\[
 I_b=\Lambda_p
 \left[\max\{0,\widehat H_b-R_p\}^{1/p},
       (\widehat H_b+R_p)^{1/p}\right]
\]
with probability at least $1-\zeta$.  Raising the desired strict separation
$(1+\eps)\sup I_0<(1-\eps)\inf I_1$ to the positive power $p$ gives exactly
\eqref{eq:A.p10}.  Thus the required separation survives smoothing for every exponent in the
corollary.

\paragraph{Observation comparison and testing.}
The chosen ratio $a/\sigma$ and the support exponent $M_{p,\eps}$ are fixed independently of
$n$.  Moreover, \eqref{eq:4.10} and the present value of $\eta_{\mathrm{path}}$ give
$\mathcal L_K^2\le C_{p,\eps}nB$ for all sufficiently large $n$: the other term is bounded by a
fixed power of $\log n$.  Consequently, Lemma~\ref{lem:path-tv} and the calculation
\eqref{eq:4.23a}--\eqref{eq:4.24} give, for every row coisometry $L$ of rank at most $k$,
\begin{equation}
 \TV\bigl(\operatorname{Law}(LX_0),\operatorname{Law}(LX_1)\bigr)^2
 \le n\exp\{C_{p,\eps}K\log(K+1)\}\frac{k^{K+1}}{n^{2K}}                 \tag{A.p11}\label{eq:A.p11}
\end{equation}
with a finite constant $C_{p,\eps}$.  The optimization in
\eqref{eq:4.25}--\eqref{eq:4.26}, with $C_{\mathrm{low},1}(\eps)$ replaced by
$C_{p,\eps}$, furnishes a finite $A_{p,\eps}$ for which \eqref{eq:A.p11} is at most $1/16$
whenever
\[
 k\le\left\lfloor\frac{n^2}{(\log n)^{A_{p,\eps}}}\right\rfloor
\]
and $n$ is sufficiently large.

Finally, apply the functional form of Lemma~\ref{lem:yao} with
\[
 \Phi_p(A)=\norm{A}_{\Sp{p}}.
\]
The deterministic intervals constructed above satisfy its separation hypothesis and, by
\eqref{eq:A.p8}, contain $\Phi_p(X_b)$ with probability at least $1-\zeta$ under label $b$.
Equation \eqref{eq:A.p11} gives the lemma's observation hypothesis with $\delta=1/4$ for every
row coisometry of the stated rank.  Since $\zeta=1/100$ and $1/4+2/100<1/3$, the lemma rules out
every such Schatten-$p$ sketch and proves \eqref{eq:1.2a}.
\end{proof}

\section{Tensor and random-rotation verification}\label{app:tensors}

This appendix proves Lemma~\ref{lem:twirl}.  Its only external input is the exact orthogonal
Weingarten formula of Collins and P.~\'{S}niady~\cite[Corollary~3.4 and Lemma~3.5]{CollinsSniady}.  We derive the
coefficient bound needed from that formula's pairing Gram matrix and prove all trace
decompositions, contraction estimates, and block recombinations directly.

The constants $C_{\mathrm{ten},1},C_{\mathrm{ten},2},\ldots$ below are fixed absolute
constants attached to individual estimates.  At the end of the appendix they are combined into
the single envelope $C_{\mathrm{ten}}$ used in \eqref{eq:4.22}.

\subsection{Tensors, traces, and normalized metric insertions}\label{app:trace-decomposition}

Put $V=\R^n$ with standard basis $e_1,\ldots,e_n$.  Throughout this appendix, $m$ is the positive
integer specifying the tensor order.  An order-$m$ tensor is an array
$T_{i_1\cdots i_m}$ with the squared Euclidean norm
\[
 \norm T^2=\sum_{i_1,\ldots,i_m=1}^n T_{i_1\cdots i_m}^2.
\]
An orthogonal matrix acts in every index:
\[
 (U^{\otimes m}T)_{i_1\cdots i_m}
 =\sum_{a_1,\ldots,a_m}U_{i_1a_1}\cdots U_{i_ma_m}T_{a_1\cdots a_m}.
\]
This action preserves the norm.  The metric tensor and its normalized version are
\begin{equation}
 g=\sum_{i=1}^ne_i\otimes e_i,\qquad \bar g=n^{-1/2}g,\qquad \norm{\bar g}=1. \tag{B.1}\label{eq:B.1}
\end{equation}
Thus $\bar g$ is the normalized identity tensor; in coordinates,
\[
 \bar g_{ij}=n^{-1/2}\mathbf 1_{\{i=j\}}.
\]
For every $Q\in\mathcal O_n$, orthogonality gives
\begin{align}
 (Q\otimes Q)\bar g
 &=n^{-1/2}\sum_{i=1}^n(Qe_i)\otimes(Qe_i)\notag\\
 &=n^{-1/2}\sum_{a,b=1}^n
      \left(\sum_{i=1}^nQ_{ai}Q_{bi}\right)e_a\otimes e_b\notag\\
 &=n^{-1/2}\sum_{a,b=1}^n(QQ^T)_{ab}e_a\otimes e_b
 =\bar g.                                                \tag{B.1a}\label{eq:B.metric-invariance}
\end{align}
Thus a normalized identity tensor inserted in two positions is unchanged when the same
orthogonal transformation is applied in both positions.
We next define separately the operation that inserts this tensor and the adjoint operation that
removes it.

Fix an output order $d\ge2$ and two output positions $1\le a<b\le d$.  For an order-$(d-2)$
tensor $h$, write $j_{\widehat{a,b}}$ for the tuple obtained from
$(j_1,\ldots,j_d)$ by deleting $j_a,j_b$.  The \emph{normalized metric insertion}
is the linear map defined by
\[
\begin{gathered}
 I_{ab}^{(d)}:V^{\otimes(d-2)}\longrightarrow V^{\otimes d},\\
 \bigl(I_{ab}^{(d)}h\bigr)_{j_1\cdots j_d}
 =n^{-1/2}\mathbf 1_{\{j_a=j_b\}}h_{j_{\widehat{a,b}}}.
\end{gathered}
\]
In other words, $I_{ab}^{(d)}$ places one copy of $\bar g$ in positions $a,b$ and places the
indices of $h$, in their original order, in the remaining positions.  Such an insertion is also
called a \emph{cup}.

The adjoint of $I_{ab}^{(d)}$ for the tensor Euclidean inner products is the
\emph{normalized contraction}
\[
 C_{ab}^{(d)}=(I_{ab}^{(d)})^*:V^{\otimes d}\longrightarrow V^{\otimes(d-2)}.
\]
It sets the two selected indices equal, sums over their common value, and multiplies by
$n^{-1/2}$:
\begin{equation*}
 \bigl(C_{ab}^{(d)}T\bigr)_{j_{\widehat{a,b}}}
 =n^{-1/2}\sum_{\ell=1}^n
 T_{j_1\cdots j_{a-1}\,\ell\,j_{a+1}\cdots
   j_{b-1}\,\ell\,j_{b+1}\cdots j_d}.
\end{equation*}
The unnormalized trace over positions $a,b$ is $\sqrt n\,C_{ab}^{(d)}T$, so it has the same
kernel as the normalized contraction.

These contraction maps define the trace-free subspace.  An order-$p$ tensor $h$ is
\emph{trace-free} if
\[
 C_{ab}^{(p)}h=0\qquad\text{for every }1\le a<b\le p.
\]
For $p=0$ or $p=1$, there is no pair of positions, so every tensor is trace-free by convention.

A partial matching $\pi$ of $\{1,\ldots,m\}$ is a collection of disjoint unordered pairs.
If $\pi$ has $r$ pairs, then $p=m-2r$ positions remain unmatched.  Given a trace-free
$p$-tensor $h$, let $J_\pi h$ be the order-$m$ tensor obtained by placing one copy of $\bar g$ in
every matched pair and placing the indices of $h$, in their original order, into the unmatched
positions listed in increasing order.  Equivalently, $J_\pi$ performs one insertion of the form
above for each pair of $\pi$.  For example, when
$m=3$ and $\pi=\{\{1,3\}\}$, one has $J_\pi=I_{13}^{(3)}$ and
\[
 (J_{\{\{1,3\}\}}h)_{i_1i_2i_3}
 =n^{-1/2}\mathbf 1_{\{i_1=i_3\}}h_{i_2}.
\]
For a three-tensor $T$, its adjoint is the corresponding contraction:
\[
 (J_{\{\{1,3\}\}}^*T)_{i_2}=n^{-1/2}\sum_{i=1}^nT_{i,i_2,i}.
\]
The number of matchings with exactly $r$ pairs is
\begin{equation}
 N_{m,r}=\frac{m!}{2^r r!(m-2r)!}.                         \tag{B.2}\label{eq:B.2}
\end{equation}

The elementary trace decomposition says that $V^{\otimes m}$ is spanned by tensors $J_\pi h$
with $h$ trace-free.  A direct linear-algebra proof follows.  The joint trace-free space is
$\bigcap_{a<b}\ker C_{ab}^{(m)}$.  In finite dimensions,
\[
 \left(\bigcap_{a<b}\ker C_{ab}^{(m)}\right)^\perp
 =\sum_{a<b}\operatorname{range}\bigl((C_{ab}^{(m)})^*\bigr)
 =\sum_{a<b}\operatorname{range}(I_{ab}^{(m)}).
\]
Thus every tensor is a sum of a trace-free tensor and tensors containing one cup.  Apply the same
identity to the order-$(m-2)$ tensor behind each cup and continue by induction.  The resulting
terms are precisely $J_\pi h$ with $h$ trace-free and $\pi$ a partial matching.  Different
matchings can give dependent spanning blocks, necessitating the Gram estimate below.

\subsection{Stable trace-free pairing frame}

We use the elementary trace decomposition above over the complex numbers.  Put
$V_{\mathbb C}=\mathbb C^n$, with its standard Hermitian inner product.  We use the same notation
$C_{ab}^{(p)}$ for the complex-linear extension of the real contraction defined above; it sums
equal coordinates without conjugating them.  Define
\begin{equation}
 \mathcal H_p
 =\bigcap_{1\le a<b\le p}\ker C_{ab}^{(p)}
 \subseteq V_{\mathbb C}^{\otimes p}.                    \tag{B.2a}\label{eq:B.2a}
\end{equation}
Thus $\mathcal H_p$ consists of the trace-free $p$-tensors.  We use the conventions
$\mathcal H_0=\mathbb C$ and $V_{\mathbb C}^{\otimes0}=\mathbb C$.  Orthogonal transformations
preserve this space because
\[
 C_{ab}^{(p)}U^{\otimes p}
 =U^{\otimes(p-2)}C_{ab}^{(p)}.
\]

For a partial matching $\pi$ with $r$ pairs, set $p=m-2r$.  The insertion formula defining
$J_\pi$ extends complex-linearly to an isometry
\[
 \widetilde J_\pi:V_{\mathbb C}^{\otimes p}
 \longrightarrow V_{\mathbb C}^{\otimes m}.
\]
We write $J_\pi=\widetilde J_\pi|_{\mathcal H_p}$ for its trace-free restriction.  A
\emph{row--column block label} is an ordered pair $a=(\pi_R,\pi_C)$ of partial matchings.  If
$\pi_R$ and $\pi_C$ leave respectively $p$ and $q$ positions unmatched, define
\[
 W_a=\mathcal H_p\otimes\mathcal H_q,
 \qquad
 \widetilde J_a=\widetilde J_{\pi_R}\otimes\widetilde J_{\pi_C},
 \qquad
 S_a=\widetilde J_a|_{W_a}.
\]
Here row tensor positions and column tensor positions are regrouped using
\[
 (\mathbb C^{n\times n})^{\otimes m}
 \cong V_{\mathbb C,R}^{\otimes m}\otimes V_{\mathbb C,C}^{\otimes m}.
\]
Every $S_a$ is an isometry because each normalized cup has norm one.
Equation~\eqref{eq:B.metric-invariance}, applied to every row and column cup, shows that the cups
are unchanged by the rotations.  The unmatched row and column indices transform by
$U^{\otimes p}$ and $V^{\otimes q}$, respectively.  Therefore
\begin{equation}
 (U\otimes V)^{\otimes m}S_a
 =S_a\bigl(U^{\otimes p}\otimes V^{\otimes q}\bigr).     \tag{B.2b}\label{eq:B.block-intertwining}
\end{equation}
This is the insertion--rotation identity used in the main proof.

There are $N_{m,r}$ matchings with $r$ pairs by \eqref{eq:B.2}.  Consequently the number of
one-side matchings is bounded by
\[
 \sum_{r=0}^{\lfloor m/2\rfloor}N_{m,r}
 \le\sum_{r=0}^{\lfloor m/2\rfloor}\frac{m!}{2^rr!}
 \le e^{1/2}m!
 \le(2m)^m.
\]
The first inequality drops the factor $(m-2r)!\ge1$ from the denominator in
\eqref{eq:B.2}; the second extends the sum and uses
$\sum_{r\ge0}(2^rr!)^{-1}=e^{1/2}$; the last follows from $m!\le m^m$ and
$e^{1/2}<2\le2^m$.  Therefore the number $D_m$ of row--column block labels satisfies
\begin{equation}
 D_m\le(2m)^{2m}\le(4m)^{4m}.                            \tag{B.3}\label{eq:B.3}
\end{equation}

On the Hilbert direct sum $\mathcal W=\bigoplus_{a=1}^{D_m}W_a$, define
\[
 S(z_1,\ldots,z_{D_m})=\sum_{a=1}^{D_m}S_az_a.
\]
The next proposition shows that these partial-matching blocks form a well-conditioned coordinate
system for the full matrix-tensor space.

\begin{proposition}[Stable pairing frame]\label{ext:pairing-frame}
The maps $S_a$ span $(\mathbb C^{n\times n})^{\otimes m}$ and their number satisfies
$D_m\le(4m)^{4m}$.  If $(4m)^{4m}\le n/2$, then, for every $z_a\in W_a$,
\[
 \frac12\sum_{a=1}^{D_m}\norm{z_a}_F^2
 \le\norm{\sum_{a=1}^{D_m}S_az_a}_F^2
 \le\frac32\sum_{a=1}^{D_m}\norm{z_a}_F^2.
\]
\end{proposition}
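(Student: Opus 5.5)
The proof has three parts: spanning, the count $D_m\le(4m)^{4m}$, and the frame inequality. The count is already \eqref{eq:B.3}.

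\emph{Spanning.} Apply the elementary trace decomposition of Appendix~\ref{app:trace-decomposition} separately to the row and column factors. It works verbatim over $\mathbb C$, since $C_{ab}$ is the complex-linear extension and $(\ker C)^\perp=\operatorname{range}(C^*)$ still holds in finite dimensions. It shows that $V_{\mathbb C}^{\otimes m}$ is spanned by $J_\pi h$ with $h\in\mathcal H_{m-2|\pi|}$. Tensoring the row and column spans and using $(\mathbb C^{n\times n})^{\otimes m}\cong V_{\mathbb C,R}^{\otimes m}\otimes V_{\mathbb C,C}^{\otimes m}$ gives that the ranges of the $S_a$ span.

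\emph{Frame inequality.} I would use the Gram expansion
\[
 \norm{\sum_a S_az_a}^2=\sum_a\norm{z_a}^2+\sum_{a\ne b}\langle z_a,S_a^*S_bz_b\rangle .
\]
This uses that each $S_a$ is an isometry, because each normalized cup has norm one and the cups occupy disjoint positions. The key claim is $\norm{S_a^*S_b}_{\op}\le1/n$ for $a\ne b$. Granting it, Cauchy--Schwarz gives
\[
 \Bigl|\sum_{a\ne b}\langle z_a,S_a^*S_bz_b\rangle\Bigr|\le\frac1n\Bigl(\sum_a\norm{z_a}\Bigr)^2\le\frac{D_m}{n}\sum_a\norm{z_a}^2 .
\]
Since $D_m\le(4m)^{4m}\le n/2$, this is at most $\tfrac12\sum_a\norm{z_a}^2$, which proves both bounds.

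\emph{The off-diagonal estimate (main obstacle).} The operator factorizes as $S_a^*S_b=(J_{\pi_R}^*J_{\pi_R'})\otimes(J_{\pi_C}^*J_{\pi_C'})$. Each factor has norm at most one, and at least one factor compares distinct matchings. So it suffices to show $\norm{J_\pi^*J_{\pi'}}_{\op}\le1/n$ when $\pi\ne\pi'$ on one side, with both maps restricted to trace-free inputs.

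To compute $J_\pi^*J_{\pi'}$, overlay the two matchings as a graph on $m$ positions. Common pairs are closed $2$-cycles and contribute a factor $n^{-1}\tr I=1$. Every other component is an alternating path or a longer cycle. Endpoints of a path are unmatched positions of $h$ or of $h'$, while interior vertices are cup positions.

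\emph{Case analysis.} Suppose some component links two unmatched positions of the same input $h'$, through a path whose interior is all cups. Then $J_\pi^*J_{\pi'}h'$ contains a contraction of $h'$ over those two positions, which vanishes by trace-freeness. The same holds on the adjoint side for $h$, so components joining two unmatched positions on one side vanish. The surviving components are of three kinds:
\begin{enumerate}
 \item cycles of length $2\ell\ge4$, each contributing $n^{-\ell}\cdot n=n^{1-\ell}\le n^{-1}$;
 \item paths joining an unmatched position of $h$ to one of $h'$, through $2j$ cups and contributing $n^{-j}$ times an identity identification;
 \item trivial identities, when the unmatched positions coincide.
\end{enumerate}
Because $\pi\ne\pi'$, at least one component is nontrivial, either a cycle of length at least $4$ or a path with $j\ge1$. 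Every factor is at most one, and at least one is at most $n^{-1}$. The resulting map is a coordinate identification times these scalars, so its norm is at most $1/n$.

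The delicate point is bookkeeping the index sets when $|\pi|\ne|\pi'|$, so that $p\ne p'$. In that case some path must run between two unmatched positions of the same side, because otherwise a counting argument would force $p=p'$. Such a path is killed by trace-freeness, so $S_a^*S_b=0$ whenever $p\ne p'$ (or $q\ne q'$). I would verify this parity and counting claim carefully, since it is where the restriction to $\mathcal H_p$ is genuinely used.
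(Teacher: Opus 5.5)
Your proposal is correct and follows the paper's proof essentially step for step. Both arguments obtain spanning from the trace decomposition applied separately to the row and column factors, use isometric diagonal blocks, and prove the off-diagonal bound $\norm{S_a^*S_b}_{\op}\le1/n$ by overlaying the two matchings into alternating paths and cycles: same-side paths vanish by trace-freeness, cross paths contribute $n^{-s}$, noncommon cycles contribute $n^{1-s}$ with $s\ge2$, and unequal cup counts force a same-side path by endpoint counting. Both then finish with the Gram bound using $D_m\le n/2$, where your weaker constant $D_m/n$ in place of $(D_m-1)/n$ is harmless.
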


\begin{proof}
The count is \eqref{eq:B.3}, and the elementary trace decomposition preceding
\eqref{eq:B.2a}, applied separately to row and column indices, proves that the images of the
$S_a$ span.  Since $S_a$ is an isometry, every diagonal block of the Gram operator
$G_I=S^*S$ is the identity.  We prove that
\begin{equation}
 \norm{S_a^*S_b}_{\op}\le\frac1n\qquad(a\ne b).           \tag{B.4}\label{eq:B.4}
\end{equation}

First consider only one index family.  Let $\pi\ne\pi'$ be two partial matchings.  Superimpose
their cups and cancel every cup common to both.  Each remaining connected component is an
alternating path or an alternating cycle.  A path whose endpoints lie in different trace-free
tensors and that uses $s$ cups from each matching equals $n^{-s}$ times an identity coordinate
wire, possibly followed by a permutation; its operator norm is $n^{-s}$.  A cycle using $s$
cups from each matching has one freely summed coordinate and coefficient
$n\,n^{-s}=n^{1-s}$; a noncommon cycle has $s\ge2$.  A path whose two endpoints lie in the same
trace-free tensor contracts two of its positions and is therefore zero.  These statements follow
directly by assigning an index to every vertex: each normalized cup contributes $n^{-1/2}$, a
path leaves one input-to-output Kronecker delta, and a cycle leaves one free sum.

If $\pi$ and $\pi'$ contain different numbers of cups, endpoint counting forces a path with both
endpoints in the matching having fewer cups, so the overlap is zero by trace-freeness.  If they
contain the same number of cups but differ, some nontrivial component is either a cross path with
$s\ge1$ or a cycle with $s\ge2$, and hence contributes at most $n^{-1}$; all other components
have norm at most one.  This proves the one-side overlap bound.  For row--column labels,
$S_a^*S_b$ is the tensor product of its row and column overlaps.  Distinct labels differ on at
least one side, which contributes at most $n^{-1}$, while the other side contributes at most one.
This proves \eqref{eq:B.4}.

For $z=(z_a)_a\in\mathcal W$, \eqref{eq:B.4} and $2uv\le u^2+v^2$ give
\begin{align}
 \left|\ip z{(G_I-I)z}\right|
 &\le\frac1n\sum_{a\ne b}\norm{z_a}\norm{z_b}
 \le\frac{D_m-1}{n}\sum_a\norm{z_a}^2.                 \tag{B.5}\label{eq:B.5}
\end{align}
The hypothesis and \eqref{eq:B.3} imply $D_m\le n/2$, and therefore
\begin{equation}
 \frac12I\preceq G_I\preceq\frac32I.                    \tag{B.6}\label{eq:B.6}
\end{equation}
This is equivalent to the displayed norm comparison.  In particular, $S$ is injective; together
with spanning, it is an isomorphism from $\mathcal W$ onto the matrix-tensor space.
\end{proof}

\subsection{Elementary Haar averaging on a trace-free block}

We next prove the only Haar estimate needed for one block.  It is deliberately stated for an
arbitrary tensor in $\mathcal H_p\otimes\mathcal H_q$, so the row and column factors may be
entangled.

\begin{proposition}[Trace-free Haar estimate]\label{prop:tracefree-haar}
Let $0\le p,q\le m$ and suppose $(4m)^{4m}\le n/2$.  For every
$z\in\mathcal H_p\otimes\mathcal H_q$,
\begin{align}
 &\E_{U,V}(U^{\otimes p}\otimes V^{\otimes q})zz^*
 (U^{\otimes p}\otimes V^{\otimes q})^* \notag\\
 &\qquad\preceq
 4(p!q!)^2n^{-(p+q)}\norm z^2
 I_{\mathcal H_p\otimes\mathcal H_q}.                   \tag{B.7}\label{eq:B.7}
\end{align}
Here $U,V$ are independent Haar matrices in $\mathcal O_n$, and $0!=1$.
\end{proposition}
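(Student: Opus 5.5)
We plan to compute the twirled operator exactly with the orthogonal Weingarten formula of Collins and \'Sniady, separately in the $U$ and $V$ variables, and then use trace-freeness to discard most pairings. Since $U$ and $V$ are independent, the average factors into the composition of a $U$-average on the row positions and a $V$-average on the column positions. It therefore suffices to control, for a single Haar $U$ and general operator argument $X$ on $\mathcal H_p\otimes(\text{anything})$, the map $X\mapsto\E\,U^{\otimes p}XU^{*\otimes p}$ restricted to trace-free inputs, and then apply it twice with $X=zz^*$.

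\textbf{Step 1: expand the single twirl.} We write $\E U^{\otimes p}XU^{*\otimes p}$ entrywise as $\E\prod U_{i_r a_r}U_{j_r b_r}$ over $2p$ factors. The Weingarten formula expresses this as $\sum_{\mathfrak m,\mathfrak n}\mathrm{Wg}(\mathfrak m,\mathfrak n)\,\delta_{\mathfrak m}(i,j)\,\delta_{\mathfrak n}(a,b)$, where $\mathfrak m,\mathfrak n$ run over perfect matchings of the $2p$ slots (the $p$ ``output'' slots $i$ and the $p$ ``input'' slots $j$). Any matching $\mathfrak n$ that pairs two $a$-slots, or two $b$-slots, contracts two indices of the trace-free tensor $z$ (respectively $\bar z$), and therefore vanishes. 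Only the $p!$ cross matchings $\mathfrak n=\sigma$ (permutations) survive on the input side.

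\textbf{Step 2: restrict the output side.} On the output side we would like the same restriction, since the output also lands in $\mathcal H_p$. Here we use that the averaged operator commutes with the twirl and preserves $\mathcal H_p$. Concretely, we compress by the orthogonal projection $\Pi_p$ onto $\mathcal H_p$ (on both sides), which kills every $\mathfrak m$ containing an $i$--$i$ or $j$--$j$ pair. This gives
\[
\E\,U^{\otimes p}XU^{*\otimes p}=\sum_{\tau,\sigma\in S_p}\mathrm{Wg}(\tau,\sigma)\,\Pi_pP_\tau\Pi_p\,\tr(P_\sigma^*X),
\]
with permutation operators $P_\tau$ of norm one. Applying this to $zz^*$ in both variables, each term is a rank-at-most-one-type operator bounded by $\norm{P_{\sigma}}\,|\langle z,P_{\sigma}\cdots z\rangle|\le\norm z^2$ times identity. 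Summing over $(p!)^2(q!)^2$ pairs gives the bound $(p!q!)^2\max|\mathrm{Wg}_p|\max|\mathrm{Wg}_q|\,\norm z^2$. For the operator inequality, note that the left side is positive semidefinite, so bounding its operator norm suffices.

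\textbf{Step 3: bound the Weingarten coefficients (the main obstacle).} We need $|\mathrm{Wg}(\tau,\sigma)|\le2n^{-p}$ uniformly for $p\le m$ under $(4m)^{4m}\le n/2$. We plan to show this via the Gram matrix $G$ of the $(2p-1)!!$ pairings, whose entries are $n^{\#\text{loops}}$ and whose inverse is $\mathrm{Wg}$. Write $G=n^p(I+E)$. Off-diagonal entries are at most $n^{p-1}$, so the row sums of $E$ are bounded by $(2p-1)!!/n\le(2m)^m/n\le1/2$. A Neumann series then gives $\norm{G^{-1}}_{\infty\to\infty}\le2n^{-p}$, hence every entry satisfies $|\mathrm{Wg}|\le2n^{-p}$. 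The delicate point is checking that the restriction in Step 2 uses entries of the full $(2p)$-slot Weingarten matrix, not only those on the permutation submatrix, so that the above entrywise bound is legitimate. Combining the two factors $2n^{-p}$ and $2n^{-q}$ yields the constant $4(p!q!)^2n^{-(p+q)}$. Finally, since all operators have real coefficients, the complex statement restricts to the real block $W_\alpha$.
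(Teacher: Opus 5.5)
Your proposal is correct and follows essentially the same route as the paper: a Weingarten expansion in $U$ and $V$ separately; trace-freeness to kill every non-cross pairing on the input side (via $z$) and on the output side, where your compression by $\Pi_p$ plays the same role as the paper's testing against an arbitrary trace-free $x$; unitarity of the surviving permutation operators; and the entrywise bound $|\mathrm{Wg}|\le 2n^{-p}$ on the full pairing Weingarten matrix, obtained from diagonal dominance of the Gram matrix. One wording fix: the individual terms $\Pi_pP_\tau\Pi_p\otimes\Pi_qP_{\tau'}\Pi_q$ are neither positive nor rank one, so they can only be bounded in operator norm, after which you use that the averaged operator is positive semidefinite and supported on $\mathcal H_p\otimes\mathcal H_q$ — which is exactly what your last line of Step~2 does.
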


\begin{proof}
We first record the exact orthogonal Weingarten formula and the elementary coefficient bound that
will be used.  Let $\mathcal P_2(2p)$ be the set of pairings of $2p$ positions.  If
$\pi\in\mathcal P_2(2p)$ and $\mathbf x=(x_1,\ldots,x_{2p})$, set
\[
 \delta_{\mathbf x}^{\pi}
 =\prod_{\{r,s\}\in\pi}\mathbf1_{\{x_r=x_s\}}.
\]
Collins and P.~\'{S}niady~\cite[Corollary~3.4 and Lemma~3.5]{CollinsSniady} give
\begin{equation}
 \E_U\prod_{r=1}^{2p}U_{x_ry_r}
 =\sum_{\pi,\sigma\in\mathcal P_2(2p)}
   \delta_{\mathbf x}^{\pi}\delta_{\mathbf y}^{\sigma}
   \operatorname{Wg}_{n,p}^{O}(\pi,\sigma).             \tag{B.8}\label{eq:B.8}
\end{equation}
For $p=0$, there is one empty pairing, the empty product is one, and
$\operatorname{Wg}_{n,0}^{O}=1$.

For $p\ge1$ and $\pi\in\mathcal P_2(2p)$, define the pairing tensor
\[
 v_\pi
 =\sum_{\mathbf x\in[n]^{2p}}
   \delta_{\mathbf x}^{\pi}\,
   e_{x_1}\otimes\cdots\otimes e_{x_{2p}}
 \in(\mathbb C^n)^{\otimes 2p}.
\]
The pairing Gram matrix $G^{(p)}$, indexed by $\mathcal P_2(2p)$, is therefore
\[
 G^{(p)}_{\pi,\sigma}
 =\langle v_\pi,v_\sigma\rangle
 =\sum_{\mathbf x\in[n]^{2p}}
   \delta_{\mathbf x}^{\pi}\delta_{\mathbf x}^{\sigma}
 =n^{\ell(\pi,\sigma)}.
\]
Here $\ell(\pi,\sigma)$ is the number of connected components in the graph obtained by
superimposing the two pairings.  All coordinates within one component must be equal, while
different components can be chosen independently from $[n]$; this gives the last equality.
The diagonal entries equal $n^p$.  If $\pi\ne\sigma$, their overlay contains a cycle
of length at least four, rather than two, so $\ell(\pi,\sigma)\le p-1$ and
$G^{(p)}_{\pi,\sigma}\le n^{p-1}$.  The number of pairings satisfies
\[
 M_p=(2p-1)!!\le(2p)^p\le(2m)^m\le n/2.
\]
Write $n^{-p}G^{(p)}=I+E$.  The matrix $E$ is real symmetric, and hence
\[
 \norm E_{\op}\le\norm E_{\infty}
 \le\frac{M_p-1}{n}\le\frac12.
\]
Thus $G^{(p)}$ is invertible.  In this range the Weingarten matrix in \eqref{eq:B.8} is its
ordinary inverse, so
\begin{equation}
 \norm{\operatorname{Wg}_{n,p}^{O}}_{\op}
 \le2n^{-p},
 \qquad
 \abs{\operatorname{Wg}_{n,p}^{O}(\pi,\sigma)}\le2n^{-p}. \tag{B.9}\label{eq:B.9}
\end{equation}
The same entrywise bound holds for $p=0$.

We now test the left side of \eqref{eq:B.7} against an arbitrary
$x\in\mathcal H_p\otimes\mathcal H_q$.  Use multi-indices
$\mathbf i,\mathbf a\in[n]^p$ and $\mathbf j,\mathbf c\in[n]^q$, and write
\[
 A(U,V)
 =\left\langle x,(U^{\otimes p}\otimes V^{\otimes q})z\right\rangle
 =\sum_{\mathbf i,\mathbf j,\mathbf a,\mathbf c}
   \overline{x_{\mathbf i,\mathbf j}}z_{\mathbf a,\mathbf c}
   \prod_{r=1}^pU_{i_ra_r}\prod_{s=1}^qV_{j_sc_s}.
\]
Applying \eqref{eq:B.8} separately to $U$ and $V$ in $\abs{A(U,V)}^2$ gives the exact identity
\begin{equation*}
 \E\abs A^2
 =\sum_{\pi,\sigma\in\mathcal P_2(2p)}
   \sum_{\rho,\tau\in\mathcal P_2(2q)}
   \operatorname{Wg}_{n,p}^{O}(\pi,\sigma)
   \operatorname{Wg}_{n,q}^{O}(\rho,\tau)
   X_{\pi,\rho}(x)Z_{\sigma,\tau}(z),
\end{equation*}
where, concatenating unprimed and primed multi-indices in that order,
\begin{align*}
 X_{\pi,\rho}(x)
 &=\sum_{\mathbf i,\mathbf i',\mathbf j,\mathbf j'}
   \overline{x_{\mathbf i,\mathbf j}}x_{\mathbf i',\mathbf j'}
   \delta_{(\mathbf i,\mathbf i')}^{\pi}
   \delta_{(\mathbf j,\mathbf j')}^{\rho},\\
 Z_{\sigma,\tau}(z)
 &=\sum_{\mathbf a,\mathbf a',\mathbf c,\mathbf c'}
   z_{\mathbf a,\mathbf c}\overline{z_{\mathbf a',\mathbf c'}}
   \delta_{(\mathbf a,\mathbf a')}^{\sigma}
   \delta_{(\mathbf c,\mathbf c')}^{\tau}.
\end{align*}

Split the $2p$ row positions into the unprimed and primed halves.  Call a pairing \emph{cross}
if every pair contains one position from each half.  If $\pi$ is not cross, it contains a pair
inside one half.  Summing the common index in that pair takes an unnormalized contraction of two
row positions of $x$ or $\overline x$, pointwise in all remaining row and column indices.  This
is zero because
\[
 (C_{rs}^{(p)}\otimes I_{\mathcal H_q})x=0
 \quad\text{and}\quad
 (C_{rs}^{(p)}\otimes I_{\mathcal H_q})\overline x
 =\overline{(C_{rs}^{(p)}\otimes I_{\mathcal H_q})x}=0.
\]
The deltas from the other pairs do not involve either contracted position and therefore do not
affect this cancellation.  Hence $X_{\pi,\rho}(x)=0$ unless $\pi$ is cross.  The identical
argument in the column positions, and then for $z$, shows that a term survives only if all four
pairings $\pi,\rho,\sigma,\tau$ are cross.

Every cross pairing in $\mathcal P_2(2p)$ has the unique form
\[
 \pi_\gamma=\{\{r,\gamma(r)'\}:1\le r\le p\},
 \qquad \gamma\in\mathfrak S_p,
\]
where $\mathfrak S_p$ is the symmetric group of all permutations of $\{1,\ldots,p\}$.
Likewise, for $\eta\in\mathfrak S_q$, write
\[
 \rho_\eta=\{\{s,\eta(s)'\}:1\le s\le q\}
 \in\mathcal P_2(2q).
\]
More generally, for $\theta\in\mathfrak S_d$, define the place-permutation operator on order-$d$ tensors
by
\[
 (\Pi_\theta h)_{i_1\cdots i_d}
 =h_{i_{\theta^{-1}(1)}\cdots i_{\theta^{-1}(d)}}.
\]
Direct substitution of the cross-pairing deltas gives, for
$\gamma,\alpha\in\mathfrak S_p$ and $\eta,\beta\in\mathfrak S_q$,
\[
 X_{\pi_\gamma,\rho_\eta}(x)
 =\left\langle x,(\Pi_\gamma\otimes\Pi_\eta)x\right\rangle,
 \qquad
 Z_{\pi_\alpha,\rho_\beta}(z)
 =\overline{\left\langle z,(\Pi_\alpha\otimes\Pi_\beta)z\right\rangle}.
\]
Permutation operators are unitary, even when $x$ and $z$ are entangled across their row and
column factors.  Thus every surviving $X$ has magnitude at most $\norm x^2$ and every surviving
$Z$ has magnitude at most $\norm z^2$.  There are $p!$ cross row pairings and $q!$ cross column
pairings.  Summing over the four pairing indices and using \eqref{eq:B.9} for the two
Weingarten coefficients yields
\[
 \E\abs A^2
 \le4(p!q!)^2n^{-(p+q)}\norm x^2\norm z^2.
\]
The averaged operator preserves and is supported on $\mathcal H_p\otimes\mathcal H_q$ because
that subspace is invariant under $U^{\otimes p}\otimes V^{\otimes q}$.  The last scalar
inequality for every $x$ in that subspace is therefore equivalent to \eqref{eq:B.7}.
\end{proof}

\subsection{The measurement graph for one frame block}\label{app:measurement-graph}

Fix a row--column block $a=(\pi_R,\pi_C)$, let $p,q$ be its numbers of unmatched row and column
positions, and take $z\in W_a=\mathcal H_p\otimes\mathcal H_q$.  The normalized cups commute
with every orthogonal transformation, so Proposition~\ref{prop:tracefree-haar} and
$S_a=\widetilde J_a|_{W_a}$ imply
\begin{align*}
 &\E_{U,V}(U\otimes V)^{\otimes m}S_azz^*S_a^*
       (U\otimes V)^{*\otimes m}\\
 &\qquad\preceq
 4(p!q!)^2n^{-(p+q)}\norm z^2S_aS_a^*\\
 &\qquad\preceq
 4(p!q!)^2n^{-(p+q)}\norm z^2
       \widetilde J_a\widetilde J_a^*.
\end{align*}
The second inequality uses
$S_aS_a^*=\widetilde J_a\Pi_{W_a}\widetilde J_a^*$, where $\Pi_{W_a}$ is the orthogonal
projection onto the trace-free unmatched tensor space, and $0\preceq\Pi_{W_a}\preceq I$.
Taking the trace against the positive operator $P^{\otimes m}$ gives
\begin{equation}
 \E_{U,V}\norm{P^{\otimes m}(U\otimes V)^{\otimes m}S_az}^2
 \le4(p!q!)^2n^{-(p+q)}\norm z^2
   \tr\!\left(P^{\otimes m}\widetilde J_a\widetilde J_a^*\right). \tag{B.10}\label{eq:B.10}
\end{equation}

Let $r_R=(m-p)/2$ and $r_C=(m-q)/2$ be the numbers of row and column cups.  In the trace on the
right side of \eqref{eq:B.10}, the squared normalizations of these cups contribute
\begin{equation}
 n^{-r_R-r_C}=n^{-m+(p+q)/2}.                             \tag{B.11}\label{eq:B.11}
\end{equation}
It remains to bound the index contraction left after removing the factor \eqref{eq:B.11}.  Draw one vertex
for each of the $m$ copies of $P$.  The row and column partial matchings give row and column
edges.  Their union has degree at most two and hence is a disjoint union of alternating paths and
even cycles.  A vertex unmatched on the row side is called a row endpoint, and similarly for a
column endpoint.  Let $o$ be the number of paths with unlike endpoints, and let $u_R,u_C$ be the
numbers with two row endpoints and two column endpoints.  There are $p$ row endpoints and $q$
column endpoints, so
\begin{equation}
 2u_R+o=p,\qquad 2u_C+o=q.                                \tag{B.12}\label{eq:B.12}
\end{equation}

We now give the complete indexed calculation for one component.  At vertex $v$, write the entry
of $P$ as
\[
 P_{(i_v,j_v),(i'_v,j'_v)}.
\]
A row edge $v$--$w$ imposes $i_v=i_w$ and $i'_v=i'_w$; a column edge imposes
$j_v=j_w$ and $j'_v=j'_w$.  At a row endpoint the identity on the unpaired position imposes
$i_v=i'_v$, and at a column endpoint it imposes $j_v=j'_v$.  These rules include every index in
the contraction and show directly that different connected components factor.

Define the linear map $\Phi:M_n(\mathbb C)\to M_n(\mathbb C)$ by
\begin{equation}
 [\Phi(X)]_{ii'}=\sum_{j,j'=1}^n
       P_{(i,j),(i',j')}X_{jj'}.                          \tag{B.13a}\label{eq:B.13a}
\end{equation}
Its matrix in the matrix-unit bases is the realignment
$R_{(i,i'),(j,j')}=P_{(i,j),(i',j')}$.  The projection $P$ in Lemma~\ref{lem:twirl} is real,
so $R^*=R^T$; we retain adjoint notation because it displays the positive operators in the
calculation.  Use the convention
$\operatorname{vec}(A)=\sum_{i,j}A_{ij}e_i\otimes e_j$.  Because $P$ is a rank-$k$ orthogonal
projection, choose matrices $A_1,\ldots,A_k\in M_n(\mathbb C)$ such that
$\operatorname{vec}(A_1),\ldots,\operatorname{vec}(A_k)$ form an orthonormal basis of
$\operatorname{range}(P)$.  Then
$P=\sum_{s=1}^k\operatorname{vec}(A_s)\operatorname{vec}(A_s)^*$, and hence
$\Phi(X)=\sum_sA_sXA_s^*$.  Thus $\Phi$ and its Hilbert--Schmidt adjoint $\Phi^*$ are completely
positive.  Moreover, $0\preceq P\preceq I$ implies, for all $X\succeq0$,
\begin{equation}
 0\preceq\Phi(X)\preceq(\tr X)I,
 \qquad
 0\preceq\Phi^*(X)\preceq(\tr X)I.                       \tag{B.13b}\label{eq:B.13b}
\end{equation}
Indeed, for vectors $x,y\in\mathbb C^n$,
\[
 y^*\Phi(xx^*)y
 =\langle y\otimes\overline{x},P(y\otimes\overline{x})\rangle
 \le\norm y^2\norm x^2;
\]
spectrally decompose $X$ to obtain the first inequality, and use
$y^*\Phi^*(xx^*)y=x^*\Phi(yy^*)x$ for the second.  Finally,
\begin{equation}
 \tr\Phi(I)=\tr\Phi^*(I)=\tr P=k,
 \qquad
 \operatorname{Tr}_{\mathrm{HS}}(\Phi^*\Phi)=\norm R_F^2=\norm P_F^2=k. \tag{B.13c}\label{eq:B.13c}
\end{equation}
Here $\operatorname{Tr}_{\mathrm{HS}}$ is the trace of an operator on the $n^2$-dimensional
Hilbert space $M_n(\mathbb C)$.  To spell out the two applications of
\eqref{eq:B.13b}, complete positivity first gives
$\Phi(I)\succeq0$ and $\Phi^*(I)\succeq0$.  Apply the bound for $\Phi^*$ to the positive
matrix $X=\Phi(I)$ and the bound for $\Phi$ to the positive matrix
$X=\Phi^*(I)$.  Using \eqref{eq:B.13c} to evaluate the two traces gives
\[
 \Phi^*(\Phi(I))
 \preceq\tr(\Phi(I))I=kI,
 \qquad
 \Phi(\Phi^*(I))
 \preceq\tr(\Phi^*(I))I=kI.
\]
Equivalently,
\begin{equation}
 \Phi^*\Phi(I)\preceq kI,
 \qquad \Phi\Phi^*(I)\preceq kI.                         \tag{B.13d}\label{eq:B.13d}
\end{equation}

To identify the component contractions, orient and relabel an alternating component.  For a
cycle of length $2\ell$, take row edges $(1,2),(3,4),\ldots,(2\ell-1,2\ell)$ and column edges
$(2,3),(4,5),\ldots,(2\ell,1)$.  Substitution of the preceding Kronecker-delta rules gives
\[
 \sum_{\substack{x_s,x'_s\\y_s,y'_s}}
 \prod_{s=1}^{\ell}
 P_{(x_s,y_{s-1}),(x'_s,y'_{s-1})}
 P_{(x_s,y_s),(x'_s,y'_s)}
 =\operatorname{Tr}_{\mathrm{HS}}\bigl((\Phi^*\Phi)^\ell\bigr),        \tag{B.13e}\label{eq:B.13e}
\]
where $y_0=y_\ell$ and $y'_0=y'_\ell$.  Reversing the starting color replaces
$\Phi^*\Phi$ by $\Phi\Phi^*$ and leaves the value unchanged.  If
$d=\operatorname{vec}(I)$, the identical index substitution for paths gives
\begin{align}
 \text{row--column path with $2\ell-1$ vertices:}\quad
 &d^*R(R^*R)^{\ell-1}d
   =\tr\!\left[\Phi((\Phi^*\Phi)^{\ell-1}(I))\right],       \tag{B.13f}\label{eq:B.13f}\\
 \text{row--row path with $2\ell$ vertices:}\quad
 &d^*(RR^*)^\ell d
   =\langle I,(\Phi\Phi^*)^\ell(I)\rangle_{\mathrm{HS}}.   \tag{B.13g}\label{eq:B.13g}
\end{align}
The column--row and column--column cases exchange $\Phi$ and $\Phi^*$.  Equations
\eqref{eq:B.13e}--\eqref{eq:B.13g} are simply matrix multiplication with
$R_{(i,i'),(j,j')}=P_{(i,j),(i',j')}$; the endpoint entries of $d$ are precisely the endpoint
conditions $i=i'$ or $j=j'$ stated above.

The bounds are now immediate and contain no diagrammatic peeling step.  The eigenvalues of the
positive operator $\Phi^*\Phi$ are nonnegative and sum to $k$ by \eqref{eq:B.13c}, so
$\operatorname{Tr}_{\mathrm{HS}}((\Phi^*\Phi)^\ell)\le k^\ell$.  The maps
$\Phi^*\Phi$ and $\Phi\Phi^*$ preserve the positive-semidefinite order; \eqref{eq:B.13d} therefore gives,
by induction,
\[
 (\Phi^*\Phi)^r(I)\preceq k^rI,
 \qquad (\Phi\Phi^*)^r(I)\preceq k^rI.
\]
Using these inequalities in \eqref{eq:B.13f}--\eqref{eq:B.13g}, together with $\tr\Phi(I)=k$, proves
\begin{align}
 \text{cycle with $2\ell$ $P$-vertices}&\le k^\ell, \notag\\
 \text{row--column path with $2\ell-1$ $P$-vertices}&\le k^\ell,\notag\\
 \text{row--row or column--column path with $2\ell$ $P$-vertices}&\le nk^\ell. \tag{B.14}\label{eq:B.14}
\end{align}

The graph contains $m$ copies of $P$.  Multiplying \eqref{eq:B.14} over its components gives
\begin{equation}
 n^{u_R+u_C}k^{(m+o)/2}.                                  \tag{B.15}\label{eq:B.15}
\end{equation}
Indeed, a path with unlike endpoints uses one fewer $P$-vertex than twice its power of $k$;
summing this deficit over the $o$ such paths gives the exponent $(m+o)/2$.

Multiply the factor in \eqref{eq:B.10}, the cup normalization \eqref{eq:B.11}, and the
measurement-graph estimate \eqref{eq:B.15}.  By \eqref{eq:B.12},
$u_R+u_C=(p+q-2o)/2$, so the total power of $n$ is
\[
 -(p+q)-m+\frac{p+q}{2}+\frac{p+q-2o}{2}=-m-o.
\]
Set $C_{\mathrm{ten},1}=8$.  Since
$4(p!q!)^2\le4(m!)^4\le4m^{4m}\le(C_{\mathrm{ten},1}m)^{C_{\mathrm{ten},1}m}$,
every unit vector $z\in W_a$---including vectors
entangled between its row and column factors---satisfies
\begin{equation}
 \E_{U,V}\norm{P^{\otimes m}(U\otimes V)^{\otimes m}S_az}^2
 \le(C_{\mathrm{ten},1}m)^{C_{\mathrm{ten},1}m}
 \left(\frac{k}{n^2}\right)^{(m+o)/2}.                    \tag{B.16}\label{eq:B.16}
\end{equation}
For odd $m$, both $p$ and $q$ are positive and odd.  Equations \eqref{eq:B.12} imply that $o$ is a
positive odd integer, so $o\ge1$.  Also $k\le n^2$ because $P$ acts on the $n^2$-dimensional
matrix space.  Hence $k/n^2\le1$, and the right side of \eqref{eq:B.16} is at most the same
prefactor times $(k/n^2)^{(m+1)/2}$.

\subsection{From one frame block to every tensor}\label{app:block-recombination}

Let $G_P$ be the block Gram operator obtained after Haar averaging and applying $P^{\otimes m}$.
It is positive semidefinite because
\[
 z^*G_Pz
 =\E_{U,V}\norm{P^{\otimes m}(U\otimes V)^{\otimes m}Sz}^2\ge0.
\]
For unit vectors $x\in W_a,y\in W_b$, positivity of the restriction to their two-dimensional
span gives
\[
 |\ip{x}{G_P(a,b)y}|^2
 \le\ip{x}{G_P(a,a)x}\ip{y}{G_P(b,b)y}.
\]
Indeed, the positive-semidefinite operator $G_P$ defines a positive-semidefinite sesquilinear
form $\langle u,v\rangle_{G_P}=\langle u,G_Pv\rangle$.  Cauchy--Schwarz for this form, applied
to the vectors supported in blocks $a$ and $b$, is exactly the displayed inequality.  Taking the
supremum over unit $x,y$ gives the corresponding block-operator inequality.
Equation \eqref{eq:B.16} and $o\ge1$ therefore give the block-operator bound
\[
 \norm{G_P(a,b)}_{\op}\le
 \gamma_m,\qquad
 \gamma_m=(C_{\mathrm{ten},1}m)^{C_{\mathrm{ten},1}m}
 \left(\frac{k}{n^2}\right)^{(m+1)/2}.
\]
For an arbitrary tensor $T\in(\mathbb C^{n\times n})^{\otimes m}$, write $T=Sz$ using the frame.  The
lower bound in \eqref{eq:B.6} makes $S$ injective, while the trace decomposition makes it surjective, so
this coefficient vector exists uniquely.  Then \eqref{eq:B.3},
Cauchy--Schwarz, and the lower Gram bound \eqref{eq:B.6} give
\[
\begin{aligned}
 \E\norm{P^{\otimes m}(U\otimes V)^{\otimes m}T}^2
 &=z^*G_Pz\\
 &\le\gamma_m\left(\sum_a\norm{z_a}\right)^2\\
 &\le\gamma_mD_m\sum_a\norm{z_a}^2\\
 &\le (C_{\mathrm{ten},2}m)^{C_{\mathrm{ten},2}m}
 \left(\frac{k}{n^2}\right)^{(m+1)/2}\norm{T}^2,
\end{aligned}
\]
where $C_{\mathrm{ten},2}$ is one fixed constant large enough to dominate the product of the
specific factors in \eqref{eq:B.3}, \eqref{eq:B.6}, and \eqref{eq:B.16}.  Finally set
\[
 C_{\mathrm{ten}}
 \defeq10\left(1+C_{\mathrm{ten},1}+C_{\mathrm{ten},2}\right).
\]
Substituting this envelope yields exactly Lemma~\ref{lem:twirl}; no constant is changed between
the displayed inequalities over the complexified tensor space.  The original $P$, $U$, and $V$
are real operators.  Their complex-linear extensions have the same operator actions and the
Hermitian norm of a real tensor equals its original Frobenius norm.  Restricting the proved
complex inequality to $T\in(\R^{n\times n})^{\otimes m}$ therefore gives precisely the real
statement of Lemma~\ref{lem:twirl}.

\section{Score identities and energy estimates}\label{app:score}

This appendix proves the analytic estimates used in Lemma~\ref{lem:path-tv}.  All random
expectations here are over independent Haar $U,V\in\mathcal O_n$ unless a Gaussian is explicitly
present.

\subsection{Rotation derivatives and integration by parts}\label{app:rotation-calculus}

For $1\le a<b\le n$, let $E^{ab}$ be the skew-symmetric matrix with entries $+1$ at
$(a,b)$ and $-1$ at $(b,a)$.  These matrices are orthonormal for the standard rotation
inner product $\ip AB_{\rm rot}=-\tfrac12\tr(AB)$.  Because
$(E^{ab})^T=-E^{ab}$,
\[
 (e^{sE^{ab}})^Te^{sE^{ab}}
 =e^{-sE^{ab}}e^{sE^{ab}}=I.
\]
Thus $e^{sE^{ab}}\in\mathcal O_n$ for every real $s$; it is the ordinary rotation through angle
$s$ in the $(a,b)$ coordinate plane and acts as the identity on the other coordinates.  Let
$f,h:\mathcal O_n\times\mathcal O_n\to\R$ be continuously differentiable functions.  Define
their row and column derivatives by
\[
\begin{aligned}
 \partial^R_{ab}f(U,V)&=\left.\frac d{ds}f(Ue^{sE^{ab}},V)\right|_{s=0},\\
 \partial^C_{ab}f(U,V)&=\left.\frac d{ds}f(U,Ve^{sE^{ab}})\right|_{s=0}.
\end{aligned}
\]
The integration-by-parts formula follows directly from Haar invariance.  Haar measure is unchanged by right
multiplication, so $Ue^{sE^{ab}}$ has the same distribution as $U$ for every fixed $s$.  Applying
this invariance to the product $fh$ gives
\[
 \E\!\left[(fh)(Ue^{sE^{ab}},V)\right]
 =\E[(fh)(U,V)].
\]
The right-hand side is constant in $s$.  Differentiate at $s=0$ and use the ordinary product rule:
\begin{align*}
 0
 &=\E\!\left[
   \left.\frac d{ds}(fh)(Ue^{sE^{ab}},V)\right|_{s=0}\right]\\
 &=\E[(\partial^R_{ab}f)h]+\E[f(\partial^R_{ab}h)].
\end{align*}
Rearranging proves the desired identity for a row derivative.  The same calculation, now using
the invariance of $V$ under $V\mapsto Ve^{sE^{ab}}$, proves it for a column derivative.  Thus, for
every row or column derivative $\partial_\alpha$,
\begin{equation}
 \E[(\partial_\alpha f)h]=-\E[f(\partial_\alpha h)]     \tag{C.1}\label{eq:C.1}
\end{equation}
This is the rotation-group analogue of integration by parts on a circle: there is no boundary term
because a full rotation returns to the starting point.

Equip the space of smooth functions on $\mathcal O_n\times\mathcal O_n$ with the Haar
$L_2$ inner product
\[
 \langle f,h\rangle_{L_2}:=\E[f(U,V)h(U,V)].
\]
Let $\partial_\alpha$ range over all row and column derivatives defined above, and define
\[
 \Delta_{\mathrm{rot}}=\sum_\alpha \partial_\alpha^2,
 \qquad K_0=-\Delta_{\mathrm{rot}},
\]
together with the bilinear form
\begin{equation}
 \Gamma(f,h)
 =\frac12\{\Delta_{\mathrm{rot}}(fh)
 -f\Delta_{\mathrm{rot}}h-h\Delta_{\mathrm{rot}}f\}
 =\sum_\alpha(\partial_\alpha f)(\partial_\alpha h).     \tag{C.2}\label{eq:C.2}
\end{equation}
The last equality follows by applying the ordinary product rule to each $\partial_\alpha^2(fh)$.

Apply \eqref{eq:C.1} with $h$ replaced by $\partial_\alpha h$ and sum over $\alpha$.  This gives
\begin{align*}
 \langle f,K_0h\rangle_{L_2}
 &=-\sum_\alpha\E[f\partial_\alpha^2h]
 =\sum_\alpha\E[(\partial_\alpha f)(\partial_\alpha h)]
 =\langle K_0f,h\rangle_{L_2}.
\end{align*}
Thus $K_0$ is self-adjoint.  Taking $h=f$ yields the energy identity
\begin{equation*}
 \langle f,K_0f\rangle_{L_2}
 =\sum_\alpha\E[(\partial_\alpha f)^2]\ge0,
\end{equation*}
so $K_0$ is positive semidefinite and $\Delta_{\mathrm{rot}}=-K_0$ is negative semidefinite.
Combining this calculation with \eqref{eq:C.2} gives
\begin{equation}
 \E\Gamma(f,h)=\E[fK_0h]=\E[(K_0f)h].                    \tag{C.3}\label{eq:C.3}
\end{equation}
These formulas hold componentwise for vector- or tensor-valued functions.

The kernel of $K_0$ can be identified explicitly.  If $K_0f=0$, then the quadratic-form identity
above gives
\[
 0=\langle f,K_0f\rangle_{L_2}
  =\sum_\alpha\E[(\partial_\alpha f)^2].
\]
Every summand is nonnegative, so $\partial_\alpha f=0$ almost everywhere for every row and column
rotation direction.  Since $f$ is smooth, these derivatives vanish everywhere.  Hence $f$ is
unchanged along each plane-rotation path:
\[
 f(Ue^{sE^{ab}},V)=f(U,V),
 \qquad
 f(U,Ve^{sE^{ab}})=f(U,V).
\]

Plane rotations generate $\mathrm{SO}(n)$.  One elementary verification is to apply successive
Givens rotations to reduce any $Q\in\mathrm{SO}(n)$ to a diagonal matrix with entries in
$\{+1,-1\}$.  Its determinant is $1$, so the number of $-1$ entries is even; each pair of such
entries is a rotation through angle $\pi$ in the corresponding coordinate plane.  Reversing the
reductions expresses $Q$ as a product of plane rotations.  It follows that
\[
 f(UQ_1,VQ_2)=f(U,V)
 \qquad(Q_1,Q_2\in\mathrm{SO}(n)).
\]

The determinant is continuous and takes only the values $\pm1$ on $\mathcal O_n$.  Thus matrices
with different determinants lie in different connected components.  The preceding generation by
plane rotations also shows that $\mathrm{SO}(n)$ is connected, because every plane rotation is
connected to the identity by its angle parameter.  If $R$ is any fixed reflection, then
$R\,\mathrm{SO}(n)$ is the determinant-$-1$ component and is connected as well.  Consequently,
$\mathcal O_n$ has exactly two connected components, indexed by the determinant, and
$\mathcal O_n\times\mathcal O_n$ has four connected components indexed by
\[
 (\det U,\det V)\in\{+1,-1\}^2.
\]
The invariance under $Q_1,Q_2\in\mathrm{SO}(n)$ therefore says that every $f\in\ker K_0$ is
constant on each of these four components.  Conversely, a function constant on each component
has every $\partial_\alpha f$ equal to zero and hence lies in $\ker K_0$.

It remains to describe this four-dimensional space.  Write $s=\det U$, $t=\det V$, and let
$c_{s,t}$ be the value of $f$ on the component labeled by $(s,t)$.  Every function on
$\{+1,-1\}^2$ has the unique expansion
\[
 c_{s,t}=a_0+a_1s+a_2t+a_{12}st,
\]
where
\[
 a_0=\frac14\sum_{s,t=\pm1}c_{s,t},\qquad
 a_1=\frac14\sum_{s,t=\pm1}s\,c_{s,t},\qquad
 a_2=\frac14\sum_{s,t=\pm1}t\,c_{s,t},\qquad
 a_{12}=\frac14\sum_{s,t=\pm1}st\,c_{s,t}.
\]
Therefore
\begin{equation*}
 \ker K_0
 =\operatorname{span}\{1,\det U,\det V,(\det U)(\det V)\}.
\end{equation*}
The only external input in this subsection is the following standard Poincar\'e inequality.

\begin{theorem}[Normalized orthogonal-group Poincar\'e inequality
  {\cite[Section~3.2]{SaloffCosteCompact}}]\label{thm:orthogonal-poincare}
Let $n\ge3$, and for a smooth function $g:\mathcal O_n\to\R$ define
\[
 \partial_{ab}g(Q)=\left.\frac d{ds}g(Qe^{sE^{ab}})\right|_{s=0}.
\]
On either connected component $\mathcal C$ of $\mathcal O_n$, if $g$ has mean zero under the
normalized Haar law on $\mathcal C$, then
\begin{equation*}
 (n-1)\E_{Q\in\mathcal C}[g(Q)^2]
 \le\sum_{1\le a<b\le n}\E_{Q\in\mathcal C}[(\partial_{ab}g(Q))^2].
\end{equation*}
\end{theorem}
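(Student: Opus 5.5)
The plan is to derive the inequality from the spectral gap of the bi-invariant Laplacian on $\mathrm{SO}(n)$, computed with the Peter--Weyl theorem and the Casimir eigenvalues. I first reduce to the identity component. If $\mathcal C=R\,\mathrm{SO}(n)$ for a fixed reflection $R$, put $\tilde g(Q)=g(RQ)$ for $Q\in\mathrm{SO}(n)$. The derivatives $\partial_{ab}$ act by right multiplication $Q\mapsto Qe^{sE^{ab}}$, so they commute with left translation and $\partial_{ab}\tilde g(Q)=(\partial_{ab}g)(RQ)$. Left translation by $R$ carries normalized Haar measure on $\mathrm{SO}(n)$ to normalized Haar measure on $\mathcal C$. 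Hence both sides of the inequality and the mean-zero hypothesis transfer unchanged, and it suffices to treat $\mathcal C=\mathrm{SO}(n)$.

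On $\mathrm{SO}(n)$, set $\Delta=\sum_{a<b}\partial_{ab}^2$. The right side equals $-\E[g\Delta g]$, by the integration-by-parts identity \eqref{eq:C.1}, which holds equally on $\mathrm{SO}(n)$ by right invariance. By Peter--Weyl, $L^2(\mathrm{SO}(n))$ is the orthogonal sum, over irreducible representations $\pi_\lambda$, of the spans of the matrix coefficients $Q\mapsto\langle u,\pi_\lambda(Q)v\rangle$. On such a coefficient,
\[
 \Delta\langle u,\pi_\lambda(Q)v\rangle=\Bigl\langle u,\pi_\lambda(Q)\sum_{a<b}d\pi_\lambda(E^{ab})^2v\Bigr\rangle .
\]
The operator $\sum_{a<b}d\pi_\lambda(E^{ab})^2$ is the Casimir element for the orthonormal basis $\{E^{ab}\}$ of the invariant form $-\tfrac12\tr(AB)$, so it acts by a scalar $-c(\lambda)$. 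The mean-zero condition removes exactly the trivial representation. The inequality is therefore equivalent to $c(\lambda)\ge n-1$ for every nontrivial irreducible representation of $\mathrm{SO}(n)$ occurring in $L^2$, and by Peter--Weyl every irreducible representation occurs.

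Next I fix the normalization. On the standard representation $\R^n$, $(E^{ab})^2=-(e_ae_a^T+e_be_b^T)$, and summing over $a<b$ gives $-(n-1)I$. So the standard representation has $c=n-1$. Let $n=2r$ or $2r+1$, and write the highest weight $\lambda=(\lambda_1,\ldots,\lambda_r)$ in standard coordinates, so that the standard representation is $e_1$. In this normalization the Casimir value is
\[
 c(\lambda)=\langle\lambda,\lambda+2\rho\rangle=\sum_{i=1}^r\lambda_i(\lambda_i+n-2i),
\]
which matches $c(e_1)=n-1$ and confirms the scale.

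For genuine representations of $\mathrm{SO}(n)$, as opposed to $\mathrm{Spin}(n)$, the $\lambda_i$ are integers with $\lambda_1\ge\cdots\ge\lambda_{r-1}\ge|\lambda_r|$; only the last coordinate may be negative, and only when $n$ is even. For $i<n/2$ we have $n-2i>0$ and $\lambda_i\ge0$, so every such term is nonnegative. When $n=2r$, the last term is $\lambda_r(\lambda_r+0)=\lambda_r^2\ge0$. A nontrivial $\lambda$ has $\lambda_1\ge1$. If $r\ge2$, or if $n$ is odd, the first term is $\lambda_1(\lambda_1+n-2)\ge n-1$. The one remaining case is $n=2$, where the bound would be $\lambda_1^2\ge1$, but this is excluded by the hypothesis $n\ge3$. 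Hence $c(\lambda)\ge n-1$ for every nontrivial $\lambda$, which proves the inequality, with equality attained on linear functions of the matrix entries.

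The main obstacle is the bookkeeping in this last step. I need the formula $c(\lambda)=\langle\lambda,\lambda+2\rho\rangle$ in exactly the metric induced by $-\tfrac12\tr$, including the correct $\rho$ for both the $B_r$ and $D_r$ root systems. I also need care to exclude half-integral spin weights, which do not descend to $\mathrm{SO}(n)$. For small cases such as $n=4$, where $\mathrm{SO}(4)$ is not simple, I would double-check the bound directly: anchor the normalization by the explicit standard-representation computation above, and verify the adjoint case $\lambda=e_1+e_2$, which gives $c=2n-4\ge n-1$ for $n\ge3$.
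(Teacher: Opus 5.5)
Your proposal is correct, and its first step is the same as the paper's. You transport the determinant-$-1$ component to $\mathrm{SO}(n)$ by left multiplication with a fixed reflection, using that the $\partial_{ab}$ are derivatives along right translations.

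The routes differ on the spectral gap itself. The paper does not prove it. It quotes Saloff-Coste and rescales his normalization (generator $\{2(n-2)\}^{-1}(-\sum_{a<b}\partial_{ab}^2)$ with gap $(n-1)/\{2(n-2)\}$) into the constant $n-1$. You instead derive the gap from Peter--Weyl and the Casimir eigenvalue $\sum_i\lambda_i(\lambda_i+n-2i)$ over integral highest weights.

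What your route buys:
\begin{itemize}
 \item The explicit computation $\sum_{a<b}(E^{ab})^2=-(n-1)I$ on $\R^n$ fixes the scale independently. This is exactly where a cite-and-rescale argument is most fragile.
 \item It exhibits the extremizers, namely the linear functions of the matrix entries.
\end{itemize}

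What it costs is more imported machinery than a single citation:
\begin{itemize}
 \item highest-weight theory;
 \item the Casimir formula relative to the inner product on weights dual to $-\tfrac12\tr$;
 \item the integrality condition that separates $\mathrm{SO}(n)$ weights from $\mathrm{Spin}(n)$ weights.
\end{itemize}
For $n=4$, where $\mathfrak{so}(4)$ is not simple, you need the Casimir formula for an arbitrary invariant inner product rather than a multiple of the Killing form. Alternatively, your two checks at $e_1$ and $e_1+e_2$ suffice, since they determine both $\mathfrak{su}(2)$ scales.

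A minor remark: $n=2$ is not actually exceptional for your bound. A nonzero integer $\lambda_1$ gives $\lambda_1^2\ge1=n-1$, which is Wirtinger's inequality on the circle. The restriction $n\ge3$ is forced only by the factor $n-2$ in Saloff-Coste's normalization.
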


In the normalization of \cite[Section~3.2]{SaloffCosteCompact}, the positive generator is
$\{2(n-2)\}^{-1}(-\sum_{a<b}\partial_{ab}^2)$ and its gap on $\mathrm{SO}(n)$ is
$(n-1)/\{2(n-2)\}$, which gives the constant $n-1$ above after rescaling; left multiplication by
a fixed reflection transports normalized Haar measure and the derivative energy from
$\mathrm{SO}(n)$ to the determinant-$-1$ component, so the same inequality holds there.

The product version follows elementarily from the variance decomposition.  Fix one connected
component of $\mathcal O_n\times\mathcal O_n$, use its normalized Haar law, and suppose that $f$
has mean zero there.  Set $\bar f(V)=\E_U[f(U,V)]$, where the expectation is over the fixed
component of the first factor.  Then
\begin{align*}
 \E[f^2]
 &=\E[(f-\bar f)^2]+\E[\bar f^2]\\
 &\le\frac1{n-1}\left\{
   \sum_{a<b}\E[(\partial_{ab}^Rf)^2]
   +\sum_{a<b}\E[(\partial_{ab}^C\bar f)^2]\right\}\\
 &\le\frac1{n-1}\left\{
   \sum_{a<b}\E[(\partial_{ab}^Rf)^2]
   +\sum_{a<b}\E[(\partial_{ab}^Cf)^2]\right\}.
\end{align*}
The first inequality applies Theorem~\ref{thm:orthogonal-poincare} first to
$f(U,V)-\bar f(V)$ as a function of $U$ and then to the mean-zero function $\bar f(V)$.  The
second uses
$\partial_{ab}^C\bar f=\E_U[\partial_{ab}^Cf]$ and Jensen's inequality.  A function orthogonal to
$\ker K_0$ has mean zero on each of the four components, because $\ker K_0$ is the space of all
componentwise constant functions.  Summing the four component inequalities therefore gives
\begin{equation*}
 (n-1)\norm f_{L_2}^2
 \le\sum_\alpha\E[(\partial_\alpha f)^2]
 =\langle f,K_0f\rangle_{L_2}.
\end{equation*}

The Poincar\'e inequality says that every positive spectral value of $K_0$ is at least $n-1$.
The Moore--Penrose inverse $K_0^\dagger$ is zero on $\ker K_0$ and replaces every positive
spectral value by its reciprocal.  Spectral calculus therefore gives
\begin{equation}
 K_0^\dagger\preceq(n-1)^{-1}I
 \quad\text{on }(\ker K_0)^\perp.                          \tag{C.4}\label{eq:C.4}
\end{equation}

For later use, it remains to check that the low-degree polynomials appearing in the tensor
recursion are orthogonal to this kernel.  Consider first a monomial in the entries of $U$ having
degree strictly less than $n$.  At least one column of $U$ is absent from the monomial.  Right
multiplication by the diagonal reflection that changes the sign of this column leaves the monomial
unchanged, preserves Haar measure, and reverses $\det U$.  The Haar inner product of the monomial
with $\det U$ is therefore its own negative and hence is zero.  Linearity gives the same conclusion
for every polynomial of degree below $n$.  Applying the identical argument to $V$ proves
orthogonality to $\det V$; flipping an unused column of either matrix also proves orthogonality to
$(\det U)(\det V)$.  Thus a polynomial of degree below $n$ in each orthogonal factor and with
mean zero is orthogonal to all four functions spanning $\ker K_0$.

\subsection{Moment cancellation from the preserved power sums}
\label{app:moment-cancellation}

Recall
\[
 F_t=\sqrt n\,UD_tV^T,\qquad
 Q_1=\dot F_t,\qquad
 Q_{j+1}=\operatorname{Sym}\Gamma(K_0^\dagger Q_j,F_t).
\]
All tensor products in the recursion below are over $\mathcal H=\R^{n\times n}$.  More generally,
for an order-$s$ tensor $T\in\mathcal X^{\otimes s}$ over any real Hilbert space $\mathcal X$, let
$P_\pi$ permute its tensor positions according to $\pi\in\mathfrak S_s$; explicitly,
\begin{equation*}
 P_\pi(A_1\otimes\cdots\otimes A_s)
 =A_{\pi^{-1}(1)}\otimes\cdots\otimes A_{\pi^{-1}(s)}.
\end{equation*}
Define
\begin{equation*}
 \operatorname{Sym}T:=\frac1{s!}\sum_{\pi\in\mathfrak S_s}P_\pi T.
\end{equation*}
Thus $\operatorname{Sym}$ is the orthogonal projection onto the tensors invariant under every
permutation of their positions; in particular, it cannot increase the Frobenius norm.  We use the
convention $F_t^{\otimes0}=1$.  For tensor-valued functions $G$ and $H$, the componentwise
extension of \eqref{eq:C.2} is
\begin{equation*}
 \Gamma(G,H)=\sum_\alpha(\partial_\alpha G)\otimes(\partial_\alpha H).
\end{equation*}

Write the diagonal matrix $D_t$ as
\begin{equation*}
 D_t=\operatorname{diag}(d_1(t),\ldots,d_n(t)).
\end{equation*}
Thus $d_i(t)$ is the $i$th diagonal entry of $D_t$.  First consider $\E F_t^{\otimes m}$.
If $m$ is odd, it is zero by changing $U$ to $-U$.
For $m=2r<n$, apply the orthogonal Weingarten formula
\cite[Corollary~3.4]{CollinsSniady} separately to the $U$ entries and the $V$ entries.
We expand this contraction explicitly.  For multi-indices
$\mathbf a,\mathbf b,\mathbf i\in[n]^{2r}$, the entry formula
\[
 (F_t)_{ab}=\sqrt n\sum_{i=1}^n U_{ai}d_i(t)V_{bi}
\]
gives
\begin{align}
 \bigl[\E F_t^{\otimes2r}\bigr]_{\mathbf a,\mathbf b}
 &=n^r\sum_{\mathbf i\in[n]^{2r}}
   \left(\prod_{s=1}^{2r}d_{i_s}(t)\right)
   \left(\E_U\prod_{s=1}^{2r}U_{a_si_s}\right)
   \left(\E_V\prod_{s=1}^{2r}V_{b_si_s}\right).          \tag{C.5a}\label{eq:C.5a}
\end{align}
For a pairing $\pi\in\mathcal P_2(2r)$, recall the notation
$\delta_{\mathbf x}^{\pi}
=\prod_{\{s,t\}\in\pi}\mathbf1_{\{x_s=x_t\}}$ from \eqref{eq:B.8}.
Substituting \eqref{eq:B.8} twice into \eqref{eq:C.5a} yields
\begin{align}
 \bigl[\E F_t^{\otimes2r}\bigr]_{\mathbf a,\mathbf b}
 &=n^r
 \sum_{\pi,\sigma,\rho,\tau\in\mathcal P_2(2r)}
 \delta_{\mathbf a}^{\pi}\delta_{\mathbf b}^{\rho}
 \operatorname{Wg}_{n,r}^{O}(\pi,\sigma)
 \operatorname{Wg}_{n,r}^{O}(\rho,\tau)
 \mathcal S_{\sigma,\tau}(D_t),                         \tag{C.5b}\label{eq:C.5b}\\
 \mathcal S_{\sigma,\tau}(D)
 &\defeq
 \sum_{\mathbf i\in[n]^{2r}}
 \delta_{\mathbf i}^{\sigma}\delta_{\mathbf i}^{\tau}
 \prod_{s=1}^{2r}d_{i_s}.                               \tag{C.5c}\label{eq:C.5c}
\end{align}
The graph on $\{1,\ldots,2r\}$ with edge sets $\sigma$ and $\tau$ is a disjoint union
$\mathcal C(\sigma,\tau)$ of alternating even cycles, allowing a two-vertex cycle when the
two pairings share an edge.  On a connected component $C$, the Kronecker deltas in
\eqref{eq:C.5c} force all indices $i_s$, $s\in C$, to have one common value.  Different
components have independent common values.  Consequently the remaining sum factors exactly as
\begin{equation}
 \mathcal S_{\sigma,\tau}(D_t)
 =\prod_{C\in\mathcal C(\sigma,\tau)}
   \left(\sum_{i=1}^n d_i(t)^{|C|}\right)
 =\prod_{C\in\mathcal C(\sigma,\tau)}
   \tr D_t^{|C|}.                                       \tag{C.5d}\label{eq:C.5d}
\end{equation}
Every $|C|$ is of the form $2\ell$, and the component sizes sum to $2r$.
Thus \eqref{eq:C.5b} shows, coefficient by coefficient, that all dependence on $t$ is through
products of
\[
 \tr D_t^2,\ \tr D_t^4,\ldots,\tr D_t^{2r}.
\]
If $2r\le2K$, each one of these power sums is constant by \eqref{eq:4.9b}; all remaining
factors in \eqref{eq:C.5b} depend only on $n$ and the four pairings.  Together with the odd
case, this proves, on every differentiable segment of the particle path,
\begin{equation}
 \frac d{dt}\E F_t^{\otimes m}=0\qquad(m\le2K),            \tag{C.5}\label{eq:C.5}
\end{equation}
with the finitely many joining times understood segmentwise.

We next prove, simultaneously, the identity
\begin{equation}
 \E\operatorname{Sym}\!\left(Q_j\otimes F_t^{\otimes r}\right)
 =\frac{r!}{(r+j)!}\frac d{dt}\E F_t^{\otimes(r+j)}.       \tag{C.6}\label{eq:C.6}
\end{equation}
for every $1\le j\le2K+1$ and integer $r\ge0$, the fact that every entry of $Q_j$ is a polynomial
of degree at most $j$ in each of $U$ and $V$, and the orthogonality
$Q_j\perp\ker K_0$ for $1\le j\le2K$, understood entrywise.  The orthogonality is part of the
induction, so it will be available before we use $K_0K_0^\dagger Q_j=Q_j$.

For $j=1$, differentiating $F_t^{\otimes(r+1)}$ marks each of its $r+1$ positions once.
Symmetrization averages these marked terms, and therefore gives \eqref{eq:C.6}.  The entries of
$Q_1=\dot F_t$ have degree one in each orthogonal factor.  Taking $r=0$ in \eqref{eq:C.6} and
using \eqref{eq:C.5} gives $\E Q_1=0$.  Since $1<n$, the kernel calculation at the end of
Appendix~\ref{app:rotation-calculus} shows that $Q_1\perp\ker K_0$.  This proves all three
claims at the base level.

Now suppose the three claims hold at an index $j\le2K$.  Since $Q_j\perp\ker K_0$, setting
$g=K_0^\dagger Q_j$ gives
\begin{equation*}
 K_0g=K_0K_0^\dagger Q_j=Q_j.
\end{equation*}
The tensor product rule, followed by symmetrization over all $j+r+1$ output positions, says
\[
 \operatorname{Sym}\Gamma(g,F_t^{\otimes(r+1)})
 =(r+1)\operatorname{Sym}
   \left(\Gamma(g,F_t)\otimes F_t^{\otimes r}\right).
\]
This is the ordinary product rule: $\partial_\alpha$ differentiates one of the $r+1$ copies of $F_t$,
and full symmetrization makes the resulting $r+1$ terms equal.  Since
$Q_{j+1}=\operatorname{Sym}\Gamma(g,F_t)$ and a second full symmetrization does not change the
result, \eqref{eq:C.3}, applied componentwise, gives
\[
\begin{aligned}
 \E\operatorname{Sym}(Q_{j+1}\otimes F_t^{\otimes r})
 &=\frac1{r+1}\E\operatorname{Sym}
      \Gamma(g,F_t^{\otimes(r+1)})\\
 &=\frac1{r+1}\E\operatorname{Sym}
      ((K_0g)\otimes F_t^{\otimes(r+1)})\\
 &=\frac1{r+1}\E\operatorname{Sym}
      (Q_j\otimes F_t^{\otimes(r+1)})\\
 &=\frac{r!}{(r+j+1)!}\frac d{dt}\E F_t^{\otimes(r+j+1)}.
\end{aligned}
\]
Thus \eqref{eq:C.6} holds at $j+1$.

For completeness, the degree assertion also propagates.  Rotation derivatives preserve the
finite-dimensional space of polynomials of degree at most $j$ in each orthogonal factor, and so
does $K_0$.  Because $K_0$ is self-adjoint, the orthogonal complement of this invariant space is
also invariant; hence its spectral inverse $K_0^\dagger$ preserves the same polynomial space.
Consequently $g$ has degree at most $j$, while
$\Gamma(g,F_t)$, and therefore $Q_{j+1}$, has degree at most $j+1$.
If $j+1\le2K$, taking $r=0$ in the newly proved instance of \eqref{eq:C.6} and using
\eqref{eq:C.5} yields $\E Q_{j+1}=0$.  Since $j+1\le2K<n$, the preceding kernel calculation then
gives $Q_{j+1}\perp\ker K_0$.  This completes the simultaneous induction.

In particular, for every $j\le2K$, the $r=0$ case gives the moment cancellation
\begin{equation}
 \E Q_j=\frac1{j!}\frac d{dt}\E F_t^{\otimes j}=0
 \qquad(j\le2K),                                          \tag{C.7}\label{eq:C.7}
\end{equation}
which is the identity used in Lemma~\ref{lem:path-tv}.

\subsection{Density differentiation and the exact Hermite score}

Let
\[
 \phi_y(F)=(2\pi\sigma^2)^{-k/2}
 \exp\!\left[-\frac{\norm{y-aLF}^2}{2\sigma^2}\right],
 \qquad p_t(y)=\E\phi_y(F_t).
\]
Thus $\phi_y(F)$ is the density of $N(aLF,\sigma^2I_k)$ evaluated at $y$, and the expectation
in the formula for $p_t$ is over the Haar rotations $(U,V)$.  The Gaussian kernel is positive, so
$p_t(y)>0$ for every $y$.  Throughout this subsection, $\partial_t\log p_t(y)$ means the
derivative in the parameter $t$ with $y$ held fixed.  Only after taking this derivative do we
evaluate at the random point $y=Y_t$; it is not the total derivative of
$t\mapsto\log p_t(Y_t)$.

Differentiating the Gaussian kernel explicitly, put $r_t=y-aLF_t$.  Since
$\dot r_t=-aL\dot F_t$, the ordinary chain rule gives
\begin{align*}
 \frac d{dt}\phi_y(F_t)
 &=\phi_y(F_t)\frac d{dt}\left(-\frac{\norm{r_t}^2}{2\sigma^2}\right)\\
 &=\frac a{\sigma^2}\ip{r_t}{L\dot F_t}\phi_y(F_t).
\end{align*}
On the other hand,
\[
 \nabla_y\phi_y(F_t)=-\frac{r_t}{\sigma^2}\phi_y(F_t).
\]
Consequently,
\begin{equation*}
 \frac d{dt}\phi_y(F_t)
 =-a\,(L\dot F_t)\mathbin{\cdot}\nabla_y\phi_y(F_t).
\end{equation*}
Differentiation under the expectation is justified because $F_t$ and $\dot F_t$ are bounded on
the compact rotation group and every derivative of the Gaussian kernel is a polynomial times a
Gaussian.  Since $Q_1=\dot F_t$, we have
\[
 \dot p_t(y)=-a\,\E\!\left[(LQ_1)\mathbin{\cdot}\nabla_y\phi_y(F_t)\right].
\]

For $j\ge1$, define
\begin{equation}
 R_j(y)=
 \E\!\left[(L^{\otimes j}Q_j):\nabla_y^j\phi_y(F_t)\right]. \tag{C.8}\label{eq:C.8}
\end{equation}
Here ``$:$'' contracts all $j$ indices: if $A$ and $B$ are order-$j$ tensors, then
\[
 A:B=\sum_{i_1,\ldots,i_j=1}^k A_{i_1\cdots i_j}B_{i_1\cdots i_j}.
\]
Thus the preceding display says simply that
\[
 \dot p_t(y)=-aR_1(y).
\]

To establish the recursion relating $R_j$ and $R_{j+1}$, fix
$1\le j\le m-1=2K$.  The simultaneous induction in
Appendix~\ref{app:moment-cancellation} gives $Q_j\perp\ker K_0$.  Set
$g=K_0^\dagger Q_j$; then
\[
 K_0g=K_0K_0^\dagger Q_j=Q_j.
\]
Let $\partial_\alpha$ range over all elementary row and column rotation derivatives from
\eqref{eq:C.1}.
Applying the integration-by-parts identity \eqref{eq:C.3} component by component, and using that $L$ is
fixed, yields
\begin{align*}
 R_j(y)
 &=\E\!\left[(L^{\otimes j}K_0g):\nabla_y^j\phi_y(F_t)\right]\\
 &=\sum_\alpha\E\!\left[
   (L^{\otimes j}\partial_\alpha g):
   \partial_\alpha\{\nabla_y^j\phi_y(F_t)\}\right].
\end{align*}

The second factor is calculated by the chain rule.  For indices $i_1,\ldots,i_j\in[k]$, the
commutation of the $y$-derivatives with $\partial_\alpha$ gives
\begin{align*}
 \partial_\alpha\{\nabla_y^j\phi_y(F_t)\}_{i_1\cdots i_j}
 &=\partial_{i_1}\cdots\partial_{i_j}
   \{\partial_\alpha\phi_y(F_t)\}\\
 &=-a\sum_{i_{j+1}=1}^k
   (L\partial_\alpha F_t)_{i_{j+1}}
   \partial_{i_1}\cdots\partial_{i_j}\partial_{i_{j+1}}
   \phi_y(F_t).
\end{align*}
The first equality holds because $\partial_\alpha$ changes only $(U,V)$, whereas the $\partial_i$ change
only $y$.  The second is the same Gaussian chain-rule calculation as above, with
$\partial_\alpha F_t$ in place of $\dot F_t$.  Substitution and collection of all $j+1$ tensor indices
therefore give
\begin{align*}
 R_j(y)
 &=-a\,\E\!\left[
   \left\{L^{\otimes(j+1)}
   \sum_\alpha(\partial_\alpha g)\otimes(\partial_\alpha F_t)\right\}:
   \nabla_y^{j+1}\phi_y(F_t)\right]\\
 &=-a\,\E\!\left[
   \{L^{\otimes(j+1)}\Gamma(g,F_t)\}:
   \nabla_y^{j+1}\phi_y(F_t)\right].
\end{align*}
By the product rule \eqref{eq:C.2},
$\Gamma(g,F_t)=\sum_\alpha(\partial_\alpha g)\otimes(\partial_\alpha F_t)$, entry by entry.  Moreover,
$\nabla_y^{j+1}\phi_y(F_t)$ is symmetric in its $j+1$ indices, so contracting it against a tensor
depends only on the symmetric part of that tensor.  Using the definition
$Q_{j+1}=\operatorname{Sym}\Gamma(g,F_t)$ now proves
\begin{equation}
\begin{aligned}
 R_j(y)
 &=-a\,\E\!\left[
   (L^{\otimes(j+1)}\operatorname{Sym}\Gamma(g,F_t)):
   \nabla_y^{j+1}\phi_y(F_t)\right]
 =-aR_{j+1}(y).
\end{aligned}
\tag{C.9}\label{eq:C.9}
\end{equation}
Thus each application of rotation integration by parts changes $Q_j$ into $Q_{j+1}$, adds one
copy of $L$, raises the Gaussian derivative order by one, and contributes the scalar factor $-a$.
Starting from $\dot p_t(y)=-aR_1(y)$ and applying \eqref{eq:C.9} successively gives
\[
 \dot p_t(y)=-aR_1(y)=(-a)^2R_2(y)=\cdots=(-a)^mR_m(y).
\]
The recursion is available through $j=m-1=2K$ precisely because the simultaneous induction in
Appendix~\ref{app:moment-cancellation} proves $Q_j\perp\ker K_0$ at every one of those orders.
Since $m=2K+1$ is odd, we have proved
\begin{equation}
 \dot p_t(y)=(-a)^mR_m(y)=-a^mR_m(y).                     \tag{C.10}\label{eq:C.10}
\end{equation}

For a standard Gaussian vector $g\in\R^k$, define the symmetric Hermite tensor $H_m(g)$ by
\begin{equation}
 H_m(g):v^{\otimes m}
 =\left.\frac{d^m}{ds^m}
   \exp\!\left(s\ip vg-\frac{s^2}{2}\norm v^2\right)
  \right|_{s=0}\qquad(v\in\R^k).                          \tag{C.11}\label{eq:C.11}
\end{equation}
Rank-one symmetric tensors span the symmetric tensor space, so this identity determines
$H_m(g)$.  The definition also gives the Gaussian derivative identity used below.
For $z=(y-aLF)/\sigma$ and any $v\in\R^k$, equation \eqref{eq:C.11}, with the substitution
$s\mapsto-s/\sigma$, gives
\begin{align*}
 \frac{\phi_{y+sv}(F)}{\phi_y(F)}
 &=\exp\!\left[-\frac{s}{\sigma}\ip vz
                -\frac{s^2}{2\sigma^2}\norm v^2\right],\\
 (v\mathbin{\cdot}\nabla_y)^m\phi_y(F)
 &=(-1)^m\sigma^{-m}
   \{H_m(z):v^{\otimes m}\}\phi_y(F).
\end{align*}
Since this holds for every $v$ and both sides are symmetric order-$m$ tensors, polarization gives
\begin{equation}
 \nabla_y^m\phi_y(F)
 =(-1)^m\sigma^{-m}
 H_m\!\left(\frac{y-aLF}{\sigma}\right)\phi_y(F).          \tag{C.12}\label{eq:C.12}
\end{equation}
Substituting \eqref{eq:C.12} into \eqref{eq:C.10}, or equivalently into the last expression in \eqref{eq:C.10}, gives
\begin{align*}
 \dot p_t(y)
 &=(-a)^m(-1)^m\sigma^{-m}
   \E\!\left[(L^{\otimes m}Q_m):
   H_m\!\left(\frac{y-aLF_t}{\sigma}\right)\phi_y(F_t)\right]\\
 &=\left(\frac a\sigma\right)^m
   \E\!\left[(L^{\otimes m}Q_m):
   H_m\!\left(\frac{y-aLF_t}{\sigma}\right)\phi_y(F_t)\right].
\end{align*}
The sign in the second line is positive because
$(-a)^m(-1)^m=(-1)^{2m}a^m=a^m$.

Bayes' formula identifies the last expression as a conditional expectation.  Let $W=(U,V)$
denote all latent rotation randomness, and write $F_t(W)$ and $Q_m(W)$ to display their
dependence on $W$.  For
any integrable function $A(W,g)$, Bayes' formula for the model
$Y_t=aLF_t(W)+\sigma G_k$ is
\[
 \E[A(W,G_k)\mid Y_t=y]
 =\frac{1}{p_t(y)}\E_W\!\left[
 A\!\left(W,\frac{y-aLF_t(W)}{\sigma}\right)
 \phi_y(F_t(W))\right].
\]
Indeed, for fixed $W$ and $Y_t=y$, the noise is
$G_k=(y-aLF_t(W))/\sigma$, while $\phi_y(F_t(W))$ is the likelihood of that value of $W$.
Apply this identity with
\[
 A(W,g)=(L^{\otimes m}Q_m(W)):H_m(g).
\]
Dividing the preceding formula for $\dot p_t(y)$ by $p_t(y)$ gives
\[
 \partial_t\log p_t(y)
 =\left(\frac a\sigma\right)^m
 \E\!\left[(L^{\otimes m}Q_m):H_m(G_k)\mid Y_t=y\right].
\]
Evaluating this fixed-$y$ identity at $y=Y_t$ proves
\begin{equation}
 \left.\partial_t\log p_t(y)\right|_{y=Y_t}
 =\left(\frac a\sigma\right)^m
 \E\!\left[(L^{\otimes m}Q_m):H_m(G_k)\mid Y_t\right].    \tag{C.13}\label{eq:C.13}
\end{equation}
Equation~\eqref{eq:C.13} is the exact score identity used in Lemma~\ref{lem:path-tv}.  The vector $G_k$ is
independent of $(U,V)$ before conditioning, but is generally not independent of $Q_m$ after
conditioning on $Y_t$.  The derivation consists entirely of exact identities and introduces no
Taylor remainder.

For a deterministic order-$m$ tensor $T\in(\R^k)^{\otimes m}$, the following second-moment
identity is called the Gaussian Hermite isometry:
\begin{equation}
 \E[(T:H_m(G_k))^2]=m!\norm{\operatorname{Sym}T}^2.        \tag{C.14}\label{eq:C.14}
\end{equation}
Thus the $L_2$ norm of the random scalar $T:H_m(G_k)$ equals $\sqrt{m!}$ times the Frobenius
norm of the symmetric part of $T$.  To verify the identity, use the Gaussian generating function
\begin{equation*}
 \E\exp\!\left(\ip{v}{G_k}-\frac{\norm v^2}{2}\right)
    \exp\!\left(\ip{w}{G_k}-\frac{\norm w^2}{2}\right)
 =\exp(\ip{v}{w})\qquad(v,w\in\R^k).
\end{equation*}
By \eqref{eq:C.11}, the first exponential expands as
$\sum_{r\ge0}(H_r(G_k):v^{\otimes r})/r!$, and similarly for the second.  Equating the terms
homogeneous of degree $m$ in both $v$ and $w$ gives
\begin{equation*}
 \E[(H_m(G_k):v^{\otimes m})(H_m(G_k):w^{\otimes m})]
 =m!\ip{v}{w}^m.
\end{equation*}
Rank-one symmetric tensors span the symmetric tensor space, so linearity and polarization extend
this identity to arbitrary order-$m$ tensors $S,T$:
\begin{equation*}
 \E[(S:H_m(G_k))(T:H_m(G_k))]
 =m!\ip{\operatorname{Sym}S}{\operatorname{Sym}T}.
\end{equation*}
Taking $S=T$ proves \eqref{eq:C.14}.
Conditional Jensen applied to \eqref{eq:C.13}, followed by \eqref{eq:C.14}, proves
\[
 I_t\le(a/\sigma)^{2m}m!\,\E\norm{L^{\otimes m}Q_m}^2,
\]
which is the score-energy estimate used in Lemma~\ref{lem:path-tv}.

\subsection{The Hilbert-valued energy estimate}

A unit tangent vector is a pair of skew-symmetric matrices $(A,B)$ with
$\tfrac12(\norm A_F^2+\norm B_F^2)=1$.  Here the path time $t$ is fixed.  Define the differential
of $F_t$ in the rotation direction $(A,B)$ by
\begin{equation}
\begin{aligned}
 d_{\mathrm{rot}}F_t(A,B)
 &:=\left.\frac d{ds}\left\{\sqrt n\,Ue^{sA}D_t(Ve^{sB})^T\right\}\right|_{s=0}\\
 &=\sqrt n\,U(AD_t-D_tB)V^T.
\end{aligned}                                             \tag{C.15}\label{eq:C.15}
\end{equation}
This is a derivative with respect to the auxiliary rotation parameter $s$, not the time derivative
$\dot F_t=\partial_tF_t$.
If $\Lambda=\max(1,\norm{D_t}_{\op})$, then
\[
 \norm{d_{\mathrm{rot}}F_t(A,B)}_F^2
 \le n\Lambda^2(\norm A_F+\norm B_F)^2
 \le4n\Lambda^2.                                         \tag{C.16}\label{eq:C.16}
\]
Fix $1\le j\le2K$.  Recall that the recursion
$Q_{j+1}=\operatorname{Sym}\Gamma(g,F_t)$ uses the tensor-valued function
\begin{equation*}
 g:=K_0^\dagger Q_j\in\mathcal H^{\otimes j}.
\end{equation*}
Appendix~\ref{app:moment-cancellation} proves $Q_j\perp\ker K_0$, and therefore
$K_0g=K_0K_0^\dagger Q_j=Q_j$.  We now spell out how \eqref{eq:C.16} controls
$\Gamma(g,F_t)$.  Let $\mathcal T$ be the Euclidean space of pairs of skew-symmetric matrices
with norm
\begin{equation*}
 \norm{(A,B)}_{\mathcal T}^2
 :=\frac12(\norm A_F^2+\norm B_F^2).
\end{equation*}
The vectors $(E^{ab},0)$ and $(0,E^{ab})$, indexed respectively by the row and column rotation
directions, form an orthonormal basis $(e_\alpha)_\alpha$ of $\mathcal T$.  At a fixed $(U,V)$,
define two linear maps
\begin{equation*}
 \mathcal A_g:\mathcal T\to\mathcal H^{\otimes j},
 \quad \mathcal A_ge_\alpha=\partial_\alpha g,
 \qquad
 \mathcal B_t:\mathcal T\to\mathcal H,
 \quad \mathcal B_te_\alpha=\partial_\alpha F_t.
\end{equation*}
The second map is exactly $d_{\mathrm{rot}}F_t$, so \eqref{eq:C.16} gives
\begin{equation*}
 \norm{\mathcal B_t}_{\op}^2\le4n\Lambda^2,
 \qquad
 \norm{\mathcal A_g}_{HS}^2=\sum_\alpha\norm{\partial_\alpha g}^2.
\end{equation*}
Identify $\mathcal H^{\otimes j}\otimes\mathcal H$ with the Hilbert--Schmidt maps from
$\mathcal H$ to $\mathcal H^{\otimes j}$ by sending $a\otimes b$ to the map
$x\mapsto a\ip{b}{x}$.  Under this norm-preserving identification,
\begin{equation*}
 \Gamma(g,F_t)
 =\sum_\alpha(\mathcal A_ge_\alpha)\otimes(\mathcal B_te_\alpha)
 \quad\longleftrightarrow\quad
 \mathcal A_g\mathcal B_t^*.
\end{equation*}
Therefore the Hilbert--Schmidt product inequality gives the pointwise bound
\begin{align*}
 \norm{\Gamma(g,F_t)}^2
 &=\norm{\mathcal A_g\mathcal B_t^*}_{HS}^2\\
 &\le\norm{\mathcal A_g}_{HS}^2\norm{\mathcal B_t^*}_{\op}^2
 \le4n\Lambda^2\sum_\alpha\norm{\partial_\alpha g}^2.
\end{align*}
Taking expectation and using the tensor-valued energy identity \eqref{eq:C.3} gives
\begin{equation}
 \E\norm{\Gamma(g,F_t)}^2
 \le C_{\mathrm{en},1}n\Lambda^2\,\E\sum_\alpha\norm{\partial_\alpha g}^2
 =C_{\mathrm{en},1}n\Lambda^2\langle g,K_0g\rangle_{L_2},
 \qquad C_{\mathrm{en},1}:=4.                            \tag{C.17}\label{eq:C.17}
\end{equation}
For this choice of $g$, \eqref{eq:C.4} gives
\[
\begin{aligned}
 \langle g,K_0g\rangle_{L_2}
 &=\langle K_0^\dagger Q_j,Q_j\rangle_{L_2}
 \le\frac1{n-1}\E\norm{Q_j}^2.
\end{aligned}
\]
Since symmetrization is a contraction and $n/(n-1)\le2$, \eqref{eq:C.17} implies
\begin{equation}
 \E\norm{Q_{j+1}}^2
 \le C_{\mathrm{en},2}\Lambda^2\E\norm{Q_j}^2,
 \qquad C_{\mathrm{en},2}\defeq2C_{\mathrm{en},1},       \tag{C.18}\label{eq:C.18}
\end{equation}
where $C_{\mathrm{en},2}$ is one fixed absolute constant.
At the base,
\begin{equation}
 \E\norm{Q_1}^2
 =n\norm{\dot D_t}_F^2
 =n^2\left(\frac1n\tr\dot D_t^2\right).                   \tag{C.19}\label{eq:C.19}
\end{equation}
Set $C_{\mathrm{en}}=\max\{C_{\mathrm{en},1},C_{\mathrm{en},2}\}$.
Iterating \eqref{eq:C.18} proves
\begin{equation}
 \E\norm{Q_m}^2
 \le (C_{\mathrm{en}}\Lambda^2)^{m-1}n^2
       \left(\frac1n\tr\dot D_t^2\right).                \tag{C.18a}\label{eq:C.18a}
\end{equation}

We finish by proving the two-sided rotation form of $Q_m$.  For fixed $R,S\in\mathcal O_n$, define
the orthogonal map $\rho_{R,S}:\mathcal H\to\mathcal H$ by
\begin{equation*}
 \rho_{R,S}(A):=RAS^T.
\end{equation*}
Under matrix vectorization this map is $R\otimes S$.  Also let
$\mathcal L_{R,S}(U,V):=(RU,SV)$ denote left translation on
$\mathcal O_n\times\mathcal O_n$.  The right-rotation derivatives commute with this left
translation: for every tensor-valued function $G$,
\begin{equation*}
 \partial_\alpha(G\circ\mathcal L_{R,S})
 =(\partial_\alpha G)\circ\mathcal L_{R,S}.
\end{equation*}
Indeed, for a row derivative this is the identity
$RUe^{sE^{ab}}=R(Ue^{sE^{ab}})$, and the column case is identical.  It follows from
$K_0=-\sum_\alpha\partial_\alpha^2$ that $K_0$ commutes with pullback by
$\mathcal L_{R,S}$.  Consequently the pullback preserves $\ker K_0$ and every positive
eigenspace of $K_0$.  On these spaces $K_0^\dagger$ acts respectively as zero and as multiplication
by the reciprocal eigenvalue, so it commutes with the pullback as well:
\begin{equation*}
 K_0^\dagger(G\circ\mathcal L_{R,S})
 =(K_0^\dagger G)\circ\mathcal L_{R,S}.
\end{equation*}

We now induct on $j$.  The base tensor satisfies
\begin{equation*}
 Q_1(\mathcal L_{R,S}(U,V))
 =\dot F_t(RU,SV)
 =\rho_{R,S}Q_1(U,V).
\end{equation*}
Suppose
$Q_j\circ\mathcal L_{R,S}=\rho_{R,S}^{\otimes j}Q_j$ and put
$g=K_0^\dagger Q_j$.  The operator $K_0^\dagger$ acts on the variables $(U,V)$ and componentwise
on tensor entries, so every fixed map on the output tensor space commutes with it.  The preceding
commutation identity therefore gives
\begin{equation*}
 g\circ\mathcal L_{R,S}=\rho_{R,S}^{\otimes j}g.
\end{equation*}
Applying the derivative commutation identity to $g$ and to
$F_t\circ\mathcal L_{R,S}=\rho_{R,S}F_t$ therefore yields
\begin{equation*}
 \Gamma(g,F_t)\circ\mathcal L_{R,S}
 =\rho_{R,S}^{\otimes(j+1)}\Gamma(g,F_t).
\end{equation*}
Finally, $\operatorname{Sym}$ commutes with
$\rho_{R,S}^{\otimes(j+1)}$ because the same map $\rho_{R,S}$ acts on every tensor position.
The recursion defining $Q_{j+1}$ is consequently equivariant as well.  By induction,
\begin{equation*}
 Q_m(RU,SV)=\rho_{R,S}^{\otimes m}Q_m(U,V).
\end{equation*}
Evaluate this identity at the base point $(U,V)=(I,I)$, set $q_m:=Q_m(I,I)$, and then rename
$(R,S)$ as $(U,V)$.  Since $\rho_{U,V}$ is orthogonal, this gives
\[
 Q_m(U,V)=(U\otimes V)^{\otimes m}q_m,\qquad
 \norm{q_m}^2=\E\norm{Q_m}^2.
\]
Put $P=L^*L$.  Because $LL^*=I_k$, the operator $P$ is an orthogonal projection, and hence
\[
 \norm{L^{\otimes m}Q_m}^2
 =\langle Q_m,(L^*L)^{\otimes m}Q_m\rangle
 =\norm{(L^*L)^{\otimes m}Q_m}^2.
\]
Lemma~\ref{lem:twirl} therefore applies to this seed and gives
\begin{equation}
 \E\norm{L^{\otimes m}Q_m}^2
 \le (C_{\mathrm{ten}}m)^{C_{\mathrm{ten}}m}
 \left(\frac{k}{n^2}\right)^{(m+1)/2}\E\norm{Q_m}^2.     \tag{C.19a}\label{eq:C.19a}
\end{equation}

\subsection{Fisher--Rao length controls total variation}\label{app:fisher-tv}

Gaussian convolution makes $y\mapsto p_t(y)$ positive and smooth.  For each fixed $y$, the map
$t\mapsto p_t(y)$ is piecewise continuously differentiable, hence absolutely continuous.  Write
$\dot p_t(y)=\frac{d}{dt}p_t(y)$ wherever the derivative exists.  The fundamental theorem of
calculus gives
\[
 p_1(y)-p_0(y)=\int_0^1\dot p_t(y)\,dt.
\]
The triangle inequality, followed by exchanging two nonnegative integrals, therefore gives
\begin{align*}
 \frac12\int|p_1(y)-p_0(y)|\,dy
 &\le \frac12\int\int_0^1|\dot p_t(y)|\,dt\,dy\\
 &=\frac12\int_0^1\left\{\int|\dot p_t(y)|\,dy\right\}dt.
\end{align*}
Define the score along the path by
\[
 s_t(y)=\frac d{dt}\log p_t(y)=\frac{\dot p_t(y)}{p_t(y)}
\]
and let $Y_t$ have density $p_t$.  Then
\[
 I_t=\E[s_t(Y_t)^2]
 =\int s_t(y)^2p_t(y)\,dy
 =\int\frac{\dot p_t(y)^2}{p_t(y)}\,dy.
\]
Since $\int p_t(y)\,dy=1$, Cauchy--Schwarz yields
\begin{align*}
 \int|\dot p_t(y)|\,dy
 &=\int \frac{|\dot p_t(y)|}{\sqrt{p_t(y)}}\sqrt{p_t(y)}\,dy\\
 &\le
 \left(\int\frac{\dot p_t(y)^2}{p_t(y)}\,dy\right)^{1/2}
 \left(\int p_t(y)\,dy\right)^{1/2}
 =\sqrt{I_t}.
\end{align*}
Combining the preceding displays proves
\begin{equation}
 \TV(\mathbb P_0^L,\mathbb P_1^L)
 =\frac12\int|p_1(y)-p_0(y)|\,dy
 \le\frac12\int_0^1\sqrt{I_t}\,dt.                      \tag{C.20}\label{eq:C.20}
\end{equation}
The factor $1/2$ comes from the definition of total variation in \eqref{eq:3.4}.  Without the absolute
value, $\int(p_1-p_0)=1-1=0$.  The integral on the right is the Fisher--Rao length of the path.

Combining the preceding bounds completes the proof of Lemma~\ref{lem:path-tv}.  The path lies in $[0,B]$, so
$\Lambda^2=\max\{1,\norm{D_t}_{\op}^2\}\le B$ because $B\ge1$.  The score estimate following
\eqref{eq:C.14}, followed in order by \eqref{eq:C.19a} and \eqref{eq:C.18a}, gives
\begin{align*}
 \sqrt{I_t}
 &\le \left(\frac a\sigma\right)^m\sqrt{m!}\,
       \left(\E\norm{L^{\otimes m}Q_m}^2\right)^{1/2}\\
 &\le \left(\frac a\sigma\right)^m\sqrt{m!}\,
       (C_{\mathrm{ten}}m)^{C_{\mathrm{ten}}m/2}
       \left(\frac{k}{n^2}\right)^{(m+1)/4}
       (C_{\mathrm{en}}B)^{(m-1)/2}n
       \left(\frac1n\tr\dot D_t^2\right)^{1/2}\\
 &=\mathcal A_m n
       \left(\frac{k}{n^2}\right)^{(K+1)/2}
       \left(\frac1n\tr\dot D_t^2\right)^{1/2},
\end{align*}
where $m=2K+1$, so $(m+1)/4=(K+1)/2$, and $\mathcal A_m$ is the constant defined in
Lemma~\ref{lem:path-tv}.  Substitute this estimate into \eqref{eq:C.20} and use the definition of
$\mathcal L_K$ in \eqref{eq:4.10}.  The result is precisely \eqref{eq:4.23}.

\section{Tail decoder and probability estimates}\label{app:decoder}

This appendix verifies the normalization, row counts, and success probability of
Algorithms~\ref{alg:upper-sketch} and~\ref{alg:upper-decode}.  Runtime is not part of the sketching
model; the ordered cycle sums may
therefore be evaluated directly.

\subsection{Guarantee of the stable-rank certificate}

Work on the simultaneous covariance event \eqref{eq:5.5} and Frobenius event \eqref{eq:5.19}.  For a candidate
$h=h_j$, abbreviate
\[
 C=C_j=A(I-P_j),\quad F=F_h(A),\quad
 \widehat F=\norm{WC}_F,\quad
 \gamma=\begin{cases}
  \sigma_{h+1}(GA),&h<n,\\
  0,&h=n.
 \end{cases}
\]
Frobenius optimality of the true rank-$h$ approximation gives
\begin{equation}
 F\le\norm C_F,\qquad
 (1-\alpha)\norm C_F^2\le\widehat F^2
 \le(1+\alpha)\norm C_F^2.                                \tag{D.1}\label{eq:D.1}
\end{equation}
Because $P_j$ projects onto the top $h$ right singular directions of $GA$,
\[
 \norm{GA(I-P_j)}_{\op}=\sigma_{h+1}(GA)=\gamma
 \quad(h<n),
\]
while both sides are zero when $h=n$.
Moreover, the lower bound in \eqref{eq:D.1} and $\alpha<1/10$ imply
\[
 F^2\le\norm C_F^2\le\frac{\widehat F^2}{1-\alpha}<2\widehat F^2.
\]
The lower half of \eqref{eq:5.5}, restricted to the orthogonal complement of $P_j$, gives
\begin{equation}
 (1-\alpha)\norm C_{\op}^2\le\gamma^2+\beta_jF^2
 \le\gamma^2+2\beta_j\widehat F^2.                        \tag{D.2}\label{eq:D.2}
\end{equation}
If $C=0$, its tail nuclear norm is exactly zero and Algorithm~\ref{alg:upper-decode} bypasses
the normalized-moment calculation.  Hence assume $C\ne0$ in the following stable-rank
conclusion.  If \eqref{eq:5.20} passes, \eqref{eq:D.1}--\eqref{eq:D.2} imply
\[
 \norm C_{\op}^2\le\frac{D_{\mathrm{cert}}(1+\alpha)}{R}\norm C_F^2,
 \qquad\text{and hence}\qquad
 \sr(C)=\frac{\norm C_F^2}{\norm C_{\op}^2}
 \ge\frac{R}{D_{\mathrm{cert}}(1+\alpha)}.                \tag{D.3}\label{eq:D.3}
\]

Conversely, suppose $h\ge R$ and the exact tail has stable rank at least $R/4$.  Then
$a_h^2\le4F^2/R$.  The upper half of \eqref{eq:5.5} gives
\[
 \gamma^2\le(1+\alpha)a_h^2+\beta_jF^2
 \le\left(\frac{4(1+\alpha)}R+\beta_j\right)F^2.
\]
Use $F^2\le\norm C_F^2\le\widehat F^2/(1-\alpha)$, the lower bound in \eqref{eq:D.1}, and
$\beta_j=\alpha^4/h\le\alpha^4/R$.  The left side of \eqref{eq:5.20} is at most
\[
 \left\{\frac{4(1+\alpha)+\alpha^4}{(1-\alpha)^2}
          +\frac{2\alpha^4}{1-\alpha}\right\}
 \frac{\widehat F^2}{R}
 <6\frac{\widehat F^2}{R}.
\]
Thus the fixed choice $D_{\mathrm{cert}}=12$ makes \eqref{eq:5.20} pass.  The $h=0$ case is identical after
using $C=A$, $a_0=\norm A_{\op}$, $F_0=\norm A_F$, and
$\beta_0=c_0\alpha^4/R$: the hypothesis $\sr(A)\ge R/4$ is precisely
$a_0^2\le4F_0^2/R$, so every displayed inequality above remains valid.  This proves both
directions of the certificate.

\subsection{Unbiased cycle moments}

Let $C\ne0$, let $\Sigma=C^TC$, and let $z_1,\ldots,z_s$ be independent $N(0,\Sigma)$ rows.
For distinct indices $a_1,\ldots,a_j$, integrate $z_{a_1}$ first:
\[
 \E_{z_{a_1}}
 \ip{z_{a_j}}{z_{a_1}}\ip{z_{a_1}}{z_{a_2}}
 =z_{a_j}^T\Sigma z_{a_2}.
\]
Integrating $z_{a_2},z_{a_3},\ldots$ successively closes the cycle and leaves
$\tr(\Sigma^j)$.  Every ordered distinct tuple has the same expectation, so
\[
 \E\widehat p_j=\tr(\Sigma^j).
\]
Let $\widehat p_j^{\,\mathrm{inc}}(z_1,\ldots,z_s)$ denote the entire normalized
increasing-cycle statistic of Kong and Valiant, and let $\pi$ be a uniform random permutation of
$\{1,\ldots,s\}$, independent of the samples.  The ordered statistic in \eqref{eq:5.29} satisfies
\[
 \widehat p_j
 =\E_\pi\!\left[
   \widehat p_j^{\,\mathrm{inc}}(z_{\pi(1)},\ldots,z_{\pi(s)})
   \mathrel{\big|}z_1,\ldots,z_s
  \right].
\]
Indeed, averaging the increasing tuples over $\pi$ makes every ordered distinct tuple equally
likely.  Conditional Jensen therefore shows that this symmetrization cannot increase variance,
so the imported cycle-variance bound applies with exactly the normalization in \eqref{eq:5.29}.

Put $p_j=\tr(\Sigma^j)$ and $p_1=\norm C_F^2$.  For
\[
 \lambda_i=\frac{n\sigma_i(C)^2}{p_1}
\]
we have
\begin{equation}
 \mu_j=\frac1n\sum_i\lambda_i^j
 =\frac{n^{j-1}p_j}{p_1^j}.                               \tag{D.4}\label{eq:D.4}
\end{equation}
This proves the ratio in \eqref{eq:5.32}, including every power of $n$.

\subsection{Propagation of raw-moment error through the ratios}

Define the good raw-moment event
\begin{equation}
 \left|\frac{\widehat p_1}{p_1}-1\right|
 \le\frac{\tau}{8K},\qquad
 \left|\frac{\widehat p_j}{p_j}-1\right|\le\frac\tau4
 \quad(2\le j\le K).                                      \tag{D.5}\label{eq:D.5}
\end{equation}
Write the corresponding relative errors as $e_j$.  From \eqref{eq:D.4},
\[
 \frac{\widehat\mu_j}{\mu_j}
 =\frac{1+e_j}{(1+e_1)^j}.
\]
For $|e_1|\le1/(4K)$, the mean-value theorem gives
\[
 |(1+e_1)^{-j}-1|\le2j|e_1|.
\]
Thus \eqref{eq:D.5} implies
\begin{equation}
 |\widehat\mu_j-\mu_j|\le\tau\mu_j
 \qquad(1\le j\le K).                                     \tag{D.6}\label{eq:D.6}
\end{equation}
The zeroth moment is deterministic: $\mu_0=\widehat\mu_0=1$.

Recall that $B_\lambda=D_0n/R$ is the support bound fixed in Step~6.
Because $0\le\lambda_i\le B_\lambda$ and $\mu_1=1$,
\begin{equation}
 \mu_j\le B_\lambda^{j-1}\quad(j\ge1),\qquad
 q=\frac1n\sum_i\sqrt{\lambda_i}
 \ge\frac1{\sqrt{B_\lambda}}.                             \tag{D.7}\label{eq:D.7}
\end{equation}
The second inequality follows from $\sqrt\lambda\ge\lambda/\sqrt{B_\lambda}$ on
$[0,B_\lambda]$.

Let
\[
 q_K=\sqrt{B_\lambda}\sum_{j=0}^Ka_j\frac{\mu_j}{B_\lambda^j}.
\]
The uniform polynomial approximation \eqref{eq:5.26} gives
\[
 |q_K-q|\le\frac{C_{\mathrm{poly}}\sqrt{B_\lambda}}K
 \le\frac\rho4q
\]
when $K\ge4C_{\mathrm{poly}}B_\lambda/\rho$.  On \eqref{eq:D.6},
\eqref{eq:5.26}--\eqref{eq:5.28} and \eqref{eq:D.7} give
\[
\begin{aligned}
 |\widehat q-q_K|
 &\le\sqrt{B_\lambda}\sum_{j=1}^K
       |a_j|\frac{|\widehat\mu_j-\mu_j|}{B_\lambda^j}\\
 &\le\sqrt{B_\lambda}\,K2^{3K}\tau\,B_\lambda^{-1}
 \le\frac\rho4q
\end{aligned}                                             \tag{D.8}\label{eq:D.8}
\]
because the fixed choice $c_{\mathrm{mom}}=1/4$ in \eqref{eq:5.28} makes the last expression at most
$\rho q/4$.  Therefore
$|\widehat q-q|\le(\rho/2)q$.

Also, \eqref{eq:D.5} and the mean-value theorem imply
\[
 \left|\sqrt{\widehat p_1/p_1}-1\right|
 \le |e_1|
 \le\frac{\tau}{8K}
 \le\frac\rho8.
\]
If $e_F=\sqrt{\widehat p_1/p_1}-1$ and
$e_q=\widehat q/q-1$, then the relative error before taking the maximum with zero is
$e_F+e_q+e_Fe_q$.  We have $|e_F|\le\rho/8$, $|e_q|\le\rho/2$, and $0<\rho<1$; hence its
absolute value is at most $11\rho/16<\rho$.  Combining this with \eqref{eq:D.8} proves
\begin{equation}
 |\widehat T(C)-\norm C_{\Sone}|
 \le\rho\norm C_{\Sone}.                                  \tag{D.9}\label{eq:D.9}
\end{equation}
On the good event $\widehat q>0$, so the maximum
with zero in \eqref{eq:5.32a} does not change the value.

\subsection{Row count and simultaneous success}

For $p_1$, a direct Gaussian calculation gives
\[
 \operatorname{Var}(\widehat p_1)
 =\frac{2\tr(\Sigma^2)}s\le\frac{2p_1^2}s.
\]
In fact $\widehat p_1=\norm{YC}_F^2$, so the proved bound \eqref{eq:5.18a}, with $W$ replaced by $Y$,
supplies the first part of \eqref{eq:D.5}.  With $u_1=\tau/(8K)$ and
$\zeta=(100K)^{-1}$, it is enough that
\begin{equation}
 s\ge16u_1^{-2}\log(2/\zeta)
 =1024K^2\tau^{-2}\log(200K).                            \tag{D.10a}\label{eq:D.10a}
\end{equation}
For $j\ge2$, apply Theorem~\ref{ext:moment} with individual failure
probability $\zeta=(100K)^{-1}$ and required relative error $\tau/4$.  Since
\[
 \log f(j)\le C_{\mathrm{row},5}j\log(j+1),\qquad
 \log(1/\tau)\le C_{\mathrm{row},6}(\rho)K,
\]
the first term of \eqref{eq:5.31} is at most the target once
\begin{equation}
 s\ge n^{1-2/j}
 \exp\!\left\{C_{\mathrm{row},1}\log(j+1)
              +\frac{C_{\mathrm{row},2}K}{j}\right\}.    \tag{D.10}\label{eq:D.10}
\end{equation}
For the remaining comparisons, put
\[
 A_j=\frac{4f(j)}{\tau\sqrt\zeta}.
\]
The second and third terms of \eqref{eq:5.31} are at most $\tau/4$ whenever, respectively,
\begin{equation}
 s\ge A_j^2n^{1/2-1/j},\qquad s\ge A_j^2.                \tag{D.11}\label{eq:D.11}
\end{equation}
The definitions of $f,\tau,\zeta$ give fixed constants
$C_{\mathrm{row},7}$ and $C_{\mathrm{row},8}(\rho)$ such that
\begin{equation}
 \log A_j^2\le
 C_{\mathrm{row},7}j\log(j+1)+C_{\mathrm{row},8}(\rho)K. \tag{D.12}\label{eq:D.12}
\end{equation}
For $4\le j\le K$, the logarithm of the ratio between the first right side in \eqref{eq:D.11} and
$n^{1-2/K}$ is at most
\[
 C_{\mathrm{row},7}K\log(K+1)+C_{\mathrm{row},8}(\rho)K
 -\left(\frac12-\frac1K\right)\log n.
\]
The corresponding ratio for the second right side has the still smaller negative term
$-(1-2/K)\log n$.  Choosing the fixed $c_{\mathrm{row}}$ in \eqref{eq:5.34} sufficiently small makes both
ratios tend to zero, because
$K\log(K+1)\le2c_{\mathrm{row}}\log n$ for all sufficiently large $n$.
For $j=2,3$, \eqref{eq:D.12} is only $O_\rho(K)$, while the powers of $n$ in the two right sides of
\eqref{eq:D.11} are at most $n^{1/6}$; these are also smaller than $n^{1-2/K}$ after decreasing
$c_{\mathrm{row}}$ if necessary.  The finitely many cases $K<4$ are absorbed by the fixed
leading constant.

Finally, the logarithm of the right side of \eqref{eq:D.10a} is at most
$C_{\mathrm{row},9}(\rho)K+C_{\mathrm{row},10}\log(K+1)$.  For $K\ge4$ this is eventually
less than $\tfrac12\log n$, whereas $n^{1-2/K}\ge\sqrt n$; the cases $K<4$ are again absorbed
by the leading constant.  Thus the single integer row count \eqref{eq:5.35} satisfies \eqref{eq:D.10}, \eqref{eq:D.10a},
and \eqref{eq:D.11} simultaneously for every $1\le j\le K$.
A union bound over $p_1,p_2,\ldots,p_K$ makes \eqref{eq:D.5} fail with probability at most $1/100$.

\subsection{Correctness of the complete decoder}

Algorithms~\ref{alg:upper-sketch} and~\ref{alg:upper-decode} state the full sketching and decoding
procedure.  We verify here its
zero-residual convention, success probability, and final error.
Gaussian absolute continuity ensures that $YC=0$ if and only if $C=0$, with probability one.
Thus the zero-residual convention prevents division by zero without masking a nonzero residual.

On the simultaneous covariance and Frobenius events, the cutoff is valid by \eqref{eq:D.3}.  Conditional
on the selected cutoff, the independent regression block succeeds with probability at least
$0.99$, and the independent tail block succeeds with probability at least $0.99$.  A union bound
over these four events gives probability at least $0.96$.  On this intersection,
$\norm C_{\Sone}\le\norm A_{\Sone}$ and \eqref{eq:5.17}, \eqref{eq:5.24}, and \eqref{eq:D.9} give, in the certified
branch,
\[
 |\widehat H+\widehat T(C)-\norm A_{\Sone}|
 \le\alpha\norm A_{\Sone}+\rho\norm C_{\Sone}
      +6\alpha\norm A_{\Sone}
 \le(7\alpha+\rho)\norm A_{\Sone}.
\]
In the negligible branch, \eqref{eq:5.15} gives
$\norm C_{\Sone}\le(\eta+6\alpha)\norm A_{\Sone}$.  Since
$0\le\norm A_{\Sone}-\norm{AP}_{\Sone}\le\norm C_{\Sone}$, \eqref{eq:5.24} gives
\[
 |\widehat H-\norm A_{\Sone}|
 \le(7\alpha+\eta)\norm A_{\Sone}.
\]
The choice $\alpha=\rho=\eta=\eps/20$ makes both errors at most $2\eps/5$.

\section{From real sketch dimension to polynomial-length streams}\label{app:streaming}

This appendix proves Corollary~\ref{cor:streaming}.  The argument uses two external transfer
theorems, stated below in the forms needed here, and then verifies their hypotheses for the
non-even-$p$ hard pair.  Throughout the appendix, fix
\[
 p>0,\qquad p\notin\{2,4,6,\ldots\},\qquad 0<\eps<1,
 \qquad \Lambda_p\defeq n^{1/2+1/p},
\]
and put
\[
 N\defeq n^2
\]
for the dimension obtained by vectorizing an $n\times n$ matrix.  Euclidean norm on
$\mathbb R^N$ is therefore Frobenius norm on matrices.

For $q\ge1$, let $\gamma_q$ denote the product discrete Gaussian on $\mathbb Z^N$, normalized as
in Jiang, Liu, and Yu~\cite{JiangLiuYuLinearization}.  Explicitly,
\[
 \gamma_q(x)
 =\frac{\exp(-\pi\norm{x}_2^2/q^2)}
        {\sum_{u\in\mathbb Z^N}\exp(-\pi\norm{u}_2^2/q^2)},
 \qquad x\in\mathbb Z^N.
\]
Thus its coordinates are independent centered one-dimensional discrete Gaussians of scale $q$;
changing between this convention and the covariance convention for a continuous Gaussian changes
only absolute constants.  We use the elementary tail bounds
\begin{equation}
 \Prb_{Z\sim\gamma_q}\{\norm Z_2>C_\delta q\sqrt N\}\le\delta,
 \qquad
 \Prb_{Z\sim\gamma_q}\{\norm Z_2>Cq\sqrt{N\log q}\}\le q^{-100},             \tag{E.1}\label{eq:E.1}
\end{equation}
for every fixed $\delta>0$ and all sufficiently large $q$.  They follow by applying the usual
one-dimensional discrete-Gaussian tail estimate in each coordinate.  For the isotropic
generalized discrete Gaussian $X_\lambda$ used below, the same bounds hold with $q$ replaced by
$\lambda$.  We will also use, for the corresponding $n\times n$ random matrices,
\begin{equation}
 \Prb_{Z\sim\gamma_q}\{\norm Z_{\op}>C_\delta q\sqrt n\}\le\delta,       \tag{E.1a}\label{eq:E.1a}
\end{equation}
and the analogous bound for $X_\lambda$.  Indeed, a one-dimensional discrete Gaussian of scale
$q$ is centered subgaussian with parameter $Cq$.  For fixed unit vectors $u,v$, the independent
sum $u^TZv$ is therefore subgaussian with the same parameter.  A union bound over fixed
$1/4$-nets of the two unit spheres, each of cardinality at most $9^n$, proves
\eqref{eq:E.1a} after increasing $C_\delta$.  The isotropic generalized law has the same product
and subgaussian properties.

\subsection{The two external interfaces}

The first theorem is the promise-problem case of the mollified transfer theorem of Jiang, Liu,
and Yu~\cite[Theorem~6.2.4]{JiangLiuYuLinearization}.  For the stream length, we use the safe
bound obtained directly from their canonical stream construction
\cite[Definitions~6.1, 6.1.1, and~6.2.2]{JiangLiuYuLinearization}.

We spell out the phrase \emph{associated mollified stream distribution}.  Let
$\gamma_q^{\mathrm{tr}}$ be $\gamma_q$ conditioned on
\[
 \norm{x}_2\le C_{\mathrm{tr}}q(\sqrt N+\log q),
\]
where the fixed constant $C_{\mathrm{tr}}$ is large enough that the discarded mass is at most
$q^{-100}$.  For $v\in\mathbb Z^N$, let $\mathsf{can}(v)$ be the canonical unit-update block
that performs $|v_j|$ updates of sign $\operatorname{sgn}(v_j)$ in coordinate $j$.  Given the
target law $\mathcal I$, independently draw
\[
 Y\sim\mathcal I,\qquad
 X_1,\ldots,X_{q^2},Z^{\mathrm{tr}}\sim\gamma_q^{\mathrm{tr}},
\]
put
\[
 \Delta_{\mathrm{fin}}
 =Y+Z^{\mathrm{tr}}-\sum_{i=1}^{q^2}X_i,
\]
and concatenate
\[
 \mathsf{can}(X_1),\ldots,\mathsf{can}(X_{q^2}),
 \mathsf{can}(\Delta_{\mathrm{fin}}).
\]
The final frequency vector is $Y+Z^{\mathrm{tr}}$.  This random unit-update stream is the
associated mollified stream distribution.  In the smoothness condition \eqref{eq:E.2}, by
contrast, $Z\sim\gamma_q$ is untruncated; the transfer theorem accounts for the truncation.
Here and below $C$ is an absolute constant.

\begin{theorem}[Mollified turnstile transfer]\label{ext:mollified-transfer}
Let $q\ge N^{20}$ be sufficiently large.  Let $\mathcal I$ be a probability law on
$\mathbb Z^N$ whose support has Euclidean diameter at most $q^{5/4}$, and let
$f:\mathbb Z^N\to\{0,1,*\}$ be a promise problem.  Suppose
\begin{equation}
 \Prb_{Y\sim\mathcal I,\,Z\sim\gamma_q}\{f(Y+Z)=f(Y)\}\ge1-\delta.           \tag{E.2}\label{eq:E.2}
\end{equation}
If an $S$-bit randomized turnstile algorithm solves the promise problem on the associated
mollified stream distribution with failure probability at most $\delta$, then there are an
integer matrix $A_0\in\mathbb Z^{r\times N}$ and a decoder
$g:A_0(\mathbb Z^N)\to\{0,1\}$ such that
\begin{equation}
 r\le C\frac{S+2}{\log q},\qquad
 \max_{i,j}|(A_0)_{ij}|\le q,                             \tag{E.3}\label{eq:E.3}
\end{equation}
and
\begin{equation}
 \Prb_{Y\sim\mathcal I}
 \{f(Y)\in\{0,1\},\ g(A_0Y)\ne f(Y)\}
 \le16\delta.                                             \tag{E.3a}\label{eq:E.3a}
\end{equation}
\end{theorem}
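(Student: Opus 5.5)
The statement is an imported interface rather than a new result, so the plan is to derive it as a specialization of \cite[Theorem~6.2.4]{JiangLiuYuLinearization} by matching definitions and hypotheses one at a time. First, encode the promise problem. Extend $f$ to a total $\{0,1\}$-valued relation by declaring every output correct on $f^{-1}(*)$. Then an algorithm that ``solves the promise problem with failure probability $\delta$'' on the mollified stream distribution is exactly an algorithm for their approximation problem with the same failure probability. The smoothness condition \eqref{eq:E.2} becomes their mollification hypothesis: the answer is unchanged by adding an independent $\gamma_q$ sample, except with probability $\delta$.

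Second, check the parameters. The regime $q\ge N^{20}$ with $q$ sufficiently large falls within their polynomial-modulus range. The diameter bound $q^{5/4}$ on $\operatorname{supp}\mathcal I$ is what makes the shifted Gaussian blocks $\gamma_q^{\mathrm{tr}}$ statistically indistinguishable under the translates used in their hybrid argument. I would verify that the truncation radius $C_{\mathrm{tr}}q(\sqrt N+\log q)$ loses mass at most $q^{-100}$, using \eqref{eq:E.1}. I would also confirm that the canonical stream in the displayed construction agrees with \cite[Definitions~6.1, 6.1.1, 6.2.2]{JiangLiuYuLinearization}. Its length is $q^2\cdot O(q\sqrt N)$ plus $\norm{\Delta_{\mathrm{fin}}}_1$, which is polynomial in $q$; this is the stream-length bound recorded for later use.

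Third, read off the conclusion. Their theorem produces a state-counting linearization: an integer matrix with entries bounded by $q$ and $r=O(S/\log q)$ rows. The $\log q$ denominator comes from each lattice row contributing about $\log q$ bits of distinguishable states. It also produces a decoder whose error against the promise-restricted $f$ is a constant multiple of $\delta$ plus the truncation losses $q^{-100}$. I would absorb the additive $+2$ in $S+2$ and the $q^{-100}$ terms into the constants $C$ and $16$ in \eqref{eq:E.3}--\eqref{eq:E.3a}, using $\delta\ge q^{-100}$ or otherwise adjusting the constant.

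The main obstacle is bookkeeping rather than mathematics. Their statement may be phrased for total functions, or with a differently normalized discrete Gaussian (the $\pi/q^2$ convention versus a variance convention), or with failure probability measured on the untruncated distribution. So the step I expect to need the most care is confirming that the conversion from their convention to \eqref{eq:E.2} changes only absolute constants, and that the promise-to-relation reduction preserves the error event in \eqref{eq:E.3a}: errors are counted only where $f(Y)\in\{0,1\}$.
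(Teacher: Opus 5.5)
Your plan follows the paper's route. The paper does not reprove this statement; it quotes it as the promise-problem case of \cite[Theorem~6.2.4]{JiangLiuYuLinearization}, with the stream distribution taken from their Definitions~6.1, 6.1.1, and~6.2.2, which is exactly the specialization you describe. Three of your side remarks need small corrections. First, the second bound in \eqref{eq:E.1} has radius $Cq\sqrt{N\log q}$, which exceeds $C_{\mathrm{tr}}q(\sqrt N+\log q)$ when $\log q\ll N$, so the truncation mass has to come from the norm concentration $\norm{Z}_2\le Cq(\sqrt N+\sqrt{\log q})$ rather than from \eqref{eq:E.1}. Second, each block $\mathsf{can}(X_i)$ has length $\norm{X_i}_1$, which is of order $qN$ rather than $q\sqrt N$; this gives the $q^3(N+\sqrt N\log q)$ term of \eqref{eq:E.4}. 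Third, an additive $q^{-100}$ could not be absorbed into $16\delta$ by changing constants when $\delta<q^{-100}$, though the paper quotes the bound directly as $16\delta$, so no absorption is needed.
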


\paragraph{Stream length.}
Every stream in the mollified distribution consists of unit updates.  If
$\operatorname{supp}\mathcal I$ lies in the Euclidean ball of radius $H$ about the origin, then
every such stream has length at most
\begin{equation}
 C\left[q^3\{N+\sqrt N\log q\}+\sqrt N\,H\right].         \tag{E.4}\label{eq:E.4}
\end{equation}
Indeed, there are $q^2$ prefix blocks
$\mathsf{can}(X_1),\ldots,\mathsf{can}(X_{q^2})$ and the final block
$\mathsf{can}(\Delta_{\mathrm{fin}})$.  The truncated discrete Gaussians satisfy
$\norm{X_i}_2,\norm{Z^{\mathrm{tr}}}_2\le Cq(\sqrt N+\log q)$, and
$\operatorname{length}(\mathsf{can}(v))=\norm v_1\le\sqrt N\norm v_2$.  The triangle
inequality for the final block then gives \eqref{eq:E.4}.

The second interface is the integer-to-real lifting theorem of Gribelyuk, Lin, Woodruff, Yu, and
Zhou~\cite{GribelyukEtAlLifting}, in the parameter-explicit form given by Jiang, Liu, and
Yu~\cite[Theorem~7.1.2]{JiangLiuYuLinearization}.  We state only its binary consequence.

For the remainder of this appendix, fix
\[
 \delta_{\mathrm{lift}}\defeq\frac1{20}.
\]

\begin{theorem}[Integer-to-real Gaussian lifting]\label{ext:integer-lifting}
Let $q\ge(2N/\delta_{\mathrm{lift}})^{30}$ be sufficiently large and let
$A_0\in\mathbb Z^{r\times N}$ have entries bounded by $q$ and $r\le N/4$.
Preprocessing $A_0$ as in~\cite{GribelyukEtAlLifting} produces an integer matrix
$\widetilde A$ with at most $4r$ rows that retains the information in $A_0x$, and a row
coisometry $L$ with the same real row span as $\widetilde A$.

For each $b\in\{0,1\}$, let $Y_b$ be an arbitrary $\mathbb Z^N$-valued random vector.  Let $X$
be an independent generalized discrete Gaussian with scale matrix $\mathsf S$, while $G$ is the
associated independent continuous Gaussian.
Suppose
\begin{equation}
 q^{6/5}\ge\sigma_1(\mathsf S)\ge\sigma_N(\mathsf S)
 \ge Nq\log q.                                            \tag{E.5}\label{eq:E.5}
\end{equation}
Let $\psi:\mathbb R^N\to\{0,1\}$ satisfy
\begin{equation}
 \Prb\{\psi(X+Y_b)=b\}\ge1-\frac{\delta_{\mathrm{lift}}}{3}
 \qquad\text{for every }b\in\{0,1\}.                                  \tag{E.5a}\label{eq:E.5a}
\end{equation}
If there is a decoder $g$ such that, for each $b\in\{0,1\}$,
\begin{equation}
 \Prb\{g(\widetilde A(X+Y_b))=\psi(X+Y_b)\}
 \ge1-\frac{\delta_{\mathrm{lift}}}{3},                                  \tag{E.5b}\label{eq:E.5b}
\end{equation}
then a decoder of $L(G+Y_b)$ recovers $b$ with probability at least
$1-\delta_{\mathrm{lift}}$ under each label.
\end{theorem}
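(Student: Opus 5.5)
The plan is to derive this binary statement directly from the parameter-explicit lifting theorem of Jiang, Liu, and Yu~\cite[Theorem~7.1.2]{JiangLiuYuLinearization}, which in turn packages the argument of~\cite{GribelyukEtAlLifting}. No new analysis is needed; the work is a composition of decoders with careful bookkeeping of failure probabilities.

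First I will recall the preprocessing step. Starting from $A_0$ with entries bounded by $q$ and $r\le N/4$ rows, the procedure of~\cite{GribelyukEtAlLifting} appends at most $3r$ further integer rows. The resulting $\widetilde A$ has at most $4r\le N$ rows, its rows span a lattice-friendly subspace, and $A_0x$ is a fixed linear function of $\widetilde Ax$, so $\widetilde A$ retains the information in $A_0x$. Let $L$ be the row coisometry onto the real row span of $\widetilde A$, obtained for instance from a compact singular value decomposition as in Step~1 of Section~\ref{sec:lower}.

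The substantive input is the coupling statement of the cited theorem. Under the scale window \eqref{eq:E.5} and $q\ge(2N/\delta_{\mathrm{lift}})^{30}$, it provides, for every fixed integer shift $y$ and hence, after averaging, for every independent $\mathbb Z^N$-valued $Y_b$, a measurable map $\Theta$ (possibly randomized, independent of $b$) with
\[
 \TV\bigl(\operatorname{Law}(\Theta(L(G+Y_b))),\,\operatorname{Law}(\widetilde A(X+Y_b))\bigr)\le\delta_{\mathrm{lift}}/3 .
\]
The mechanism is that a discrete Gaussian of scale at least $Nq\log q$ is smoothing with respect to the lattice $\widetilde A\mathbb Z^N$, because the entries of $\widetilde A$ are bounded by $q$. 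Consequently $\widetilde A X$ is close to the discretization of a continuous Gaussian, and that discretization can be simulated from $LG$ together with the integer offset $\widetilde AY_b$, which is seen only through $L(G+Y_b)$. The upper bound $q^{6/5}$ keeps the smoothing parameter comparable across directions, so the simulation error stays below $\delta_{\mathrm{lift}}/3$.

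Given this coupling, the decoder is $g\circ\Theta$ applied to $L(G+Y_b)$. Under label $b$, the coupling bound shows that $g(\Theta(L(G+Y_b)))$ has the same law as $g(\widetilde A(X+Y_b))$ up to total variation $\delta_{\mathrm{lift}}/3$. By \eqref{eq:E.5b}, $g(\widetilde A(X+Y_b))$ agrees with $\psi(X+Y_b)$ except on probability $\delta_{\mathrm{lift}}/3$. By \eqref{eq:E.5a}, $\psi(X+Y_b)=b$ except on probability $\delta_{\mathrm{lift}}/3$. A union bound then gives success probability at least $1-\delta_{\mathrm{lift}}$ under each label.

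The main obstacle is checking that the cited theorem's hypotheses and normalizations match the ones stated here: the discrete-Gaussian scale convention, the requirement $r\le N/4$, and the relation between the exponent $30$ and $\delta_{\mathrm{lift}}$. In particular, its total-variation error must be at most $\delta_{\mathrm{lift}}/3$ uniformly over arbitrary integer shifts $Y_b$. I would first verify the coupling for a deterministic shift and then average over $Y_b$ by convexity of total variation.
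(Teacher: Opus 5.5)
Your proposal matches the paper, which gives no proof of this statement and instead imports it directly from the parameter-explicit lifting theorem of Jiang, Liu, and Yu (packaging Gribelyuk et al.), recording only its binary consequence; your route is the natural unpacking of that citation: a shift-equivariant simulator of $\widetilde A(X+Y_b)$ from $L(G+Y_b)$, composed with $g$, followed by the three-way $\delta_{\mathrm{lift}}/3$ union bound. Your descriptions of the preprocessing (``appends $3r$ rows'') and of the role of the upper bound $q^{6/5}$ are conjectural, but nothing depends on them: as the paper remarks, only the $4r$ row bound, the orthonormal-row output map, and the allowance for arbitrary label-dependent integer offsets $Y_b$ are used.
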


Only three features of Theorem~\ref{ext:integer-lifting} will matter: the row count grows by at
most four, the output real map has orthonormal rows, and the offset law $Y_b$ may be arbitrary and
label-dependent.  We will make all input errors smaller than a fixed sufficiently small absolute
constant, so the numerical error thresholds in the quoted theorem are satisfied.

\subsection{A rounded and mollification-stable hard pair}

We first extract the distributional content of the lower-bound proof.  Fix a sufficiently small
constant $\delta_0>0$, to be chosen below.  The construction in the proof of
Corollary~\ref{cor:noneven} gives, for all sufficiently large $n$, random matrices
\begin{equation}
 F_b=\sqrt n\,UD_bV^T,\qquad b\in\{0,1\},                 \tag{E.6}\label{eq:E.6}
\end{equation}
where $U,V$ are independent Haar orthogonal matrices and $D_b$ is deterministic.  Put
\[
 s_b=\left(\frac1n\sum_{i=1}^n x_i(b)^{p/2}\right)^{1/p},
\]
where $D_b=\operatorname{diag}(\sqrt{x_1(b)},\ldots,\sqrt{x_n(b)})$.
Equations \eqref{eq:A.p6}--\eqref{eq:A.p7} imply, after increasing the lower threshold for $n$,
that
\begin{equation}
 \norm{F_b}_F\le n\sqrt{B},\qquad
 \norm{F_b}_{\Sp{p}}=\Lambda_p s_b,\qquad
 \frac{s_1-s_0}{s_1+s_0}\ge\eps+\eta_\eps,\qquad
 s_1\ge c_{p,\eps}>0.                                    \tag{E.7}\label{eq:E.7}
\end{equation}
Here $B\le2(K+2)^{M_{p,\eps}}$ is the support bound in the non-even-$p$ construction and
$\eta_\eps=(1-\eps)/4$.

Choose a fixed $\tau=\tau(p,\eps,\delta_0)>0$ sufficiently small.  Repeating the observation
comparison in the proof of Corollary~\ref{cor:noneven} with Gaussian coefficient $\tau$ changes
only constants depending on $p,\eps$, and $\delta_0$.  In particular, the bound
\eqref{eq:A.p11} remains valid after changing its fixed constant.  Consequently there is a
finite $B_{p,\eps}$ such that, with
\begin{equation}
 d_{p,\eps}(n)=\left\lfloor\frac{n^2}{(\log n)^{B_{p,\eps}}}\right\rfloor, \tag{E.8}\label{eq:E.8}
\end{equation}
every row coisometry $\mathsf L$ of rank at most $d_{p,\eps}(n)$ satisfies
\begin{equation}
 \TV\bigl(\operatorname{Law}(\mathsf L(F_0+\tau G)),
           \operatorname{Law}(\mathsf L(F_1+\tau G))\bigr)\le\frac14,     \tag{E.9}\label{eq:E.9}
\end{equation}
where $G$ has independent standard Gaussian entries.  This is the same calculation as
\eqref{eq:A.p11} followed by the optimization in \eqref{eq:4.25}--\eqref{eq:4.26}; making the
fixed smoothing coefficient smaller only enlarges the constant in the exponent $B_{p,\eps}$.

We now choose the integer-sketch entry bound $q$ and let
\begin{equation}
 \lambda=C_{\mathrm{lift}}Nq\log q,\qquad
 a=\frac{\lambda}{\tau},                                  \tag{E.10}\label{eq:E.10}
\end{equation}
where the fixed absolute constant $C_{\mathrm{lift}}$ absorbs the normalization convention in
Theorem~\ref{ext:integer-lifting}.  Round the orbit part entrywise:
\begin{equation}
 Y_b=\operatorname{round}(aF_b)\in\mathbb Z^{n\times n},\qquad
 E_b=Y_b-aF_b.                                            \tag{E.11}\label{eq:E.11}
\end{equation}
Then deterministically
\begin{equation}
 \norm{E_b}_F\le\frac n2,\qquad
 \norm{E_b}_{\Sp{p}}\le r_{p,n}\defeq
 \begin{cases}
  \Lambda_p/2,&0<p<2,\\
  n/2,&p\ge2.
 \end{cases}                                                            \tag{E.12}\label{eq:E.12}
\end{equation}
The Schatten bound follows from
$\norm M_{\Sp{p}}\le n^{1/p-1/2}\norm M_F$ for $p<2$ and
$\norm M_{\Sp{p}}\le\norm M_F$ for $p\ge2$.

Let $G_\lambda$ have independent $N(0,\lambda^2)$ entries.  Conditional on $F_b$, the two
Gaussian laws with means $aF_b$ and $Y_b$ and common covariance $\lambda^2I_N$ have relative
entropy $\norm{E_b}_F^2/(2\lambda^2)$.  Pinsker's inequality and averaging over $F_b$ therefore
give, for every row coisometry $\mathsf L$,
\begin{equation}
 \begin{aligned}
 &\TV\bigl(\operatorname{Law}(\mathsf L(G_\lambda+aF_b)),
            \operatorname{Law}(\mathsf L(G_\lambda+Y_b))\bigr)\\
 &\qquad\le
 \E_{F_b}\!\left[
   \TV\bigl(\operatorname{Law}(\mathsf L(G_\lambda+aF_b)\mid F_b),
             \operatorname{Law}(\mathsf L(G_\lambda+Y_b)\mid F_b)\bigr)
 \right]\\
 &\qquad\le\frac{\E_{F_b}\norm{E_b}_F}{2\lambda}
 \le\frac n{4\lambda}.
 \end{aligned}                                                              \tag{E.13}\label{eq:E.13}
\end{equation}
The unrounded pair $G_\lambda+aF_b=a(F_b+\tau G)$ is an invertible rescaling of the pair in
\eqref{eq:E.9}.  Hence, for all large $n$, the triangle inequality gives
\begin{equation}
 \TV\bigl(\operatorname{Law}(\mathsf L(G_\lambda+Y_0)),
           \operatorname{Law}(\mathsf L(G_\lambda+Y_1))\bigr)<\frac13      \tag{E.14}\label{eq:E.14}
\end{equation}
whenever $\rank\mathsf L\le d_{p,\eps}(n)$.  Thus integer rounding preserves the real-sketch hard pair
uniformly over all relevant row spaces.

It remains to construct a smooth integer promise.  Let $X_\lambda$ be the product generalized
discrete Gaussian whose associated continuous law in Theorem~\ref{ext:integer-lifting} is
$G_\lambda$, independent of $Y_b$.  By \eqref{eq:E.1} and \eqref{eq:E.1a}, after fixing a
constant $C_0=C_0(p,\delta_0)$, each of
\begin{equation}
 \norm{X_\lambda}_{\Sp{p}}\le C_0\lambda\Lambda_p,\qquad
 \norm{Z_q}_{\Sp{p}}\le C_0q\Lambda_p,\quad Z_q\sim\gamma_q,            \tag{E.15}\label{eq:E.15}
\end{equation}
fails with probability at most $\delta_0$.  For $p<2$, these estimates follow from the
Frobenius bounds in \eqref{eq:E.1} and
$\norm M_{\Sp{p}}\le n^{1/p-1/2}\norm M_F$.  For $p\ge2$, they follow from
\eqref{eq:E.1a} and $\norm M_{\Sp{p}}\le n^{1/p}\norm M_{\op}$.

We next define deterministic promise thresholds.  If $p\ge1$, put
\begin{equation}
 \begin{aligned}
 M_b&=a\Lambda_p s_b,\\
 R_{\mathrm{base}}&=C_0\lambda\Lambda_p+r_{p,n},
 &R_{\mathrm{mol}}&=C_0q\Lambda_p,\\
 U_0&=M_0+R_{\mathrm{base}}+R_{\mathrm{mol}},
 &L_1&=M_1-R_{\mathrm{base}}-R_{\mathrm{mol}}.
 \end{aligned}                                             \tag{E.16}\label{eq:E.16}
\end{equation}
Equation \eqref{eq:E.7} gives
\[
 (1-\eps)L_1-(1+\eps)U_0
 \ge a\Lambda_p\eta_\eps(s_0+s_1)
      -2(R_{\mathrm{base}}+R_{\mathrm{mol}}).
\]
Because $a=\lambda/\tau$, $s_0+s_1\ge c_{p,\eps}$, and
$\lambda/q=C_{\mathrm{lift}}N\log q$, choosing the fixed
$\tau=\tau(p,\eps,\delta_0)>0$ sufficiently small makes the last expression positive for all
large $n$.

If $0<p<1$, put
\begin{equation}
 \begin{aligned}
 P_b&=a^p\Lambda_p^p s_b^p,\\
 Q_{\mathrm{base}}&=(C_0\lambda\Lambda_p)^p+r_{p,n}^p,
 &Q_{\mathrm{mol}}&=(C_0q\Lambda_p)^p,\\
 U_0&=(P_0+Q_{\mathrm{base}}+Q_{\mathrm{mol}})^{1/p},
 &L_1&=(P_1-Q_{\mathrm{base}}-Q_{\mathrm{mol}})^{1/p}.
 \end{aligned}                                             \tag{E.16a}\label{eq:E.16a}
\end{equation}
To verify that $L_1$ is real and that the thresholds are separated, define
\[
 \theta_{p,\eps}=\frac{1-\eps-\eta_\eps}{1+\eps+\eta_\eps}
 <\frac{1-\eps}{1+\eps}.
\]
Equation \eqref{eq:E.7} gives $s_0\le\theta_{p,\eps}s_1$ and
$s_1\ge c_{p,\eps}$.  Hence
\[
 (1-\eps)^pP_1-(1+\eps)^pP_0
 \ge a^p\Lambda_p^p c_{p,\eps}^p
 \left\{(1-\eps)^p-(1+\eps)^p\theta_{p,\eps}^p\right\}>0.
\]
After the same choice of a sufficiently small fixed $\tau$, the total error satisfies
\[
 \bigl\{(1-\eps)^p+(1+\eps)^p\bigr\}
 (Q_{\mathrm{base}}+Q_{\mathrm{mol}})
 <(1-\eps)^pP_1-(1+\eps)^pP_0,
\]
because $a=\lambda/\tau$ and $q/\lambda=o(1)$.  In particular, $P_1$ exceeds the total error,
so $L_1$ is well defined.
It follows, after raising to the power $p$, that the same strict separation holds in both ranges:
\begin{equation}
 (1-\eps)L_1>(1+\eps)U_0,                                \tag{E.17}\label{eq:E.17}
\end{equation}
where the thresholds are given by \eqref{eq:E.16} for $p\ge1$ and by \eqref{eq:E.16a} for
$0<p<1$.

Define the Schatten--$p$ promise problem
\begin{equation}
 f(T)=
 \begin{cases}
  0,&\norm T_{\Sp{p}}\le U_0,\\
  1,&\norm T_{\Sp{p}}\ge L_1,\\
  *,&\text{otherwise}.
 \end{cases}                                              \tag{E.19}\label{eq:E.19}
\end{equation}
For the later lifting step, fix the following total binary extension of this promise:
\begin{equation}
 \overline f(T)=\mathbf1_{\{\norm T_{\Sp{p}}\ge L_1\}}.
                                                               \tag{E.19a}\label{eq:E.19a}
\end{equation}
Thus $\overline f=f$ whenever $f\in\{0,1\}$; on the gap where $f=*$, the chosen extension is
zero.
For $T_b=X_\lambda+Y_b$, equations \eqref{eq:E.12} and \eqref{eq:E.15} show that, except with
probability $2\delta_0$,
\begin{equation}
 f(T_b)=\overline f(T_b)=b
 \quad\text{and}\quad f(T_b+Z_q)=f(T_b).                  \tag{E.20}\label{eq:E.20}
\end{equation}
For $p\ge1$, this follows from the triangle and reverse-triangle inequalities.  For $0<p<1$,
it follows instead from \eqref{eq:A.p9}, applied first to the base perturbation and then to the
mollification.  This proves the required average smoothness before truncation in both ranges.

Finally, truncate only to obtain bounded support.  A second application of \eqref{eq:E.1} gives an
event $\mathcal E$ with probability at least $1-q^{-100}$ on which
\[
 \norm{X_\lambda}_F\le C\lambda\sqrt{N\log q}.
\]
Since \eqref{eq:E.7} and \eqref{eq:E.11} give
$\norm{Y_b}_F\le a\sqrt{NB}+\sqrt N/2$, the conditional laws
\begin{equation}
 \mathcal I_b=\operatorname{Law}(T_b\mid\mathcal E),\qquad
 \mathcal I=\tfrac12\mathcal I_0+\tfrac12\mathcal I_1                  \tag{E.21}\label{eq:E.21}
\end{equation}
are supported in the Euclidean ball of radius
\[
 H\defeq C_{p,\eps}\lambda\sqrt{N\log q}\,(\log n)^{M_{p,\eps}/2}.
\]
If $q\ge N^{C_*(p,\eps)}$ for a sufficiently large constant $C_*(p,\eps)$, then
\eqref{eq:E.10} gives
\begin{equation}
 \operatorname{diam}(\operatorname{supp}\mathcal I)
 \le2H
 \le C_{p,\eps}N^{3/2}q(\log q)^{3/2}(\log n)^{M_{p,\eps}/2}
 \le q^{5/4}.                                             \tag{E.22}\label{eq:E.22}
\end{equation}
Conditioning changes the probabilities in \eqref{eq:E.20} by at most $O(q^{-100})$.
Consequently, after making $\delta_0$ a sufficiently small fixed constant, $f$ is
$3\delta_0$-smooth on average on $\mathcal I$ with respect to $\gamma_q$, and
$f(T_b)=b$ with probability at least $1-3\delta_0$ under each truncated label.

\subsection{Proof of the streaming corollary}

Assume that an $S$-bit algorithm as in Corollary~\ref{cor:streaming} exists.  A constant number of
independent repetitions and the median reduce its failure probability below $\delta_0$ using
$O(S)$ bits.  Put
\begin{equation}
 q=\left\lfloor W^{1/10}\right\rfloor.                   \tag{E.23}\label{eq:E.23}
\end{equation}
If $W\ge n^{C_{\mathrm{str}}(p,\eps)}$ and $C_{\mathrm{str}}(p,\eps)$ is sufficiently large,
then
\begin{equation}
 q\ge\max\{N^{20},(2N/\delta_{\mathrm{lift}})^{30},N^{C_*(p,\eps)}\},\qquad
 \lambda\le q^{6/5}.                                     \tag{E.24}\label{eq:E.24}
\end{equation}
The lower inequality in \eqref{eq:E.5} follows directly from \eqref{eq:E.10}; hence the entire
lifting window holds.  The same choice makes the safe stream-length bound \eqref{eq:E.4} at most
$W$.  Indeed, \eqref{eq:E.22} gives $H\le q^{5/4}$, and therefore
\begin{equation}
 \begin{aligned}
 C\left[q^3\{N+\sqrt N\log q\}+\sqrt N\,H\right]
 &\le C\left[W^{3/10}\{n^2+n\log W\}+nW^{1/8}\right]\\
 &\le W,
 \end{aligned}                                            \tag{E.25}\label{eq:E.25}
\end{equation}
once $C_{\mathrm{str}}(p,\eps)$ is sufficiently large.

By \eqref{eq:E.17}, a $(1\pm\eps)$ Schatten--$p$ estimate solves the promise problem
\eqref{eq:E.19}: choose any threshold strictly between $(1+\eps)U_0$ and $(1-\eps)L_1$.
Thus the amplified streaming algorithm solves the promise problem on the mollified stream
distribution associated with \eqref{eq:E.21}.  Theorem~\ref{ext:mollified-transfer} supplies an
integer sketch $A_0$ with
\begin{equation}
 r\le C\frac{S+2}{\log q}                                 \tag{E.26}\label{eq:E.26}
\end{equation}
and entries bounded by $q$.  Apply that theorem with error parameter $3\delta_0$.  Its decoder
fails to solve the promise $f$ on the mixture \eqref{eq:E.21} with probability at most
$48\delta_0$.  Because the two labels have equal weight, its promise-problem failure probability
under either individual truncated label is at most $96\delta_0$.  Moreover,
\eqref{eq:E.20} and the conditioning estimate show that
$f(T_b)=\overline f(T_b)=b$ fails with probability at most $3\delta_0$ under either truncated
label.  On the complementary event, promise correctness is exactly correctness for the total
function $\overline f$.  Thus the same decoder computes $\overline f(T_b)$ with failure
probability at most $99\delta_0$ under each truncated label.

The preprocessing in Theorem~\ref{ext:integer-lifting} retains $A_0x$, so this decoder can be run
from $\widetilde A x$.  Removing the conditioning event $\mathcal E$ costs at most $q^{-100}$.
Choose $\delta_0$ so that
\[
 \max\{2\delta_0,99\delta_0+q^{-100}\}
 <\frac{\delta_{\mathrm{lift}}}{3}
\]
for all sufficiently large $q$.  Equations \eqref{eq:E.20} and the preceding decoder bound then
verify \eqref{eq:E.5a}--\eqref{eq:E.5b}, with $\psi=\overline f$, for the unconditioned pair
$X_\lambda+Y_b$.

If $r>N/4$, then \eqref{eq:E.26} already implies
$S=\Omega(N\log q)$, which is stronger than the desired result.  Otherwise apply
Theorem~\ref{ext:integer-lifting}.  It produces a row coisometry with at most $4r$ rows that
distinguishes the real laws $G_\lambda+Y_0$ and $G_\lambda+Y_1$ with success probability greater
than $2/3$.  This contradicts \eqref{eq:E.14} whenever $4r\le d_{p,\eps}(n)$, because equal-prior
binary testing from two laws at total variation below $1/3$ has success probability below $2/3$.
Therefore
\begin{equation}
 r=\Omega_{p,\eps}\!\left(\frac{n^2}{(\log n)^{B_{p,\eps}}}\right).      \tag{E.27}\label{eq:E.27}
\end{equation}
Combining \eqref{eq:E.23}, \eqref{eq:E.26}, and \eqref{eq:E.27}, and absorbing the additive
constant in \eqref{eq:E.26} for large $n$, gives
\[
 S=\Omega_{p,\eps}\!\left(
   \frac{n^2\log q}{(\log n)^{B_{p,\eps}}}
  \right)
 =\Omega_{p,\eps}\!\left(
   \frac{n^2\log W}{(\log n)^{B_{p,\eps}}}
  \right).
\]
This proves Corollary~\ref{cor:streaming}.

\section{Upper bounds for finite non-even Schatten exponents}
\label{app:noneven-upper}

This appendix proves Corollary~\ref{cor:noneven-upper}.  The construction uses the same three
ingredients as the nuclear-norm upper bound---a multiscale spectral cutoff, a regularized
Gaussian row sketch, and unbiased estimates of integer spectral moments---but makes two changes.
First, the leading singular-value contribution is read directly from the singular values of the
row sketch.  Second, the tail polynomial approximates $x^{p/2}$ rather than $\sqrt x$.  Estimating
the exact spectral head and a near-optimal residual separately avoids any need to assert
additivity between the two projected matrices when $p\ne1$.

Throughout the appendix, fix
\[
 0<p<\infty,\qquad p\notin\{2,4,6,\ldots\},\qquad 0<\eps<1,
 \qquad \vartheta=\frac p2.
\]
All constants below may depend on $p$ and $\eps$.  For a matrix $A$ with decreasing singular
values, write
\[
 T_{p,h}(A)=\left(\sum_{i>h}\sigma_i(A)^p\right)^{1/p},
 \qquad F_h(A)^2=\sum_{i>h}\sigma_i(A)^2.
\]

\subsection{A power-dependent multiscale cutoff}

\begin{lemma}[Schatten--$p$ head--tail dichotomy]\label{lem:F-dichotomy}
Fix $1\le R\le n$ and $0<\eta<1$.  Put
\begin{equation}
 h_j=\min\{jR,n\},\qquad
 J=\left\lceil 2+\log_2\!\left(
       \eta^{-1}(n/R)^{(1/p-1/2)_+}\right)\right\rceil .       \tag{F.1}\label{eq:F.1}
\end{equation}
For at least one $0\le j\le J$, either
\[
 T_{p,h_j}(A)\le\eta\norm A_{\Sp{p}},
\]
or $A-A_{h_j}$ has stable rank at least $R/4$.
\end{lemma}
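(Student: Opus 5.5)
I will follow the proof of Lemma~\ref{lem:dichotomy} nearly verbatim, changing only the final tail-to-total comparison so that it uses Schatten--$p$ norms. First I dispose of the trivial cases. If some inspected tail vanishes, the first alternative holds. If some inspected residual has stable rank at least $R/4$, the second holds. So assume every inspected residual $A-A_{h_j}$ is nonzero with stable rank below $R/4$. Since $T_{p,n}(A)=0$, no inspected cutoff equals $n$. Hence $h_j=jR<n$ for $0\le j\le J$, and the values $b_j=\sigma_{jR+1}(A)>0$ are defined. Exactly as before, the block $jR+1,\dots,(j+1)R$ contains $R$ singular values that are all at least $b_{j+1}$, while $F_{jR}^2<Rb_j^2/4$. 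This gives $b_{j+1}<b_j/2$, and therefore $b_j\le 2^{-(j-1)}b_1$.

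Next I compare tail and head in Schatten--$p$. The head satisfies $\norm A_{\Sp{p}}^p\ge Rb_1^p$, because the first $R$ singular values are all at least $b_1$. For the tail, write $F=F_{jR}(A)$ and let $N$ be the number of nonzero tail singular values, so $N\le n$. I split into two cases by $p$.

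\emph{Case $p\le2$.} Hölder (power-mean) gives $T_{p,jR}\le N^{1/p-1/2}F< n^{1/p-1/2}\sqrt{R}\,b_j/2$.

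\emph{Case $p>2$.} Every tail singular value is at most $b_j$, so $\sum_{i>jR}\sigma_i^p\le b_j^{p-2}F^2<b_j^pR/4$. Hence $T_{p,jR}<R^{1/p}b_j$.

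Dividing by $R^{1/p}b_1$ in both cases gives
\[
 \frac{T_{p,jR}(A)}{\norm A_{\Sp{p}}}<\left(\frac nR\right)^{(1/p-1/2)_+}\frac{b_j}{b_1}
 \le \left(\frac nR\right)^{(1/p-1/2)_+}2^{1-j}.
\]
For $p\le2$ this uses $n^{1/p-1/2}R^{1/2-1/p}=(n/R)^{1/p-1/2}$.

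Finally I take $j=J$. The definition \eqref{eq:F.1} gives $2^{J}\ge4\eta^{-1}(n/R)^{(1/p-1/2)_+}$, so the right side is at most $\eta/2<\eta$. This proves the first alternative at $h_J$, which contradicts the standing assumptions unless the lemma already holds. One check is needed: $J\ge1$, so that the step $j=J$ uses the decay bound. That holds because the $+2$ in \eqref{eq:F.1} together with $\eta<1$ forces $J\ge3$.

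The only step needing care is the $p>2$ tail bound. There the crude Hölder route would lose a factor $n^{1/2-1/p}$, and that loss would be fatal. The fix is the interpolation $\sigma^p\le b_j^{p-2}\sigma^2$ for tail singular values, which is exactly why the exponent $(1/p-1/2)_+$ vanishes above $p=2$. Everything else transfers from Lemma~\ref{lem:dichotomy} without change.
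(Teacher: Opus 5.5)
Your proof is correct and follows the paper's argument essentially line for line. Both reduce to the case where every inspected residual is nonzero with stable rank below $R/4$, derive the halving $b_{j+1}<b_j/2$, use the power-mean bound for $p\le 2$ and $\sigma^p\le b_j^{p-2}\sigma^2$ for $p>2$, and compare against $\norm A_{\Sp{p}}^p\ge Rb_1^p$. One wording quibble: reaching $T_{p,h_J}(A)\le\eta\norm A_{\Sp{p}}$ does not contradict the standing assumptions; it establishes the first alternative directly.
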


\begin{proof}
As in Lemma~\ref{lem:dichotomy}, suppose that every inspected tail is nonzero and has stable
rank below $R/4$; no inspected cutoff can then equal $n$.  Set
$b_j=\sigma_{jR+1}(A)$.  The stable-rank inequalities give
\[
 \sum_{i>jR}\sigma_i(A)^2<\frac R4b_j^2,
 \qquad b_{j+1}<\frac12b_j,
\]
and hence $b_j\le2^{-(j-1)}b_1$ for $j\ge1$.

If $0<p<2$, comparison of the $\ell_p$ and $\ell_2$ norms gives
\[
 T_{p,jR}(A)
 \le n^{1/p-1/2}\left(\sum_{i>jR}\sigma_i(A)^2\right)^{1/2}
 <\frac{n^{1/p-1/2}\sqrt R}{2}b_j.
\]
The first $R$ singular values are at least $b_1$, so
$\norm A_{\Sp{p}}\ge R^{1/p}b_1$.  Therefore
\begin{equation*}
 \frac{T_{p,jR}(A)}{\norm A_{\Sp{p}}}
 \le (n/R)^{1/p-1/2}2^{-j}.
\end{equation*}
If $p>2$, every singular value in the $j$th tail is at most $b_j$, and consequently
\[
 T_{p,jR}(A)^p
 \le b_j^{p-2}\sum_{i>jR}\sigma_i(A)^2
 <\frac R4b_j^p,
 \qquad \norm A_{\Sp{p}}^p\ge Rb_1^p.
\]
Thus the ratio of the two norms is at most
$4^{-1/p}2^{-(j-1)}$.  In both ranges, the choice \eqref{eq:F.1} makes the last inspected tail
$\eta$-negligible.  This proves the lemma.
\end{proof}

\subsection{A regularized row sketch estimates the exact head and a near-optimal tail}

The next lemma is the main replacement for Steps~2--3 and~5 of the nuclear-norm upper bound.
The head statistic in the lemma concerns the exact leading singular values of $A$, not the norm
of the projected matrix $AP_h$.

\begin{lemma}[Simultaneous head and residual comparison]\label{lem:F-split}
Fix $0<\kappa<1/10$, and let $h_0,\ldots,h_J$ be the grid in
Lemma~\ref{lem:F-dichotomy}, with $H=h_J$.  There is a Gaussian matrix
$G\in\R^{s_G\times n}$ with independent $N(0,1/s_G)$ entries, where
\begin{equation}
 s_G\le C_{p,\kappa}
 \begin{cases}
   H+\log(J+1),&0<p<2,\\
   n^{1-2/p}H^{2/p}+\log(J+1),&p>2,
 \end{cases}                                                   \tag{F.2}\label{eq:F.2}
\end{equation}
such that the following holds with probability at least $0.99$, simultaneously for every
candidate $h=h_j$.  Let $Q_h$ contain the top $h$ right singular vectors of $GA$, put
$P_h=Q_hQ_h^T$ and $C_h=A(I-P_h)$, and define
\[
 \mathcal H_h=\sum_{i\le h}\sigma_i(A)^p,\qquad
 \widehat{\mathcal H}_h=\sum_{i\le h}\sigma_i(GA)^p,\qquad
 \mathcal T_h=\sum_{i>h}\sigma_i(A)^p.
\]
Then
\begin{align}
 \left|\widehat{\mathcal H}_h-\mathcal H_h\right|
 &\le\kappa\norm A_{\Sp{p}}^p,                           \tag{F.3}\label{eq:F.3}\\
 \mathcal T_h
 \le\norm{C_h}_{\Sp{p}}^p
 &\le\mathcal T_h+\kappa\norm A_{\Sp{p}}^p.             \tag{F.4}\label{eq:F.4}
\end{align}
Moreover, after adding the independent Frobenius block used in \eqref{eq:5.19}, the same
Loewner event supports the stable-rank certificate of \eqref{eq:5.20}, with its multiplicative
parameter and fixed numerical constant adjusted as in the proof below.
\end{lemma}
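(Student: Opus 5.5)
We would run the same Loewner-sandwich argument as Steps~2--3, with the regularization rescaled to the Schatten--$p$ scale, and read both \eqref{eq:F.3} and \eqref{eq:F.4} off that single event through min--max.

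\textbf{Step 1: the Loewner event.} For each candidate $h$ we take a regularization level $\delta_h>0$ and apply Theorem~\ref{ext:covariance} to $D_h=A(A^TA+\delta_hI)^{-1/2}$. Conjugating back as in \eqref{eq:5.5} gives
\[
 (1-\kappa')A^TA-\kappa'\delta_hI\preceq A^TG^TGA\preceq(1+\kappa')A^TA+\kappa'\delta_hI .
\]
The row cost is $\norm{D_h}_F^2\le h+F_h^2/\delta_h$, plus the usual $\log(J+1)$ from a union bound over the grid.

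\textbf{Step 2: choosing $\delta_h$.} The requirement on $\delta_h$ is that an additive shift $\delta_h$ to every squared singular value moves the relevant $p$-power sums by at most $\kappa\norm A_{\Sp p}^p$.

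For $0<p<2$ we use $\delta_h=c\,F_h^2/h$, matching the nuclear-norm choice. Then $\norm{D_h}_F^2=O(h)$, which gives the first line of \eqref{eq:F.2}. The cost of the shift is controlled by $h\delta_h^{p/2}\le c^{p/2}h^{1-p/2}F_h^p$. This is at most a small multiple of $\norm A_{\Sp p}^p$, by the analogue of \eqref{eq:5.14}: Hölder with $x^p\ge x^2a_h^{p-2}$ on the tail gives $\norm A_{\Sp p}^p\ge ha_h^p+a_h^{p-2}F_h^2\gtrsim h^{1-p/2}F_h^p$.

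For $p>2$ that inequality fails, so we instead take $\delta_h=c\,(\norm A_{\Sp p}^p/n)^{2/p}$. Then $n\,\delta_h^{p/2}=c^{p/2}\norm A_{\Sp p}^p$ is small. By Hölder, $F_h^2\le n^{1-2/p}\mathcal T_h^{2/p}$, hence $F_h^2/\delta_h\lesssim n$. That bound is too crude, so the counting will need care. Splitting the $D_h$ count at level $\delta_h$ gives $\#\{\sigma_i^2>\delta_h\}\le\norm A_{\Sp p}^p/\delta_h^{p/2}\lesssim n$, which is also too crude. We expect the intended bound $n^{1-2/p}H^{2/p}$ to come from choosing $\delta_h\asymp(\norm A_{\Sp p}^p/h)^{2/p}\cdot(h/n)^{\cdot}$ and balancing the two terms. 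Fixing this choice is the main obstacle, and we would tune the exponent until the effective dimension matches \eqref{eq:F.2}.

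\textbf{Step 3: the head estimate \eqref{eq:F.3}.} Weyl/min--max applied to the Loewner event gives
\[
 \sigma_i(GA)^2\in\bigl[(1-\kappa')\sigma_i^2-\kappa'\delta_h,\ (1+\kappa')\sigma_i^2+\kappa'\delta_h\bigr].
\]
Summing $|x^{p/2}-y^{p/2}|$ over $i\le h$, the multiplicative part costs $O(\kappa')\mathcal H_h$. The additive part costs $O(h\,\delta_h^{p/2})$ when $p\le2$ (using subadditivity of $x^{p/2}$), and is absorbed by the mean-value theorem when $p>2$.

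\textbf{Step 4: the residual estimate \eqref{eq:F.4}.} The lower bound in \eqref{eq:F.4} is \eqref{eq:3.2a}, since $AP_h$ has rank at most $h$. For the upper bound we repeat the chain \eqref{eq:5.9}$\to$optimality$\to$\eqref{eq:5.8}, now with the unitarily invariant norm $\norm{\cdot}_{\Sp p}^p$ restricted to the top $r=\lceil h/\kappa\rceil$ indices. Beyond index $r$ we use monotonicity \eqref{eq:5.11}. The overlap is handled as in \eqref{eq:5.12}: because the $\sigma_i^p$ are monotone, $\sum_{r<i\le r+h}\sigma_i^p\le\kappa\mathcal T_h$. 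The additive error terms $r\,\delta_h^{p/2}$ are controlled exactly as in Step~2.

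\textbf{Step 5: the certificate.} The final claim follows by rereading Appendix~\ref{app:decoder}: the certificate used only the lower Loewner bound together with the $\widehat F$ comparison, with $\beta_j F^2$ replaced by $\kappa'\delta_h$.
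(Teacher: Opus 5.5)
Your plan for $0<p<2$ matches the paper's proof. It uses the same regularization $\beta_h\propto 1/h$ and the same lower bound $\norm{A}_{\Sp{p}}^p\ge ha_h^p+F_h^2a_h^{p-2}\gtrsim h^{1-p/2}F_h^p$. It also uses the same residual chain, with monotonicity beyond $r\asymp h/\kappa$. The paper regularizes with $\bar h=\max\{h,R\}$, which also covers the candidate $h=0$ where your $\delta_h=cF_h^2/h$ is undefined.

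The genuine gap is $p>2$, which is exactly where the lemma says something new. You leave the regularization level undetermined. As a result, neither the row count $n^{1-2/p}H^{2/p}$ in \eqref{eq:F.2} nor the additive errors in your Steps~3--4 are established. For $p>2$ the mean-value theorem does not remove the additive part: each index still carries an error of order $\delta_h^{p/2}$, and only the choice of $\delta_h$ controls it.

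Moreover, the direction you suggest breaks your Step~5. That direction is a level set by $\norm{A}_{\Sp{p}}^p$, such as $c(\norm{A}_{\Sp{p}}^p/n)^{2/p}$ or $c(\norm{A}_{\Sp{p}}^p/h)^{2/p}$. The certificate \eqref{eq:5.20} needs the slack to be a data-independent multiple $\beta_j\le c/R$ of $F_{h_j}^2$, for two reasons. First, only then can the decoder bound the slack by $2\beta_j\widehat F_j^2$ from stored data. Second, only then does the converse direction hold, namely that an exact tail of stable rank at least $R/4$ passes. A slack depending on the unknown $\norm{A}_{\Sp{p}}$ is not computable from the sketch. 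It also need not be $O(F_h^2/R)$, for instance when a huge head dominates $\norm{A}_{\Sp{p}}$.

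The missing idea is to keep the slack proportional to $F_h^2$. Your own H\"older bound $F_h^p\le n^{p/2-1}\mathcal T_h$ should then be spent on the error side rather than on the counting side. Take $\beta_h=c\,n^{2/p-1}\bar h^{-2/p}$ with slack $\beta_hF_h^2$. The effective dimension is then $h+\xi/\beta_h\lesssim n^{1-2/p}\bar h^{2/p}$ with no inequality needed, using $h\le n^{1-2/p}h^{2/p}$ for $h\le n$. Meanwhile $e_h=\sqrt{\beta_h}F_h$ satisfies
\[
 e_h^p=c^{p/2}n^{1-p/2}\bar h^{-1}F_h^p\le c^{p/2}\bar h^{-1}\mathcal T_h .
\]
Hence $h e_h^p$ and $r e_h^p$ are small multiples of $\norm{A}_{\Sp{p}}^p$ once $c$ is small relative to $\kappa$. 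Finally, $\bar h\ge R$ gives $(n/\bar h)^{2/p}\le n/R$, that is, $\beta_h\le c/R$. So the certificate argument of Appendix~\ref{app:decoder} carries over with only constants changed. This is exactly the paper's choice \eqref{eq:F.5}.
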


\begin{proof}
We give the parameter calculation because it is where the two ranges of $p$ diverge.  Put
$\bar h=\max\{h,R\}$ and choose a sufficiently small constant $c_{p,\kappa}>0$.  Set
\begin{equation}
 \beta_h=c_{p,\kappa}
 \begin{cases}
   \bar h^{-1},&0<p<2,\\
   n^{2/p-1}\bar h^{-2/p},&p>2.
 \end{cases}                                                   \tag{F.5}\label{eq:F.5}
\end{equation}
For $h>0$, use $F_h=F_h(A)$; for $h=0$, put $F_0=\norm A_F$.
Apply Theorem~\ref{ext:covariance} as in \eqref{eq:5.4}--\eqref{eq:5.6}, with a fixed
multiplicative parameter $\xi=\xi(p,\kappa)>0$ small enough below.  A union bound gives
simultaneously
\begin{equation}
 (1-\xi)A^TA-\beta_hF_h^2I
 \preceq A^TG^TGA
 \preceq(1+\xi)A^TA+\beta_hF_h^2I.                       \tag{F.6}\label{eq:F.6}
\end{equation}
Indeed, the effective dimension needed for candidate $h$ is at most
\[
 h+\frac{\xi}{\beta_h}
 \le C_{p,\kappa}
 \begin{cases}
   \bar h,&0<p<2,\\
   n^{1-2/p}\bar h^{2/p},&p>2.
 \end{cases}
\]
The second expression dominates $\bar h$ because $\bar h\le n$.  Maximizing over the grid proves
the row count \eqref{eq:F.2}.

Let $e_h=\sqrt{\beta_h}F_h$.  Weyl monotonicity applied to \eqref{eq:F.6}, followed by the
elementary inequality
\[
 (u+v)^p\le(1+\kappa/20)u^p+C_{p,\kappa}v^p
 \qquad(u,v\ge0),                                        \tag{F.7}\label{eq:F.7}
\]
shows, after decreasing $\xi$, that
\begin{equation}
 \left|\sigma_i(GA)^p-\sigma_i(A)^p\right|
 \le\frac\kappa4\sigma_i(A)^p+C_{p,\kappa}e_h^p.         \tag{F.8}\label{eq:F.8}
\end{equation}
For $0<p<2$, putting $a_h=\sigma_{h+1}(A)$ gives
\begin{equation}
 \norm A_{\Sp{p}}^p
 \ge h a_h^p+F_h^2a_h^{p-2}
 \ge d_p h^{1-p/2}F_h^p,                                \tag{F.9}\label{eq:F.9}
\end{equation}
where $d_p>0$ depends only on $p$; the case $a_h=0$ is immediate, and the last inequality follows
by minimizing over $a_h>0$.
For $p>2$, comparison of finite-dimensional $\ell_2$ and $\ell_p$ norms gives
\begin{equation}
 F_h^p\le n^{p/2-1}\mathcal T_h
 \le n^{p/2-1}\norm A_{\Sp{p}}^p.                       \tag{F.10}\label{eq:F.10}
\end{equation}
Equations \eqref{eq:F.5}, \eqref{eq:F.9}, and \eqref{eq:F.10} imply, respectively,
\[
 h e_h^p\le C_pc_{p,\kappa}^{p/2}\norm A_{\Sp{p}}^p
 \quad(0<p<2),
 \qquad
 h e_h^p\le c_{p,\kappa}^{p/2}\norm A_{\Sp{p}}^p
 \quad(p>2).
\]
Taking $c_{p,\kappa}$ sufficiently small and summing \eqref{eq:F.8} over $i\le h$ proves
\eqref{eq:F.3}.

We next compare the residual spectrum with the exact tail.  Because $P_h$ is the top right
singular projector of $GA$, the lower half of \eqref{eq:F.6} gives
\[
 \sigma_i(C_h)
 \le\frac{\sigma_i(GA(I-P_h))+e_h}{\sqrt{1-\xi}}.
\]
If $P_*$ is the exact top-$h$ right singular projector of $A$, individual
Eckart--Young optimality and the upper half of \eqref{eq:F.6} give
\[
 \sigma_i(GA(I-P_h))=\sigma_{h+i}(GA)
 \le\sigma_i(GA(I-P_*))
 \le\sqrt{1+\xi}\,\sigma_{h+i}(A)+e_h.
\]
Consequently, for every $i\ge1$,
\begin{equation}
 \sigma_i(C_h)
 \le(1+C\xi)\sigma_{h+i}(A)+Ce_h,                        \tag{F.11}\label{eq:F.11}
\end{equation}
with zero padding beyond index $n$.  Let
$m=\min\{n,\lceil C_{p,\kappa}h\rceil\}$, where the fixed constant is large enough that
$h/m\le\kappa/20$ whenever $m<n$; the case $h=0$ is exact.  Apply \eqref{eq:F.7} to the first
$m$ singular values in \eqref{eq:F.11}.  For the rest, right multiplication by a projection gives
$\sigma_i(C_h)\le\sigma_i(A)$.  Since the sequence $\sigma_i(A)^p$ is decreasing,
\[
 \sum_{i=m+1}^{m+h}\sigma_i(A)^p
 \le\frac hm\sum_{i=h+1}^{h+m}\sigma_i(A)^p.
\]
Consequently
\begin{equation}
 \norm{C_h}_{\Sp{p}}^p
 \le(1+\kappa/2)\mathcal T_h+C_{p,\kappa}m e_h^p.        \tag{F.12}\label{eq:F.12}
\end{equation}
The bounds following \eqref{eq:F.10}, with $m\le C_{p,\kappa}h$, make the last term at most
$(\kappa/2)\norm A_{\Sp{p}}^p$ after one final decrease of $c_{p,\kappa}$.
This proves the upper half of \eqref{eq:F.4}.  The lower half is the Eckart--Young theorem:
$AP_h$ has rank at most $h$, so \eqref{eq:3.2a} gives
$\sigma_i(C_h)\ge\sigma_{h+i}(A)$ for every $i$.  Summing the $p$th powers proves the claim,
including for the Schatten quasi-norm when $0<p<1$.

Finally, \eqref{eq:F.5} always has $\beta_h\le c_{p,\kappa}/R$.  Thus the proof of the certificate
in \eqref{eq:D.1}--\eqref{eq:D.3} applies verbatim after changing fixed constants.  In the
converse direction, an exact tail of stable rank at least $R/4$ passes the certificate because
$\beta_h\le c_{p,\kappa}/R$.  This completes the proof.
\end{proof}

\subsection{Fractional-power estimation on a certified residual}

We use an elementary polynomial approximation whose rate is deliberately conservative but
sufficient for a polylogarithmic saving.

\begin{lemma}[Bernstein approximation to a fixed power]\label{lem:F-poly}
Fix $\vartheta>0$.  For every integer $K\ge1$, there is a degree-$K$ polynomial
$P_{K,\vartheta}(u)=\sum_{j=0}^Ka_ju^j$ such that
\begin{equation}
 \max_j|a_j|\le4^K,
 \qquad
 \sup_{0\le u\le1}|P_{K,\vartheta}(u)-u^\vartheta|
 \le C_\vartheta K^{-\nu_\vartheta},
 \qquad
 \nu_\vartheta=\frac12\min\{1,\vartheta\}.             \tag{F.13}\label{eq:F.13}
\end{equation}
\end{lemma}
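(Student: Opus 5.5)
Take $P_{K,\vartheta}$ to be the Bernstein polynomial of $f(u)=u^\vartheta$ on $[0,1]$:
\[
 B_Kf(u)=\sum_{i=0}^K f(i/K)\binom Ki u^i(1-u)^{K-i}.
\]
The two claims in \eqref{eq:F.13} are handled separately: the coefficients by expanding in the monomial basis, and the approximation error by the classical modulus-of-continuity estimate.

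\emph{Coefficient bound.} Expand $(1-u)^{K-i}$ binomially and collect powers of $u$. This gives
\[
 a_j=\sum_{i\le j}f(i/K)\binom Ki\binom{K-i}{j-i}(-1)^{j-i}.
\]
Since $0\le f\le1$ on $[0,1]$, we get $|a_j|\le\sum_i\binom Ki\binom{K-i}{j-i}=\binom Kj2^j$. Hence
\[
 \max_j|a_j|\le2^K\max_j\binom Kj\le 4^K.
\]
This step is routine.

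\emph{Approximation error.} Use the standard estimate $|B_Kf(u)-f(u)|\le\E|f(S/K)-f(u)|$, where $S\sim\mathrm{Bin}(K,u)$. We split into two cases.

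If $0<\vartheta\le1$, then $f$ is $\vartheta$-Hölder with constant one, because $|x^\vartheta-y^\vartheta|\le|x-y|^\vartheta$. By Jensen,
\[
 \E|S/K-u|^\vartheta\le(\operatorname{Var}(S/K))^{\vartheta/2}\le(4K)^{-\vartheta/2}.
\]
This is exactly the rate $K^{-\vartheta/2}$.

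If $\vartheta>1$, then $f$ is Lipschitz on $[0,1]$ with constant $\vartheta$. The same argument gives error at most $\vartheta(4K)^{-1/2}$, i.e. rate $K^{-1/2}$.

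In both cases the error is $C_\vartheta K^{-\nu_\vartheta}$ with $\nu_\vartheta=\tfrac12\min\{1,\vartheta\}$.

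No step is a real obstacle. The only point needing care is the Hölder inequality for $u^\vartheta$ with $\vartheta<1$, which follows from subadditivity of the concave function $t\mapsto t^\vartheta$. The rate is deliberately conservative: Jackson-type approximation would do better, but this is all the later row-count optimization needs.
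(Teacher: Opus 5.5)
Your proposal is correct and matches the paper's proof essentially step for step. Both use the Bernstein polynomial $\E f(S/K)$ with the coefficient count $\binom Kj2^j\le4^K$. Both bound $\E|S/K-u|^{\vartheta}$ through the H\"older/Jensen (Lyapunov) estimate when $\vartheta\le1$ and the $\vartheta$-Lipschitz estimate when $\vartheta>1$, giving errors $(4K)^{-\vartheta/2}$ and $\vartheta/(2\sqrt K)$ respectively.
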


\begin{proof}
Take the Bernstein polynomial
\[
 P_{K,\vartheta}(u)=\sum_{\ell=0}^K
  (\ell/K)^\vartheta\binom K\ell u^\ell(1-u)^{K-\ell}.
\]
If $S\sim\operatorname{Bin}(K,u)$, this polynomial equals
$\E(S/K)^\vartheta$.  When $0<\vartheta\le1$, the inequality
$|x^\vartheta-y^\vartheta|\le|x-y|^\vartheta$ and Lyapunov's inequality give error at most
$\{u(1-u)/K\}^{\vartheta/2}\le(4K)^{-\vartheta/2}$.  When $\vartheta>1$, the function is
$\vartheta$-Lipschitz on $[0,1]$, giving error at most $\vartheta/(2\sqrt K)$.
After expanding $(1-u)^{K-\ell}$, the absolute value of the coefficient of $u^j$ is at most
\[
 \sum_{\ell\le j}\binom K\ell\binom{K-\ell}{j-\ell}
 =\binom Kj\sum_{\ell\le j}\binom j\ell
 =\binom Kj2^j\le4^K.
\]
This proves \eqref{eq:F.13}.
\end{proof}

\begin{lemma}[Certified Schatten--$p$ tail estimate]\label{lem:F-tail}
Let $C\ne0$ be fixed and suppose
\begin{equation}
 \lambda_i=\frac{n\sigma_i(C)^2}{\norm C_F^2},\qquad
 \frac1n\sum_i\lambda_i=1,\qquad 0\le\lambda_i\le B.   \tag{F.14}\label{eq:F.14}
\end{equation}
For every fixed relative accuracy $0<\rho<1$, there is a constant $c_{p,\rho}>0$ such that,
for every integer
\[
 4\le K\le c_{p,\rho}\frac{\log n}{\log\log(e n)},
\]
there is a matrix $Y\in\R^{s_T\times n}$ with independent $N(0,1/s_T)$ entries whose row
count satisfies
\begin{equation}
 s_T\le C_{p,\rho}K^{C_{p,\rho}}n^{1-2/K},               \tag{F.15}\label{eq:F.15}
\end{equation}
and for which $YC$ gives a relative-$\rho$ estimate of $\norm C_{\Sp{p}}^p$ with failure probability at most
$0.01$, provided
\begin{equation}
 B\le c_{p,\rho}
 \begin{cases}
   K^{p/4},&0<p<2,\\
   K^{1/p},&p>2.
 \end{cases}                                             \tag{F.16}\label{eq:F.16}
\end{equation}
Here $Y$ is independent of all randomness used to choose $C$.
\end{lemma}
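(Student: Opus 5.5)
The plan follows Step~6 of the nuclear-norm upper bound, with $\sqrt x$ replaced by $x^\vartheta$, $\vartheta=p/2$. Write
\[
 q_p=\frac1n\sum_i\lambda_i^\vartheta,
 \qquad
 \norm C_{\Sp{p}}^p=n^{1-\vartheta}\norm C_F^p\,q_p.
\]
It then suffices to estimate $\norm C_F^2$ to relative accuracy $\rho/(8p)$ and $q_p$ to relative accuracy $\rho/2$. For the Frobenius norm we use $\widehat p_1=\norm{YC}_F^2$ and \eqref{eq:5.18a}; this needs only $O_{p,\rho}(K^2\log K)$ rows, which is dominated by \eqref{eq:F.15}. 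For $q_p$, substitute $u=\lambda/B$ and use the Bernstein polynomial of Lemma~\ref{lem:F-poly}:
\[
 \widehat q_p=B^\vartheta\sum_{j=0}^K a_j\widehat\mu_j/B^j,
\]
with $\widehat\mu_j=n^{j-1}\widehat p_j/\widehat p_1^j$ formed from the injective cycle statistics \eqref{eq:5.29}, exactly as in \eqref{eq:5.32}. The output is $\max\{0,n^{1-\vartheta}\widehat p_1^{\vartheta}\widehat q_p\}$.

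\textbf{Bias.} Since $\mu_1=1$ and $\lambda_i\le B$, we have the lower bound $q_p\ge B^{\vartheta-1}$ when $\vartheta\le1$ (because $\lambda^\vartheta\ge\lambda B^{\vartheta-1}$) and $q_p\ge1$ when $\vartheta\ge1$ (by Jensen). The approximation error is at most $B^\vartheta C_\vartheta K^{-\nu_\vartheta}$. Its ratio to $q_p$ is therefore at most $C B K^{-p/4}$ for $p<2$, and $C B^{\vartheta}K^{-1/2}$ for $p>2$. The first is small under \eqref{eq:F.16}. For the second, \eqref{eq:F.16} gives $B^{p/2}\le c^{p/2}K^{1/2}$, so it is at most a small constant. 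Hence a small enough $c_{p,\rho}$ makes the bias at most $\rho q_p/4$. This is exactly why the two regimes of \eqref{eq:F.16} take the form they do.

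\textbf{Stochastic error.} Using \eqref{eq:D.7}, i.e.\ $\mu_j\le B^{j-1}$, the coefficient bound $|a_j|\le4^K$ gives
\[
 |\widehat q_p-q_{p,K}|\le B^{\vartheta}K4^K\tau B^{-1}.
\]
It therefore suffices to obtain relative moment accuracy $\tau=c\rho\,4^{-K}K^{-1}\min\{1,B^{-\vartheta}\}q_p$. Because $B\le K$, this satisfies $\log(1/\tau)=O_{p,\rho}(K)$, exactly as in \eqref{eq:5.28}. The error propagation \eqref{eq:D.5}--\eqref{eq:D.6} carries over verbatim.

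\textbf{Row count and probability.} These repeat Appendix~\ref{app:decoder}. Theorem~\ref{ext:moment} with $\zeta=(100K)^{-1}$ gives the requirements \eqref{eq:D.10}--\eqref{eq:D.12}. The dominant term is $n^{1-2/K}e^{O(\log K+K/j)}$, maximized at $j=K$ once $K\le c\log n/\log\log n$. The second and third variance terms are absorbed because $K\log K\le 2c\log n$. The result is \eqref{eq:F.15}. A union bound over $j\le K$ together with $\widehat p_1$ gives failure probability at most $0.01$. Finally, combine $|e_F|\le\rho/(8p)$, applied through $(1+e)^\vartheta$ by the mean-value theorem, with $|e_q|\le\rho/2$ to obtain total relative error below $\rho$.

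The step I expect to be the main obstacle is the bias-versus-$q_p$ comparison for $p>2$. There $B^\vartheta$ grows while Bernstein converges only at rate $K^{-1/2}$, so the constraint $B\lesssim K^{1/p}$ must be checked carefully against $q_p\ge1$. If it fails, I would fall back on the sharper lower bound $q_p\ge\mu_1^\vartheta=1$ together with an error estimate normalized by $B^\vartheta$ directly.
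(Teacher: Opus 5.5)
Your proposal is correct and follows essentially the same route as the paper's proof. The shared steps are the identity $\norm C_{\Sp{p}}^p=n^{1-p/2}\norm C_F^p q_p$, the lower bounds $q_p\ge B^{\vartheta-1}$ or $q_p\ge1$, the rescaled Bernstein polynomial whose relative bias is controlled exactly by \eqref{eq:F.16}, a moment precision $\tau$ with $\log(1/\tau)=O_{p,\rho}(K)$, and the row count from \eqref{eq:D.10}--\eqref{eq:D.12}.

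One small correction: because $\widehat p_1$ enters every $\widehat\mu_j$ through $\widehat p_1^{\,j}$, it must meet the accuracy $\tau/(8K)$ of \eqref{eq:D.5}, not merely $\rho/(8p)$. That costs $e^{O_{p,\rho}(K)}$ rows rather than $O_{p,\rho}(K^2\log K)$, but this is still $n^{o(1)}$ and is therefore dominated by \eqref{eq:F.15}.
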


\begin{proof}
Put
\[
 q_p=\frac1n\sum_i\lambda_i^\vartheta,\qquad \vartheta=p/2.
\]
Then
\begin{equation}
 \norm C_{\Sp{p}}^p
 =n^{1-p/2}\norm C_F^p q_p.                              \tag{F.17}\label{eq:F.17}
\end{equation}
If $0<p<2$, the pointwise inequality
$x^\vartheta\ge B^{\vartheta-1}x$ on $[0,B]$ gives
$q_p\ge B^{\vartheta-1}$.  If $p>2$, Jensen gives $q_p\ge1$.

Scale the polynomial in Lemma~\ref{lem:F-poly} from $[0,1]$ to $[0,B]$.  Its relative bias for
$q_p$ is at most
\begin{equation}
 C_p
 \begin{cases}
   BK^{-p/4},&0<p<2,\\
   B^{p/2}K^{-1/2},&p>2.
 \end{cases}                                             \tag{F.18}\label{eq:F.18}
\end{equation}
Thus \eqref{eq:F.16}, with a sufficiently small fixed leading constant, makes the bias at most
$\rho/4$.

Use the injective-cycle statistics \eqref{eq:5.29} to estimate
$p_j=\sum_i\sigma_i(C)^{2j}$ and form the normalized moment estimates
\[
 \widehat\mu_j=\frac{n^{j-1}\widehat p_j}{\widehat p_1^j},
 \qquad
 \widehat q_p=B^\vartheta\sum_{j=0}^Ka_j\frac{\widehat\mu_j}{B^j}.
\]
As in \eqref{eq:D.4}--\eqref{eq:D.6}, relative accuracy $\tau$ in the raw moments gives relative
accuracy $O(K\tau)$ in the normalized moments.  Since
$\mu_j\le B^{j-1}$ and $|a_j|\le4^K$, it is enough to take
\[
 \tau=\frac{c_{p,\rho}}{K^2 4^K B^{(p/2-1)_+}}.
\]
Then the stochastic error is at most $(\rho/4)q_p$.  Notice that
$\log(1/\tau)=O_{p,\rho}(K+\log B)=O_{p,\rho}(K)$ under \eqref{eq:F.16}.
The variance bound in Theorem~\ref{ext:moment} and the row-count calculation
\eqref{eq:D.10}--\eqref{eq:D.12} therefore give \eqref{eq:F.15}, after changing only the fixed
constants.  The degree-one statistic estimates $\norm C_F^2$ at the same row count.

Finally return the nonnegative estimate
\[
 \widehat{\mathcal T}_p(C)
 =\max\left\{0,
 n^{1-p/2}\widehat p_1^{p/2}\widehat q_p\right\}.
\]
On the simultaneous moment event, \eqref{eq:F.17} and the preceding bounds give relative error
at most $\rho$ after decreasing the fixed internal accuracies.  A union bound over the $K$
moments gives failure probability at most $0.01$.
\end{proof}

\subsection{Sketch, correctness, and optimization}

\begin{proof}[Proof of Corollary~\ref{cor:noneven-upper}]
It suffices to treat $0<\eps<1/2$.  Choose fixed internal accuracies
$\kappa,\eta,\rho>0$, depending only on $p$ and $\eps$, sufficiently small that a relative error
$O_p(\kappa+\eta^p+\rho)$ in the $p$th power yields a $(1\pm\eps)$ estimate after taking the
$p$th root.

For all sufficiently large $n$, set
\begin{equation}
 K=\left\lfloor c_{p,\eps}\frac{\log n}{\log\log n}\right\rfloor,
 \qquad
 b=c'_{p,\eps}
 \begin{cases}
   K^{p/4},&0<p<2,\\
   K^{1/p},&p>2,
 \end{cases}
 \qquad R=\lceil n/b\rceil,                              \tag{F.19}\label{eq:F.19}
\end{equation}
where the two fixed leading constants are chosen sufficiently small.  Use the grid
\eqref{eq:F.1}, and put $H=h_J$.  Independently draw $G$ as in Lemma~\ref{lem:F-split}, a
Frobenius certification matrix $W$ with
$O_{p,\eps}(\log(J+1))$ rows, and $Y$ as in Lemma~\ref{lem:F-tail}.  The fixed linear sketch
stores
\begin{equation}
                         \boxed{GA,\qquad WA,\qquad YA.}  \tag{F.20}\label{eq:F.20}
\end{equation}

The decoder forms $Q_j,P_j,C_j$ for every candidate exactly as in Steps~2--4 of
Algorithm~\ref{alg:upper-decode}.  Writing
$\widehat F_j^2=\norm{WA(I-P_j)}_F^2$ and
$\gamma_j=\sigma_{h_j+1}(GA)$, with the usual zero convention at $h_j=n$, it tests
\begin{equation*}
 \frac{\gamma_j^2+2\beta_j\widehat F_j^2}{1-\xi}
 \le\frac{D_{p,\kappa}\widehat F_j^2}{R},
\end{equation*}
where $\beta_j$ is given by \eqref{eq:F.5} and $D_{p,\kappa}$ is a sufficiently large fixed
constant.
If a candidate passes, select the first one and call the branch certified.  If none passes,
select $h=H$ and call the branch negligible.  In either case compute the exact-head statistic
from the stored row sketch,
\[
 \widehat{\mathcal H}_h=\sum_{i\le h}\sigma_i(GA)^p.
\]
In the negligible branch return $\widehat{\mathcal H}_h^{1/p}$.  In the certified branch put
$B_*=D_{p,\kappa}(1+\xi)n/R$, form $YC_h=YA(I-P_h)$, and use
Lemma~\ref{lem:F-tail} with the known support bound $B_*$ to obtain
$\widehat{\mathcal T}_p(C_h)$, and return
\[
 \left\{\widehat{\mathcal H}_h+\widehat{\mathcal T}_p(C_h)\right\}^{1/p}.
\]
The zero-residual convention is handled as in Appendix~\ref{app:decoder}.

We verify correctness on the simultaneous covariance, certification, and selected-tail events.
If the branch is certified, the certificate gives
$\sr(C_h)\ge R/\{D_{p,\kappa}(1+\xi)\}$, so the normalized spectrum \eqref{eq:F.14} is
bounded by $B_*\le C_{p,\eps}b$.  The choice of the leading constant in \eqref{eq:F.19}
makes \eqref{eq:F.16} hold.  Equations \eqref{eq:F.3}--\eqref{eq:F.4} and
Lemma~\ref{lem:F-tail} then show that the returned quantity before taking the $p$th root differs
from
\[
 \mathcal H_h+\mathcal T_h=\norm A_{\Sp{p}}^p
\]
by at most $O_p(\kappa+\rho)\norm A_{\Sp{p}}^p$.

If no candidate is certified, the stable-rank alternative in Lemma~\ref{lem:F-dichotomy} would
have produced a candidate that passes.  Hence some inspected exact tail is $\eta$-negligible,
and monotonicity gives $T_{p,H}(A)\le\eta\norm A_{\Sp{p}}$.  Equation \eqref{eq:F.3} therefore
shows that omitting the tail changes the $p$th power by at most
$(\kappa+\eta^p)\norm A_{\Sp{p}}^p$.  Our fixed choices of the internal accuracies prove the
$(1\pm\eps)$ guarantee.  The three relevant good events---covariance, simultaneous Frobenius
certification, and the selected tail estimate---have joint probability greater
than $2/3$ after the same constant-budget union bound used in the main upper proof.

It remains to count measurements.  If $0<p<2$, Lemma~\ref{lem:F-dichotomy} gives
$J=O_{p,\eps}(\log(2+b))$, and \eqref{eq:F.2} yields
\begin{equation}
 n s_G\le C_{p,\eps}\frac{n^2\log(2+b)}b.                \tag{F.21}\label{eq:F.21}
\end{equation}
If $p>2$, then $J=O_{p,\eps}(1)$ and
\begin{equation}
 n s_G\le C_{p,\eps}\frac{n^2}{b^{2/p}}.                \tag{F.22}\label{eq:F.22}
\end{equation}
The $W$ block is smaller.  By \eqref{eq:F.15}, the $Y$ block uses at most
\[
 C_{p,\eps}K^{C_{p,\eps}}n^{2-2/K}
\]
measurements.  Choosing the fixed constant multiplying $\log n/\log\log n$ in
\eqref{eq:F.19} sufficiently small makes this last quantity smaller than both
\eqref{eq:F.21} and \eqref{eq:F.22}.  Substitution of $b$ from \eqref{eq:F.19} gives, for all
sufficiently large $n$,
\[
 k_{p,\eps}(n)\le
 \begin{cases}
  C_{p,\eps}n^2\log K/K^{p/4}
     \le C_{p,\eps}n^2/(\log n)^{p/8},&0<p<2,\\[1mm]
  C_{p,\eps}n^2/K^{2/p^2}
     \le C_{p,\eps}n^2/(\log n)^{1/p^2},&p>2.
 \end{cases}
\]
For bounded $n$, enlarge $C_{p,\eps}$ and store the full matrix.  This proves the corollary with
$c_p=p/8$ for $0<p<2$ and $c_p=1/p^2$ for $p>2$.
\end{proof}

\end{document}